\theoremstyle{plain}
\newtheorem{theorem}{Theorem}[section]
\newtheorem{lemma}[theorem]{Lemma}
\newtheorem{corollary}[theorem]{Corollary}
\newtheorem{assumption}[theorem]{Standing Assumption}
\newtheorem{assumptions}[theorem]{Standing Assumptions}
\theoremstyle{definition}
\newtheorem{definition}[theorem]{Definition}
\newtheorem{remarks}[theorem]{Remarks}
\newtheorem{example}[theorem]{Example}
\newcommand{\Nat}{{\mathbb N}}
\newcommand\GEX{{\Gamma\!\sb{\rm ex}}}
\newcommand{\subsim}{{\,\stackrel{\sb{<}}{\sb{\sim}}\,}}
\newcommand{\etasim}{{\sim\sb{\eta}}}
\title {Dimension theory for generalized effect algebras}
\author{David J. Foulis and Sylvia Pulmannov\'a}
\thanks{The second author was supported by the Center of Excellence
SAS -~Quantum Technologies; ERDF OP R\&D Project  meta-QUTE ITMS 26240120022; 
the grant VEGA No. 2/0059/12 SAV; the Slovak
Research and Development Agency under the contract APVV-0178-11}
\begin{document}

\address{Department of Mathematics and Statistics, University of Massachusetts,
Amherst, MA, USA; Mathematical Institute, Slovak Academy of Sciences, \v
Stef\'anikova 49, 814 73 Bratislava, Slovakia}

\email{foulis@math.umass.edu, pulmann@mat.savba.sk}
\keywords{generalized effect algebra, effect algebra, Dedekind orthocomplete
direct summand, center, exocenter, centrally orthocomplete, hull system, hull
determining, $\eta$-exocenter, $\eta$-monad, $\eta$-dyad, SK-congruence,
hereditary, $\pi$ splits $\sim$, invariant element, dimension equivalence
relation, dimension generalized effect algebra, simple element, finite
element, faithful element, type I/II/III decomposition.}
\subjclass{Primary 08A55, 81P10, Secondary 03G12}
\maketitle
\markboth{Foulis, D. and Pulmannov\'a, S.}{Dimension theory for GEAs}
\date{}
\begin{abstract}
In this paper we define and study dimension generalized effect algebras
(DGEAs), i.e., Dedekind orthocomplete and centrally orthocomplete
generalized effect algebras equipped with a dimension equivalence relation.
Our theory is a \emph{bona fide} generalization of the theory of dimension
effect algebras (DEAs), i.e., it is formulated so that, if a DGEA happens
to be an effect algebra (i.e., it has a unit element), then it is a DEA.
We prove that a DGEA decomposes into type I, II, and III DGEAS in a
manner analogous to the type I/II/III decomposition of a DEA.
\end{abstract}

\maketitle

\section{Introduction} \label{sc:Intro} 

\noindent In \cite{HandD}, we generalized to effect algebras the
Loomis-Maeda dimension theory for orthomodular lattices \cite{Loom,
Mae}; in the present paper, we extend the dimension theory to
generalized effect algebras (GEAs). Our theory complements
the dimension theory of espaliers developed by K. Goodearl and
F. Wehrung \cite{GWDim} inasmuch as, like an espalier, a GEA is
partially ordered, need not have a largest element, and hosts
a partially defined orthogonal summation. Unlike a GEA, every
nonempty subset of an espalier is required to have an infimum,
and if two orthogonal elements of an espalier have an upper bound,
then their orthogonal sum is their supremum. Unlike an espalier, a GEA
is required to satisfy a cancellation law for the orthogonal
summation.

This paper should be regarded as a continuation of \cite{ExoCen, CenGEA,
HDTD}. Although we shall briefly review some of the more pertinent
definitions, the reader is encouraged to consult the cited papers for
details of the basic theory of GEAs.

A recent surge of interest in GEAs can be partially attributed to the
observation that various systems of (possibly) unbounded symmetric
operators on a Hilbert space, in particular, operators that represent
quantum observables and states, can be organized into GEAs \cite{Pola,
PZOps, PVQObs, RZP}.

Generalized effect algebras are a subclass of the class of partial abelian
semigroups \cite{AlexPAS} or partial abelian monoids \cite{JePu02}. In fact,
a GEA is the same thing as a positive and cancellative partial abelian monoid
(pc-PAM) \cite[Definition 2.2]{JePu02}. Moreover, GEAs coincide with
generalized difference posets \cite{HedPu} and also with abelian RI-sets
\cite{KalmZdenka}. Numerous examples of GEAs can be found in the cited
references.

In developing our theory of \emph{dimension GEAs} (DGEAs)
we shall keep in mind two desiderata:
\begin{enumerate}
\item[{(1)}] If a DGEA is an effect algebra (EA), i.e., if it has a unit
 element, then it is a dimension effect algebra (DEA) in the sense of
 \cite[\S 7]{HandD}.
\item[{(2)}] Most of the important features of a DEA, including the type
 I/II/III decomposition \cite[Theorem 11.3]{HandD}, should go through for
 a DGEA.
\end{enumerate}

Direct decompositions of an effect algebra (EA) correspond to elements of
the center of the EA, but this is no longer the case for the center of a GEA
\cite[Definition 4.1]{ExoCen}, \cite[Definition 4.6]{CenGEA}; hence, for a
GEA, the center is superseded by the so-called \emph{exocenter}
\cite[Definition 3.1]{ExoCen}. Likewise, for GEAs, the notion of a hull
mapping on an EA \cite[Definition 3.1]{HandD}, which plays an important
role in the dimension theory of EAs, has to be replaced by a so-called
\emph{hull system} \cite[Definition 7.1]{ExoCen}.

The dimension theory of an EA employs the assumption of \emph{orthocompleteness}
(i.e., every orthogonal family is orthosummable) \cite[\S 7]{HandD}, but this
condition is too strong to impose on a GEA---in fact, an orthocomplete GEA that
is upward directed is an EA \cite[Theorem 5.5]{ExoCen}. Thus, in formulating our
definition of a \emph{dimension GEA} (DGEA), we assume only the weaker condition
of \emph{Dedekind orthocompleteness} (i.e., every orthogonal family with bounded
finite partial orthosums is orthosummable) \cite[Definition 5.1]{ExoCen}.
Obviously, an EA is orthocomplete iff it is Dedekind orthocomplete. An
orthocomplete EA is automatically \emph{centrally orthocomplete}
\cite[Definition 6.1]{COEA}, but this need not be the case for a Dedekind
orthocomplete GEA; hence we shall also stipulate that a DGEA is centrally
orthocomplete.

\section{Generalized effect algebras} \label{sc:GEAs} 

\noindent In this article, we use the abbreviation `iff' for `if and
only if,' the symbol $:=$ means `equals by definition,' and $\Nat:=
\{1,2,3,...\}$ is the set of natural numbers.

In what follows \emph{we assume that $(E,\oplus,0)$, or simply $E$ for short,
is a generalized effect algebra {\rm(}GEA{\rm)} with orthosummation $\oplus$ and
zero element $0$} \cite[Definition 2.1]{Zdenka99}. This means that $\oplus$ is a
partially defined binary operation on $E$, $0\in E$, and the following conditions
hold for all $d,e,f\in E$: (GEA1 \emph{Commutativity}) If $e\oplus f$ is defined,
then $f\oplus e$ is defined and $e\oplus f=f\oplus e$. (GEA2 \emph{Associativity})
If $d\oplus(e\oplus f)$ is defined, then $(d\oplus e)\oplus f$ is defined and
$d\oplus(e\oplus f)=(d\oplus e)\oplus f$. (GEA3 \emph{Neutral Element}) $e
\oplus 0=e$. (GEA4 \emph{Cancellation}) If $d\oplus e$ and $d\oplus f$ are
defined, then $d\oplus e=d\oplus f\Rightarrow e=f$. (GEA5 \emph{Positivity})
$e\oplus f=0\Rightarrow e=f=0$.

In general, elements of the GEA $E$ will be denoted by $d,e,f,p,q,s,$ and
$t$, with or without subscripts. By definition, $e$ and $f$ are
\emph{orthogonal}, in symbols $e\perp f$, iff $e\oplus f$ is defined.
If we write $e\oplus f$ without stipulating that $e\perp f$, the
condition $e\perp f$ is implicitly understood. If there exists
$d$ such that $e\oplus d=f$, we say that $e$ is a \emph{subelement}
of $f$, or that $f$ \emph{dominates} $e$, in symbols $e\leq f$.
It turns out that the GEA $E$ is partially ordered by the relation
$\leq$, which satisfies the following version of cancellation: $d
\oplus e\leq d\oplus f\Rightarrow e\leq f$. If $e\leq f$, then the
element $d$ such that $e\oplus d=f$ is uniquely determined by cancellation,
and we define $f\ominus e:=d$.

Evidently, $0$ is the smallest element in $E$. If there is a largest element
in $E$, it is often denoted by $1$ or $u$, it is called the \emph{unit},
and $E$ is called an \emph{effect algebra} (EA) \cite{FandB}. Informally,
we may think of a GEA as an EA ``possibly without a unit." If $E$ is an EA,
then the \emph{orthosupplement} of $e$, denoted and defined by $e
\sp{\perp} :=1\ominus e$, exists for all $e\in E$; moreover, $e\oplus e
\sp{\perp}=1$, and $e\perp f\Leftrightarrow e\leq f\sp{\perp}$. Every EA
$E$ satisfies both of the conditions in the following definition. (The
second condition follows from the observation that $e\perp f\sp{\perp}
\Leftrightarrow e\leq (f\sp{\perp})\sp\perp=f$.)

\begin{definition} \label{df:dirandOO}
(1) $E$ is (\emph{upward}) \emph{directed} iff, for every $e,f\in E$,
there exists $d\in E$ with $e,f\leq d$. (2) $E$ is \emph{orthogonally
ordered} iff, for every $e,f\in E$, the condition $d\perp f
\Rightarrow d\perp e$ for every $d\in E$ implies that $e\leq f$.
\end{definition}

Any set $E$ with three or more elements can be organized into a GEA that
is not an EA in a completely trivial way simply by choosing any element
in $E$, calling it $0$, and stipulating that, for $e,f\in E$, $e\perp f$
iff $e=0$ or $f=0$ (or both). Such a triviality can be ruled out by
requiring, for instance, either that $E$ is directed, or that $E$ is
orthogonally ordered.

A \emph{GEA-morphism} from $E$ to a second GEA $F$ is a mapping from
$E$ to $F$ that preserves orthogonality and orthosums. Such a GEA-morphism
automatically preserves $0$, partial order, and differences $p\ominus q$
for $q\leq p$. If $E$ and $F$ happen to be EAs, then an \emph{EA-morphism}
is defined to be a GEA-morphism that preserves the unit elements. A
\emph{GEA-endomorphism} is a GEA-morphism from a GEA into itself, and a
\emph{GEA-isomorphism} is a bijective GEA-morphism with an inverse that is
also a GEA-morphism.

We write an existing supremum (least upper bound) and an existing infimum
(greatest lower bound) of a family $(e\sb{i})\sb{i\in I}$ in $E$ as
$\bigvee\sb{i\in I}e\sb{i}$ and $\bigwedge\sb{i\in I}e\sb{i}$, respectively.
If $e,f\in E$, then the supremum (or join) of $e$ and $f$ in $E$, if it
exists, is written as $e\vee f$, and the infimum (or meet) of $e$ and $f$
in $E$, if it exists, is written as $e\wedge f$. If we write an equation
involving $e\vee f$ or $e\wedge f$ we understand that the existence
thereof is assumed. Two elements $e$ and $f$ are \emph{disjoint} iff
$e\wedge f=0$, i.e., iff the only common subelement of $e$ and $f$ is $0$.
The GEA $E$ is \emph{lattice ordered}, or simply a \emph{lattice}, iff
$e\vee f$ and $e\wedge f$ exist for all $e,f\in E$. Obviously, a lattice
ordered GEA is directed. A subset $S$ of $E$ is \emph{sup/inf-closed} in
$E$ iff all existing suprema and infima in $E$ of nonempty families in
$S$ belong to $S$.

Owing to GEA2, we may omit parentheses in the expression $d\oplus(e\oplus f)=
(d\oplus e)\oplus f$, and write simply $d\oplus e\oplus f$. For elements $e_1,e_2,
\ldots, e_n$, we define $e_1\oplus \cdots \oplus e_n$ by recurrence: $e_1\oplus
e_2\oplus\cdots\oplus e_n:=(e_1\oplus e_2\oplus\cdots\oplus e_{n-1})\oplus e_n$
iff all the required sub-orthosums exist. The orthogonality and the resulting
orthosum of a finite sequence in $E$ is independent of the order of its elements,
so we can define the orthogonality and the orthosum of a finite orthogonal
family in the obvious way. We understand that the empty family $(e\sb{i})\sb
{i\in\emptyset}$ is orthogonal with orthosum $0$.

Consider an arbitrary family $(e\sb{i})\sb{i\in I}$ in $E$. If $F$ is a finite
subset of the indexing set $I$, and if the finite subfamily $(e\sb{i})\sb
{i\in F}$ is orthogonal, then $\oplus\sb{i\in F}e\sb{i}$ is called a \emph
{finite partial orthosum} of $(e\sb{i})\sb{i\in I}$. By definition, $(e\sb{i})
\sb{i\in I}$ is \emph{orthogonal} iff every finite subfamily is orthogonal, and
it is \emph{orthosummable} iff it is orthogonal and the supremum $s$ of all of
its finite partial orthosums exists, in which case its \emph{orthosum} is
$\oplus\sb{i\in I}e\sb{i}:=s$.\footnote{Note that our terminology differs from
the terminology in \cite[p. 445]{JePu02}.}

Let $e\in E$ and put $e\sb{i}:=e$ for $i\in\Nat$. If the infinite sequence
$(e\sb{i})\sb{i\in\Nat}$ is orthogonal, we say that $e$ has \emph{infinite
isotropic index} in $E$.  The GEA $E$ is said to be \emph{archimedean} iff
$0$ is the only element in $E$ with infinite isotropic index.

The GEA $E$ is \emph{orthocomplete} iff every orthogonal family in $E$ is
orthosummable; it is \emph{Dedekind orthocomplete} iff, whenever all of
the finite partial orthosums of an orthogonal family in $E$ are bounded
from above, then the family is orthosummable. If $D\subseteq P\subseteq E$,
then $D$ is \emph{orthodense} in $P$ iff every element in $P$ is the
orthosum of an orthosummable family in $D$.

If $p\in E$, then the \emph{$p$-interval} $E[0,p]:=\{e\in E:0\leq e\leq p\}$
is organized into an EA with unit $p$, with orthogonality relation $\perp\sb{p}$,
and with orthosummation $\oplus\sb{p}$ defined by $e\perp\sb{p}f$ iff $e\oplus
f\leq p$ and $e\oplus\sb{p}f:=e\oplus f$ iff $e\perp\sb{p}f$. The resulting
partial order on $E[0,p]$ is the restriction to $E[0,p]$ of the partial
order $\leq$ on $E$, but, unless $p$ is a so-called \emph{principal} element
of $E$, i.e., $e,f\leq p$ with $e\perp f$ implies that $e\oplus f\leq p$,
\cite[Definition 3.2]{GFP}, the $p$-orthosummation $\oplus\sb{p}$ is not
the restriction to $E[0,p]$ of $\oplus$ on $E$.

If $p$ is principal, then $p$ is \emph{sharp}, i.e., if $d\leq p$ and
$d\perp p$, then $d=0$. Indeed, given such a $d$, we have $d,p\leq p$
with $d\perp p$, whence $d\oplus p\leq p=p\oplus 0$, so $d=0$ by
cancellation.

Clearly, the $p$-interval $E[0,p]$ is sup/inf-closed in $E$, and if a
nonempty family $(e\sb{i})\sb{i\in I}\subset E[0,p]$ has an infimum $e$
in the EA $E[0,p]$, then $e$ is the infimum in $E$ of $(e\sb{i})\sb{i
\in I}$; however, if the supremum $s$ of $(e\sb{i})\sb{i\in I}$ exists
in the EA $E[0,p]$, then $s$ need not be the supremum of $(e\sb{i})
\sb{i\in I}$ in $E$.

If $p\in E$, then a family $(e\sb{i})\sb{i\in I}$ in $E[0,p]$ is said to be
\emph{$p$-orthogonal} (respectively, \emph{$p$-orthosummable}) iff it is
orthogonal (respectively, orthosummable) in the EA $E[0,p]$. If $(e\sb{i})
\sb{i\in I}$ is $p$-orthosummable, then its orthosum in the EA $E[0,p]$ is
called its \emph{$p$-orthosum}. See \cite[Theorem 6.4]{CenGEA} for properties
of $p$-orthogonal families in $E$.

Let $S$ be a nonempty subset of $E$. We say that $S$ is a \emph{sub-GEA}
of $E$ iff, for all $s,t\in S$, (1) $s\perp t\Rightarrow s\oplus t\in S$ and
(2) $s\leq t\Rightarrow t\ominus s\in S$. If $S$ is a sub-GEA of $E$, then
$0\in S$ and $S$ is a GEA in its own right under the restriction to $S$ of
$\oplus$ on $E$. The subset $S$ is an \emph{order ideal} in $E$ iff,
whenever $s\in S$, $t\in E$, and $t\leq s$, it follows that $t\in S$.
By an \emph{ideal} in $E$, we mean an order ideal $S$ that satisfies
condition (1) above. Evidently, every ideal in $E$ is a sub-GEA of $E$,
and a $p$ interval $E[0,p]$ is an ideal iff $p$ is principal.

We say that $E$ is the \emph{direct sum} of the ideals $H\sb{1},H\sb{2},...,
H\sb{n}$ in $E$, in symbols $E=H\sb{1}\oplus H\sb{2}\oplus\cdots\oplus
H\sb{n}$ iff (1) if $h\sb{i}\in H\sb{i}$ for $i=1,2,...,n$, then
$h\sb{1},h\sb{2},...,h\sb{n}$ is an orthogonal sequence in $E$, and
(2) every element $e\in E$ decomposes uniquely into ``coordinates"
$e=e\sb{1}\oplus e\sb{2}\oplus\cdots\oplus e\sb{n}$ with $e\sb{i}\in
H\sb{i}$ for $i=1,2,...,n$. For such a direct sum, all GEA-calculations
on $E$ can be carried out ``coordinatewise" in the obvious sense. If $H$
is an ideal in $E$, we say that $H$ is a \emph{direct summand}\footnote{In
the literature \cite{Je00}, \cite[Lemma 2.7]{ExoCen} direct summands are
also called \emph{central ideals}.} of $E$ iff there is an ideal $K$ in $E$,
called a \emph{complementary direct summand} of $H$, such that $E=H\oplus K$.
It is not difficult to show that, if $H$ is a direct summand of $E$, then $H$
has a unique complementary direct summand $K$ in $E$.

\begin{remarks} \label{rm:altcentdef}
If $c\in E$, then by \cite[Definition 1.9.11]{DvPuTrends}, $c$ is
\emph{central} in $E$ iff (1) every element $e\in E$ can be written
uniquely as $e=e\sb{1}\oplus e\sb{2}$ with $e\sb{1}\leq c$ and $e\sb{2}
\perp c$, (2) $c$ is principal, and (3) if $p,q\in E$ with $p\perp q$,
then $p,q\perp c\Rightarrow p\oplus q\perp c$. By \cite[Lemma 4.7]
{CenGEA}, the uniqueness condition in (1) is automatic in the presence
of (2). The \emph{center} of $E$, denoted by $\Gamma(E)$, is defined to
be the set of all central elements of $E$.

According to \cite[Theorem 4.2]{ExoCen} (iii), $c\in\Gamma(E)$ iff
$E[0,c]$ is a direct summand of $E$. As a consequence of \cite
[Theorem 6.3]{PVRiesz}, if $c\in\Gamma(E)$, then the complementary direct
summand of $E[0,c]$ is the ideal $\{f\in E:f\perp c\}$. Obviously, every
central element in $E$ is principal. It turns out that $\Gamma(E)$ is a
lattice-ordered sub-GEA of $E$ and, as such, it is a generalized boolean
algebra \cite[Theorem 4.6 (iv)]{ExoCen}. \hspace{\fill} $\square$
\end{remarks}

\section{The Exocenter and Hull Systems}  

\noindent The \emph{exocenter} of the GEA $E$, in symbols $\GEX(E)$, is the set
of all mappings $\pi\colon E\to E$ such that, for all $e,f\in E$: (EXC1) $\pi
\colon E\to E$ is a \emph{GEA-endomorphism} of $E$, i.e., if $e\perp f$, then
$\pi e\perp\pi f$ and $\pi(e\oplus f)=\pi e\oplus\pi f$. (EXC2) $\pi$ is
\emph{idempotent}, i.e., $\pi(\pi e)=\pi e$. (EXC3) $\pi$ is \emph{decreasing},
i.e., $\pi e\leq e$. (EXC4) $\pi$ satisfies the following \emph{orthogonality
condition}: if $\pi e=e$ and $\pi f=0$, then $e\perp f$ \cite[\S 3]{ExoCen}.

If $\pi\in\GEX(E)$, define $\pi\,'\colon E\to E$ by $\pi\,' e :=e\ominus\pi e$
for all $e\in E$. Then $\pi\in\GEX(E)\Rightarrow\pi\,'\in\GEX(E)$. The composition
of mappings $\pi,\xi\in\GEX(E)$ is commutative, i.e., $\pi\circ\xi=\xi\circ\pi$, and
$\GEX(E)$ is partially ordered by $\pi\leq\xi$ iff $\pi\circ\xi=\pi$. In fact,
$\GEX(E)$ is a boolean algebra under $\leq$, with $\pi\mapsto\pi\,'$ as the boolean
complementation, the smallest element in $\GEX(E)$ is the zero mapping $0$ on $E$, the
largest element is the identity mapping $1$ on $E$, and for $\pi,\xi\in\GEX(E)$, the
infimum of $\pi$ and $\xi$ in $\GEX(E)$ is given by $\pi\wedge\xi=\pi\circ\xi=\xi
\circ\pi$. The infimum and supremum work pointwise, i.e., $(\pi\wedge\xi)e=\pi e
\wedge\xi e$ and $(\pi\vee\xi)e=\pi e\vee\xi e$ for all $e\in E$. We say that $\pi$
and $\xi$ are \emph{disjoint} iff $\pi\wedge\xi=0$.

There is a bijective correspondence $\pi\leftrightarrow H$ between mappings
$\pi\in\GEX(E)$ and direct summands $H$ of $E$ given by $H=\pi(E):=\{\pi e:e
\in E\}$. If $H=\pi(E)$, then the complementary direct summand of $H$ is
$K:=\pi\,'(E)$. If $\pi,\xi\in\GEX(E)$, then $\pi\leq\xi$ iff $\pi(E)
\subseteq\xi(E)$. If $\pi\sb{1},\pi\sb{2},...,\pi\sb{n}\in\GEX(E)$ and
$H\sb{i}:=\pi\sb{i}(E)$ for $i=1,2,...,n$, then $E=H\sb{1}\oplus H\sb{2}
\oplus\cdots\oplus H\sb{n}$ iff $\pi\sb{1},\pi\sb{2},...,\pi\sb{n}$
is a pairwise disjoint sequence in $\GEX(E)$ and $\pi\sb{1}\vee\pi\sb{2}
\vee\cdots\pi\sb{n}=1$. By definition, $E$ is \emph{irreducible} iff
$\GEX(E)=\{0,1\}$, i.e., iff $\{0\}$ and $E$ itself are the only direct
summands of $E$.

A family $(e\sb{i})\sb{i\in I}$ is said to be \emph{$\GEX$-orthogonal}
iff there exists a pairwise disjoint family $(\pi\sb{i})\sb{i\in I}$ in
$\GEX(E)$ such that $e\sb{i}=\pi\sb{i}e\sb{i}$ for all $i\in I$ \cite
[Definition 6.1]{ExoCen}. If $(e\sb{i})\sb{i\in I}$ is $\GEX$-orthogonal,
then it is orthogonal, and it is orthosummable iff $s:=\bigvee\sb{i\in I}
e\sb{i}$ exists in $E$, in which case $\oplus\sb{i\in I}e\sb{i}=s$ \cite
[Lemma 6.2]{ExoCen}. According to  \cite[Definition 6.4]{ExoCen}, $E$ is
a \emph{centrally orthocomplete GEA} (COGEA) iff (CO1) every $\GEX$-orthogonal
family in $E$ is orthosummable in $E$, and (CO2) if an element of $E$ is
orthogonal to each member of a $\GEX$-orthogonal family in $E$, then it
is orthogonal to the orthosum of the family. If $E$ is a COGEA,  then both
$\GEX(E)$ and $\Gamma(E)$ are complete boolean algebras \cite[Theorem 6.8]
{ExoCen}, \cite[Theorem 7.5 (iii)]{CenGEA}.

If $E$ is an EA that satisfies CO1, then condition CO2 is automatic. Also,
for  an EA $E$ with unit $1$, the boolean algebra $\GEX(E)$ is isomorphic
to the center $\Gamma(E)$ of $E$ under the mapping $\pi\leftrightarrow c$,
where $c:=\pi 1$ is the largest element in $\pi(E)$.

A \emph{hull system} on the GEA $E$ \cite[\S 7]{ExoCen}, \cite[\S 7]{HDTD} is
a family $(\eta\sb{e})\sb{e\in E}\subseteq\GEX(E)$ indexed by elements $e\in E$
such that, for all $e,f\in E$, (HS1) $\eta\sb{0}=0$, (HS2) $e\in E\Rightarrow e
=\eta\sb{e}e$, and (HS3) $e,f\in E\Rightarrow\eta\sb{\eta\sb{e}f}=\eta\sb{e}
\circ\eta\sb{f}=\eta\sb{e}\wedge\eta\sb{f}$.

A subset $\Theta$ of the exocenter $\GEX(E)$ is called \emph{hull determining}
(HD) iff (HD1) for each $e\in E$, there is a smallest mapping in the set
$\{\theta\in\Theta:\theta e=e\}$, and (HD2) $\theta,\xi\in\Theta\Rightarrow
\theta\wedge\xi\,'\in\Theta$ \cite[Definition 8.1]{HDTD}.  If $\Theta$ is an HD
subset of $\GEX(E)$, and if, for $e\in E$, $\eta\sb{e}$ is the smallest mapping
$\theta\in\Theta$ such that $\theta e=e$, then  $(\eta\sb{e})\sb{e\in E}$ is a hull
system on $E$, called the \emph{hull system determined by $\Theta$} \cite[Theorem
8.2]{HDTD}. If $(\eta\sb{e})\sb{e\in E}$ is a hull system on $E$, then the
\emph{$\eta$-exocenter} of $E$, $\Theta\sb{\eta}(E):=\{\eta\sb{e}:e\in E\}$, is a
hull determining subset of $\GEX(E)$ and the hull system that it determines is
$(\eta\sb{e})\sb{e\in E}$ itself \cite[Theorem 8.4]{HDTD}.

If $E$ is a COGEA, then there is a special hull system $(\gamma\sb{e})\sb{e\in E}$
on $E$, called the \emph{exocentral cover system}, such that, for each $e\in E$,
$\gamma\sb{e}$ is the smallest mapping $\pi\in\GEX(E)$ such that $\pi e=e$. \cite
[\S8]{ExoCen}, \cite[\S 5]{CenGEA}. In other words, $\gamma\sb{e}(E)$ is the
smallest direct summand of $E$ that contains the element $e$.

Let $(\eta\sb{e})\sb{e\in E}$ be a hull system on $E$. Then a family $(e\sb{i})
\sb{i\in I}$ in $E$ is called \emph{$\eta$-orthogonal} iff $(\eta\sb{e\sb{i}})
\sb{i\in I}$ is a pairwise disjoint family in $\GEX(E)$. Every $\eta$-orthogonal
family is $\GEX$-orthogonal. If $E$ is a COGEA, then a family is $\gamma$-orthogonal
iff it is $\GEX$-orthogonal; moreover every $\eta$-orthogonal family is
orthosummable and its orthosum is its supremum.

The following definition and theorem will be useful in our development of the
dimension theory for a GEA (Sections \ref{sc:DGEA}--\ref{sc:I/II/III} below).

\begin{definition}[{\cite[Definition 11.4]{HDTD}}] \label{df:etaTD}
If $E$ is a COGEA, $(\eta\sb{e})\sb{e\in E}$ is a hull system on $E$,
and $T\subseteq E$, we define $[T]\sb{\eta}$ to be the set of all
orthosums (suprema) of $\eta$-orthogonal families in $T$ and we define
$T\sp{\eta}:=\{\eta\sb{e}t:e\in E, t\in T\}$. The set $T$ is said
to be \emph{$\eta$-type determining} ($\eta$TD) iff $T=[T]\sb{\eta}=
T\sp{\eta}$; it is said to be \emph{strongly $\eta$-type determining}
($\eta$STD) iff $T$ is an order ideal and $T=[T]\sb{\eta}$.
\end{definition}

\noindent Clearly, if $T$ is $\eta$STD, then it is $\eta$TD.

\begin{theorem} [{\cite[Theorem 12.3 (i)]{HDTD}}] \label{th:tStar}
If $E$ is a COGEA, $(\eta\sb{e})\sb{e\in E}$ is a hull system on $E$,
and $T$ is an $\eta$TD subset of $E$, then there exists $t\sp{\ast}
\in T$ such that $\eta\sb{t\sp{\ast}}$ is the largest mapping in
the set $\{\eta\sb{t}:t\in T\}$.
\end{theorem}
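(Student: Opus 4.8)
The plan is to build $t^{\ast}$ as the orthosum of a suitably chosen $\eta$-orthogonal family in $T$ and then invoke the hull-system axioms to show that $\eta_{t^{\ast}}$ dominates every $\eta_t$ with $t\in T$. First I would consider the set $\Theta_\eta(T):=\{\eta_t:t\in T\}\subseteq\GEX(E)$. Since $E$ is a COGEA, the boolean algebra $\GEX(E)$ is complete, so $\pi^{\ast}:=\bigvee\{\eta_t:t\in T\}$ exists in $\GEX(E)$; the goal is to realize $\pi^{\ast}$ as $\eta_{t^{\ast}}$ for a single $t^{\ast}\in T$. To do this I would apply a Zorn's lemma / maximality argument to the collection of $\eta$-orthogonal subfamilies $(t_i)_{i\in I}$ of $T$, ordered by inclusion: a maximal such family $(t_i)_{i\in I}$ exists. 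By the remarks preceding the theorem, an $\eta$-orthogonal family in a COGEA is automatically orthosummable, so put $t^{\ast}:=\bigoplus_{i\in I}t_i=\bigvee_{i\in I}t_i$, and since $T=[T]_\eta$ (the $[T]_\eta$ half of the $\eta$TD hypothesis), we get $t^{\ast}\in T$.

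Next I would compute $\eta_{t^{\ast}}$. Because $(t_i)_{i\in I}$ is $\eta$-orthogonal and $t^{\ast}$ is its supremum, one expects $\eta_{t^{\ast}}=\bigvee_{i\in I}\eta_{t_i}$; this should follow from the fact that each $\eta_{t_i}$ fixes $t_i$, hence fixes the supremum (since the $\eta_{t_i}$ are pairwise disjoint exocentral maps, the join $\bigvee_i\eta_{t_i}$ is an exocentral map fixing $t^{\ast}$ by the pointwise description of joins in $\GEX(E)$), and conversely $\eta_{t^{\ast}}$ is the smallest such — so I need the minimality of $\eta_{t^{\ast}}$ among $\pi\in\GEX(E)$ fixing $t^{\ast}$, or at least that $\eta_{t^{\ast}}\le\bigvee_i\eta_{t_i}$, which one can extract from HS3: each $t_i\le t^{\ast}$ gives $\eta_{t_i}=\eta_{\eta_{t^{\ast}}t_i}=\eta_{t^{\ast}}\wedge\eta_{t_i}\le\eta_{t^{\ast}}$, and in the other direction $\eta_{t^{\ast}}=\eta_{t^{\ast}}\wedge\eta_{t^{\ast}}$ combined with $t^{\ast}=\bigvee t_i$ forces $\eta_{t^{\ast}}\le\bigvee_i\eta_{t_i}$ by a complete-distributivity argument in $\GEX(E)$ applied to the pointwise join. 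Thus $\eta_{t^{\ast}}=\bigvee_{i\in I}\eta_{t_i}$.

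It then remains to show $\eta_t\le\eta_{t^{\ast}}$ for \emph{every} $t\in T$, and this is where the maximality of the family is used together with $T=T^{\eta}$. Suppose some $t\in T$ has $\eta_t\not\le\eta_{t^{\ast}}$. Set $\xi:=\eta_t\wedge\eta_{t^{\ast}}{}'$; then $\xi\ne 0$. Consider $t':=\xi t=\eta_{t^{\ast}}{}'(\eta_t t)=\eta_{t^{\ast}}{}'t$ (using $\eta_t t=t$ and that the $\eta$'s commute). Since $\xi=\eta_{t'}$ by HS3 (indeed $\eta_{\xi t}=\eta_t\wedge\xi=\xi$ because $\xi\le\eta_t$), and since $T=T^{\eta}$ gives $t'=\eta_{t^{\ast}}{}'t\in T$ (noting $\eta_{t^{\ast}}{}'\in\GEX(E)$ and $t\in T$, so $\eta_{t^{\ast}}{}'t\in T^\eta=T$), the element $t'$ is a nonzero member of $T$ with $\eta_{t'}=\xi$ disjoint from $\eta_{t^{\ast}}=\bigvee_i\eta_{t_i}$, hence disjoint from each $\eta_{t_i}$. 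Therefore $(t_i)_{i\in I}\cup\{t'\}$ is a strictly larger $\eta$-orthogonal family in $T$, contradicting maximality. Hence $\eta_t\le\eta_{t^{\ast}}$ for all $t\in T$, i.e., $\eta_{t^{\ast}}$ is the largest mapping in $\{\eta_t:t\in T\}$, as claimed.

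I expect the main obstacle to be the bookkeeping in the second paragraph: verifying rigorously that the supremum of an $\eta$-orthogonal family has exocentral cover equal to the join of the individual covers. The inequality $\bigvee_i\eta_{t_i}\le\eta_{t^{\ast}}$ is immediate from HS3, but the reverse inequality $\eta_{t^{\ast}}\le\bigvee_i\eta_{t_i}$ requires knowing that $\eta_{t^{\ast}}$ is minimal among exocentral maps fixing $t^{\ast}$ — which is exactly the defining property of the exocentral cover system $\gamma$, and for a general hull system this needs the observation that $\eta_e\le\gamma_e$ always, combined with the fact that $\gamma_{t^{\ast}}=\bigvee_i\gamma_{t_i}$ for a $\gamma$-orthogonal (equivalently $\GEX$-orthogonal) supremum. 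One should be careful about whether the $t_i$ are $\eta$-orthogonal versus $\gamma$-orthogonal, but since $\eta$-orthogonality implies $\GEX$-orthogonality and the family's supremum/orthosum coincide, the argument goes through; if the minimality step proves delicate, an alternative is to bypass it entirely and argue only with the maximality contradiction, which already yields $\eta_t\le\eta_{t^{\ast}}$ directly without needing the exact value of $\eta_{t^{\ast}}$.
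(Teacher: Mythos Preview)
The paper does not prove this theorem; it is stated with a citation to \cite[Theorem 12.3 (i)]{HDTD} and used as a black box. So there is no in-paper proof to compare against. Your argument is the natural one---a maximal $\eta$-orthogonal family in $T$, orthosummed via central orthocompleteness and landing back in $T$ by $T=[T]_\eta$, then a maximality contradiction using $T=T^{\eta}$---and it is essentially correct.

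Two small points. First, your worry about the identity $\eta_{t^{\ast}}=\bigvee_{i}\eta_{t_i}$ is unnecessary: this is exactly \cite[Theorem 7.4 (ix)]{HDTD}, which the present paper itself invokes several times (e.g., in the proofs of Theorems \ref{th:etasimSK} and \ref{th:KFheretaSTD}) for arbitrary orthosummable families, not merely $\eta$-orthogonal ones. And as you note, for the contradiction step you only need the easy inequality $\eta_{t_i}\le\eta_{t^{\ast}}$ anyway.

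Second, your parenthetical justification that $t'=\eta_{t^{\ast}}{}'t\in T^{\eta}$ ``because $\eta_{t^{\ast}}{}'\in\GEX(E)$'' is not valid as written: $T^{\eta}$ is defined using maps of the form $\eta_e$, not arbitrary exocentral maps, and $\eta_{t^{\ast}}{}'$ need not lie in $\Theta_\eta(E)$. The fix is already implicit in your computation: since $\eta_t t=t$, you have $\eta_{t^{\ast}}{}'t=(\eta_t\wedge\eta_{t^{\ast}}{}')t$, and by the HD2 property of $\Theta_\eta(E)$ (see \cite[Theorem 8.4]{HDTD}) the map $\eta_t\wedge\eta_{t^{\ast}}{}'$ does lie in $\Theta_\eta(E)$, i.e., equals $\eta_e$ for some $e\in E$. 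Hence $t'=\eta_e t\in T^{\eta}=T$ as required. With that correction the proof is complete.
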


As per part (i) of the next definition, a hull system $(\eta\sb{e})
\sb{e\in E}$ on $E$ determines an equivalence relation $\sim\sb{\eta}$
on $E$ and it turns out that $\sim\sb{\eta}$ has many of the features
of a dimension equivalence relation (DER) on $E$ (Section \ref{sc:DGEA}
below); in fact, under suitable conditions, it is a DER (Example
\ref{ex:etasimDER} below). Also, see Lemma \ref{lm:simeta=sim} (ii).

\begin{definition} \label{df:etasim}
Let $(\eta\sb{e})\sb{e\in E}$ be a hull system on $E$ and let $p\in E$. Then:
\begin{enumerate}
\item[(1)] For $e,f\in E$, $e\etasim f\Leftrightarrow\eta\sb{e}=\eta\sb{f}$.
\smallskip
\item[(2)] $p$ is an \emph{$\eta$-monad}\footnote{A different but equivalent
 definition of an $\eta$-monad is given in \cite[Definition 10.2 (2)]{HDTD}---see
 \cite[Theorem 10.9 (vii)]{HDTD}.} iff, for all $e\in E[0,p]$, $e\etasim p
 \Rightarrow e=p$.
\item[(3)] $p$ is an \emph{$\eta$-dyad} iff there are orthogonal elements
 $e,f\in E[0,p]$ such that $p=e\oplus f$ and $e\etasim f$.
\item[(4)] $(\eta\sb{e})\sb{e\in E}$ is \emph{divisible} iff, whenever $p,s,t
 \in E$ with $s\perp t$ and $p\etasim(s\oplus t)$, then there exist $e,f
 \in E[0,p]$ such that $e\perp f$, $p=e\oplus f$, $e\etasim s$, and $f\etasim t$.
\end{enumerate}
\end{definition}

\begin{lemma} \label{lm:divisible}
Let $(\eta\sb{e})\sb{e\in E}$ be a hull system on $E$ and let $p,s,t
\in E$ with $s\perp t$ and $p\sim\sb{\eta}(s\oplus t)$. Then the following
conditions are equivalent{\rm:} {\rm(i)} There exist $e,f\in E[0,p]$ with
$e\perp f$, $p=e\oplus f$, $e\sim\sb{\eta}s$, and $f\sim\sb{\eta}t$.
{\rm(ii)} $(\eta\sb{s}\wedge\eta\sb{t})p$ is an $\eta$-dyad in $E$.
\end{lemma}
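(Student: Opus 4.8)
The plan is to establish both implications directly from the definitions of $\eta$-dyad, $\sim_\eta$, and the exocentral/hull-system machinery. Throughout, set $c:=\eta_s\wedge\eta_t=\eta_s\circ\eta_t\in\GEX(E)$; since $\eta_s,\eta_t$ are decreasing and idempotent GEA-endomorphisms, so is $c$, and $c(E)=\eta_s(E)\cap\eta_t(E)$ is a direct summand of $E$. The hypothesis $p\sim_\eta(s\oplus t)$ means $\eta_p=\eta_{s\oplus t}=\eta_s\vee\eta_t$ (using HS3 together with the fact that the supremum in $\GEX(E)$ is computed pointwise and $s\oplus t$ sits above both $s$ and $t$; more precisely one invokes the standard identity $\eta_{s\oplus t}=\eta_s\vee\eta_t$ recorded in the cited papers). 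A useful first observation is therefore $cp=(\eta_s\wedge\eta_t)p$, and since $\eta_p p=p$ and $c\leq\eta_p$, we have $cp\leq p$; moreover $\eta_{cp}=c$, because $c$ is the smallest mapping in the relevant $\eta$-exocenter fixing $cp$ (as $c$ is idempotent with $c(cp)=cp$, and any $\eta_e$ with $\eta_e(cp)=cp$ satisfies $\eta_{cp}\leq\eta_e$ while $\eta_{cp}\leq c$ follows from $cp\leq c\,($anything fixed by $c)$, hence $\eta_{cp}=c$ once one checks $c\in\Theta_\eta(E)$—which holds since $c=\eta_s\wedge\eta_t'{}'=\eta_s$ applied appropriately, or more cleanly since $\Theta_\eta(E)$ is closed under the relevant boolean operations by the HD axioms). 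I will phrase this cleanly using $\eta_{\eta_e f}=\eta_e\wedge\eta_f$ from HS3.

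\emph{(i) $\Rightarrow$ (ii).} Suppose $e,f\in E[0,p]$ with $e\perp f$, $p=e\oplus f$, $e\sim_\eta s$, $f\sim_\eta t$; thus $\eta_e=\eta_s$ and $\eta_f=\eta_t$. I would show that $p$ itself is then an $\eta$-dyad and that $cp=p$, so that (ii) holds trivially. Indeed $\eta_p=\eta_{e\oplus f}=\eta_e\vee\eta_f=\eta_s\vee\eta_t$, consistent with the hypothesis; and $cp=(\eta_s\wedge\eta_t)(e\oplus f)=(\eta_s\wedge\eta_t)e\oplus(\eta_s\wedge\eta_t)f$. Now $(\eta_s\wedge\eta_t)e=(\eta_e\wedge\eta_f)e=\eta_f(\eta_e e)=\eta_f e$; but this need not be $e$ in general, so a little care is needed. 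The correct route: observe $\eta_e\wedge\eta_f=\eta_s\wedge\eta_t=c$, and since $e=\eta_e e$ we get $ce=\eta_f e\leq e$, and likewise $cf=\eta_e f\leq f$; hence $cp=ce\oplus cf$. To get $cp=p$ we need $ce=e$ and $cf=f$, i.e. $\eta_e\leq\eta_f$ and $\eta_f\leq\eta_e$, i.e. $\eta_e=\eta_f$, i.e. $e\sim_\eta f$. This is not automatic, so instead I abandon the claim $cp=p$ and argue directly that $cp$ is an $\eta$-dyad: put $e_0:=ce=\eta_f e=(\eta_s\wedge\eta_t)e$ and $f_0:=cf=\eta_e f=(\eta_s\wedge\eta_t)f$. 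Then $e_0\perp f_0$ (subelements of the orthogonal $e,f$), $e_0\oplus f_0=cp$, and $\eta_{e_0}=\eta_{\eta_f e}=\eta_f\wedge\eta_e=c=\eta_e\wedge\eta_f=\eta_{\eta_e f}=\eta_{f_0}$ by HS3. Hence $e_0\sim_\eta f_0$ with $cp=e_0\oplus f_0$, so $cp$ is an $\eta$-dyad.

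\emph{(ii) $\Rightarrow$ (i).} Conversely, assume $cp$ is an $\eta$-dyad, say $cp=a\oplus b$ with $a,b\in E[0,cp]$, $a\perp b$, $a\sim_\eta b$. Since $a\leq cp$ we have $\eta_a\leq\eta_{cp}=c\leq\eta_s$; similarly $\eta_a=\eta_b\leq\eta_t$. I would now build the required decomposition of $p$. Note $cp\leq p$, so $d:=p\ominus cp$ is defined and $p=cp\oplus d=a\oplus b\oplus d$. The element $d$ lives in the complementary summand $c'(E)$ (since $cd=c(p\ominus cp)=cp\ominus cp=0$, i.e. $\eta_s\wedge\eta_t$ kills $d$). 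Now decompose $d$ using $\eta_s$ and $\eta_t$: set $d_1:=\eta_s d$ and $d_2:=\eta_t' d=d\ominus\eta_t d$. Then $d=\eta_t d\oplus\eta_t' d$, and since $\eta_s\wedge\eta_t$ kills $d$ we get $\eta_s(\eta_t d)=cd=0$, forcing $\eta_t d\leq\eta_s' d$, hence $\eta_s d=\eta_s(\eta_t'd)=\eta_t' d\ominus\eta_t'\eta_s'd$... — here the bookkeeping with the boolean algebra $\GEX(E)$ becomes the crux. The clean formulation: since $\eta_p=\eta_s\vee\eta_t$ and $\eta_s\wedge\eta_t$ acts on $p$ as $cp=a\oplus b$, write $p=\eta_s'p\,\oplus\,\eta_s p$ and analyze $\eta_s p$ via $\eta_t$; using that $\eta_s'\wedge(\eta_s\vee\eta_t)=\eta_s'\wedge\eta_t$ one gets $\eta_s'p=(\eta_s'\wedge\eta_t)p$, which is $\sim_\eta$-related to a subelement of $t$, and symmetrically. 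Assembling: take $e:=a\oplus(\eta_s\wedge\eta_t')p=a\oplus(\eta_s\eta_t')p$ and $f:=b\oplus(\eta_s'\wedge\eta_t)p=b\oplus(\eta_s'\eta_t)p$. One checks $e\perp f$ and $e\oplus f=(\eta_s\wedge\eta_t)p\oplus(\eta_s\wedge\eta_t')p\oplus(\eta_s'\wedge\eta_t)p=(\eta_s\vee\eta_t)p=\eta_p p=p$ (using that $\eta_s\wedge\eta_t$, $\eta_s\wedge\eta_t'$, $\eta_s'\wedge\eta_t$ are pairwise disjoint with join $\eta_s\vee\eta_t$, and that pairwise-disjoint hence $\GEX$-orthogonal pieces orthosum to their supremum). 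Finally $\eta_e=\eta_a\vee\eta_{(\eta_s\eta_t')p}=\eta_a\vee(\eta_s\wedge\eta_t'\wedge\eta_p)$; since $\eta_a\leq\eta_s$, $\eta_a\leq\eta_t$ forces $\eta_a\wedge\eta_t'=0$, and one shows $\eta_a\vee(\eta_s\wedge\eta_t')=\eta_s$ by checking it meets to $\eta_s$ against both $\eta_t$ and $\eta_t'$ — against $\eta_t'$ it gives $\eta_s\wedge\eta_t'$, against $\eta_t$ it gives $\eta_a$ which must equal $\eta_s\wedge\eta_t$, and the latter holds because $a\sim_\eta b$ and $cp=a\oplus b$ give $c=\eta_{cp}=\eta_a\vee\eta_b=\eta_a$. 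Hence $\eta_e=\eta_s$, i.e. $e\sim_\eta s$, and symmetrically $f\sim_\eta t$.

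\emph{Main obstacle.} The genuinely delicate point is the bookkeeping in the boolean algebra $\GEX(E)$ in the direction (ii)$\Rightarrow$(i): one must correctly split $p$ into the three pairwise-disjoint exocentral pieces corresponding to $\eta_s\wedge\eta_t$, $\eta_s\wedge\eta_t'$, $\eta_s'\wedge\eta_t$ (whose join is $\eta_s\vee\eta_t=\eta_p$), show these pieces orthosum back to $p$ (invoking that $\GEX$-orthogonal families orthosum to their supremum, which is available since each piece is fixed by its defining idempotent), and then verify the two $\sim_\eta$ identities $\eta_e=\eta_s$, $\eta_f=\eta_t$ — the latter requiring the observation, extracted from the $\eta$-dyad hypothesis, that $\eta_{cp}=\eta_a=\eta_b=\eta_s\wedge\eta_t$. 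Once that exocentral identity is in hand, everything reduces to the boolean identity $x\vee(y\wedge z')=y$ whenever $x\wedge z'=0$ and $x=y\wedge z$, which is routine. The remaining verifications (orthogonality of the pieces, the $(i)\Rightarrow(ii)$ direction via $e_0=\eta_f e$, $f_0=\eta_e f$) are straightforward applications of HS1–HS3 and the pointwise description of $\wedge,\vee$ in $\GEX(E)$.
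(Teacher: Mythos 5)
Your proposal is correct and follows essentially the same route as the paper: in (i)$\Rightarrow$(ii) you take $e_0=(\eta_s\wedge\eta_t)e$, $f_0=(\eta_s\wedge\eta_t)f$ and verify $\eta_{e_0}=\eta_{f_0}=\eta_s\wedge\eta_t$ via HS3, and in (ii)$\Rightarrow$(i) you form $e=a\oplus(\eta_s\wedge\eta_t')p$, $f=b\oplus(\eta_s'\wedge\eta_t)p$ and compute $\eta_e=\eta_s$, $\eta_f=\eta_t$ using $\eta_a=\eta_b=\eta_{(\eta_s\wedge\eta_t)p}=\eta_s\wedge\eta_t$ together with the boolean splitting of $\eta_p=\eta_s\vee\eta_t$ into the three disjoint pieces — exactly the paper's argument. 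The only difference is presentational: the paper goes straight to this decomposition, whereas your writeup records a false start in each direction before arriving at the same conclusion.
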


\begin{proof}
(i) $\Rightarrow$ (ii). Assume (i), put $a:=(\eta\sb{s}\wedge\eta\sb{t})e$
and put $b:=(\eta\sb{s}\wedge\eta\sb{t})f$. Then
\[
(\eta\sb{s}\wedge\eta\sb{t})p=(\eta\sb{s}\wedge\eta\sb{t})(e\oplus f)=
 (\eta\sb{s}\wedge\eta\sb{t})e\oplus(\eta\sb{s}\wedge\eta\sb{t})f=
 a\oplus b.
\]
By \cite[Theorem 7.4 (xii)]{HDTD}, $\eta\sb{a}=\eta\sb{s}\wedge\eta\sb{t}
\wedge\eta\sb{e}=\eta\sb{s}\wedge\eta\sb{t}\wedge\eta\sb{s}=\eta\sb{s}
\wedge\eta\sb{t}$. Similarly, $\eta\sb{b}=\eta\sb{s}\wedge\eta\sb{t}$, so
$a\sim\sb{\eta}b$, whence $(\eta\sb{s}\wedge\eta\sb{t})p$ is an $\eta$-dyad
in $E$.

(ii) $\Rightarrow$ (i). Assume (ii). Then there exist $a,b\in E$ with
\setcounter{equation}{0}
\begin{equation} \label{eq:divisible01}
(\eta\sb{s}\wedge\eta\sb{t})p=a\oplus b\text{\ and\ }a\sim\sb{\eta}b.
\end{equation}
Thus, by \cite[Theorem 7.4 (ix) and (xii)]{HDTD},
\begin{equation} \label{eq:divisible02}
\eta\sb{a}=\eta\sb{b}=\eta\sb{a}\vee\eta\sb{b}=\eta\sb{a\oplus b}=\eta
 \sb{(\eta\sb{s}\wedge\eta\sb{t})p}=\eta\sb{s}\wedge\eta\sb{t}\wedge\eta
 \sb{p}.
\end{equation}
Put $q:=(\eta\sb{s}\wedge(\eta\sb{t})')p$ and $r:=((\eta\sb{s})'\wedge
\eta\sb{t})p$. Then we have
\[
\eta\sb{p}=\eta\sb{s\oplus t}=\eta\sb{s}\vee\eta\sb{t}=(\eta\sb{s}
 \wedge(\eta\sb{t})')\vee((\eta\sb{s})'\wedge\eta\sb{t})\vee(\eta\sb{s}
 \wedge\eta\sb{t}),
\]
 whence by (\ref{eq:divisible01}),
\begin{equation} \label{eq:divisible03}
p=\eta\sb{p}p=(\eta\sb{s}\wedge(\eta\sb{t})')p\oplus((\eta\sb{s})'
 \wedge\eta\sb{t})p\oplus(\eta\sb{s}\wedge\eta\sb{t})p=q\oplus r
 \oplus a\oplus b
\end{equation}
\[
=(q\oplus a)\oplus(r\oplus b)=e\oplus f\text{ where\ }e:=q\oplus a
 \text{\ and\ }f:=r\oplus b.
\]
Thus, by \cite[Theorem 7.4 (xiv)]{HDTD} and the fact that $\eta\sb{p}
=\eta\sb{s}\vee\eta\sb{t}\geq\eta\sb{s},\eta\sb{t}$, it follows that
\begin{equation} \label{eq:divisible04}
\eta\sb{e}=\eta\sb{q}\vee\eta\sb{a}=(\eta\sb{s}\wedge(\eta\sb{t})'
 \wedge\eta\sb{p})\vee(\eta\sb{s}\wedge\eta\sb{t}\wedge\eta\sb{p})
 =\eta\sb{s}\wedge\eta\sb{p}=\eta\sb{s}\text{\ and}
\end{equation}
\[
\eta\sb{f}=\eta\sb{r}\vee\eta\sb{b}=((\eta\sb{s})'\wedge\eta\sb{t}
 \wedge\eta\sb{p})\vee(\eta\sb{s}\wedge\eta\sb{t}\wedge\eta\sb{p})
 =\eta\sb{t}\wedge\eta\sb{p}=\eta\sb{t}.
\]
By (\ref{eq:divisible04}), we have $e\sim\sb{\eta}s$ and $f\sim\sb{\eta}t$;
hence (i) follows from (\ref{eq:divisible03}).
\end{proof}

\begin{theorem} \label{th:divisibility}
Let $(\eta\sb{e})\sb{e\in E}$ be a hull system on $E$. Then{\rm:}
{\rm(i)} $(\eta\sb{e})\sb{e\in E}$ is divisible iff, whenever
$p,s,t\in E$ with $s\perp t$ and $p\etasim(s\oplus t)$, it follows that
$(\eta\sb{s}\wedge\eta\sb{t})p$ is an $\eta$-dyad in $E$. {\rm(ii)}
If there are no nonzero $\eta$-monads in $E$, then $(\eta\sb{e})
\sb{e\in E}$ is divisible.
\end{theorem}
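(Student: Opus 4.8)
The plan is to prove (i) and then derive (ii) from it.

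For part (i), the key observation is that Lemma~\ref{lm:divisible} already establishes, for a \emph{fixed} triple $p,s,t$ with $s\perp t$ and $p\etasim(s\oplus t)$, the equivalence between the splitting condition ``there exist $e,f\in E[0,p]$ with $e\perp f$, $p=e\oplus f$, $e\etasim s$, $f\etasim t$'' and the condition ``$(\eta\sb{s}\wedge\eta\sb{t})p$ is an $\eta$-dyad.'' Divisibility of $(\eta\sb{e})\sb{e\in E}$ is, by Definition~\ref{df:etasim}(4), exactly the assertion that the splitting condition holds for \emph{every} such triple. So part (i) is obtained simply by quantifying the equivalence in Lemma~\ref{lm:divisible} over all admissible $p,s,t$: divisibility holds iff for all such triples the splitting condition holds, iff (by Lemma~\ref{lm:divisible}) for all such triples $(\eta\sb{s}\wedge\eta\sb{t})p$ is an $\eta$-dyad. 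This is essentially a one-line deduction and carries no real obstacle beyond citing the lemma correctly.

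For part (ii), suppose there are no nonzero $\eta$-monads in $E$, and fix $p,s,t$ with $s\perp t$ and $p\etasim(s\oplus t)$. By part (i) it suffices to show that $w:=(\eta\sb{s}\wedge\eta\sb{t})p$ is an $\eta$-dyad. The strategy is to produce orthogonal $e,f\in E[0,w]$ with $w=e\oplus f$ and $e\etasim f$. I would first compute $\eta\sb{w}$: using the hull-system axioms and the cited properties \cite[Theorem 7.4]{HDTD} (as in the proof of Lemma~\ref{lm:divisible}), $\eta\sb{w}=\eta\sb{(\eta\sb{s}\wedge\eta\sb{t})p}=\eta\sb{s}\wedge\eta\sb{t}\wedge\eta\sb{p}=\eta\sb{s}\wedge\eta\sb{t}$, since $\eta\sb{p}=\eta\sb{s\oplus t}=\eta\sb{s}\vee\eta\sb{t}\geq\eta\sb{s}\wedge\eta\sb{t}$. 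Now the hypothesis that there are no nonzero $\eta$-monads should be invoked: since $w\neq 0$ would have to fail to be an $\eta$-monad, there exists $e\in E[0,w]$ with $e\etasim w$ but $e\neq w$; set $f:=w\ominus e$. The delicate point—and the main obstacle—is to check that this $f$ satisfies $f\etasim e$, i.e.\ $\eta\sb{f}=\eta\sb{e}$. One has $\eta\sb{e}=\eta\sb{w}$ by choice of $e$, and $\eta\sb{e}\vee\eta\sb{f}=\eta\sb{e\oplus f}=\eta\sb{w}$ by \cite[Theorem 7.4 (ix)]{HDTD}, so $\eta\sb{f}\leq\eta\sb{w}=\eta\sb{e}$; the reverse inequality $\eta\sb{e}\leq\eta\sb{f}$ is what needs the absence of monads. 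If $\eta\sb{f}<\eta\sb{e}$, then $f':=\eta\sb{f}w\leq w$ satisfies $f'\etasim f$, and one must argue that $w$ then contains a nonzero ``monad-like'' piece, contradicting the hypothesis.

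In fact I expect the cleanest route to (ii) is \emph{not} to split $w$ in one step but to iterate: repeatedly halve, using the no-monad hypothesis to peel off, at each stage, a proper subelement equivalent to the whole, and track the exocentral projections to ensure the pieces assemble into an $\eta$-dyad. Alternatively, and probably more in keeping with the paper's machinery, one should apply Theorem~\ref{th:tStar} to a suitable $\eta$TD set built from $E[0,w]$—for instance the set of subelements of $w$ that are $\etasim$-equivalent to $w$, or its saturation—to extract a maximal element and then exploit maximality together with the no-monad hypothesis to force the required splitting $w=e\oplus f$ with $e\etasim f$. Either way, the substantive work is in part (ii), and within it the crux is showing that absence of $\eta$-monads forces $(\eta\sb{s}\wedge\eta\sb{t})p$ to decompose as $e\oplus f$ with $\eta\sb{e}=\eta\sb{f}$; once that is in hand, part (i) packages it as divisibility.
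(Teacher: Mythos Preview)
Your treatment of part (i) is correct and matches the paper exactly: Lemma~\ref{lm:divisible} gives the equivalence for each fixed triple $p,s,t$, and divisibility is just the universal statement over all such triples.

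For part (ii), however, the paper takes a much shorter route than you propose: it simply invokes \cite[Theorem~10.14]{HDTD}, which asserts that if there are no nonzero $\eta$-monads in $E$, then \emph{every} element of $E$ is an $\eta$-dyad. Combined with part (i), this immediately yields divisibility, since in particular $(\eta\sb{s}\wedge\eta\sb{t})p$ is an $\eta$-dyad for any admissible $p,s,t$.

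Your proposal instead attempts to establish the $\eta$-dyad property from scratch for the specific element $w=(\eta\sb{s}\wedge\eta\sb{t})p$, and you correctly identify the obstacle: from ``$w$ is not an $\eta$-monad'' you obtain $e\leq w$ with $\eta\sb{e}=\eta\sb{w}$ and $e\neq w$, but for $f:=w\ominus e$ you can only deduce $\eta\sb{f}\leq\eta\sb{e}$, not equality. Your suggested workarounds---iterated halving, or applying Theorem~\ref{th:tStar} to a suitable $\eta$TD set---are left at the level of a sketch and do not visibly close this gap; in effect you are attempting to reprove \cite[Theorem~10.14]{HDTD}, which is a nontrivial result in its own right and is not re-derived within the present paper. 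The paper's proof sidesteps the difficulty entirely by citation.
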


\begin{proof}
Part (i) follows directly from Lemma \ref{lm:divisible}. Also, by
\cite[Theorem 10.14]{HDTD}, given the hypotheses of (ii), every element
of $E$ is an $\eta$-dyad.
\end{proof}

\begin{theorem} \label{th:etaSKe}
Let $(\eta\sb{e})\sb{e\in E}$ be a hull system on $E$ and let $e,f,s,t
\in E$ with $e\oplus f=s\oplus t$. Then there exist $e\sb{1}, e\sb{2},
f\sb{1}, f\sb{2}\in E$ such that $e=e\sb{1}\oplus e\sb{2}$, $f=f\sb{1}
\oplus f\sb{2}$, $(e\sb{1}\oplus f\sb{1})\sim\sb{\eta}s$, and $(e\sb{2}
\oplus f\sb{2})\sim\sb{\eta}t$.
\end{theorem}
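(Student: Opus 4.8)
Set $p := e\oplus f = s\oplus t$; by the hull-system calculus \cite[Theorem 7.4 (ix) and (xii)]{HDTD}, $\eta_p = \eta_e\vee\eta_f = \eta_s\vee\eta_t$. The plan is to obtain $e = e_1\oplus e_2$ and $f = f_1\oplus f_2$ by forming a direct decomposition of $e$ and of $f$ along suitable pairwise disjoint members of $\GEX(E)$ and routing each summand into ``bucket~1'' or ``bucket~2.'' The part of $\eta_s$ lying under $\eta_t'$ (the ``private'' part of $s$) must be routed entirely into bucket~1, and the private part of $t$ entirely into bucket~2; the delicate point is the shared part $\eta_s\wedge\eta_t$, which bucket~1 needs in full in order to have $\eta$-mapping $\eta_s$ and which bucket~2 needs in full in order to have $\eta$-mapping $\eta_t$. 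The naive idea of splitting both $e$ and $f$ by the single projection $\eta_s$ fails precisely here; the fix is to refine the shared part further by $\eta_e$ and $\eta_f$.

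Accordingly, I would put $\beta := \eta_s\wedge\eta_t'$, $\gamma := \eta_s'\wedge\eta_t$, and, inside the shared part, $A := \eta_s\wedge\eta_t\wedge\eta_e\wedge\eta_f'$, $B := \eta_s\wedge\eta_t\wedge\eta_e'\wedge\eta_f$, $C := \eta_s\wedge\eta_t\wedge\eta_e\wedge\eta_f$. Using $\eta_s\wedge\eta_t\leq\eta_p=\eta_e\vee\eta_f$ one gets $A\vee B\vee C = \eta_s\wedge\eta_t$, hence $\beta\vee\gamma\vee A\vee B\vee C = \eta_s\vee\eta_t = \eta_p$, and these five mappings are pairwise disjoint. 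Since $\eta_e,\eta_f\leq\eta_p$, this yields direct decompositions $e = \beta e\oplus\gamma e\oplus Ae\oplus Be\oplus Ce$ and $f = \beta f\oplus\gamma f\oplus Af\oplus Bf\oplus Cf$ (in each, the summands are pairwise $\GEX$-orthogonal with orthosum $\eta_p e = e$, resp.\ $\eta_p f = f$; cf.\ \cite[Lemma 6.2]{ExoCen}). Now $\eta_{Be} = B\wedge\eta_e = 0$ and $\eta_{Af} = A\wedge\eta_f = 0$, so $Be = Af = 0$; applying the projection $A$ to the two expressions for $p$ gives $Ae = Ap = As\oplus At$ with $\eta_{As} = A\wedge\eta_s = A = A\wedge\eta_t = \eta_{At}$, and likewise $Bf = Bp = Bs\oplus Bt$ with $\eta_{Bs}=\eta_{Bt}=B$. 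The crucial point is that region $C$ needs no splitting at all, since $\eta_{Ce} = C\wedge\eta_e = C = C\wedge\eta_f = \eta_{Cf}$, i.e.\ both $Ce$ and $Cf$ are already ``$\eta$-full.''

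Then I would set $e_1 := \beta e\oplus As\oplus Ce$, $e_2 := \gamma e\oplus At$, $f_1 := \beta f\oplus Bs$, $f_2 := \gamma f\oplus Bt\oplus Cf$; reassociating the decompositions above (and using $Be=Af=0$, $Ae=As\oplus At$, $Bf=Bs\oplus Bt$) gives $e = e_1\oplus e_2$ and $f = f_1\oplus f_2$. As $e_1\leq e$, $f_1\leq f$ and $e\perp f$, the sum $e_1\oplus f_1$ is defined, and by $\beta(e\oplus f)=\beta e\oplus\beta f$ it equals $\beta p\oplus As\oplus Bs\oplus Ce$; hence, by \cite[Theorem 7.4 (ix)]{HDTD}, $\eta_{e_1\oplus f_1} = \beta\vee A\vee B\vee C = \beta\vee(\eta_s\wedge\eta_t) = \eta_s$, so $(e_1\oplus f_1)\etasim s$. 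Symmetrically $e_2\oplus f_2 = \gamma p\oplus At\oplus Bt\oplus Cf$ has $\eta_{e_2\oplus f_2} = \gamma\vee(\eta_s\wedge\eta_t) = \eta_t$, so $(e_2\oplus f_2)\etasim t$, finishing the proof. The single nonroutine ingredient is the observation highlighted above: after refining $\eta_s\wedge\eta_t$ by $\eta_e,\eta_f$, the subregion $C$ requires no splitting (both coordinates are $\eta$-full), while the subregions $A$ and $B$ come already equipped with the $\eta$-dyad splittings $As\oplus At$ and $Bs\oplus Bt$ inherited from $p = s\oplus t$; the rest is Boolean and hull-system bookkeeping.
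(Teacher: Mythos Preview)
Your proof is correct and takes a genuinely different route from the paper's. The paper's argument is a reduction: set $p:=e\oplus f=s\oplus t$, pass to the EA $E[0,p]$ equipped with the induced hull mapping $q\mapsto\eta_q p$ \cite[Theorem~9.3(ii)]{HDTD}, invoke the already-established EA result \cite[Theorem~3.17]{HandD} to obtain $e_1,e_2,f_1,f_2$ with $\eta_{e_1\oplus f_1}p=\eta_s p$ and $\eta_{e_2\oplus f_2}p=\eta_t p$, and then lift back via \cite[Theorem~7.4(xi)]{HDTD}. Your argument instead stays in $E$ and builds the decomposition explicitly from a five-way Boolean partition $\beta,\gamma,A,B,C$ of $\eta_p$ in $\GEX(E)$, exploiting the identities $Ae=Ap=As\oplus At$ and $Bf=Bp=Bs\oplus Bt$ to supply the needed splittings on the ``shared'' region $\eta_s\wedge\eta_t$. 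The paper's approach is shorter because it offloads the work to \cite{HandD}; yours is self-contained and makes the construction transparent without appealing to the EA theory.

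One small point worth making explicit in your write-up: the computations $\eta_{\beta p}=\beta$, $\eta_{As}=A$, etc., rely on $\beta,\gamma,A,B,C\in\Theta_\eta(E)$, so that \cite[Theorem~7.4(xii)]{HDTD} applies. This holds because $\Theta_\eta(E)$ is hull determining \cite[Theorem~8.4]{HDTD}, hence closed under $\theta\wedge\xi'$ (HD2), and closed under finite meets by HS3; but since you use these identities repeatedly it would strengthen the exposition to say so once at the outset.
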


\begin{proof}
Put $p:=e\oplus f=s\oplus t$. Working in the EA $E[0,p]$ equipped with
the hull mapping $q\mapsto\eta\sb{q}p$ \cite[Theorem 9.3 (ii)]{HDTD},
and observing that $e\oplus\sb{p}f=e\oplus f=s\oplus t=s\oplus\sb{p}t$,
we can apply \cite[Theorem 3.17]{HandD} to obtain $e\sb{1},e\sb{2},
f\sb{1}, f\sb{2}\in E[0,p]$ such that
\[
e=e\sb{1}\oplus\sb{p}e\sb{2}=e\sb{1}\oplus e\sb{2},\ f=f\sb{1}\oplus
 \sb{p}f\sb{2}=f\sb{1}\oplus f\sb{2},\ \eta\sb{e\sb{1}\oplus f\sb{1}}p
 =\eta\sb{s}p,\text{\ and\ }\eta\sb{e\sb{2}\oplus f\sb{2}}p=\eta\sb{t}p.
\]
But, by \cite[Theorem 7.4 (xi)]{HDTD}, $\eta\sb{e\sb{1}\oplus f\sb{1}}p=
\eta\sb{s}p$ implies that $\eta\sb{e\sb{1}\oplus f\sb{1}}=\eta\sb{s}$,
and therefore $(e\sb{1}\oplus f\sb{1})\sim\sb{\eta}s$. Likewise, $(e\sb{2}
\oplus f\sb{2})\sim\sb{\eta}t$.
\end{proof}

\begin{lemma} \label{lm:etaSK4a}
Let $(\eta\sb{e})\sb{e\in E}$ be a hull system on $E$, let $e,f\in E$,
and put $e\sb{1}:=\eta\sb{f}e$, $f\sb{1}:=\eta\sb{e}f$. Then{\rm:}
{\rm(i)} $e\geq e\sb{1}\sim\sb{\eta}f\sb{1}\leq f$. {\rm(ii)}
$e\sb{1},f\sb{1}\not=0\Leftrightarrow\eta\sb{e}\wedge\eta\sb{f}\not=0$.
{\rm(iii)} $e\not\perp f\Rightarrow e\sb{1},f\sb{1}\not=0$.
\end{lemma}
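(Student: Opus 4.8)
The plan is to treat the three parts in order, leaning heavily on the algebraic identities for hull systems recorded in \cite[Theorem 7.4]{HDTD} (the same ones already invoked in the proofs of Lemma \ref{lm:divisible} and Theorem \ref{th:etaSKe}) together with the defining properties (EXC1)--(EXC4) of the exocenter and (HS1)--(HS3) of a hull system.

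For part (i), since $\eta\sb{f}\in\GEX(E)$ is decreasing (EXC3) we immediately get $e\sb{1}=\eta\sb{f}e\leq e$, and symmetrically $f\sb{1}=\eta\sb{e}f\leq f$. It remains to check $e\sb{1}\sim\sb{\eta}f\sb{1}$, i.e.\ $\eta\sb{\eta\sb{f}e}=\eta\sb{\eta\sb{e}f}$. By (HS3), $\eta\sb{\eta\sb{f}e}=\eta\sb{f}\wedge\eta\sb{e}$ and $\eta\sb{\eta\sb{e}f}=\eta\sb{e}\wedge\eta\sb{f}$; since meet in the boolean algebra $\GEX(E)$ is commutative, these coincide. (Equivalently, this is exactly the relevant clause of \cite[Theorem 7.4]{HDTD}.) This part is essentially immediate.

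For part (ii), note that by the computation just done, $\eta\sb{e\sb{1}}=\eta\sb{e}\wedge\eta\sb{f}=\eta\sb{f\sb{1}}$. Now use the fact that, for any $g\in E$, $g=0\Leftrightarrow\eta\sb{g}=0$: the forward direction is (HS1), and the converse follows since $g=\eta\sb{g}g=0\cdot g=0$ by (HS2). Applying this to $g=e\sb{1}$ gives $e\sb{1}\ne 0\Leftrightarrow\eta\sb{e}\wedge\eta\sb{f}\ne 0$, and likewise for $f\sb{1}$; since the right-hand condition is the same in both cases, we get $e\sb{1}\ne 0\Leftrightarrow f\sb{1}\ne 0\Leftrightarrow\eta\sb{e}\wedge\eta\sb{f}\ne 0$, which is the claim.

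For part (iii), the contrapositive is the natural target: assume $e\sb{1}=0$ (equivalently $f\sb{1}=0$, by (ii)), equivalently $\eta\sb{e}\wedge\eta\sb{f}=0$, and show $e\perp f$. Since $\eta\sb{e}\in\GEX(E)$ and $\eta\sb{e}f\le\eta\sb{e}\wedge\eta\sb{f}$-type reasoning gives $\eta\sb{e}f=\eta\sb{e}\eta\sb{f}f=(\eta\sb{e}\wedge\eta\sb{f})f=0$; meanwhile $\eta\sb{e}e=e$ by (HS2). Then the orthogonality condition (EXC4) applied to the exocentral element $\pi:=\eta\sb{e}$, with $\pi e=e$ and $\pi f=0$, yields $e\perp f$. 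Thus $e\not\perp f\Rightarrow e\sb{1}\ne 0$, and then $f\sb{1}\ne 0$ as well by (ii). I expect part (iii) to be the only place requiring a genuine idea rather than bookkeeping, and even there the idea is simply to recognize that $\eta\sb{e}\wedge\eta\sb{f}=0$ is exactly the hypothesis ``$\pi e=e$, $\pi f=0$'' needed to feed into (EXC4); the main obstacle is just being careful that the two exocentral maps $\eta\sb{e}$ and $\eta\sb{f}$ commute (which holds in $\GEX(E)$) so that $\eta\sb{e}f$ really does vanish.
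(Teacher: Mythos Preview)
Your proof is correct and follows essentially the same route as the paper. The paper's argument is a one-liner: from $\eta\sb{e\sb{1}}=\eta\sb{e}\wedge\eta\sb{f}=\eta\sb{f\sb{1}}$ (which is your (HS3) computation) parts (i) and (ii) are immediate, and for (iii) it simply cites \cite[Theorem 7.4 (x)]{HDTD}; your contrapositive argument via (EXC4) is precisely how one would unpack that citation, so there is no substantive difference in approach.
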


\begin{proof}
As $\eta\sb{e\sb{1}}=\eta\sb{e}\wedge\eta\sb{f}=\eta\sb{f\sb{1}}$, we have
(i) and (ii), and (iii) follows from \cite[Theorem 7.4 (x)]{HDTD}.
\end{proof}

\section{An SK-Congruence} \label{sc:SK} 

\noindent A.N. Sherstnev \cite{Sh} and V.V. Kalinin \cite{K} launched the study of
orthomodular posets equipped with dimension equivalence relations as generalizations
of orthomodular dimension lattices \cite{Kalm86, Loom, Mae, Ram}.

In \cite[Definition 7.2]{HandD}, we generalized the notion of a \emph{Sherstnev-Kalinin
congruence} (SK-congruence) to an EA, and later we used an SK-congruence as a basis for our
definition of a \emph{dimension effect algebra} (DEA) \cite[Definition 7.14]{HandD}. In
this section, we further extend the definition of an SK-congruence to the generalized
effect algebra $E$.

\begin{definition} \label{df:SK-congruence}
A \emph{Sherstnev-Kalinin} (SK-) \emph{congruence}\footnote{An SK-congruence
is not necessarily a congruence in the sense of \cite[p. 448]{JePu02} since
conditions C3 and C4 may fail.} on the GEA $E$ is an equivalence relation
$\sim$ on $E$ such that, for all $e,f,p,s,t\in E$:
\begin{enumerate}
\item[(SK1)] $e\sim 0$ implies $e=0$.
\item[(SK2)] If $(e\sb{i})\sb{i\in I}$ and $(f\sb{i})\sb{i\in I}$
 are orthosummable families in $E$ and $e\sb{i}\sim f\sb{i}$ for all
 $i\in I$, then $\oplus\sb{i\in I}e\sb{i}\sim\oplus\sb{i\in I}
 f\sb{i}$.
\item[(SK3d)] \emph{If $p\sim s\oplus t$, then there exist $e, f
 \in E$ such that}
\[
 p=e\oplus f,\  e\sim s, \text{\ and\ } f\sim t.
\]
\item[(SK3e)]\emph{If $e\oplus f=s\oplus t$, then there are
 $e\sb{1},e\sb{2},f\sb{1},f\sb{2}\in E$ such that}
\[
 e=e\sb{1}\oplus e\sb{2},\,f=f\sb{1}\oplus f\sb{2},\,s\sim e\sb{1}
 \oplus f\sb{1}, \text{\ and\ } t\sim e\sb{2}\oplus f\sb{2}.
\]
\item[(SK4a)] \emph{If $e\not\perp f$, then there are nonzero $e\sb{1},
 f\sb{1}\in E$ such that $e\geq e\sb{1}\sim f\sb{1}\leq f$.}
\item[(SK4b)] \emph{If $e\not\leq f$, then there are nonzero $e\sb{1},
 d\sb{1}\in E$ such that $e\geq e\sb{1}\sim d\sb{1}\perp f$.}
\end{enumerate}
\end{definition}

The special case of SK2 for which $I$ is finite is called
\emph{finite additivity}, and SK3d is called (\emph{finite})
\emph{divisibility}. Note that conditions SK3d and SK3e, taken
together, are equivalent to the condition
\begin{enumerate}
\item[(SK3)]\emph{If $e\oplus f\sim s\oplus t$, then there are
 $e\sb{1},e\sb{2},f\sb{1},f\sb{2}\in E$ such that}
\[
 e=e\sb{1}\oplus e\sb{2},\,f=f\sb{1}\oplus f\sb{2},\,s\sim e\sb{1}
 \oplus f\sb{1}, \text{\ and\ } t\sim e\sb{2}\oplus f\sb{2}.
\]
\end{enumerate}

If $E$ is an EA, then SK4a $\Rightarrow$ SK4b, as can be seen by
replacing $f$ by $f\sp{\perp}$ in SK4a. More generally, if $E$ is
orthogonally ordered (Definition \ref{df:dirandOO} (ii)), then
SK4a $\Rightarrow$ SK4b. In $\S$6, we shall replace SK4a by a
stronger condition SK4a$'$.

\begin{definition} \label{df:SKDefs}
If $\sim$ is an SK-congruence on $E$, then, for all $d,e,f,h\in E${\rm:}
\begin{enumerate}
\item[(1)] If $e\sim f$, we say that $e$ and $f$ are \emph{equivalent}.
\item[(2)] $f\subsim e$ iff there exists $e\sb{1}\in E[0,e]$ such that $f
 \sim e\sb{1}$. If $f\subsim e$, we say that $f$ is \emph{sub-equivalent}
 to $e$.
\item[(3)] A subset $H$ of $E$ is \emph{hereditary} iff, whenever
 $h\in H$ and $e\subsim h$, it follows that $e\in H$.
\item[(4)] The elements $e$ and $f$ are \emph{related} iff there are
 nonzero elements $e\sb{1},f\sb{1}\in E$ such that $e\geq e\sb{1}
 \sim f\sb{1}\leq f$. If $e$ and $f$ are not related, they are
 \emph{unrelated}.
\item[(5)] $d$ is a \emph{descendent} of $e$ iff every nonzero subelement
 of $d$ is related to $e$.
\end{enumerate}
\end{definition}

\noindent Clearly, $H$ is a hereditary subset of $E$ iff $H$ is an order
ideal in $E$ and $e\sim h\in H\Rightarrow e\in H$. A straightforward argument
using SK3d (divisibility) shows that if $e$ and $f$ are related and $f
\sim d$, then $e$ and $d$ are related. It is also easy to see that if
$0\not=e\subsim f$, then $e$ and $f$ are related.

\begin{theorem} \label{th:etasimSK}
Suppose that $E$ is orthogonally ordered and $(\eta\sb{e})\sb{e\in E}$
is a hull system on $E$. Then $\etasim$ is an SK-congruence on $E$ iff
$(\eta\sb{e})\sb{e\in E}$ is divisible. Moreover, if $\etasim$ is an
SK-congruence on $E$, then elements $e,f\in E$ are related with respect
to $\etasim$ iff $\eta\sb{e}\wedge\eta\sb{f}\not=0$.
\end{theorem}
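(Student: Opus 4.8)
The plan is to prove the two assertions of Theorem~\ref{th:etasimSK} in turn, leaning heavily on the machinery already assembled: Lemma~\ref{lm:divisible}, Theorem~\ref{th:divisibility}, Theorem~\ref{th:etaSKe}, and Lemma~\ref{lm:etaSK4a}. For the equivalence ``$\etasim$ is an SK-congruence iff $(\eta\sb{e})\sb{e\in E}$ is divisible,'' I would verify the SK-axioms one at a time. SK1 is immediate: $e\etasim 0$ means $\eta\sb{e}=\eta\sb{0}=0$ by HS1, and then $e=\eta\sb{e}e=0e=0$ by HS2. SK2 (full additivity over an index set $I$) should follow from the fact that $\eta$ is a hull system: if $e\sb{i}\etasim f\sb{i}$ then $\eta\sb{e\sb{i}}=\eta\sb{f\sb{i}}$ for all $i$, and the exocentral hull of an orthosum is the supremum of the hulls of the summands (the cited \cite[Theorem 7.4 (ix)]{HDTD}-type identity, extended to arbitrary orthosummable families, which is available for hull systems on a GEA), so $\eta\sb{\oplus\sb{i}e\sb{i}}=\bigvee\sb{i}\eta\sb{e\sb{i}}=\bigvee\sb{i}\eta\sb{f\sb{i}}=\eta\sb{\oplus\sb{i}f\sb{i}}$. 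SK3e is exactly the content of Theorem~\ref{th:etaSKe}, which holds for any hull system with no extra hypothesis. SK4a is exactly Lemma~\ref{lm:etaSK4a}(i) together with (iii): if $e\not\perp f$ then $e\sb{1}:=\eta\sb{f}e$ and $f\sb{1}:=\eta\sb{e}f$ are nonzero with $e\geq e\sb{1}\etasim f\sb{1}\leq f$. Since $E$ is assumed orthogonally ordered, SK4a $\Rightarrow$ SK4b, as noted in the remark following Definition~\ref{df:SK-congruence}, so SK4b comes for free.

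This leaves SK3d, and here is where divisibility enters. Condition SK3d for $\etasim$ says precisely: whenever $p\etasim(s\oplus t)$ with $s\perp t$, there exist $e,f\in E$ with $p=e\oplus f$, $e\etasim s$, $f\etasim t$; moreover one checks $e,f\leq p$ automatically since $e,f$ are subelements of $p$. That is exactly clause (1) of Definition~\ref{df:etasim}, i.e., $(\eta\sb{e})\sb{e\in E}$ divisible. So SK3d $\Leftrightarrow$ divisibility. Since all other SK-axioms hold unconditionally (given orthogonal orderedness for SK4b), we conclude $\etasim$ is an SK-congruence iff $(\eta\sb{e})\sb{e\in E}$ is divisible. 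One bookkeeping point to be careful about: the SK-axioms are stated with bare $\oplus$ (implicitly asserting orthogonality), and one must confirm that the elements produced by Theorem~\ref{th:etaSKe} and by divisibility really do lie where claimed and that the orthosums written down are defined — but Theorem~\ref{th:etaSKe} and Definition~\ref{df:etasim}(4) are already phrased to deliver exactly that, so this is routine.

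For the final sentence — ``if $\etasim$ is an SK-congruence, then $e,f$ are related with respect to $\etasim$ iff $\eta\sb{e}\wedge\eta\sb{f}\neq 0$'' — I would argue directly. By Definition~\ref{df:SKDefs}(4), $e$ and $f$ are related iff there exist nonzero $e\sb{1}\leq e$, $f\sb{1}\leq f$ with $e\sb{1}\etasim f\sb{1}$. If such $e\sb{1},f\sb{1}$ exist, then $\eta\sb{e\sb{1}}=\eta\sb{f\sb{1}}$; since $\eta$ is a hull system, $e\sb{1}\leq e$ gives $\eta\sb{e\sb{1}}\leq\eta\sb{e}$ (from HS3, $\eta\sb{\eta\sb{e}f}=\eta\sb{e}\wedge\eta\sb{f}$, or more simply the monotonicity in \cite[Theorem 7.4 (x)]{HDTD}), and likewise $\eta\sb{f\sb{1}}\leq\eta\sb{f}$; moreover $\eta\sb{e\sb{1}}\neq 0$ because $e\sb{1}\neq 0$ (again HS1/HS2: if $\eta\sb{e\sb{1}}=0$ then $e\sb{1}=0$). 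Hence $0\neq\eta\sb{e\sb{1}}=\eta\sb{e\sb{1}}\wedge\eta\sb{f\sb{1}}\leq\eta\sb{e}\wedge\eta\sb{f}$. Conversely, if $\eta\sb{e}\wedge\eta\sb{f}\neq 0$, apply Lemma~\ref{lm:etaSK4a}: with $e\sb{1}:=\eta\sb{f}e$ and $f\sb{1}:=\eta\sb{e}f$, part (ii) gives $e\sb{1},f\sb{1}\neq 0$, and part (i) gives $e\geq e\sb{1}\etasim f\sb{1}\leq f$, so $e$ and $f$ are related. I don't anticipate a genuine obstacle; the only mild subtlety is making sure the ``arbitrary-family'' version of the orthosum-hull identity needed for SK2 is legitimately available from the cited theory of hull systems rather than just its finite version — if only the finite identity is on record, I would instead deduce SK2 for arbitrary $I$ from finite additivity plus the fact (valid in a COGEA with a hull system) that $\eta$-type data passes to suprema of orthosummable families, or simply invoke the pointwise-supremum behavior of exocentral elements; but most likely the paper already records exactly the form needed.
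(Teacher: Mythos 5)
Your proposal is correct and follows essentially the same route as the paper: verify SK1 directly, get SK2 from the orthosum--supremum identity for hulls (\cite[Theorem 7.4 (ix)]{HDTD}), observe that SK3d is by definition equivalent to divisibility of the hull system, obtain SK3e from Theorem~\ref{th:etaSKe}, SK4a from Lemma~\ref{lm:etaSK4a}, SK4b from orthogonal orderedness, and the relatedness criterion from Lemma~\ref{lm:etaSK4a} plus monotonicity of $e\mapsto\eta\sb{e}$. Your worry about the arbitrary-index version of the orthosum identity is unfounded; the paper invokes the same citation for arbitrary orthosummable families.
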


\begin{proof}(i) That $\etasim$ satisfies condition SK1 is obvious.
Suppose that $(e\sb{i})\sb{i\in I}$ and $(f\sb{i})\sb{i\in I}$ are
orthosummable families in $E$ with $s:=\oplus\sb{i\in I}e\sb{i}$,
$t:=\oplus\sb{i\in I}f\sb{i}$, and $e\sb{i}\etasim f\sb{i}$ for all
$i\in I$. Then by \cite[Theorem 7.4 (ix)]{HDTD}, $\eta\sb{s}=\bigvee
\sb{i\in I}\eta\sb{e\sb{i}}=\bigvee\sb{i\in I}\eta\sb{f\sb{i}}=
\eta\sb{t}$, so $s\etasim t$ and we have SK2. By Definition
\ref{df:etasim} (4), condition SK3d (divisibility) holds iff
$(\eta\sb{e})\sb{e\in E}$ is a divisible hull system. Condition SK3e
follows from Theorem \ref{th:etaSKe}, condition SK4a follows from
Lemma \ref{lm:etaSK4a}, and since $E$ is orthogonally ordered,
condition SK4b follows from SK4a. The last statement in the theorem
also follows from Lemma \ref{lm:etaSK4a}.
\end{proof}

\begin{assumption} \label{as:SK}
Henceforth we assume that $\sim$ is an SK-congruence on the GEA $E$.
\end{assumption}

\begin{definition} \label{df:splits}
A mapping $\pi\in\GEX(E)$ \emph{splits} $\sim$ iff whenever $e\in\pi(E)$,
$f\in\pi\,'(E)$, and $e\sim f$, then $e=f=0$. The subset of $\GEX(E)$
consisting of all $\pi\in\GEX(E)$ such that $\pi$ splits $\sim$ is denoted
by $\Sigma\sb{\sim}(E)$.
\end{definition}

\begin{lemma} \label{lm:piinSigmasim}
If $\pi\in\GEX(E)$, then the following conditions are mutually
equivalent{\rm:}
\begin{enumerate}
\item $\pi\in\Sigma\sb{\sim}(E)$ {\rm(}i.e., $\pi$ splits $\sim${\rm)}.
\item $f\sim e\in\pi(E)\Rightarrow f\in\pi(E)$.
\item $f\subsim e\in\pi(E)\Rightarrow f\in\pi(E)$, {\rm(}i.e.,
 $\pi(E)$ is hereditary{\rm)}.
\item If $e\in\pi(E)$ and $f\in\pi\,'(E)$, then $e$ is unrelated to $f$.
\end{enumerate}
\end{lemma}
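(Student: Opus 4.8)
The plan is to prove the four conditions mutually equivalent by establishing the cycle (i) $\Rightarrow$ (ii) $\Rightarrow$ (iii) $\Rightarrow$ (iv) $\Rightarrow$ (i). Throughout I would write $\pi'$ for the boolean complement of $\pi$ in $\GEX(E)$, so that $\pi(E)$ and $\pi'(E)$ are complementary direct summands and every $e\in E$ splits uniquely as $e=\pi e\oplus\pi'e$ with $\pi e\in\pi(E)$ and $\pi'e\in\pi'(E)$. The key structural fact I will lean on repeatedly is that $\pi(E)$ is an \emph{ideal} (order ideal closed under $\oplus$), so membership in $\pi(E)$ is equivalent to $\pi'e=0$, i.e.\ $\pi e=e$.

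For (i) $\Rightarrow$ (ii), suppose $\pi$ splits $\sim$ and $f\sim e$ with $e\in\pi(E)$. Decompose $f=\pi f\oplus\pi'f$. Since $\pi$ is a GEA-endomorphism, applying $\pi$ and $\pi'$ to the hypothesis should let me use SK2 (or just finite additivity) together with the fact that $\sim$ is an SK-congruence to argue $\pi'f\sim\pi'e=0$; but a cleaner route is: from $f\sim e$ and divisibility (SK3d) applied to $e=e\oplus 0$, or more directly by splitting $e$ trivially, one gets a matching decomposition of $f$ into a part $\sim e$ and a part $\sim 0$; the part $\sim 0$ is $0$ by SK1, and the other part lies in $\pi'(E)$ only if it is $0$ by the splitting property. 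I would instead argue most transparently via SK3e / SK3 on $f\oplus 0 \sim e\oplus 0$ isn't quite it — the honest move is: write $f = \pi f \oplus \pi' f$; since $e\in\pi(E)$, consider the pieces and use that $\pi'f \subsim f \sim e$, reducing to (iii); so it is more economical to prove (i) $\Rightarrow$ (iii) directly. For (i) $\Rightarrow$ (iii): if $f\subsim e\in\pi(E)$, pick $e_1\leq e$ with $f\sim e_1$; then $e_1\in\pi(E)$ because $\pi(E)$ is an order ideal; now $\pi'f\subsim f\sim e_1\in\pi(E)$, and $\pi'f\in\pi'(E)$, so by SK3d there is a decomposition of $e_1$ with a part equivalent to $\pi'f$, that part lies in $\pi(E)$, and since $\pi$ splits $\sim$ it must be $0$; hence $\pi'f=0$ by SK1, i.e.\ $f\in\pi(E)$. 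Then (iii) $\Rightarrow$ (ii) is immediate since $f\sim e$ gives $f\subsim e$. And (ii) $\Rightarrow$ (i): if $e\in\pi(E)$, $f\in\pi'(E)$, and $e\sim f$, then by (ii) $f\in\pi(E)$, so $f\in\pi(E)\cap\pi'(E)=\{0\}$, whence $f=0$ and then $e\sim 0$ gives $e=0$ by SK1.

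It remains to weave in condition (iv). For (iv) $\Rightarrow$ (i): assume every $e\in\pi(E)$ is unrelated to every $f\in\pi'(E)$. Given $e\in\pi(E)$, $f\in\pi'(E)$ with $e\sim f$: if $e\neq 0$, then since $e\leq e$, $f\leq f$, and $e\sim f$ with $e,f$ nonzero, $e$ and $f$ are related by Definition \ref{df:SKDefs}(4) — contradiction; so $e=0$ and $f\sim 0$ forces $f=0$. For (i) $\Rightarrow$ (iv): suppose $\pi$ splits $\sim$ but some $e\in\pi(E)$ is related to some $f\in\pi'(E)$; then there are nonzero $e_1\leq e$, $f_1\leq f$ with $e_1\sim f_1$. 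Since $\pi(E)$ and $\pi'(E)$ are order ideals, $e_1\in\pi(E)$ and $f_1\in\pi'(E)$, so the splitting property forces $e_1=f_1=0$, a contradiction. This closes a loop connecting (iv) to the other three (e.g.\ (i) $\Leftrightarrow$ (iv) directly, which together with the earlier cycle finishes everything).

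The only genuinely non-routine point is the use of divisibility (SK3d) in the step (i) $\Rightarrow$ (iii): one must transfer an equivalence $f\sim e_1$ with $f$ living wholly in $\pi'(E)$ into a \emph{sub}-equivalence living in $\pi(E)$ and then kill it. The clean way to do this is to apply the endomorphisms $\pi,\pi'$ to both sides — but $\sim$ is not a priori respected by these maps, so instead I decompose $e_1 = \pi e_1 = e_1$ and use that $f = \pi f\oplus\pi'f$ with $f\sim e_1$; by SK3d applied to $f\sim e_1$? No — SK3d needs a $\oplus$-decomposition of the right side. The correct application: $e_1 = e_1\oplus 0$ is trivial, so SK3d gives nothing; the honest tool is rather that $\pi'f\subsim f$ and $f\sim e_1$, and one simply needs "$x\subsim y\sim z \Rightarrow x\subsim z$", which follows from SK3d applied to $y\sim z$ — write $x\sim y_1\le y$, and decompose... this transitivity of $\subsim$ through $\sim$ on the right is exactly where SK3d (divisibility) is essential, and it is the step I expect to require the most care to state correctly; everything else is bookkeeping with the ideal/summand dictionary and SK1.
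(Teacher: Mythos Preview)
Your proposal is correct and follows essentially the same route as the paper: the decisive step in both is to apply divisibility (SK3d) to $e_1\sim f=\pi f\oplus\pi'f$, obtaining $e_1=a\oplus b$ with $a\sim\pi f$ and $b\sim\pi'f$, then observe $b\leq e_1\in\pi(E)$ and $\pi'f\in\pi'(E)$ so the splitting hypothesis forces $b=\pi'f=0$ --- this is exactly the paper's (i)$\Rightarrow$(ii), and you carry it out correctly in your middle paragraph. Your final paragraph's worry is therefore misplaced: you do not need a separate ``$x\subsim y\sim z\Rightarrow x\subsim z$'' lemma, because SK3d already applies directly to $e_1\sim\pi f\oplus\pi'f$ (the $\oplus$-decomposition is on the side \emph{equivalent} to $e_1$, which is precisely what SK3d asks for).
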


\begin{proof}
(i) $\Rightarrow$ (ii). Assume (i) and the hypotheses of (ii).
Then $f=f\sb{1}\oplus f\sb{2}$ with $f\sb{1}\in\pi(E)$ and
$f\sb{2}\in\pi\,'(E)$. By divisibility, $e=e\sb{1}\oplus e\sb{2}$
with $e\sb{1}\sim f\sb{1}$ and $e\sb{2}\sim f\sb{2}$. As $e\sb{2}
\leq e\in\pi(E)$, it follows that $e\sb{2}\in\pi(E)$, whence
$e\sb{2}\sim f\sb{2}\in\pi\,'(E)$ implies that $e\sb{2}=f\sb{2}=0$,
and therefore $f=f\sb{1}\in\pi(E)$.

(ii) $\Rightarrow$ (iii). Assume (ii) and the hypothesis of
(iii). Then $f\sim e\sb{1}\leq e\in\pi(E)$ for some $e\sb{1}$,
whence $e\sb{1}\in\pi(E)$, and it follows from (ii) that
$f\in\pi(E)$.

(iii) $\Rightarrow$ (iv). Assume (iii) and suppose $e\in\pi(E)$,
$f\in\pi\,'(E)$ and $e\geq e\sb{1}\sim f\sb{1}\leq f$. Then,
$f\sb{1}\subsim e$, so $f\sb{1}\in\pi(E)$. But $f\sb{1}\leq f
\in\pi\,'(E)$, so $f\sb{1}\in\pi\,'(E)$, and it follows that $f\sb{1}
=0$, whence $e\sb{1}=0$, proving that $e$ and $f$ are unrelated.

(iv) $\Rightarrow$ (i). Assume (iv), let $e\in\pi(E)$ and let
$f\in\pi\,'(E)$ with $e\sim f$. If $e,f\not=0$, then $e$ and
$f$ are related. Therefore, $e=f=0$, and we have $\pi\in\Sigma\sb{\sim}(E)$.
\end{proof}

As a consequence of Lemma \ref{lm:piinSigmasim} (iii), if $\pi\in\Sigma
\sb{\sim}(E)$, then the direct summand $\pi(E)$ of $E$ is a hereditary
ideal in $E$ (i.e., it is both hereditary and an ideal); moreover, the
hereditary direct summands of $E$ are precisely those of the form $\pi(E)$
for $\pi\in\Sigma\sb{\sim}(E)$. The next theorem shows that, $\pi\in
\Sigma\sb{\sim}(E)$ iff, for the direct sum decomposition $E=\pi(E)
\oplus\pi\,'(E)$, the equivalence relation $\sim$ works ``coordinatewise."

\begin{theorem} \label{th:simCoordinatewise}
Let $\pi\in\GEX(E)$. Then $\pi\in\Sigma\sb{\sim}(E)$ iff, for all
$e,f\in E$,
\[
e\sim f\Leftrightarrow \pi e\sim\pi f\text{\ and\ }\pi\,'e\sim\pi\,'f.
\]
\end{theorem}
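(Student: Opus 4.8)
The plan is to prove both implications of the biconditional separately, relying chiefly on Lemma~\ref{lm:piinSigmasim} and the fact that $\oplus$, $\ominus$, $\perp$, and $\leq$ all act coordinatewise with respect to the decomposition $E=\pi(E)\oplus\pi\,'(E)$. Throughout, for $e\in E$ I write $e=\pi e\oplus\pi\,'e$ for the canonical coordinate decomposition, and I note that $\pi e\leq e$, $\pi\,'e\leq e$, $\pi(\pi e)=\pi e$, and $\pi(\pi\,'e)=0$.

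For the ($\Leftarrow$) direction: assume the coordinatewise characterization of $\sim$ holds. I verify condition (iv) of Lemma~\ref{lm:piinSigmasim}, namely that any $e\in\pi(E)$ is unrelated to any $f\in\pi\,'(E)$. Suppose instead that $e\geq e_1\sim f_1\leq f$ with $e_1,f_1\neq 0$. Since $e_1\leq e\in\pi(E)$ and $\pi(E)$ is an order ideal (it is $\pi(E)$ for $\pi\in\GEX(E)$), we get $e_1\in\pi(E)$, so $\pi\,'e_1=0$; similarly $f_1\leq f\in\pi\,'(E)$ gives $\pi f_1=0$. Applying the assumed equivalence to $e_1\sim f_1$ yields in particular $\pi e_1\sim\pi f_1=0$, and by SK1 this forces $\pi e_1=0$; but $e_1\in\pi(E)$ means $e_1=\pi e_1=0$, a contradiction. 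Hence $e$ and $f$ are unrelated, so $\pi\in\Sigma_{\sim}(E)$ by Lemma~\ref{lm:piinSigmasim}.

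For the ($\Rightarrow$) direction: assume $\pi\in\Sigma_{\sim}(E)$ and let $e,f\in E$. The easy half is: if $\pi e\sim\pi f$ and $\pi\,'e\sim\pi\,'f$, then since $\pi e\perp\pi\,'e$ (orthogonality of coordinates) and likewise $\pi f\perp\pi\,'f$, finite additivity (the finite case of SK2) gives $e=\pi e\oplus\pi\,'e\sim\pi f\oplus\pi\,'f=f$. For the converse half, assume $e\sim f$. By SK3d (divisibility) applied to $e\sim\pi f\oplus\pi\,'f$, there exist $a,b\in E$ with $e=a\oplus b$, $a\sim\pi f$, and $b\sim\pi\,'f$. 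Now $a\sim\pi f\in\pi(E)$ and $\pi(E)$ is hereditary (Lemma~\ref{lm:piinSigmasim}(ii)), so $a\in\pi(E)$; similarly $b\sim\pi\,'f\in\pi\,'(E)$ gives $b\in\pi\,'(E)$. Thus $e=a\oplus b$ is the (unique) coordinate decomposition of $e$, so $a=\pi e$ and $b=\pi\,'e$, whence $\pi e=a\sim\pi f$ and $\pi\,'e=b\sim\pi\,'f$, as required.

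The main obstacle is the converse half of the ($\Rightarrow$) direction, i.e.\ producing the coordinatewise equivalences from $e\sim f$ alone; the key insight is that divisibility lets us split $e$ against the decomposition of $f$, and the splitting property (via hereditariness of $\pi(E)$ and $\pi\,'(E)$) then forces the pieces to land in the correct summands so that they coincide with the canonical coordinates $\pi e$, $\pi\,'e$ by uniqueness of the direct-sum decomposition. Everything else is routine bookkeeping with coordinatewise operations and the axioms SK1, SK2.
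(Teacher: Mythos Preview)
Your proof is correct. The overall strategy matches the paper's, but your $(\Rightarrow)$ direction is actually cleaner: you invoke only SK3d (divisibility) to split $e$ against the decomposition $f=\pi f\oplus\pi\,'f$, obtaining $e=a\oplus b$ with $a\sim\pi f$, $b\sim\pi\,'f$, and then use the $\sim$-closure of $\pi(E)$ and $\pi\,'(E)$ (Lemma~\ref{lm:piinSigmasim}, condition (ii), not (iii)---the label ``hereditary'' is a slight misnomer here) plus uniqueness of coordinates to force $a=\pi e$, $b=\pi\,'e$. The paper instead applies the full SK3 to $\pi e\oplus\pi\,'e\sim\pi f\oplus\pi\,'f$, producing a four-piece refinement of $\pi e$ and $\pi\,'e$, and then argues that two of the pieces vanish; your route avoids this extra bookkeeping. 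For $(\Leftarrow)$ you verify condition (iv) of Lemma~\ref{lm:piinSigmasim}, whereas the paper verifies condition (ii) directly (if $e\sim f\in\pi(E)$ then $\pi\,'e\sim\pi\,'f=0$ forces $e\in\pi(E)$); both are equally short.
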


\begin{proof}
Assume that $\pi\in\Sigma\sb{\sim}(E)$ and suppose that $e\sim f$, i.e.,
$\pi e\oplus\pi\,' e\sim\pi f\oplus\pi\,' f$. By SK3, there exist
$a,b,c,d\in E$ such that $a\oplus b=\pi e$, $c\oplus d=\pi\,'e$, $a
\oplus c\sim \pi f$, and $b\oplus d\sim\pi\,'f$. Then $a\oplus c\sim
\pi f\in\pi(E)$, whence $a\oplus c\in\pi(E)$ by Lemma
\ref{lm:piinSigmasim}, and since $c\leq a\oplus c$, it follows that
$c\in\pi(E)$. But, $c\leq c\oplus d=\pi\,'e\in\pi\,'(E)$, so $c\in\pi
\,'(E)$, and it follows that $c=0$. Since $\pi\,'\in\Sigma\sb{\sim}(E)$,
a similar argument shows that $b=0$. Therefore $\pi e=a\oplus 0\sim\pi f$
and $\pi\,'e=0\oplus d\sim\pi\,'f$.

Conversely, assume that, for all $e,f\in E$,
\[
e\sim f\Leftrightarrow \pi e\sim\pi f\text{\ and\ }\pi\,'e\sim\pi\,'f.
\]
and suppose that $e\sim f\in\pi(E)$. Then $\pi\,'e\sim\pi\,'f=0$, whence
$\pi\,'e=0$, i.e., $e\in\pi(E)$. By Lemma \ref{lm:piinSigmasim}, we have
$\pi\in\Sigma\sb{\sim}(E)$.
\end{proof}

\begin{theorem} \label{th:Sigmasimboolean}
$\Sigma\sb{\sim}(E)$ is a boolean subalgebra of $\GEX(E)$.
\end{theorem}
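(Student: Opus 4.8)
The plan is to verify that $\Sigma\sb{\sim}(E)$ is closed under the boolean operations of $\GEX(E)$, namely complementation $\pi\mapsto\pi\,'$, finite infimum (which by the earlier discussion is composition, $\pi\wedge\xi=\pi\circ\xi$), and finite supremum; since $0,1\in\Sigma\sb{\sim}(E)$ trivially and $\GEX(E)$ is a boolean algebra, this suffices. Closure under complementation is immediate from the symmetry of the defining condition in Definition \ref{df:splits}: if $e\in\pi\,'(E)$, $f\in(\pi\,')\,'(E)=\pi(E)$, and $e\sim f$, then (by symmetry of $\sim$) $f\sim e$ with $f\in\pi(E)$, $e\in\pi\,'(E)$, so $\pi\in\Sigma\sb{\sim}(E)$ forces $e=f=0$; hence $\pi\,'\in\Sigma\sb{\sim}(E)$.

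For closure under infimum, I would use the characterization in Lemma \ref{lm:piinSigmasim}(ii): $\pi\in\Sigma\sb{\sim}(E)$ iff $\pi(E)$ is closed downward under $\sim$. Let $\pi,\xi\in\Sigma\sb{\sim}(E)$ and set $\zeta:=\pi\wedge\xi$, so $\zeta(E)=\pi(E)\cap\xi(E)$ (this follows from $\zeta=\pi\circ\xi$ together with the pointwise description $(\pi\wedge\xi)e=\pi e\wedge\xi e$ and idempotency). If $f\sim e\in\zeta(E)$, then $e\in\pi(E)$, so $f\in\pi(E)$ by Lemma \ref{lm:piinSigmasim}(ii) applied to $\pi$; likewise $f\in\xi(E)$; hence $f\in\zeta(E)$. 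Thus $\zeta(E)$ is $\sim$-closed downward and $\zeta\in\Sigma\sb{\sim}(E)$ by Lemma \ref{lm:piinSigmasim} again. Closure under supremum then follows formally: $\pi\vee\xi=(\pi\,'\wedge\xi\,')\,'$, and we have just shown $\Sigma\sb{\sim}(E)$ is closed under $\,'$ and $\wedge$.

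The one point that needs a little care—and is the only real obstacle—is the identity $\zeta(E)=\pi(E)\cap\xi(E)$ for $\zeta=\pi\wedge\xi$. This is a general fact about the exocenter recalled in the excerpt (the correspondence $\pi\leftrightarrow\pi(E)$ between $\GEX(E)$ and direct summands is an order isomorphism onto the lattice of direct summands, with $\pi\leq\xi$ iff $\pi(E)\subseteq\xi(E)$, and infima computed pointwise via composition), so I would simply cite it: if $e\in\zeta(E)$ then $e=\zeta e=\pi(\xi e)\in\pi(E)$ and $e=\xi(\pi e)\in\xi(E)$, while conversely if $e\in\pi(E)\cap\xi(E)$ then $\zeta e=\pi(\xi e)=\pi e=e$. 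With that in hand the proof is a few lines.
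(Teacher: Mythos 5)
Your proposal is correct and follows essentially the same route as the paper's proof: closure under $\pi\mapsto\pi\,'$ from the symmetry of Definition \ref{df:splits}, closure under $\wedge$ via the characterization in Lemma \ref{lm:piinSigmasim}(ii) together with $(\pi\wedge\xi)(E)=\pi(E)\cap\xi(E)$, and closure under $\vee$ by de Morgan. Your explicit verification of the identity $(\pi\wedge\xi)(E)=\pi(E)\cap\xi(E)$ is a harmless bit of extra care that the paper simply takes for granted.
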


\begin{proof}
Obviously, $0,1\in\Sigma\sb{\sim}(E)$ and if $\pi\in\Sigma\sb{\sim}(E)$, then
$\pi\,'\in\Sigma\sb{\sim}(E)$. Suppose that $\pi,\xi\in\Sigma\sb{\sim}(E)$ and
that $e,f\in E$ with $e\sim f\in(\pi\wedge\xi)(E)$. Then $e\sim f\in
\pi(E)$ and $e\sim f\in\xi(E)$, whence $e\in\pi(E)\cap\xi(E)=(\pi
\wedge\xi)(E)$ by Lemma \ref{lm:piinSigmasim}, and it follows from
Lemma \ref{lm:piinSigmasim} that $\pi\wedge\xi\in\Sigma\sb{\sim}(E)$.
Therefore $\pi\vee\xi=(\pi\,'\wedge\xi\,')'\in\Sigma\sb{\sim}(E)$, and
$\Sigma\sb{\sim}(E)$ is a boolean subalgebra of $\GEX(E)$.
\end{proof}

\begin{theorem}
Suppose that $E$ is orthogonally ordered and $(\eta\sb{e})\sb{e\in E}$
is a divisible hull system on $E$. Then $\Sigma\sb{\etasim}(E)$ is both
a boolean subalgebra of $\GEX(E)$ and a hull determining {\rm(HD)} set;
moreover the hull system determined by $\Sigma\sb{\etasim}(E)$ is
$(\eta\sb{e})\sb{e\in E}$ itself.
\end{theorem}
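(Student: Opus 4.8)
The plan is to obtain the boolean-subalgebra assertion from the machinery of Section~\ref{sc:SK}, and then to prove directly that, for each $e\in E$, the mapping $\eta_e$ is the least element of the set $\{\theta\in\Sigma_{\etasim}(E):\theta e=e\}$. This single fact simultaneously establishes HD1 and identifies the hull system determined by $\Sigma_{\etasim}(E)$ as $(\eta_e)_{e\in E}$, while HD2 comes for free from the boolean-subalgebra structure.

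First I would note that, since $E$ is orthogonally ordered and $(\eta_e)_{e\in E}$ is divisible, Theorem~\ref{th:etasimSK} shows that $\etasim$ is an SK-congruence on $E$; hence all results of Section~\ref{sc:SK} apply with $\sim:=\etasim$. In particular, Theorem~\ref{th:Sigmasimboolean} gives that $\Sigma_{\etasim}(E)$ is a boolean subalgebra of $\GEX(E)$, which settles the first assertion. Being a boolean subalgebra, $\Sigma_{\etasim}(E)$ is closed under $\theta\mapsto\theta\,'$ and under $\wedge$, hence under $\theta\wedge\xi\,'$, so HD2 holds.

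Next I would check that $\eta_e\in\Sigma_{\etasim}(E)$ for every $e\in E$ via criterion (ii) of Lemma~\ref{lm:piinSigmasim}. Suppose $f\etasim g$ with $g\in\eta_e(E)$. From $g=\eta_e g$ and HS3 we get $\eta_g=\eta_{\eta_e g}=\eta_e\wedge\eta_g\le\eta_e$, hence $\eta_f=\eta_g\le\eta_e$, i.e.\ $\eta_e\circ\eta_f=\eta_f$; then HS2 gives $\eta_e f=\eta_e(\eta_f f)=\eta_f f=f$, so $f\in\eta_e(E)$. Thus $\eta_e$ splits $\etasim$, and since $\eta_e e=e$ by HS2, the mapping $\eta_e$ lies in $\{\theta\in\Sigma_{\etasim}(E):\theta e=e\}$.

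The crux is to show $\eta_e\le\theta$ whenever $\theta\in\Sigma_{\etasim}(E)$ and $\theta e=e$; equivalently, $\eta_e(E)\subseteq\theta(E)$. Here I would use the fact (the remark following Lemma~\ref{lm:piinSigmasim}) that $\theta(E)$ is a hereditary ideal, together with Lemma~\ref{lm:etaSK4a}: for any $d\in E$, applying that lemma to the pair $(e,d)$ yields $\eta_e d\etasim\eta_d e$ with $\eta_d e\le e$, so $\eta_e d\subsim e$; since $e\in\theta(E)$ and $\theta(E)$ is hereditary, $\eta_e d\in\theta(E)$. As $d$ ranges over $E$ this gives $\eta_e(E)\subseteq\theta(E)$, i.e.\ $\eta_e\le\theta$. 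Combined with the previous paragraph, $\eta_e$ is the least element of $\{\theta\in\Sigma_{\etasim}(E):\theta e=e\}$, which yields HD1 and, by the very definition of the hull system determined by an HD set, shows that this hull system is $(\eta_e)_{e\in E}$ itself. I expect this last step to be the only real obstacle; it is exactly where the defining property of $\Sigma_{\etasim}(E)$ (via heredity of $\theta(E)$) is needed, and the remaining bookkeeping with HS1--HS3 and Lemma~\ref{lm:piinSigmasim} is routine.
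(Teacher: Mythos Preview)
Your proposal is correct and follows essentially the same plan as the paper: reduce to Theorem~\ref{th:etasimSK} and Theorem~\ref{th:Sigmasimboolean} for the boolean-subalgebra and HD2 parts, then show that each $\eta_e$ lies in $\Sigma_{\etasim}(E)$ and is minimal among $\theta$ with $\theta e=e$. The only cosmetic difference is which equivalent criterion from Lemma~\ref{lm:piinSigmasim} you invoke at each step---you use~(ii) for membership and~(iii) (heredity of $\theta(E)$, combined with $\eta_e d\etasim\eta_d e\le e$) for minimality, whereas the paper argues both points directly from the splitting definition~(i), showing $(\eta_e\wedge\pi')(E)=\{0\}$; the underlying computation $\eta_{\eta_e d}=\eta_e\wedge\eta_d=\eta_{\eta_d e}$ is the same in both.
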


\begin{proof}
By Theorem \ref{th:etasimSK}, $\etasim$ is an SK-congruence on $E$, whence
$\Sigma\sb{\etasim}(E)$ is a boolean subalgebra of $\GEX(E)$ by Theorem
\ref{th:Sigmasimboolean}, and therefore it satisfies condition HD2. To
prove HD1, suppose $e\in E$, $s\in\eta\sb{e}(E)$, $t\in(\eta\sb{e})'(E)$,
and $s\etasim t$. Then $s=\eta\sb{e}s$, $\eta\sb{e}t=0$, and $\eta\sb{t}
=\eta\sb{s}=\eta\sb{\eta\sb{e}s}=\eta\sb{e}\wedge\eta\sb{s}=\eta\sb{e}
\wedge\eta\sb{t}=\eta\sb{\eta\sb{e}t}=0$, so $s=t=0$. This proves that
$\eta\sb{e}\in\Sigma\sb{\etasim}(E)$. To finish the proof, it will be
sufficient to show that $\eta\sb{e}$ is the smallest mapping $\pi\in
\Sigma\sb{\etasim}(E)$ such that $\pi e=e$. So assume that $\pi\in
\Sigma\sb{\etasim}(E)$ and $\pi e=e$. Choose any $f\in(\eta\sb{e}\wedge
\pi\,')(E)$ and put $s:=\eta\sb{f}e$, $t:=\eta\sb{e}f$.  Then $s\etasim
t$, $f=\eta\sb{e}f=t$, and $\pi\,'f=f$; hence $\pi s=\pi(\eta\sb{f}e)=\eta
\sb{f}(\pi e)=\eta\sb{f}e=s$ and $\pi\,'t=\pi\,'(\eta\sb{e}f)=\eta\sb{e}
(\pi\,'f)=\eta\sb{e}f=t$. Thus, $s\in\pi(E)$, $t\in\pi\,'(E)$, and
$s\etasim t$, whence $f=t=s=0$. But $f$ was an arbitrary element in
$(\eta\sb{e}\wedge\pi\,')(E)$, therefore $\eta\sb{e}\wedge\pi\,'
=0$ in the boolean algebra $\GEX(E)$, and it follows that $\eta\sb{e}
\leq\pi$.
\end{proof}

\begin{theorem} \label{th:SigmasimProps}
Suppose that $E$ is centrally orthocomplete {\rm(}i.e., a {\rm COGEA)}.
Then{\rm:}
\begin{enumerate}
\item $\Sigma\sb{\sim}(E)$ is a sup/inf-closed boolean subalgebra of the
 complete boolean algebra $\GEX(E)$.
\item $\Sigma\sb{\sim}(E)$ is a complete boolean algebra.
\item $\Sigma\sb{\sim}(E)$ is a hull determining {\rm(HD)} set and it
 determines the hull system $(\eta\sb{e})\sb{e\in E}$ given by
 $\eta\sb{e}:=\bigwedge\{\pi\in\Sigma\sb{\sim}(E):\pi e=e\}$ for all
 $e\in E$.
\item $e\in E\Rightarrow\eta\sb{e}\in\Sigma\sb{\sim}(E)$, i.e., $\Theta
 \sb{\eta}(E)\subseteq\Sigma\sb{\sim}(E)$.
\item If $e\in E$ and $\pi\in\Sigma\sb{\sim}(E)$, then $\pi e=0
 \Leftrightarrow \pi\wedge\eta\sb{e}=0$.
\item If $e,f\in E$, then there exists $\pi\in\Sigma\sb{\sim}(E)$ such
 that $e\in\pi(E)$ and $f\in\pi\,'(E)$ iff $\eta\sb{e}\wedge\eta\sb{f}=0$.
\item If $e,f\in E$ and $\eta\sb{e}\wedge\eta\sb{f}=0$, then $e$ is
 unrelated to $f$.
\end{enumerate}
\end{theorem}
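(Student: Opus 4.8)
The plan is to prove the seven assertions in the stated order, with (i) carrying nearly all of the work and parts (ii)--(vii) reducing to formal computations in the complete boolean algebra $\GEX(E)$. The tools I expect to use are the heredity characterization of $\Sigma\sb{\sim}(E)$ from Lemma \ref{lm:piinSigmasim}, the fact noted above that $\GEX(E)$ is a \emph{complete} boolean algebra (since $E$ is a COGEA), and---this is the crucial ingredient---the exocentral cover system $(\gamma\sb{e})\sb{e\in E}$ that exists on any COGEA, $\gamma\sb{e}$ being the smallest $\pi\in\GEX(E)$ with $\pi e=e$.

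For (i): by Theorem \ref{th:Sigmasimboolean}, $\Sigma\sb{\sim}(E)$ is already a boolean subalgebra of $\GEX(E)$ and is closed under the boolean complement $\pi\mapsto\pi\,'$; hence, by De Morgan's laws in the complete boolean algebra $\GEX(E)$, it suffices to show that $\Sigma\sb{\sim}(E)$ is closed under arbitrary infima. Let $(\pi\sb{i})\sb{i\in I}$ be a nonempty family in $\Sigma\sb{\sim}(E)$ and put $\pi:=\bigwedge\sb{i\in I}\pi\sb{i}$ in $\GEX(E)$; I will show that $\pi(E)$ is hereditary, so that $\pi\in\Sigma\sb{\sim}(E)$ by Lemma \ref{lm:piinSigmasim} (iii). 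Suppose $f\subsim e$ with $e\in\pi(E)$. Since $\pi\leq\pi\sb{i}$ we have $\pi(E)\subseteq\pi\sb{i}(E)$, so $e\in\pi\sb{i}(E)$; as $\pi\sb{i}(E)$ is hereditary, $f\in\pi\sb{i}(E)$, i.e., $\pi\sb{i}f=f$, for every $i\in I$. By minimality of the exocentral cover, $\gamma\sb{f}\leq\pi\sb{i}$ for every $i$, hence $\gamma\sb{f}\leq\pi$, and therefore $f=\gamma\sb{f}f\in\gamma\sb{f}(E)\subseteq\pi(E)$. This proves (i), and (ii) follows at once, since a boolean subalgebra of a complete boolean algebra that is closed under the arbitrary meets and joins of the larger algebra is itself complete, those meets and joins serving as the meets and joins of the subalgebra.

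For (iii) and (iv): HD2 holds because $\Sigma\sb{\sim}(E)$ is a boolean subalgebra. For HD1, fix $e\in E$ and set $\eta\sb{e}:=\bigwedge\{\pi\in\Sigma\sb{\sim}(E):\pi e=e\}$, which lies in $\Sigma\sb{\sim}(E)$ by (i). Since $\gamma\sb{e}\leq\pi$ for every $\pi\in\GEX(E)$ with $\pi e=e$, in particular for every $\pi$ in the set being intersected, we get $\gamma\sb{e}\leq\eta\sb{e}$, whence $e=\gamma\sb{e}e\in\eta\sb{e}(E)$; thus $\eta\sb{e}$ is the least element of $\{\pi\in\Sigma\sb{\sim}(E):\pi e=e\}$, which is HD1. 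By \cite[Theorem 8.2]{HDTD}, the hull system determined by the HD set $\Sigma\sb{\sim}(E)$ is precisely the family $(\eta\sb{e})\sb{e\in E}$ just described, giving (iii); and (iv) is the observation that $\Theta\sb{\eta}(E)=\{\eta\sb{e}:e\in E\}\subseteq\Sigma\sb{\sim}(E)$.

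For (v)--(vii): I will use repeatedly that, for $\pi\in\GEX(E)$, $\pi e=0$ iff $e\in\pi\,'(E)$, and that $\eta\sb{e}\in\Sigma\sb{\sim}(E)$ with $\eta\sb{e}e=e$, so that $\eta\sb{e}(E)$ is an order ideal and $\eta\sb{e}(E)\cap\eta\sb{e}\,'(E)=\{0\}$. For (v): if $\pi e=0$ then $e\in\eta\sb{e}(E)\cap\pi\,'(E)=(\eta\sb{e}\wedge\pi\,')(E)$, and since $\eta\sb{e}\wedge\pi\,'\in\Sigma\sb{\sim}(E)$ with $(\eta\sb{e}\wedge\pi\,')e=e$, minimality of $\eta\sb{e}$ forces $\eta\sb{e}\leq\eta\sb{e}\wedge\pi\,'$, i.e., $\pi\wedge\eta\sb{e}=0$; conversely, if $\pi\wedge\eta\sb{e}=0$ then $\pi(E)\subseteq\eta\sb{e}\,'(E)$ while $\pi e\leq e\in\eta\sb{e}(E)$ forces $\pi e\in\eta\sb{e}(E)\cap\eta\sb{e}\,'(E)=\{0\}$. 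For (vi): if $\eta\sb{e}\wedge\eta\sb{f}=0$, take $\pi:=\eta\sb{e}$; then $e\in\pi(E)$, and by (v) $\eta\sb{e}f=0$, so $f\in\eta\sb{e}\,'(E)=\pi\,'(E)$; conversely, $e\in\pi(E)$ gives $\eta\sb{e}\leq\pi$ by minimality and $f\in\pi\,'(E)$ gives $\pi f=0$, hence $\pi\wedge\eta\sb{f}=0$ by (v), so $\eta\sb{e}\wedge\eta\sb{f}\leq\pi\wedge\eta\sb{f}=0$. Finally, (vii) follows from (vi) together with Lemma \ref{lm:piinSigmasim} (iv). The only genuine obstacle I anticipate is the closure of $\Sigma\sb{\sim}(E)$ under infinite infima in (i)---the point being that the exocentral cover system is exactly what transports membership back up through an infimum---after which (ii)--(vii) are routine boolean bookkeeping.
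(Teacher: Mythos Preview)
Your proof is correct and follows the same overall architecture as the paper's, but there is one genuine technical difference worth noting. In part (i), the paper shows that $\pi:=\bigwedge_{i}\pi_i$ splits $\sim$ by invoking the pointwise-infimum formula $\pi(E)=\bigcap_{i}\pi_i(E)$ (\cite[Theorem 6.9]{ExoCen}, valid in a COGEA): once $e\in\pi_i(E)$ for every $i$, one immediately has $e\in\pi(E)$. You instead route the argument through the exocentral cover system $(\gamma_e)_{e\in E}$: from $\pi_i f=f$ for all $i$ you get $\gamma_f\leq\pi_i$, hence $\gamma_f\leq\pi$, hence $f\in\gamma_f(E)\subseteq\pi(E)$. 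Both devices are available precisely because $E$ is a COGEA, and each does the job; your version has the virtue of reusing the same idea in (iii), where you verify HD1 directly via $\gamma_e\leq\eta_e$ rather than citing \cite[Theorem 8.9]{HDTD} as the paper does. The remaining parts (v)--(vii) differ only cosmetically from the paper's boolean computations.
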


\begin{proof}
By \cite[Theorem 6.8]{ExoCen}, $\GEX(E)$ is a complete boolean algebra
and by Theorem \ref{th:Sigmasimboolean}, $\Sigma\sb{\sim}(E)$ is a boolean
subalgebra of $\GEX(E)$. Let $(\pi\sb{i})\sb{i\in I}$ be a family of
mappings in $\Sigma\sb{\sim}(E)$ and put $\pi:=\bigwedge\sb{i\in I}\pi
\sb{i}$ {\rm(}the infimum in $\GEX(E)${\rm)}. Then, since infima in
$\GEX(E)$ can be calculated pointwise \cite[Theorem 6.9]{ExoCen}, we
have $\pi(E)=\bigcap\sb{i\in I}\pi\sb{i}(E)$. Suppose that $e,f\in E$
and $e\sim f\in\pi(E)$. Then, for every $i\in I$, $e\sim f\in\pi\sb{i}(E)$,
and since $\pi\sb{i}\in\Sigma\sb{\sim}(E)$, it follows from Lemma
\ref{lm:piinSigmasim} that $e\in\pi\sb{i}(E)$. Therefore, $e\in\bigcap
\sb{i\in I}\pi\sb{i}(E)=\pi(E)$, and Lemma \ref{lm:piinSigmasim} implies
that $\pi\in\Sigma\sb{\sim}(E)$, whence (by the de\,Morgan law) $\Sigma
\sb{\sim}(E)$ is sup/inf-closed in $\GEX(E)$. This proves (i), and (ii)
follows immediately from (i).

By \cite[Theorem 8.9]{HDTD} (which requires the hypothesis that $E$ is
a COGEA), $\Sigma\sb{\sim}(E)$ is HD, the hull system that it determines
is given by $\eta\sb{e}:=\bigwedge\{\pi\in\Sigma\sb{\sim}(E):\pi e=e\}$
for all $e\in E$, and we have (iii). As $\eta\sb{e}$ is the smallest
mapping in $\{\pi\in\Sigma\sb{\sim}(E):\pi e=e\}$, we also have (iv).

Assume the hypotheses of (v). Then $\pi\,'\in\Sigma\sb{\sim}(E)$, so
if $\pi e=0$, then $\pi\,'(e)=e$, and it follows from (iii) that
$\eta\sb{e}\leq\pi\,'$, i.e., $\pi\wedge\eta\sb{e}=0$. Conversely,
if $\pi\wedge\eta\sb{e}=0$, then $\pi e=\pi e\wedge e=\pi(e)\wedge
\eta\sb{e}e=(\pi\wedge\eta\sb{e})e=0$, proving (v).

To prove (vi), suppose that $\pi\in\Sigma\sb{\sim}(E)$, $e\in\pi(E)$,
and $f\in\pi\,'(E)$. Then $e=\pi e$ and $f=\pi\,'f$, whence $\eta\sb{e}
\leq\pi$ and $\eta\sb{f}\leq\pi\,'$, and it follows that $\eta\sb{e}
\wedge\eta\sb{f}\leq\pi\wedge\pi\,'=0$. Conversely, suppose that $\eta
\sb{e}\wedge\eta\sb{f}=0$, i.e., $\eta\sb{f}\leq(\eta\sb{e})'$. Put
$\pi:=\eta\sb{e}$. Then $e=\eta\sb{e}e=\pi e\in\pi(E)$, and as $\eta
\sb{f}\leq\pi\,'$, we have $f=\eta\sb{f}f\leq\pi\,'f\leq f$, so $f=
\pi\,'f\in\pi\,'(E)$. This proves (vi), and (vii) follows from (vi) and
Lemma \ref{lm:piinSigmasim} (iv).
\end{proof}

\begin{theorem} \label{th:pi&sim}
Suppose that $E$ is a COGEA, let $(\eta\sb{e})\sb{e\in E}$ be the hull
system determined by the HD set $\Sigma\sb{\sim}(E)$, let $e,f\in E$,
and let $\pi\in\Sigma\sb{\sim}(E)$. Then{\rm:} {\rm(i)} If $e\subsim f$,
then $\pi e\subsim\pi f$ and $\eta\sb{e}\leq\eta\sb{f}$. {\rm(ii)} $\eta
\sb{e}\leq\eta\sb{f}$ iff there exists $f\sb{1}\in E$ with $e\etasim
f\sb{1}\leq f$. {\rm(iii)} If $e\sim f$, then $\pi e\sim\pi f$ and $e\etasim f$.
\end{theorem}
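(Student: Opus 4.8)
The plan is to prove (i) first, derive (iii) from (i) together with Theorem~\ref{th:simCoordinatewise}, and finally prove (ii) by a direct argument. For (i), I would start from $e\subsim f$, i.e., $e\sim e\sb{1}$ for some $e\sb{1}\in E[0,f]$. Since $\pi$ is a GEA-endomorphism it preserves the partial order, so $\pi e\sb{1}\leq\pi f$; and Theorem~\ref{th:simCoordinatewise} applied to $\pi\in\Sigma\sb{\sim}(E)$ gives $\pi e\sim\pi e\sb{1}$. Hence $\pi e\sim\pi e\sb{1}\leq\pi f$, i.e., $\pi e\subsim\pi f$. For the second assertion, recall from Theorem~\ref{th:SigmasimProps}(iv) that $\eta\sb{f}\in\Sigma\sb{\sim}(E)$ and from \S3 that $\eta\sb{f}(E)$ is a direct summand, hence an order ideal containing $f=\eta\sb{f}f$; thus $e\sb{1}\leq f$ forces $e\sb{1}\in\eta\sb{f}(E)$, and then Lemma~\ref{lm:piinSigmasim} yields $e\in\eta\sb{f}(E)$, i.e., $\eta\sb{f}e=e$. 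Since, by Theorem~\ref{th:SigmasimProps}(iii), $\eta\sb{e}$ is the least element of $\{\pi\in\Sigma\sb{\sim}(E):\pi e=e\}$, we conclude $\eta\sb{e}\leq\eta\sb{f}$.

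For (iii), the relation $\pi e\sim\pi f$ is immediate from Theorem~\ref{th:simCoordinatewise}. To see $e\etasim f$, observe that $e\sim f\leq f$ shows $e\subsim f$ while $f\sim e\leq e$ shows $f\subsim e$; applying part (i) in both directions gives $\eta\sb{e}\leq\eta\sb{f}$ and $\eta\sb{f}\leq\eta\sb{e}$, whence $\eta\sb{e}=\eta\sb{f}$.

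For (ii), the implication ($\Leftarrow$) follows from (i): if $e\etasim f\sb{1}\leq f$, then $f\sb{1}\sim f\sb{1}\leq f$ shows $f\sb{1}\subsim f$, so $\eta\sb{f\sb{1}}\leq\eta\sb{f}$ by (i), and therefore $\eta\sb{e}=\eta\sb{f\sb{1}}\leq\eta\sb{f}$. For ($\Rightarrow$), assume $\eta\sb{e}\leq\eta\sb{f}$ and put $f\sb{1}:=\eta\sb{e}f$; then $f\sb{1}\leq f$ because $\eta\sb{e}$ is decreasing, and condition HS3 gives $\eta\sb{f\sb{1}}=\eta\sb{\eta\sb{e}f}=\eta\sb{e}\wedge\eta\sb{f}=\eta\sb{e}$, so $e\etasim f\sb{1}\leq f$, as required. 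None of these steps presents a genuine obstacle; the only point to keep straight throughout is the dual reading of membership in $\Sigma\sb{\sim}(E)$ --- namely that ``$\pi\in\Sigma\sb{\sim}(E)$ and $\pi e=e$'' is equivalent to ``$\eta\sb{e}\leq\pi$'' --- together with the facts $\eta\sb{e},\eta\sb{f}\in\Sigma\sb{\sim}(E)$, all of which are furnished by Theorem~\ref{th:SigmasimProps}.
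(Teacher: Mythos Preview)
Your proof is correct and follows essentially the same approach as the paper: for (i) you pass $\sim$ through $\pi$ via Theorem~\ref{th:simCoordinatewise} and use hereditariness of $\eta\sb{f}(E)$ to get $\eta\sb{f}e=e$, then invoke the minimality characterization of $\eta\sb{e}$ from Theorem~\ref{th:SigmasimProps}(iii); for (ii) you take $f\sb{1}:=\eta\sb{e}f$ and apply HS3, and for (iii) you combine Theorem~\ref{th:simCoordinatewise} with (i) applied in both directions. The only cosmetic difference is that the paper phrases the second half of (i) as showing $\{\pi\in\Sigma\sb{\sim}(E):\pi f=f\}\subseteq\{\pi\in\Sigma\sb{\sim}(E):\pi e=e\}$ for arbitrary $\pi$ before taking infima, whereas you instantiate directly at $\pi=\eta\sb{f}$; both yield the same conclusion.
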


\begin{proof}
Assume the hypotheses. Suppose that $e\subsim f$. Then there exists
$f\sb{1}\in E$ with $e\sim f\sb{1}\leq f$. By Theorem
\ref{th:simCoordinatewise}, $\pi e\sim\pi f\sb{1}$, and since $\pi\in
\GEX(E)$, $\pi f\sb{1}\leq\pi f$, whence $\pi e\subsim\pi f$.  If $\pi f=f$,
then $f\in\pi(E)$, and since $\pi(E)$ is hereditary, it follows that
$e\in\pi(E)$. Thus, $\pi f=f$ implies $\pi e=e$, and it follows that
\[
\eta\sb{e}=\bigwedge\{\pi\in\Sigma\sb{\sim}(E):\pi e=e\}\leq\bigwedge
 \{\pi\in\Sigma\sb{\sim}(E):\pi f=f\}=\eta\sb{f}.
\]
This proves (i), and (iii) follows from (i).

To prove (ii), suppose that $\eta\sb{e}\leq\eta\sb{f}$ and put $f\sb{1}
:=\eta\sb{e}f$. Then $f\sb{1}\leq f$ and $\eta\sb{f\sb{1}}=\eta\sb{e}
\wedge\eta\sb{f}=\eta\sb{e}$, so $e\etasim f\sb{1}$. Conversely, if
$e\etasim f\sb{1}\leq f$, then $\eta\sb{e}=\eta\sb{f\sb{1}}\leq\eta\sb{f}$.
\end{proof}

With the aid of \cite[Theorem 6.3]{CenGEA} and the hypothesis that $E$ is
Dedekind orthocomplete, the proof of the next lemma closely follows the
proof of \cite[Proposition 3]{Je02}.

\begin{lemma} \label{lm:simcancel}
If $E$ is Dedekind orthocomplete, then $e\sim e\oplus f\oplus d
\Rightarrow e\sim e\oplus f$.
\end{lemma}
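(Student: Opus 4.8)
The plan is to convert the single relation $e\sim e\oplus f\oplus d$ into an infinite orthogonal sequence $g_1,g_2,g_3,\dots$ of pairwise $\sim$-equivalent ``copies of $f$'' lying beneath $e$, and then run a Hilbert‑hotel argument: $e\oplus f$ will arise from $e$ by inserting one more copy of $f$ at the front of this sequence, and because consecutive $g_n$'s are equivalent, SK2 will identify the two orthosums. Only Dedekind orthocompleteness (not central orthocompleteness) is needed, which matches the hypothesis.

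First I would build the sequence by recursion on the shape ``$x\sim x\oplus y\oplus d$''. Rewriting $e\oplus f\oplus d$ as $(e\oplus d)\oplus f$ and applying SK3d (divisibility) to $e\sim(e\oplus d)\oplus f$ yields $e_1,g_1$ with $e=e_1\oplus g_1$, $e_1\sim e\oplus d$, and $g_1\sim f$. Since $e=e_1\oplus g_1$ forces $e\oplus d=e_1\oplus g_1\oplus d$, we obtain $e_1\sim e_1\oplus g_1\oplus d$, which again has the form ``$x\sim x\oplus y\oplus d$'' with $x=e_1$, $y=g_1$. Iterating produces $e=e_0\ge e_1\ge e_2\ge\cdots$ and $g_1,g_2,\dots$ with $e_{n-1}=e_n\oplus g_n$ and $g_{n+1}\sim g_n$ (so every $g_n\sim f$). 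Telescoping gives $g_1\oplus\cdots\oplus g_n\le e$ for all $n$, so by Dedekind orthocompleteness $(g_n)_{n\in\Nat}$ is orthosummable; put $G:=\bigoplus_{n\in\Nat}g_n\le e$ and $r:=e\ominus G$.

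Next I would assemble the two orthosums and invoke SK2. Adjoining $r$ to the orthosummable family $(g_n)_{n\in\Nat}$ gives an orthosummable family with orthosum $r\oplus G=e$, and adjoining also $f$ (orthogonal to $e$ by hypothesis) gives an orthosummable family with orthosum $f\oplus e=e\oplus f$; here I would use \cite[Theorem 6.3]{CenGEA} together with the basic GEA calculus to justify that a family obtained from an orthosummable family by adjoining an element orthogonal to its orthosum is again orthosummable, with the expected orthosum, and that reordering/reindexing is harmless. Reindexed over $\Nat$, the family $(r,g_1,g_2,g_3,\dots)$ has orthosum $e$ and the family $(r,f,g_1,g_2,g_3,\dots)$ has orthosum $e\oplus f$; entry by entry the terms are $\sim$-equivalent ($r\sim r$, $g_1\sim f$, and $g_k\sim g_{k-1}$ for $k\ge2$). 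Applying SK2 then gives $e\sim e\oplus f$.

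The main obstacle is the recursion in the second paragraph: one must see that the ``spare'' summand $d$ is exactly what lets each step split off a further copy $\sim f$ while leaving behind an element still equivalent to ``itself $\oplus$ (the remainder)'', so the process never terminates (with $d=0$ the construction is circular, but then the lemma is trivial). The rest — tracking which finite partial orthosums are bounded so Dedekind orthocompleteness applies, and verifying that the two reindexed families really are orthosummable with the stated orthosums — is routine bookkeeping, and it is exactly where \cite[Theorem 6.3]{CenGEA} is needed.
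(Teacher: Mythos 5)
Your argument is correct and is essentially the proof the paper intends: the paper omits the details and refers to the iterated-divisibility (Hilbert-hotel) argument of \cite[Proposition 3]{Je02}, using \cite[Theorem 6.3]{CenGEA} and Dedekind orthocompleteness exactly as you do. Your rewriting $e\oplus f\oplus d=(e\oplus d)\oplus f$, which lets the spare summand $d$ be absorbed into the residual term so that only the copies of $f$ need to be tracked, is a minor streamlining of the same construction.
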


\begin{theorem} \label{th:preorder}
If $E$ is Dedekind orthocomplete, then the relation $\subsim$ is a preorder
{\rm (}reflexive and transitive{\rm )}, on $E$ and the following
Cantor-Schr\"{o}der-Bernstein property holds{\rm:} $e\subsim f\subsim e
\Rightarrow e\sim f$.
\end{theorem}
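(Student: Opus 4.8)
The statement has two parts: (1) $\subsim$ is a preorder, and (2) the Cantor–Schröder–Bernstein (CSB) property $e\subsim f\subsim e\Rightarrow e\sim f$. Reflexivity of $\subsim$ is immediate since $e\sim e$ and $e\leq e$. For transitivity, suppose $e\subsim f$ and $f\subsim g$; then $e\sim f_1\leq f$ and $f\sim g_1\leq g$ for suitable $f_1,g_1$. Applying divisibility (SK3d) to $f\sim g_1$ — or rather, using that $g_1=f_0$ for some relabeling — I want to pull back the subelement $f_1\leq f$ across the equivalence $f\sim g_1$. More precisely, write $f=f_1\oplus f_1'$; by SK3d applied to $g_1\sim f_1\oplus f_1'$ (turning the equivalence around via symmetry), there are $a,b$ with $g_1=a\oplus b$, $a\sim f_1$, $b\sim f_1'$. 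Then $e\sim f_1\sim a\leq g_1\leq g$, so $e\subsim g$. This is routine once one is careful with the direction of SK3d.

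**The CSB part — the main work.** Assume $e\subsim f\subsim e$. Then there exist $f_1$ with $e\sim f_1\leq f$, and $e_1$ with $f\sim e_1\leq e$. The classical Schröder–Bernstein idea is to build a descending chain. From $e\sim f_1\leq f$ and $f\sim e_1\leq e$, composing gives $e\sim f_1\sim e_1'$ for some $e_1'\leq e_1\leq e$ (pulling $f_1\leq f$ back across $f\sim e_1$ via SK3d, as in the transitivity argument above). Iterating produces $e=e_0\geq e_1'\geq e_2'\geq\cdots$ with $e\sim e_n'$ for all $n$, where at each stage the "remainder" $d_n:=e_{n-1}'\ominus e_n'$ satisfies a fixed equivalence type. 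The crux is then a telescoping/limiting argument: one wants to say $e\sim e_n'$ for all $n$ forces, in the limit, that the accumulated differences $d_1\oplus d_2\oplus\cdots$ is "absorbed," i.e., $e\sim e\ominus(d_1\oplus d_2\oplus\cdots\oplus d_n\oplus\cdots)$ and $e\sim f$. This is exactly where Dedekind orthocompleteness enters: the partial sums $d_1\oplus\cdots\oplus d_n$ are all $\leq e$, hence bounded, so the family $(d_n)$ is orthosummable; let $d:=\oplus_n d_n$ and $e_\infty:=e\ominus d$ (after checking $d\leq e$). Then SK2 (countable additivity) applied to the $e_n'$ and their telescoping pieces yields $e\sim e_\infty$, and separately one extracts $e\sim f$ by reassembling: $f=f_1\oplus(f\ominus f_1)$ with $f_1\sim e$ already... but that's circular, so the correct bookkeeping is that $f\ominus e_1\sim$ (some $d_0$ piece) and the whole chain reconciles $f$ with $e$ via SK2.

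**The key obstacle.** The hard part is the limiting/absorption step, and this is precisely why Lemma~\ref{lm:simcancel} ($e\sim e\oplus f\oplus d\Rightarrow e\sim e\oplus f$) is stated immediately before this theorem — it is the cancellation tool that makes the telescoping collapse. I expect the proof to run: set up the chain, use Dedekind orthocompleteness to form the infinite orthosum $d$ of the remainders, use SK2 to transfer the equivalence through the infinite sum, and then invoke Lemma~\ref{lm:simcancel} (possibly its iterated/infinitary consequence) to cancel the "tail" and conclude $e\sim f$. The delicate points are: (a) verifying all the relevant partial sums genuinely lie below $e$ (resp.\ $f$) so Dedekind orthocompleteness applies; (b) getting the directions of SK3d right at each composition step; and (c) ensuring the application of SK2 is to honestly orthosummable families on both sides with matching indices. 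Once Lemma~\ref{lm:simcancel} is in hand, the algebra is bookkeeping; without it, the infinite remainder cannot be discarded, so that lemma is the linchpin.
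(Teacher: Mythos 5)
Your transitivity argument is correct and is the paper's: given $e\sim f\sb{1}\leq f\sim d\sb{1}\leq d$, one application of SK3d to $d\sb{1}\sim f\sb{1}\oplus(f\ominus f\sb{1})$ produces $d\sb{2}\leq d\sb{1}$ with $e\sim f\sb{1}\sim d\sb{2}\leq d$. The problem is the Cantor--Schr\"oder--Bernstein part, where your proposal stalls at exactly the point you yourself flag as ``circular'' and never recovers: you set up an infinite descending chain $e\geq e\sb{1}'\geq e\sb{2}'\geq\cdots$, propose to orthosum the remainders via Dedekind orthocompleteness, and then assert that ``the whole chain reconciles $f$ with $e$ via SK2'' without ever exhibiting the chain of equivalences that actually ends at $f$. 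As written this is a plan whose final step is missing, not a proof.

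The missing realization is that once Lemma~\ref{lm:simcancel} is granted, no iteration is needed at all: the entire limiting argument (the telescoping chain, the bounded partial orthosums, the appeal to Dedekind orthocompleteness and SK2 on a countable family) lives inside the proof of that lemma, which the paper takes from Jen\v{c}a's Proposition 3 and does not repeat. The theorem then follows from the single step you already carried out. From $e\sim f\sb{1}\leq f\sim e\sb{1}\leq e$, apply divisibility once to $e\sb{1}\sim f\sb{1}\oplus(f\ominus f\sb{1})$ to obtain $e\sb{1}=e\sb{2}\oplus e\sb{3}$ with $e\sb{2}\sim f\sb{1}\sim e$ and $e\sb{3}\sim f\ominus f\sb{1}$. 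Then $e\sb{2}\sim e=e\sb{1}\oplus(e\ominus e\sb{1})=e\sb{2}\oplus e\sb{3}\oplus(e\ominus e\sb{1})$, and Lemma~\ref{lm:simcancel} (with $e\sb{2}$, $e\sb{3}$, $e\ominus e\sb{1}$ in the roles of $e$, $f$, $d$) cancels the tail to give $e\sb{2}\sim e\sb{2}\oplus e\sb{3}=e\sb{1}\sim f$; combined with $e\sim e\sb{2}$ this yields $e\sim f$. Your plan would instead re-derive the content of Lemma~\ref{lm:simcancel} inside the theorem---that is what the chain construction amounts to---and then invoke the lemma anyway, while leaving the reassembly of $f$ from the pieces unresolved.
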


\begin{proof}
Let $e,f,d\in E$. Obviously $e\subsim e$. Suppose $e\subsim f$ and
$f\subsim d$. Then there exist $f\sb{1}$, $d\sb{1}$ in $E$ with
$e\sim f\sb{1}\leq f\sim d\sb{1}\leq d$. By divisibility,
$d\sb{1}=d\sb{2}\oplus d\sb{3}$ with $d\sb{2}\sim f\sb{1}$ and
$d\sb{3}\sim(f\ominus f\sb{1})$; hence $e\sim d\sb{2}\leq d$, so
$e\subsim d$.

Suppose that $e\subsim f\subsim e$. The following proof that
$e\sim f$ is extracted from the proofs of \cite[Propositions 1 and 2]
{Je02}. There exist $e\sb{1},f\sb{1}\in E$ such that $e\sim f\sb{1}
\leq f\sim e\sb{1}\leq e$. As $e\sb{1}\sim f=f\sb{1}\oplus(f\ominus
f\sb{1})$, divisibility implies the existence of $e\sb{2}$ and
$e\sb{3}$ such that $e\sb{1}=e\sb{2}\oplus e\sb{3}$, $e\sb{2}\sim
f\sb{1}\sim e$ and $e\sb{3}\sim(f\ominus f\sb{1})$. Thus, $e\sb{2}
\sim e=e\sb{1}\oplus(e\ominus e\sb{1})=e\sb{2}\oplus e\sb{3}\oplus
(e\ominus e\sb{1})$; hence by Lemma \ref{lm:simcancel}, $e\sim
e\sb{2}\sim e\sb{2}\oplus e\sb{3}=e\sb{1}\sim f$.
\end{proof}

\begin{lemma} \label{le:AdditivityofSubEq}
If $E$ is Dedekind orthocomplete and if $(e\sb{i})\sb{i\in I}$ and
$(f\sb{i})\sb{i\in I}$ are orthosummable families in $E$ such that
$e\sb{i}\subsim f\sb{i}$ for all $i\in I$, then $\oplus\sb{i\in I}e
\sb{i}\subsim\oplus\sb{i\in I}f\sb{i}$.
\end{lemma}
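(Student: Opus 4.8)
The plan is to prove Lemma \ref{le:AdditivityofSubEq} by combining the finite partial orthosums into a new orthosummable family, applying the already-established additivity of $\sim$ (SK2), and then invoking Dedekind orthocompleteness to control the relevant suprema. Concretely, for each $i\in I$ choose $g\sb{i}\in E[0,f\sb{i}]$ with $e\sb{i}\sim g\sb{i}$, using the definition of $\subsim$. The first thing I would check is that the family $(g\sb{i})\sb{i\in I}$ is orthosummable: since $g\sb{i}\leq f\sb{i}$ for every $i$, every finite partial orthosum $\oplus\sb{i\in F}g\sb{i}$ is dominated by the corresponding $\oplus\sb{i\in F}f\sb{i}\leq\oplus\sb{i\in I}f\sb{i}$ (using that $(f\sb{i})$ is orthogonal and the cancellation-compatible order on $E$), so the finite partial orthosums of $(g\sb{i})$ are bounded above by $\oplus\sb{i\in I}f\sb{i}$; Dedekind orthocompleteness then gives that $(g\sb{i})\sb{i\in I}$ is orthosummable with some orthosum $g:=\oplus\sb{i\in I}g\sb{i}$, and moreover $g\leq\oplus\sb{i\in I}f\sb{i}$.

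Next I would apply SK2 to the two orthosummable families $(e\sb{i})\sb{i\in I}$ and $(g\sb{i})\sb{i\in I}$: since $e\sb{i}\sim g\sb{i}$ for all $i$, we obtain $\oplus\sb{i\in I}e\sb{i}\sim\oplus\sb{i\in I}g\sb{i}=g$. Combined with $g\leq\oplus\sb{i\in I}f\sb{i}$, this exhibits $\oplus\sb{i\in I}e\sb{i}$ as equivalent to a subelement of $\oplus\sb{i\in I}f\sb{i}$, i.e., $\oplus\sb{i\in I}e\sb{i}\subsim\oplus\sb{i\in I}f\sb{i}$, which is exactly the conclusion.

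The main obstacle I anticipate is the boundedness/orthosummability step for $(g\sb{i})$: I need to know that the supremum of the finite partial orthosums of $(g\sb{i})$ does not merely exist in isolation but coincides with a genuine orthosum dominated by $\oplus\sb{i\in I}f\sb{i}$. The key facts making this routine are (a) $g\sb{i}\leq f\sb{i}$ implies, for any finite $F$, that $\oplus\sb{i\in F}g\sb{i}$ is defined and $\oplus\sb{i\in F}g\sb{i}\leq\oplus\sb{i\in F}f\sb{i}$ (by the monotonicity of $\oplus$ under $\leq$ on finite orthogonal families, an iterated application of the order-cancellation law), and (b) $\oplus\sb{i\in F}f\sb{i}\leq\oplus\sb{i\in I}f\sb{i}$ since $\oplus\sb{i\in I}f\sb{i}$ is by definition the supremum of its finite partial orthosums. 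With the finite partial orthosums of $(g\sb{i})$ thus bounded above by the fixed element $\oplus\sb{i\in I}f\sb{i}$, Dedekind orthocompleteness of $E$ delivers orthosummability, and the orthosum $g$ — being the supremum of those bounded partial orthosums — automatically satisfies $g\leq\oplus\sb{i\in I}f\sb{i}$. After that, the proof is a one-line application of SK2 followed by transitivity of the relations $\sim$ and $\leq$.
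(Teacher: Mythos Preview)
Your proof is correct and is essentially identical to the paper's own argument: the paper also chooses witnesses $d\sb{i}\leq f\sb{i}$ with $e\sb{i}\sim d\sb{i}$, bounds the finite partial orthosums of $(d\sb{i})$ by $\oplus\sb{i\in I}f\sb{i}$, invokes Dedekind orthocompleteness, and then applies SK2. Your additional remarks about monotonicity of finite orthosums just spell out a step the paper takes for granted.
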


\begin{proof}
As $e\sb{i}\subsim f\sb{i}$ for all $i\in I$, there is a family $(d
\sb{i})\sb{i\in I}$ with $e\sb{i}\sim d\sb{i}\leq f\sb{i}$ for all
$i\in I$. If $F$ is a finite subset of $I$, then $(f\sb{i})\sb{i\in F}$
is orthogonal with $\oplus\sb{i\in F}f\sb{i}\leq p:=\oplus\sb{i\in I}
f\sb{i}$, whence $(d\sb{i})\sb{i\in I}$ is orthogonal with $\oplus
\sb{i\in F}d\sb{i}\leq\oplus\sb{i\in F}f\sb{i}\leq p$, and it follows
from Dedekind orthocompleteness that $(d\sb{i})\sb{i\in I}$ is
orthosummable with $d:=\oplus\sb{i\in I}d\sb{i}\leq p$. Therefore by
SK2, $\oplus\sb{i\in I}e\sb{i}\sim d\leq p=\oplus\sb{i\in I}f\sb{i}$,
whence $\oplus\sb{i\in I}e\sb{i}\subsim\oplus\sb{i\in I}f\sb{i}$.
\end{proof}

\begin{theorem} \label{th:decompospq}
If $E$ is Dedekind orthocomplete and $p,q\in E$, then there are
orthogonal decompositions $p=p\sb{1}\oplus p\sb{2}$ and $q=q\sb{1}
\oplus q\sb{2}$ such that $p\sb{1}\sim q\sb{1}$ and $p\sb{2}$ is
unrelated to $q\sb{2}$.
\end{theorem}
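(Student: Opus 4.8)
The plan is to use a maximality argument via Zorn's lemma to build up the equivalent pieces $p_1$ and $q_1$ as orthosums of ``small'' matched pairs, and then to verify that the complementary remainders are unrelated. Concretely, I would consider the collection $\mathcal{F}$ of all families $\bigl((a_i,b_i)\bigr)_{i\in I}$ of pairs of elements of $E$ such that $(a_i)_{i\in I}$ is orthogonal with $\oplus_{i\in I}a_i\leq p$, $(b_i)_{i\in I}$ is orthogonal with $\oplus_{i\in I}b_i\leq q$, and $a_i\sim b_i$ for all $i\in I$. Order $\mathcal{F}$ by extension of families (i.e., one family is a subfamily of the other). The empty family lies in $\mathcal{F}$, and the union of a chain in $\mathcal{F}$ is again in $\mathcal{F}$: the orthogonality of the union follows because orthogonality is checked on finite subfamilies, each of which already lies in some member of the chain; and the partial orthosums of $(a_i)$ over finite subsets are all $\leq p$, so Dedekind orthocompleteness gives $\oplus_i a_i\leq p$ (likewise $\oplus_i b_i\leq q$), while $a_i\sim b_i$ is inherited termwise. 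Hence Zorn's lemma supplies a maximal family $\bigl((a_i,b_i)\bigr)_{i\in I}$.

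Now set $p_1:=\oplus_{i\in I}a_i$, $q_1:=\oplus_{i\in I}b_i$, and let $p_2:=p\ominus p_1$, $q_2:=q\ominus q_1$, so $p=p_1\oplus p_2$ and $q=q_1\oplus q_2$ are the required orthogonal decompositions. By SK2 (additivity of $\sim$ along orthosummable families — valid here since $(a_i)_{i\in I}$ and $(b_i)_{i\in I}$ are orthosummable), we get $p_1\sim q_1$. It remains to show $p_2$ is unrelated to $q_2$. Suppose not: then by Definition \ref{df:SKDefs}(4) there are nonzero $c\leq p_2$ and $d\leq q_2$ with $c\sim d$. But then the family obtained by adjoining the pair $(c,d)$ to $\bigl((a_i,b_i)\bigr)_{i\in I}$ — indexed by $I\cup\{\ast\}$ for a new symbol $\ast$ — again lies in $\mathcal{F}$: $c\leq p_2=p\ominus p_1$ gives $p_1\oplus c\leq p$, so $(a_i)_{i\in I}$ together with $c$ is orthogonal with partial orthosums bounded by $p$; similarly on the $q$ side; and $c\sim d$. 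Since $c\ne 0$, this strictly extends the maximal family, a contradiction. Therefore $p_2$ is unrelated to $q_2$, completing the proof.

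The step I expect to require the most care is the chain-union verification inside Zorn's lemma, specifically confirming that the union family is orthosummable with orthosum dominated by $p$ (resp.\ $q$): this is exactly where Dedekind orthocompleteness of $E$ is indispensable, since without a unit we cannot invoke orthocompleteness outright, and we must check that \emph{every} finite partial orthosum of the union stays below $p$. The other potentially delicate point is making sure the new pair $(c,d)$ adjoined in the maximality step is orthogonal to \emph{all} the $a_i$ (resp.\ $b_i$) simultaneously; this follows because $c\leq p\ominus(\oplus_{i\in I}a_i)$ forces $c\oplus(\oplus_{i\in F}a_i)\leq c\oplus(\oplus_{i\in I}a_i)\leq p$ for every finite $F\subseteq I$, using the cancellation-compatible partial order on $E$. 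Everything else is a routine bookkeeping application of SK2 and the definition of ``unrelated.''
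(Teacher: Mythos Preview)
Your proposal is correct and follows essentially the same approach as the paper's proof: both use Zorn's lemma to obtain a maximal family of matched pairs $(a_i,b_i)$ with $a_i\sim b_i$, apply SK2 to get $p_1\sim q_1$, and then derive the unrelatedness of $p_2$ and $q_2$ from maximality. The only cosmetic difference is that the paper phrases the families as $p$-orthogonal in $E[0,p]$ and $q$-orthogonal in $E[0,q]$ (invoking the archimedean property of these interval EAs and \cite[Theorem 6.4 (iv)]{CenGEA} for orthosummability), whereas you build orthosummability into the definition of $\mathcal{F}$ and invoke Dedekind orthocompleteness directly in the chain step.
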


\begin{proof}
As $E$ is Dedekind orthocomplete, the EAs $E[0,p]$ and $E[0,q]$ are
archimedean; hence we may invoke Zorn's lemma to produce a maximal
family of pairs $(e\sb{i},f\sb{i})\sb{i\in I}$ such that $(e\sb{i})
\sb{i\in I}\subseteq E[0,p]$ is $p$-orthogonal, $(f\sb{i})\sb
{i\in I}\subseteq E[0,q]$ is $q$-orthogonal, and $e\sb{i}\sim f
\sb{i}$ for all $i\in I$ (Cf. \cite[Remarks 4.3]{HDTD}. By
\cite[Theorem 6.4 (iv)]{CenGEA}, $(e\sb{i})\sb{i\in I}$ is both orthosummable
in $E$ and $p$-orthosummable in $E[0,p]$; moreover, its  orthosum $p\sb{1}
:=\oplus\sb{i\in I}e\sb{i}$ in $E$ coincides with its $p$-orthosum in
$E[0,p]$. Likewise for $(f\sb{i})\sb{i\in I}$ in $E[0,q]$ and for its
orthosum $q\sb{1}:=\oplus\sb{i\in I}e\sb{i}$, and we have $p\sb{1}\sim q
\sb{1}$ by SK2. Put $p\sb{2}:=p\ominus p\sb{1}$ and $q\sb{2}:=q\ominus q
\sb{1}$. Then $p\sb{2}$ is unrelated to $q\sb{2}$, else there would exist
nonzero $e\sb{0},f\sb{0}$ with $p\sb{2}\geq e\sb{0}\sim f\sb{0}
\leq q\sb{2}$, and $(e\sb{i},f\sb{i})\sb{i\in I}$ could be enlarged by
appending the pair $(e\sb{0},f\sb{0})$.
\end{proof}

\section{Hereditary Intervals and Invariant Elements}
\label{sc:HIetaInv} 

\noindent\emph{The assumption that $\sim$ is an SK-congruence on the
GEA $E$ remains in force.}

\begin{lemma} \label{lm:hereditary1}
Let $c\in E$ and suppose that the $c$-interval $E[0,c]$ is hereditary.
Then, for all $d\in E$, $d\wedge c=0\Rightarrow d\perp c$.
\end{lemma}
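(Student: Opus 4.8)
The plan is to argue by contradiction. Suppose $d \wedge c = 0$ but $d \not\perp c$. The idea is to use SK4a applied to $d$ and $c$: since $d \not\perp c$, there exist nonzero elements $d_1, c_1 \in E$ with $d \geq d_1 \sim c_1 \leq c$. Now $c_1 \leq c$ together with the assumption that $E[0,c]$ is hereditary forces $d_1 \in E[0,c]$, since $d_1 \subsim c$ (as $d_1 \sim c_1 \leq c$). Hence $d_1 \leq c$; but also $d_1 \leq d$, so $d_1 \leq d \wedge c = 0$, contradicting $d_1 \neq 0$.

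The one point that needs care is the very first step: SK4a as stated gives related elements from $e \not\perp f$, and I want to make sure the hypothesis $d \not\perp c$ is exactly the trigger. Indeed, SK4a says: if $e \not\perp f$, then there are nonzero $e_1, f_1$ with $e \geq e_1 \sim f_1 \leq f$. Taking $e := d$ and $f := c$ gives precisely $d_1, c_1$ as above. The remaining steps are then routine: $d_1 \sim c_1 \leq c$ means $d_1 \subsim c$ by Definition \ref{df:SKDefs}(2) (with $c$ playing the role of the dominating element and $c_1$ the subelement witnessing sub-equivalence); heredity of $E[0,c]$ (Definition \ref{df:SKDefs}(3), with $c$ itself as the element $h \in H$) then gives $d_1 \in E[0,c]$, i.e.\ $d_1 \leq c$; and combining $d_1 \leq d$ and $d_1 \leq c$ yields $d_1 \leq d \wedge c = 0$, so $d_1 = 0$ by positivity, the desired contradiction.

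I do not anticipate a genuine obstacle here — the lemma is a direct unwinding of the definitions of \emph{hereditary} and of SK4a, and the contrapositive form makes the contradiction transparent. If anything, the only subtlety is remembering that we are using the existence of the interval $E[0,c]$ as a subset of $E$ that is hereditary in the SK-congruence sense (closed downward under $\subsim$), not merely as an order ideal; but that is exactly the hypothesis given, so no extra work is required.
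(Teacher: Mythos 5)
Your proof is correct and follows essentially the same route as the paper's: apply SK4a to the assumed $d\not\perp c$ to obtain nonzero $d\sb{1},c\sb{1}$ with $d\geq d\sb{1}\sim c\sb{1}\leq c$, conclude $d\sb{1}\subsim c$ and hence $d\sb{1}\leq c$ by heredity of $E[0,c]$, and contradict $d\wedge c=0$. No issues.
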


\begin{proof} Assume that $E[0,c]$ is hereditary. (i) Suppose $d
\wedge c=0$, but $d\not\perp c$. Then by SK4a, there exist nonzero
$d\sb{1},c\sb{1}\in E$ with $d\geq d\sb{1}\sim c\sb{1}\leq c$, whence
$d\sb{1}\subsim c$, so $0\not=d\sb{1}\leq c,d$, contradicting $d
\wedge c=0$.
\end{proof}

An element in a Loomis dimension lattice is defined to be
\emph{invariant} iff it is unrelated to its orthocomplement
\cite[p. 6]{Loom}, a condition which is easily adapted to the
GEA $E$ as in part (i) of the next lemma. (See Definition
\ref{df:invariant} below.)  Since every element in a Loomis
dimension lattice is sharp, the lemma generalizes \cite[Lemma 21]{Loom}.

\begin{lemma} [{Cf. \cite[Lemma 7.11]{JePu02}}] \label{lm:hereditary2}
If $c\in E$, then the following conditions are mutually equivalent{\rm:}
{\rm(i)} If $c\sb{1},f\in E$, then $c\geq c\sb{1}\sim f\perp c\Rightarrow
c\sb{1}=f=0$. {\rm(ii)} $c$ is sharp and $E[0,c]$ is hereditary.
{\rm(iii)} $c$ is sharp and for all $e\in E$, $e\sim c\Rightarrow
e\leq c$. {\rm(iv)} $c$ is sharp and $\{f\in E:f\perp c\}$ is hereditary.
\end{lemma}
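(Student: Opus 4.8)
The plan is to prove Lemma~\ref{lm:hereditary2} by establishing a cycle of implications (i)~$\Rightarrow$~(ii)~$\Rightarrow$~(iii)~$\Rightarrow$~(i), and then handling (iv) separately by relating the hereditariness of $\{f\in E:f\perp c\}$ to condition~(i). Throughout I will exploit the SK-congruence axioms SK3d (divisibility) and SK4a, together with the elementary remarks following Definition~\ref{df:SKDefs} (e.g., $0\neq e\subsim f$ implies $e$ and $f$ are related), and the observation from the paragraph before Lemma~\ref{lm:hereditary1} that a hereditary set is precisely an order ideal closed under $\sim$.

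First, for (i)~$\Rightarrow$~(ii): assume (i). To see $c$ is sharp, suppose $d\leq c$ and $d\perp c$; applying (i) with $c_1:=d$ and $f:=d$ (noting $d\sim d$ by reflexivity and $d\perp c$) forces $d=0$. For hereditariness of $E[0,c]$, suppose $e\subsim c_1\leq c$ with $e\in E$; I want $e\leq c$. Write $e\sim c_1'$ for some $c_1'\leq c$. The idea is that if $e\not\leq c$ then SK4b would produce nonzero $e_1\leq e$, $d_1\sim e_1$, $d_1\perp c$; pushing this back through the equivalence $e\sim c_1'$ via divisibility (SK3d) yields a nonzero subelement of $c$ related to something orthogonal to $c$, contradicting (i). The delicate point here is that $e$ need not be a subelement of $c$ a priori, so one must carefully split $e$ along the decomposition coming from SK4b and transport the pieces through $\sim$. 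Hence $E[0,c]$ is an order ideal; closure under $\sim$ then follows because $e\sim h\leq c$ gives $e\subsim c$, which by the same argument forces $e\leq c$.

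Next, (ii)~$\Rightarrow$~(iii) is nearly immediate: if $E[0,c]$ is hereditary and $e\sim c$, then $e\subsim c$, so $e\in E[0,c]$, i.e., $e\leq c$. For (iii)~$\Rightarrow$~(i): assume $c$ is sharp and $e\sim c\Rightarrow e\leq c$. Suppose $c\geq c_1\sim f\perp c$; I must show $c_1=f=0$. The strategy is to ``complete'' $c_1$ to something equivalent to $c$: write $c=c_1\oplus(c\ominus c_1)$ and, using that $f\perp c$ so $f\perp c_1$ and (after checking orthogonality) $f\oplus(c\ominus c_1)$ may be formed, apply finite additivity (SK2) to get $f\oplus(c\ominus c_1)\sim c_1\oplus(c\ominus c_1)=c$. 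By (iii), $f\oplus(c\ominus c_1)\leq c$, so in particular $f\leq c$; but $f\perp c$ and $c$ sharp force $f=0$, whence $c_1\sim 0$ gives $c_1=0$ by SK1. The main obstacle in this step is justifying that $f\oplus(c\ominus c_1)$ is defined—one needs $f\perp(c\ominus c_1)$, which follows since $c\ominus c_1\leq c$ and $f\perp c$ implies $f$ is orthogonal to every subelement of $c$; this uses the basic fact that $\perp$ is downward-closed in the second argument (a consequence of $d\oplus e$ defined and $e'\leq e$ implying $d\oplus e'$ defined, valid in any GEA).

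Finally I will treat (iv). For (iii)~$\Rightarrow$~(iv) (or (i)~$\Rightarrow$~(iv)): assume $c$ is sharp and condition~(i) holds; to show $K:=\{f\in E:f\perp c\}$ is hereditary, take $e\subsim f\perp c$ and show $e\perp c$. Write $e\sim f_1\leq f$; since $f_1\leq f\perp c$ we get $f_1\perp c$. If $e\not\perp c$, then SK4a gives nonzero $e_1\leq e$, $c_1\leq c$ with $e_1\sim c_1$; transporting $e_1$ back through $e\sim f_1$ by divisibility produces a nonzero subelement of $f_1$ (hence orthogonal to $c$) related to a nonzero subelement of $c$, contradicting~(i). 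For the converse (iv)~$\Rightarrow$~(i): assume $c$ sharp and $K$ hereditary, and suppose $c\geq c_1\sim f\perp c$; then $c_1\subsim f\in K$, so $c_1\in K$, i.e., $c_1\perp c$, but $c_1\leq c$ and $c$ sharp force $c_1=0$, whence $f=0$ by SK1. The recurring technical nuisance across (i)$\Leftrightarrow$(ii) and (i)$\Leftrightarrow$(iv) is the same: translating a relatedness witness along an equivalence $e\sim f_1$ using divisibility, which is routine but must be done cleanly; I expect no conceptual difficulty there, only bookkeeping.
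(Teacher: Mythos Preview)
Your proposal is correct and follows essentially the same route as the paper: the cycle (i)$\Rightarrow$(ii)$\Rightarrow$(iii)$\Rightarrow$(i) is argued exactly as you outline (SK4b plus divisibility for (i)$\Rightarrow$(ii); the completion $f\oplus(c\ominus c_1)\sim c$ for (iii)$\Rightarrow$(i)), and (iv)$\Rightarrow$(i) is identical. The only minor variation is that for the direction into (iv) the paper proves (ii)$\Rightarrow$(iv) by invoking Lemma~\ref{lm:hereditary1} to obtain a common subelement $0\neq d_1\leq d,c$, whereas you go (i)$\Rightarrow$(iv) using SK4a directly to get $e_1\sim c_1$ and then transport via divisibility; both arguments are short and equivalent in spirit.
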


\begin{proof}
(i) $\Rightarrow$ (ii) $\Rightarrow$ (iii). Assume (i). Then, if $f
\leq c$, and $f\perp c$, we have $c\geq f\sim f\perp c$, so $f=0$ by (i),
whence $c$ is sharp. Now suppose $d\sim c\sb{1}\leq c$ but $d\not
\in E[0,c]$. Then by SK4b, there exist nonzero $d\sb{1},f\in E$ such
that $d\geq d\sb{1}\sim f\perp c$. We have $c\sb{1}\sim d=d\sb{1}
\oplus(d\ominus d\sb{1})$, so by SK3d, there exist $c\sb{2},c\sb{3}$
such that $c\sb{1}=c\sb{2}\oplus c\sb{3}$ and $c\sb{2}\sim d\sb{1}$.
Thus, $c\sb{2}\sim d\sb{1}\sim f$, whence $c\geq c\sb{2}\sim f\perp c$,
and it follows from (i) that $f=0$, a contradiction. Therefore, $d
\in E[0,c]$. Obviously, (ii) $\Rightarrow$ (iii).

(iii) $\Rightarrow$ (i) Assume (iii), suppose that $c\sb{1},f\in E$
with $c\geq c\sb{1}\sim f\perp c$, and put $e:=f\oplus(c\ominus c
\sb{1})$. Then as $c=c\sb{1}\oplus(c\ominus c\sb{1})$, $f\sim c
\sb{1}$, and $(c\ominus c\sb{1})\sim (c\ominus c\sb{1})$, SK2 implies
that $e\sim c$, whence $f\leq f\oplus(c\ominus c\sb{1})=e\leq c$, and
since $c$ is sharp, $f=0$, so $c\sb{1}=0$ too. Thus, (i)
$\Leftrightarrow$ (ii) $\Leftrightarrow$ (iii).

(ii) $\Rightarrow$ (iv). Assume (ii) and suppose that $d,f\in E$ with $d
\subsim f\perp c$, but $d\not\perp c$. Then there exists $f\sb{1}$ such
that $d\sim f\sb{1}\leq f\perp c$, so $f\sb{1}\perp c$. By Lemma
\ref{lm:hereditary1}, there exists $d\sb{1}\in E$ with $0\not=d\sb{1}
\leq d,c$. Now $f\sb{1}\sim d=d\sb{1}\oplus(d\ominus d\sb{1})$, so by
SK3d, there exist $f\sb{2},f\sb{3}\in E$ such that $f\sb{1}=f\sb{2}
\oplus f\sb{3}$ and $f\sb{2}\sim d\sb{1}\leq c$. Thus $f\sb{2}\subsim c$,
so $f\sb{2}\leq c$. But $f\sb{2}\leq f\perp c$, whence $f\sb{2}\perp c$,
and as $c$ is sharp, $f\sb{2}=0$. Consequently, $d\sb{1}=0$, a
contradiction.

(iv) $\Rightarrow$ (i). Assume (iv) and suppose that $c\geq c\sb{1}
\sim f\perp c$. Then $c\sb{1}\perp c$, and since $c$ is sharp,
$c\sb{1}=0$, so $f=0$ as well.
\end{proof}

\begin{lemma} \label{lm:hereditary3}
Suppose that $c\in E$ satisfies any one, hence all of the conditions
in Lemma \ref{lm:hereditary2} and that every $e\in E$ decomposes in
$E$ as $e=e\sb{1}\oplus e\sb{2}$ with $e\sb{1}\leq c$ and $e\sb{2}
\perp c$. Then $c\in\Gamma(E)$.
\end{lemma}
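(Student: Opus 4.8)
The plan is to verify the three defining conditions for $c$ to be central, as recalled in Remarks~\ref{rm:altcentdef}: (1) every $e\in E$ decomposes uniquely as $e=e\sb{1}\oplus e\sb{2}$ with $e\sb{1}\leq c$ and $e\sb{2}\perp c$; (2) $c$ is principal; and (3) if $p,q\in E$ with $p\perp q$, then $p,q\perp c\Rightarrow p\oplus q\perp c$. By \cite[Lemma 4.7]{CenGEA}, once we know $c$ is principal the uniqueness in (1) is automatic, so the existence part of (1) is exactly what has been hypothesized. Thus the real work is to establish that $c$ is principal and that condition (3) holds.

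First I would dispose of condition (3). Suppose $p\perp q$ with $p,q\perp c$. Then $p,q\in\{f\in E:f\perp c\}$, which by Lemma~\ref{lm:hereditary2}(iv) is hereditary, hence in particular an order ideal closed under $\oplus$ (that is, an ideal); so $p\oplus q\perp c$. (Alternatively one can argue directly: if $p\oplus q\not\perp c$, then since $p,q\leq p\oplus q$ one derives a contradiction, but invoking the ideal property of $\{f:f\perp c\}$ is cleanest.)

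Next, the principality of $c$: I must show that if $e,f\leq c$ and $e\perp f$, then $e\oplus f\leq c$. Write $e\oplus f=g\sb{1}\oplus g\sb{2}$ with $g\sb{1}\leq c$ and $g\sb{2}\perp c$, using the decomposition hypothesis. Since $e\leq e\oplus f$ and $f\leq e\oplus f$, cancellation-type arguments in the interval below $e\oplus f$, together with the fact that $c$ is sharp (Lemma~\ref{lm:hereditary2}), should force $g\sb{2}=0$. Concretely: $g\sb{2}\leq e\oplus f$, so $g\sb{2}$ and one of $e,f$ have a common relationship; more carefully, I would look at $g\sb{2}\wedge e$ and $g\sb{2}\wedge f$. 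Since $e,f\leq c$ and $g\sb{2}\perp c$ forces $g\sb{2}$ to be disjoint from $c$ (indeed $g\sb{2}\wedge c=0$ because $g\sb{2}\leq x\leq c$ would give $x\leq g\sb{2}$ with $x\perp c$, forcing $x=0$ by sharpness of $c$), and then Lemma~\ref{lm:hereditary1} is not quite what I need here — rather, I want to show $g\sb{2}=0$ directly. The key observation is that $g\sb{2}\leq e\oplus f$ with $e,f\leq c$; decomposing $g\sb{2}$ against $c$ trivially gives $g\sb{2}\perp c$, and I then need a cancellation argument showing that a subelement of $e\oplus f$ that is orthogonal to $c$ must be $0$. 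This follows because $e\oplus f\leq c\oplus c'$-type reasoning is unavailable in a GEA; instead I would use: $g\sb{1}\leq e\oplus f$ and $g\sb{1}\leq c$, and apply the decomposition to $e$ and to $f$ inside $E[0,e\oplus f]$, matching coordinates via cancellation (GEA4).

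The step I expect to be the main obstacle is precisely this last argument establishing principality — extracting $g\sb{2}=0$ from $e\oplus f=g\sb{1}\oplus g\sb{2}$ with $e,f\leq c$, $g\sb{1}\leq c$, $g\sb{2}\perp c$. The cleanest route is: decompose $f=f\sb{1}\oplus f\sb{2}$ with $f\sb{1}\leq c$, $f\sb{2}\perp c$; since $f\leq c$ and $c$ is sharp we get $f\sb{2}=0$, so the decomposition of any subelement of $c$ against $c$ is trivial. Now $g\sb{2}\leq e\oplus f$, so write $g\sb{2}=(g\sb{2}\wedge e)\oplus(\text{rest})$ — but meets need not distribute. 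Better: uniqueness of decomposition (available once we provisionally know enough) lets us compute the $c$-component of $e\oplus f$ as $e\oplus f$ itself (since $e,f\leq c$ and, by condition (3) just proved in reverse for the complement, sums of things $\leq c$ behave well). Actually the slickest approach: apply the decomposition hypothesis to the element $e\oplus f$, and separately note that $e$ and $f$ each lie entirely in the ``$\leq c$'' part; since $\{x:x\leq c\}=E[0,c]$ and the ``$\perp c$'' part is the ideal $K:=\{f:f\perp c\}$, and these two ideals have trivial intersection (by Lemma~\ref{lm:hereditary1}, $x\leq c$ and $x\perp c$ force $x=0$ via sharpness), the decomposition of $e\oplus f$ is genuinely a direct-sum decomposition; then $g\sb{2}$, being the $K$-coordinate of an element that is a sum of two $E[0,c]$-elements, must be $0$ by the coordinatewise nature of the direct sum. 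This shows $e\oplus f=g\sb{1}\leq c$, so $c$ is principal, and the proof concludes by invoking Remarks~\ref{rm:altcentdef} to get $c\in\Gamma(E)$.
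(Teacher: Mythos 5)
There is a genuine gap — in fact two, and they share the same root cause: you never use the congruence $\sim$ in the two steps that actually require proof, whereas the paper's argument hinges on SK3e at exactly those points. First, your disposal of condition (3) rests on the claim that a hereditary set is ``in particular an order ideal closed under $\oplus$ (that is, an ideal).'' That is not what hereditary means in this paper: by Definition \ref{df:SKDefs}(3) and the remark following it, $H$ is hereditary iff $H$ is an order ideal \emph{and} closed under $\sim$; closure under orthosums is not part of the definition and is precisely the content of condition (3) for the set $\{f\in E:f\perp c\}$. So this step is circular. Second, your principality argument ultimately appeals to ``the coordinatewise nature of the direct sum'' $E=E[0,c]\oplus\{f:f\perp c\}$ to conclude that the $\perp c$-coordinate of $e\oplus f$ vanishes when $e,f\leq c$. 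But knowing that every element decomposes (even uniquely) as $e_1\oplus e_2$ with $e_1\leq c$, $e_2\perp c$, and that the two pieces meet only in $0$, does not yet make this a direct sum of \emph{ideals} with additive coordinates — additivity of the coordinates is equivalent to the closure properties (principality of $c$ and condition (3)) that you are trying to prove. So this step is circular as well.

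The missing idea is the exchange axiom SK3e. Given $p,q\leq c$ with $p\perp q$, write $p\oplus q=e_1\oplus e_2$ with $e_1\leq c$, $e_2\perp c$; SK3e produces $a,b,v,w$ with $a\oplus b=p$, $v\oplus w=q$, $a\oplus v\sim e_1\leq c$, and $b\oplus w\sim e_2\perp c$. Now the $\sim$-hereditariness of $E[0,c]$ and of $\{f:f\perp c\}$ (Lemma \ref{lm:hereditary2} (ii) and (iv)) gives $a\oplus v\leq c$ and $b,w\leq b\oplus w\perp c$; since also $b\leq p\leq c$ and $w\leq q\leq c$ and $c$ is sharp, $b=w=0$, so $p\oplus q=a\oplus v\leq c$. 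The same SK3e-plus-sharpness pattern, with the roles of the two hereditary sets interchanged, proves condition (3). Without invoking SK3e (or some equivalent use of the relation $\sim$), the hypotheses do not force the cross-terms to vanish, and the proof cannot close.
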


\begin{proof}
Assume the hypotheses of the lemma. We claim that $c$ is principal.
Indeed, suppose that $p,q\in E$ with $p\perp q$ and $p,q\leq c$.
Then there exist $e\sb{1}, e\sb{2}\in E$ with $p\oplus q=e\sb{1}
\oplus e\sb{2}$, $e\sb{1}\leq c$ and $e\sb{2}\perp c$, and by SK3e,
there exist $a,b,v,w\in E$ with $b\leq a\oplus b=p\leq c$, $w\leq v
\oplus w=q\leq c$, $a\oplus v\sim e\sb{1}\leq c$, and $b\oplus w
\sim e\sb{2}\perp c$. As both $E[0,c]$ and $\{f\in E:f\perp c\}$
are hereditary, we have $a\oplus v\leq c$ and $b,w\leq b\oplus w
\perp c$. Therefore, since $b,w\leq c$ and $c$ is sharp, we have
$b=w=0$, and it follows that $p\oplus q=a\oplus v\leq c$.
This proves that $c$ is principal.

Now suppose that $f\sb{1},f\sb{2}\perp c$ with $f\sb{1}\perp f
\sb{2}$. We claim that $f\sb{1}\oplus f\sb{2}\perp c$. Indeed,
there exist $e\sb{1},e\sb{2}\in E$ with $f\sb{1}\oplus f\sb{2}=e
\sb{1}\oplus e\sb{2}$, $e\sb{1}\leq c$, and $e\sb{2}\perp c$. By
SK3e, there exist $s,t,x,y\in E$ with $s\leq s\oplus t=f\sb{1}
\perp c$, $x\leq x\oplus y=f\sb{2}\perp c$, $s\oplus x\sim e\sb{1}
\leq c$, and $t\oplus y\sim e\sb{2}\perp c$. Again, as both $E[0,c]$
and $\{f\in E:f\perp c\}$ are hereditary, we have $s,x\leq s\oplus x
\leq c$ and $t\oplus y\perp c$. Therefore, since $c$ is sharp,
$s=x=0$, and it follows that $f\sb{1}\oplus f\sb{2}=t\oplus y
\perp c$, as claimed.

By the hypotheses of the lemma, the fact that $c$ is principal, and the
result in the latter paragraph, we infer as per Remarks \ref{rm:altcentdef}
that $c\in\Gamma(E)$.
\end{proof}

Recall that if $c\in E$, then $c\in\Gamma(E)$ iff $E[0,c]$ is a direct
summand of $E$ (Remarks \ref{rm:altcentdef}), i.e., iff there exists a
uniquely determined mapping, \emph{which we shall denote by $\pi\sb{c}\in
\GEX(E)$}, such that $\pi\sb{c}(E)=E[0,c]$.

\begin{lemma} \label{lm:hereditary4}
Let $E$ be Dedekind orthocomplete. Then the following two conditions are
equivalent{\rm:}  {\rm(i)} $c$ is principal and $E[0,c]$ is hereditary.
{\rm(ii)} $c\in\Gamma(E)$ and $\pi\sb{c}\in\Sigma\sb{\sim}(E)$.
\end{lemma}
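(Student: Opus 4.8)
The plan is to prove the equivalence of \ref{lm:hereditary4}(i) and (ii) by going through each implication and leaning heavily on the characterizations already assembled. First I would establish (i) $\Rightarrow$ (ii). Assuming $c$ is principal and $E[0,c]$ is hereditary, I note that a principal element is automatically sharp (as proved in the excerpt), so $c$ satisfies condition (ii) of Lemma \ref{lm:hereditary2}, hence all the equivalent conditions there. The one remaining ingredient for applying Lemma \ref{lm:hereditary3} is the decomposition $e = e\sb{1}\oplus e\sb{2}$ with $e\sb{1}\leq c$ and $e\sb{2}\perp c$ for every $e\in E$. This is where Dedekind orthocompleteness enters: given $e\in E$, look at $E[0,e]$ and invoke the existence of $e \wedge c$ (since $E[0,c]$ is sup/inf-closed, or directly via a maximal-element / Zorn argument as in Theorem \ref{th:decompospq} applied to the pair $(e,c)$), then set $e\sb{1}:=e\wedge c$ and $e\sb{2}:=e\ominus e\sb{1}$; one checks $e\sb{2}\wedge c=0$ using cancellation and principality, and then Lemma \ref{lm:hereditary1} upgrades $e\sb{2}\wedge c=0$ to $e\sb{2}\perp c$. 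With the decomposition in hand, Lemma \ref{lm:hereditary3} yields $c\in\Gamma(E)$. It then remains to show $\pi\sb{c}\in\Sigma\sb{\sim}(E)$: since $\pi\sb{c}(E)=E[0,c]$ is hereditary, Lemma \ref{lm:piinSigmasim}(iii) gives exactly $\pi\sb{c}\in\Sigma\sb{\sim}(E)$.

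For the converse (ii) $\Rightarrow$ (i), assume $c\in\Gamma(E)$ and $\pi\sb{c}\in\Sigma\sb{\sim}(E)$. That $c$ is principal is immediate: every central element is principal (Remarks \ref{rm:altcentdef}). That $E[0,c]$ is hereditary is again Lemma \ref{lm:piinSigmasim}(iii), now read in the other direction, using $\pi\sb{c}(E)=E[0,c]$. So this direction is essentially a one-line unpacking of definitions.

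The main obstacle I expect is the decomposition step in (i) $\Rightarrow$ (ii) — producing $e = e\sb{1}\oplus e\sb{2}$ with $e\sb{1}\leq c$, $e\sb{2}\perp c$. The subtlety is that we only have Dedekind orthocompleteness, not orthocompleteness, so one cannot cavalierly form infinite orthosums; the argument must stay inside the bounded interval $E[0,e]$. I would run the Zorn's lemma argument exactly as in the proof of Theorem \ref{th:decompospq}: build a maximal $p$-orthogonal family in $E[0,e]$ whose members are each $\leq c$ (or, more simply, take $e\sb{1}$ to be a maximal subelement of $e$ lying below $c$, whose existence follows from Dedekind orthocompleteness of $E$ applied to the EA $E[0,e]$, which is archimedean). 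Then $e\sb{1}\leq c$, and maximality forces $e\sb{2}:=e\ominus e\sb{1}$ to have no nonzero subelement $\leq c$, i.e. $e\sb{2}\wedge c = 0$; Lemma \ref{lm:hereditary1} then gives $e\sb{2}\perp c$. Once this is secured, the rest is bookkeeping with the cited lemmas, so I would keep that exposition brief.
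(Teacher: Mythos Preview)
Your proposal is correct and follows essentially the same route as the paper: use Dedekind orthocompleteness and Zorn's lemma to obtain a maximal $e_1 \in E[0,e]\cap E[0,c]$, use principality of $c$ to show $e_2 := e\ominus e_1$ satisfies $e_2\wedge c = 0$, invoke Lemma~\ref{lm:hereditary1} to get $e_2\perp c$, then apply Lemma~\ref{lm:hereditary3} and Lemma~\ref{lm:piinSigmasim}; the converse is the same one-line unpacking. One small caution: your first-paragraph phrasing ``$e_1 := e\wedge c$'' is not quite right, since the infimum need not exist and sup/inf-closure of $E[0,c]$ does not produce it --- but your third-paragraph formulation (a \emph{maximal} common subelement, not the infimum) is exactly what is needed and is what the paper does.
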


\begin{proof} (i) $\Rightarrow$ (ii). Assume (i).  We claim that
every $e\in E$ can be decomposed in $E$ as $e=e\sb{1}\oplus e\sb{2}$
with $e\sb{1}\leq c$ and $e\sb{2}\perp c$. Indeed, since $E$ is
Dedekind orthocomplete, \cite[Theorem 5.4 (iii)]{ExoCen} implies
that every chain (totally ordered set) in the bounded set $E[0,c]
\cap E[0,e]$ has a supremum; hence by Zorn's lemma, there is a
maximal element $e\sb{1}\in E[0,c]\cap E[0,e]$. Thus $e\sb{1}
\leq c,e$ and with $e\sb{2}:=e\ominus e\sb{1}$, we have $e=e
\sb{1}\oplus e\sb{2}$. We have to prove that $e\sb{2}\perp c$. Aiming
for a contradiction, we assume that $e\sb{2}\not\perp c$. Then by
Lemma \ref{lm:hereditary1}, there exists $0\not=p\leq e\sb{2},c$; hence,
with $e\sb{3}:=e\sb{2}\ominus p$, we have $e\sb{2}=p\oplus e\sb{3}$, and
therefore $e\sb{1}\oplus p\leq e\sb{1}\oplus p\oplus e\sb{3}=e\sb{1}
\oplus e\sb{2}=e$. Moreover, as $e\sb{1},p\leq c$ and $c$ is principal,
it follows that $e\sb{1}<e\sb{1}\oplus p\leq c$, contradicting the
maximality of $e\sb{1}$, and proving the claim.

Since $c$ is principal, it is sharp, so it satisfies condition (ii)
in Lemma \ref{lm:hereditary2}; hence, by the result above and
Lemma \ref{lm:hereditary3}, $c\in\Gamma(E)$. Therefore, since
$E[0,c]=\pi\sb{c}(E)$ is hereditary, $\pi\sb{c}\in\Sigma\sb{\sim}
(E)$ by the equivalence of (i) and (iii) in Lemma \ref{lm:piinSigmasim}.

(ii) $\Rightarrow$ (i).  Assume (ii). Then $c$ is principal since $c\in
\Gamma(E)$, and we have $\pi\sb{c}(E)=E[0,c]$. Also, by Lemma
\ref{lm:piinSigmasim}, $\pi\sb{c}\in\Sigma\sb{\sim}(E)$ implies that
$\pi\sb{c}(E)=E[0,c]$ is hereditary.
\end{proof}

\begin{lemma} [{Cf. \cite[Theorem 7.13]{JePu02}}] \label{lm:dir&invar}
Suppose that $c$ satisfies any one {\rm(}hence all{\rm)} of the conditions
in Lemma \ref{lm:hereditary2}. Then{\rm:} {\rm(i)} If $E$ is directed, then
$c\in\Gamma(E)$. {\rm(ii)} If $E$ is orthogonally ordered, then $c$ is
principal. {\rm (iii)} If $E$ is Dedekind orthocomplete and either directed or
orthogonally ordered, then $c\in\Gamma(E)$ and $\pi\sb{c}\in\Sigma\sb{\sim}(E)$.
\end{lemma}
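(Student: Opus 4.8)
\textbf{Proof proposal for Lemma \ref{lm:dir&invar}.}

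The plan is to prove each of the three parts by reducing to the machinery already assembled, principally Lemmas \ref{lm:hereditary2}, \ref{lm:hereditary3}, and \ref{lm:hereditary4}. Throughout, $c$ is assumed to satisfy the equivalent conditions of Lemma \ref{lm:hereditary2}; in particular $c$ is sharp, $E[0,c]$ is hereditary, and $\{f\in E:f\perp c\}$ is hereditary.

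For part (i), assume $E$ is directed. The key step is to verify the decomposition hypothesis of Lemma \ref{lm:hereditary3}, namely that every $e\in E$ splits as $e=e\sb{1}\oplus e\sb{2}$ with $e\sb{1}\leq c$ and $e\sb{2}\perp c$. Using directedness, pick $d\in E$ with $e,c\leq d$; then $c$ is a subelement of $d$, and I would work inside the EA $E[0,d]$, where $c^{\perp_d}:=d\ominus c$ exists. Since $c$ is sharp in $E$, it should remain sharp (indeed principal) in $E[0,d]$, so $c$ is central in $E[0,d]$ and $E[0,d]=E[0,c]\oplus_d E[0,c^{\perp_d}]$. Decomposing $e$ along this direct sum inside $E[0,d]$ gives $e=e\sb{1}\oplus e\sb{2}$ with $e\sb{1}\leq c$ and $e\sb{2}\leq c^{\perp_d}$, and $e\sb{2}\leq c^{\perp_d}$ forces $e\sb{2}\perp c$ because $c^{\perp_d}\perp c$. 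With the decomposition hypothesis in hand, Lemma \ref{lm:hereditary3} immediately yields $c\in\Gamma(E)$.

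For part (ii), assume $E$ is orthogonally ordered. I want to show $c$ is principal, i.e., $p,q\leq c$ with $p\perp q$ implies $p\oplus q\leq c$. By orthogonal order, it suffices to show that every $d\in E$ with $d\perp c$ satisfies $d\perp(p\oplus q)$; equivalently (since $E[0,c]$ is hereditary, hence by Lemma \ref{lm:hereditary1} $d\wedge c=0\Leftrightarrow d\perp c$ — wait, that gives only one direction, so I'll instead argue directly): given $d\perp c$, both $p\leq c$ and $q\leq c$ give, via Lemma \ref{lm:hereditary2}(i) applied with the roles suitably arranged, that $d$ is unrelated to $p$ and to $q$; I then need that $d$ is unrelated to $p\oplus q$, and since $d\perp p$, $d\perp q$ individually I can assemble $d\perp(p\oplus q)$ using associativity once I know the sum $p\oplus q$ exists in $E$ (which it does by hypothesis). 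Then orthogonal order yields $p\oplus q\leq c$. I expect this middle part to be the main obstacle: getting from "$d\perp p$ and $d\perp q$" to "$d\perp p\oplus q$" is not automatic in a bare GEA, so the argument must genuinely exploit sharpness of $c$ together with the hereditariness of $\{f:f\perp c\}$ — likely by writing $p\oplus q = e\sb{1}\oplus e\sb{2}$ with $e\sb{1}\leq c$, $e\sb{2}\perp c$ (obtained as in the proof of Lemma \ref{lm:hereditary3} from a decomposition of the ambient element), then applying SK3e exactly as in Lemma \ref{lm:hereditary3} to conclude $p\oplus q\leq c$, which makes part (ii) essentially a repackaging of the "$c$ is principal" paragraph of that lemma once the decomposition is available; orthogonal order is precisely what supplies that decomposition without directedness.

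For part (iii), assume $E$ is Dedekind orthocomplete and either directed or orthogonally ordered. In the directed case, part (i) gives $c\in\Gamma(E)$, hence $c$ is principal, so by Lemma \ref{lm:hereditary4}(i)$\Rightarrow$(ii) (whose hypothesis is that $c$ is principal and $E[0,c]$ is hereditary, both of which hold) we get $c\in\Gamma(E)$ and $\pi\sb{c}\in\Sigma\sb{\sim}(E)$. In the orthogonally ordered case, part (ii) gives that $c$ is principal; combined with $E[0,c]$ hereditary (from Lemma \ref{lm:hereditary2}), Lemma \ref{lm:hereditary4} again delivers both conclusions. Thus in either case $c\in\Gamma(E)$ and $\pi\sb{c}\in\Sigma\sb{\sim}(E)$, completing the proof.
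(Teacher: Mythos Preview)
Your proof of part (iii) is fine and matches the paper: once (i) and (ii) are established, Lemma \ref{lm:hereditary4} does the rest.

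However, both (i) and (ii) have genuine gaps, and in each case the missing ingredient is the same: you try to argue purely order-theoretically, but the paper's argument leans essentially on the SK-congruence axioms SK3e and SK4a.

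For (i), your key step ``$c$ sharp $\Rightarrow$ $c$ principal $\Rightarrow$ $c$ central in $E[0,d]$'' is unjustified. In a general EA, sharp elements need not be principal, and principal elements need not be central (already in an orthomodular lattice every element is sharp and principal, yet not central). So you cannot split $e$ along an alleged direct sum $E[0,c]\oplus_d E[0,c^{\perp_d}]$. The paper instead writes $p:=d$, $p=e\oplus x=c\oplus y$ (with $x=p\ominus e$, $y=p\ominus c$) and applies SK3e to obtain $e=e_1\oplus e_2$, $x=s\oplus t$ with $e_1\oplus s\sim c$ and $e_2\oplus t\sim y\perp c$. Then the hereditariness of $E[0,c]$ and of $\{f:f\perp c\}$ (Lemma \ref{lm:hereditary2} (ii),(iv)) give $e_1\leq c$ and $e_2\perp c$, after which Lemma \ref{lm:hereditary3} applies.

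For (ii), you correctly spot that ``$d\perp p$ and $d\perp q$ $\Rightarrow$ $d\perp(p\oplus q)$'' is the crux and is not free in a GEA, but your proposed workaround (``decompose $p\oplus q=e_1\oplus e_2$ with $e_1\leq c$, $e_2\perp c$'') presupposes exactly what you are trying to prove or a centrality you do not have. The paper uses orthogonal order contrapositively: if $p\oplus q\nleq c$, there exists $h\perp c$ with $h\not\perp(p\oplus q)$; then SK4a produces nonzero $r,k$ with $p\oplus q\geq r\sim k\leq h$, and SK3e applied to $p\oplus q=r\oplus(p\oplus q\ominus r)$ splits $p=a\oplus b$, $q=x\oplus y$ with $a\oplus x\sim r\sim k$. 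Since $k\leq h\perp c$ and $\{f:f\perp c\}$ is hereditary, $a\oplus x\perp c$; but $a\leq p\leq c$ and $x\leq q\leq c$, so sharpness of $c$ forces $a=x=0$, whence $r=0$, a contradiction.
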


\begin{proof}
Assume the hypothesis. (i) Suppose that $E$ is directed and let $e\in E$.
We claim that there exist $e\sb{1},e\sb{2}\in E$ with $e=e\sb{1}\oplus e
\sb{2}$, $e\sb{1}\leq c$, and $e\sb{2}\perp c$. Indeed, as $E$ is
directed, there exists $p\in E$ with $e,c\leq p$, and with $x:=p\ominus e$
and $y:=p\ominus c$, we have $p=e\oplus x=c\oplus y$. Thus, by SK3e, there
exist $e\sb{1},e\sb{2},s,t\in E$ such that $e=e\sb{1}\oplus e\sb{2}$,
$x=s\oplus t$, $e\sb{1}\oplus s\sim c$, and $e\sb{2}\oplus t\sim y\perp c$.
By parts (ii) and (iv) of Lemma \ref{lm:hereditary2}, $e\sb{1}\leq e\sb{1}
\oplus s\leq c$ and $e\sb{2}\leq e\sb{2}\oplus t\perp c$; hence $e\sb{2}\perp
c$, so (i) holds by Lemma \ref{lm:hereditary3}. (ii) Suppose that
$E$ is orthogonally ordered, let $p,q\in E[0,c]$ with $p\perp q$, but
suppose that $p\oplus q\notin E[0,c]$. Then there exists $h\in E$ with
$h\perp c$ and $h\not\perp p\oplus q$, whence by SK4a, there exist nonzero
$r,k\in E$ with $p\oplus q\geq r\sim k\leq h$. Since $r\leq p\oplus q$,
there exists $s\in E$ with $p\oplus q=r\oplus s$; hence by SK3e, there
exist $a,b,x,y\in E$ with $a\oplus b=p\leq c$, $x\oplus y=q\leq c$,
$a\oplus x\sim r$, and $b\oplus y\sim s$. As $a\oplus x\sim r\sim k
\leq h\perp c$, we infer from part (iv) of Lemma \ref{lm:hereditary2}
that $a\oplus x\perp c$. Thus $a\perp c$, $a\leq p\leq c$, $x\perp c$,
and $x\leq q\leq c$, whence, since $c$ is sharp, $a=x=0$. Therefore,
$r\sim a\oplus x=0$, we arrive at the contradiction $r=0$, and (ii)
is proved. Part (iii) follows from (i), (ii), and Lemma \ref{lm:hereditary4}.
\end{proof}

Since every element of a Loomis dimension lattice is principal, the
following definition extends the idea of an invariant element to
the GEA $E$.

\begin{definition} \label{df:invariant}
An element $c\in E$ is \emph{invariant} iff $c$ is principal and,
for all $e\sb{1},f\in E$, $e\geq e\sb{1}\sim f\perp c\Rightarrow
e\sb{1}=f=0$. We denote by $\Gamma\sb{\sim}(E)$ the set of all
invariant elements in $E$.
\end{definition}

A second notion of ``invariance" applies to any hull system
on $E$ \cite[Definition 7.1]{ExoCen}. In particular:

\begin{definition} \label{df:etainvar}
Let $E$ be a COGEA and let $(\eta\sb{e})\sb{e\in E}$ be the
hull system on $E$ determined by the complete boolean algebra
and CD set $\Sigma\sb{\sim}(E)$.  Then $c\in E$ is
\emph{$\eta$-invariant} iff $\eta\sb{c}(E)=E[0,c]$.\footnote{A
different but equivalent definition of $\eta$-invariance is given
in \cite[Definition 7.1]{ExoCen}---see \cite[Lemma 7.3]{ExoCen}.}
We denote by $\Gamma\sb{\eta}(E)$ the set of all $\eta$-invariant elements
in the COGEA $E$.
\end{definition}

\begin{theorem} [{\cite[Theorems 7.5 and 7.7]{ExoCen},
\cite[Theorem 7.3]{HDTD} }] \label{th:COGEAetainv}
Suppose that $E$ is a COGEA and let $(c\sb{i})\sb{i\in I}\subseteq\Gamma
\sb{\eta}(E)$. Then{\rm:} {\rm(i)} $\Gamma\sb{\eta}(E)$ is a sublattice of
the center $\Gamma(E)$ of $E$, and as such, it is a generalized boolean
algebra. {\rm(ii)} If $I$ is nonempty, then the infimum $c:=\bigwedge\sb{i
\in I}c\sb{i}$ exists in $E$, $c\in\Gamma\sb{\eta}(E)$, and $\eta
\sb{c}=\bigwedge\sb{i\in I}\eta\sb{c\sb{i}}$ in the boolean algebra
$\GEX(E)$. {\rm(iii)} If $(c\sb{i})\sb{i\in I}$ is bounded above in $E$,
then the supremum $s :=\bigvee\sb{i\in I}c\sb{i}$ exists in $E$,
$s\in\Gamma\sb{\eta}(E)$, and $\eta\sb{s}=\bigvee\sb{i\in I}
\eta\sb{c\sb{i}}$ in the boolean algebra $\GEX(E)$.
\end{theorem}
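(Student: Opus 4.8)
The plan is to recast $\eta$-invariance entirely inside the boolean algebra $\Sigma\sb{\sim}(E)$ and then read off (i)--(iii) from the boolean and orthosummation structure. The organizing observation I would establish first is the characterization: for $c\in E$ one has $c\in\Gamma\sb{\eta}(E)$ iff $c\in\Gamma(E)$ and $\pi\sb{c}\in\Sigma\sb{\sim}(E)$, in which case $\eta\sb{c}=\pi\sb{c}$ and $c$ is the largest element of the direct summand $\eta\sb{c}(E)=E[0,c]$. The nontrivial direction uses that $\eta\sb{c}(E)$ is an order ideal containing $c$ (so $E[0,c]\subseteq\eta\sb{c}(E)$), together with $\eta\sb{c}=\bigwedge\{\pi\in\Sigma\sb{\sim}(E):\pi c=c\}$ from Theorem~\ref{th:SigmasimProps}(iii) and $\eta\sb{e}\in\Sigma\sb{\sim}(E)$ from Theorem~\ref{th:SigmasimProps}(iv). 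Thus $c\mapsto\eta\sb{c}$ embeds $\Gamma\sb{\eta}(E)$ order-isomorphically onto the set of those $\pi\in\Sigma\sb{\sim}(E)$ whose range $\pi(E)$ has a top element, and each part reduces to producing such a top for an infimum or supremum of projections.

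For (ii) I would set $\pi:=\bigwedge\sb{i\in I}\eta\sb{c\sb{i}}$, which lies in $\Sigma\sb{\sim}(E)$ because $\Sigma\sb{\sim}(E)$ is sup/inf-closed in $\GEX(E)$ (Theorem~\ref{th:SigmasimProps}(i)); since infima in $\GEX(E)$ are computed pointwise, $\pi(E)=\bigcap\sb{i}\eta\sb{c\sb{i}}(E)=\bigcap\sb{i}E[0,c\sb{i}]$. Fixing any $i\sb{0}$, the element $c:=\pi c\sb{i\sb{0}}$ is a common lower bound of $(c\sb{i})$ (as $c\in\pi(E)\subseteq E[0,c\sb{j}]$), and any common lower bound $e$ satisfies $e=\pi e\le\pi c\sb{i\sb{0}}=c$ by monotonicity of $\pi$; hence $c=\bigwedge\sb{i}c\sb{i}$ is the top of $\pi(E)$, so $\pi(E)=E[0,c]$ and $\pi=\pi\sb{c}\in\Sigma\sb{\sim}(E)$. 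By the characterization, $c\in\Gamma\sb{\eta}(E)$ and $\eta\sb{c}=\pi=\bigwedge\sb{i}\eta\sb{c\sb{i}}$.

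The hard part is (iii), because suprema in $\GEX(E)$ are not computed pointwise, so the range of $\pi:=\bigvee\sb{i}\eta\sb{c\sb{i}}\in\Sigma\sb{\sim}(E)$ cannot be identified as a union of the $E[0,c\sb{i}]$. I would circumvent this with a disjoint refinement: well-order $I$ and put $\xi\sb{i}:=\eta\sb{c\sb{i}}\wedge\bigwedge\sb{j<i}(\eta\sb{c\sb{j}})'$ in the complete boolean algebra $\GEX(E)$, so that the $\xi\sb{i}$ are pairwise disjoint with $\bigvee\sb{i}\xi\sb{i}=\pi$ (standard disjointification). Then for every $e\in E$ the family $(\xi\sb{i}e)\sb{i}$ is $\GEX$-orthogonal, hence orthosummable by CO1, and splitting off the $i$-th summand (the rest lying in $\xi\sb{i}{}'(E)$, on which $\xi\sb{i}$ vanishes) gives $\pi e=\oplus\sb{i}\xi\sb{i}e$. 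Using a fixed upper bound $b$ of $(c\sb{i})$ I would first check $\eta\sb{c\sb{i}}b=c\sb{i}$ (since $c\sb{i}\le b$ and $\eta\sb{c\sb{i}}(E)=E[0,c\sb{i}]$), whence $\xi\sb{i}b=\xi\sb{i}e$ for every upper bound $e$ of $(c\sb{i})$ (because $\xi\sb{i}\le\eta\sb{c\sb{i}}$ and $\eta\sb{c\sb{i}}e=\eta\sb{c\sb{i}}b=c\sb{i}$); summing yields $\pi b=\pi e\le e$, so $s:=\pi b=\bigvee\sb{i}c\sb{i}$. Finally, since $\xi\sb{i}(E)\subseteq E[0,c\sb{i}]$ has top element $\xi\sb{i}c\sb{i}=\xi\sb{i}s$, any $g\in\pi(E)$ satisfies $g=\oplus\sb{i}\xi\sb{i}g$ with $\xi\sb{i}g\le\xi\sb{i}s$ for each $i$; summing and using $s=\oplus\sb{i}\xi\sb{i}s$ gives $g\le s$ by monotonicity of orthosummation. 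Hence $\pi(E)=E[0,s]$, $\pi=\pi\sb{s}\in\Sigma\sb{\sim}(E)$, and the characterization yields $s\in\Gamma\sb{\eta}(E)$ with $\eta\sb{s}=\pi=\bigvee\sb{i}\eta\sb{c\sb{i}}$.

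Part (i) then follows quickly. Clearly $0\in\Gamma\sb{\eta}(E)\subseteq\Gamma(E)$. Closure under finite meets is the case $|I|=2$ of (ii); closure under finite joins follows from (iii) once one notes that $c\sb{1}\vee c\sb{2}$ already exists in the generalized boolean algebra $\Gamma(E)$ and so bounds $\{c\sb{1},c\sb{2}\}$ in $E$. Hence $\Gamma\sb{\eta}(E)$ is a sublattice of $\Gamma(E)$, and to see that it is itself a generalized boolean algebra it suffices to verify closure under relative complements: for $c\le d$ in $\Gamma\sb{\eta}(E)$ I would show $\pi\sb{d\ominus c}=\pi\sb{d}\wedge(\pi\sb{c})'$ by checking, with the pointwise formula for binary meets, that $(\pi\sb{d}\wedge(\pi\sb{c})')(E)=E[0,d]\cap\{f:f\perp c\}=E[0,d\ominus c]$ (the last equality using that $d$ is principal and $(\pi\sb{c})'(E)=\{f:f\perp c\}$ from Remarks~\ref{rm:altcentdef}). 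Since $\pi\sb{c},\pi\sb{d}\in\Sigma\sb{\sim}(E)$ and $\Sigma\sb{\sim}(E)$ is a boolean subalgebra, $\pi\sb{d\ominus c}\in\Sigma\sb{\sim}(E)$, so $d\ominus c$ is central and $\eta$-invariant, completing the proof. I expect the genuine difficulty to be concentrated in the orthosummation computation $\pi e=\oplus\sb{i}\xi\sb{i}e$ of part (iii), which is what replaces the missing pointwise behavior of suprema.
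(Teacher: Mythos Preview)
The paper does not supply a proof of this theorem; it is imported verbatim from \cite[Theorems~7.5, 7.7]{ExoCen} and \cite[Theorem~7.3]{HDTD}, where the result is established for an \emph{arbitrary} hull system on a COGEA.  Your argument is instead tailored to the particular hull system determined by $\Sigma\sb{\sim}(E)$: it rests on Theorem~\ref{th:SigmasimProps} (sup/inf-closure and completeness of $\Sigma\sb{\sim}(E)$ in $\GEX(E)$) and on the characterization $\Gamma\sb{\eta}(E)=\{c\in\Gamma(E):\pi\sb{c}\in\Sigma\sb{\sim}(E)\}$ with $\eta\sb{c}=\pi\sb{c}$.  That characterization is exactly the equivalence (i)$\Leftrightarrow$(ii) of the \emph{later} Theorem~\ref{th:maintheorem}, but its proof there uses neither Theorem~\ref{th:COGEAetainv} nor Dedekind orthocompleteness, so pulling it forward is legitimate and there is no circularity.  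What you gain is a short, self-contained argument inside the present paper's framework; what you lose relative to the cited sources is generality, since the original proofs do not assume the hull system arises from an SK-congruence.

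Your diagnosis of the one genuinely delicate step is correct: in part~(iii) everything hinges on the identity $\pi e=\oplus\sb{i}\xi\sb{i}e$ for a pairwise disjoint family $(\xi\sb{i})$ in $\GEX(E)$ with $\bigvee\sb{i}\xi\sb{i}=\pi$, since arbitrary suprema in $\GEX(E)$ are not asserted to be pointwise.  This identity is precisely what the COGEA machinery of \cite[\S6]{ExoCen} provides; once it is in hand, your ``monotonicity of orthosummation'' is automatic because for $\GEX$-orthogonal families the orthosum equals the supremum \cite[Lemma~6.2]{ExoCen}.  The remaining verifications in (i)--(iii), including the relative-complement closure in~(i), are routine and carried out correctly.
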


The next theorem, which is the main theorem of this section, shows that,
if $E$ is a Dedekind orthocomplete COGEA, then
\[
\Gamma\sb{\sim}(E)=\Gamma\sb{\eta}(E)=\{c\in\Gamma(E):\pi\sb{c}
 \in\Sigma\sb{\sim}(E)\}\subseteq\Gamma(E).
\]

\begin{theorem} \label{th:maintheorem}
Suppose that $E$ is both centrally orthocomplete {\rm(}a COGEA{\rm)} and
Dedekind orthocomplete, let $(\eta\sb{e})\sb{e\in E}$ be the hull system
determined by the HD set $\Sigma\sb{\sim}(E)$, and let $c\in E$. Then, the
following conditions are mutually equivalent{\rm:}
\begin{enumerate}
\item $c\in\Gamma(E)$ and $\pi\sb{c}\in\Sigma\sb{\sim}(E)$ {\rm(}i.e.,
 $\pi\sb{c}$ splits $\sim${\rm)}.
\item $c\in\Gamma\sb{\eta}(E)$ {\rm(}i.e., $c$ is $\eta$-invariant{\rm)}.
\item $c\in\Gamma\sb{\sim}(E)$ {\rm(}i.e., $c$ is invariant{\rm)}.
\item $c$ is principal and $E[0,c]$ is hereditary.
\item $c$ is principal and, for all $e\in E$, $e\sim c\Rightarrow
 e\leq c$.
\item $c$ is principal and $\{f\in E:f\perp c\}$ is hereditary.
\end{enumerate}
\end{theorem}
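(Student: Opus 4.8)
The plan is to establish the six-way equivalence by combining the lemmas already proved in Sections~\ref{sc:SK} and~\ref{sc:HIetaInv}, organizing the argument as a cycle together with a few direct cross-links, so that every implication draws on a single prior result. The natural backbone is: (iv) $\Leftrightarrow$ (v) $\Leftrightarrow$ (vi) is exactly the content of Lemma~\ref{lm:hereditary2} (noting that, since we are asserting $c$ is principal, hence sharp, conditions (ii), (iii), (iv) there apply verbatim); and (i) $\Leftrightarrow$ (iv) is exactly Lemma~\ref{lm:hereditary4}, which requires Dedekind orthocompleteness and is available by hypothesis. So the only real work is to tie in conditions (ii) and (iii), i.e.\ $\eta$-invariance and invariance.

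First I would dispose of (iii). Comparing Definition~\ref{df:invariant} with condition (i) of Lemma~\ref{lm:hereditary2}, the statement ``$c$ is principal and for all $e\sb{1},f\in E$, $e\geq e\sb{1}\sim f\perp c\Rightarrow e\sb{1}=f=0$'' is literally the conjunction of ``$c$ principal'' with condition (i) of Lemma~\ref{lm:hereditary2}; since $c$ principal implies $c$ sharp, that condition (i) is equivalent to (ii), (iii), (iv) of Lemma~\ref{lm:hereditary2}. Hence (iii) is equivalent to ``$c$ principal and $E[0,c]$ hereditary,'' which is exactly (iv). So (iii) $\Leftrightarrow$ (iv) is essentially a definition-chase through Lemma~\ref{lm:hereditary2}, and I would present it in one or two sentences.

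The slightly more delicate part is (ii), $\eta$-invariance, defined by $\eta\sb{c}(E)=E[0,c]$. I would argue (i) $\Leftrightarrow$ (ii) as follows. If $c\in\Gamma(E)$, then $E[0,c]=\pi\sb{c}(E)$ with $\pi\sb{c}\in\GEX(E)$ the unique mapping with that range, and $\pi\sb{c}$ is then a direct-summand projection with $\pi\sb{c}c=c$. By Theorem~\ref{th:SigmasimProps}(iii), $\eta\sb{c}=\bigwedge\{\pi\in\Sigma\sb{\sim}(E):\pi c=c\}$; in particular $\eta\sb{c}\leq\pi\sb{c}$ whenever $\pi\sb{c}\in\Sigma\sb{\sim}(E)$, while always $\eta\sb{c}c=c$ forces $c\leq\eta\sb{c}(E)$'s ``top'' in the appropriate sense, so $E[0,c]\subseteq\eta\sb{c}(E)\subseteq\pi\sb{c}(E)=E[0,c]$, giving $\eta\sb{c}(E)=E[0,c]$, i.e.\ (ii). Conversely, if (ii) holds, then $\eta\sb{c}(E)=E[0,c]$ is a direct summand, so $c\in\Gamma(E)$ with $\pi\sb{c}=\eta\sb{c}$, and $\pi\sb{c}=\eta\sb{c}\in\Theta\sb{\eta}(E)\subseteq\Sigma\sb{\sim}(E)$ by Theorem~\ref{th:SigmasimProps}(iv); that is (i). Alternatively, and perhaps more cleanly, one can route (ii) $\Leftrightarrow$ (iv) directly: $\eta\sb{c}(E)=E[0,c]$ says simultaneously that $E[0,c]$ is a direct summand (so $c$ is principal, being central) and that this direct summand lies in $\Theta\sb{\eta}(E)\subseteq\Sigma\sb{\sim}(E)$, hence is hereditary by Lemma~\ref{lm:piinSigmasim}; conversely (iv) plus Lemma~\ref{lm:hereditary4} gives $c\in\Gamma(E)$ with $\pi\sb{c}\in\Sigma\sb{\sim}(E)$, and then $\pi\sb{c}=\eta\sb{c}$ follows because $\pi\sb{c}$ is the smallest element of $\Sigma\sb{\sim}(E)$ fixing $c$ (any $\pi\in\Sigma\sb{\sim}(E)$ with $\pi c=c$ has $c\in\pi(E)$, and since $\pi(E)$ is hereditary while $E[0,c]=\pi\sb{c}(E)\subseteq\pi(E)$, we get $\pi\sb{c}\leq\pi$), so $\eta\sb{c}=\pi\sb{c}$ and (ii) holds.

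The main obstacle I anticipate is not any single hard step but making sure the ``$\pi\sb{c}$ is the smallest splitting exocenter fixing $c$'' claim is airtight: one must check that for $\pi\in\Sigma\sb{\sim}(E)$ with $\pi c=c$ we indeed have $E[0,c]\subseteq\pi(E)$, which follows since every $e\leq c$ satisfies $e\subsim c\in\pi(E)$ and $\pi(E)$ is hereditary (Lemma~\ref{lm:piinSigmasim}(iii)), hence $e\in\pi(E)$; this then gives $\pi\sb{c}(E)\subseteq\pi(E)$, i.e.\ $\pi\sb{c}\leq\pi$ in $\GEX(E)$. With that in hand, the identification $\eta\sb{c}=\bigwedge\{\pi\in\Sigma\sb{\sim}(E):\pi c=c\}=\pi\sb{c}$ is immediate. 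I would therefore structure the final proof as: cite Lemma~\ref{lm:hereditary2} for (iv)$\Leftrightarrow$(v)$\Leftrightarrow$(vi); cite Lemma~\ref{lm:hereditary4} for (i)$\Leftrightarrow$(iv); observe (iii)$\Leftrightarrow$(iv) by unwinding Definition~\ref{df:invariant} against Lemma~\ref{lm:hereditary2}(i); and prove (i)$\Leftrightarrow$(ii) via the ``smallest splitting projection'' identification $\eta\sb{c}=\pi\sb{c}$, using Theorem~\ref{th:SigmasimProps}(iii),(iv) and Lemma~\ref{lm:piinSigmasim}.
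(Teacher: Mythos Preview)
Your proposal is correct and follows essentially the same route as the paper: Lemma~\ref{lm:hereditary2} for (iii)$\Leftrightarrow$(iv)$\Leftrightarrow$(v)$\Leftrightarrow$(vi), Lemma~\ref{lm:hereditary4} for (i)$\Leftrightarrow$(iv), and the identification $\eta\sb{c}=\pi\sb{c}$ for (i)$\Leftrightarrow$(ii). The paper's justification of $E[0,c]\subseteq\eta\sb{c}(E)$ is simply that $\eta\sb{c}(E)$ is an ideal containing $c$, which is a bit cleaner than your detour through hereditariness, but the content is the same.
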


\begin{proof}
Assume (i). Then, as $\pi\sb{c}c=c$ and $\eta\sb{c}$ is the smallest
mapping $\pi$ in $\Sigma\sb{\sim}(E)$ such that $\pi c=c$, we have
$\eta\sb{c}\leq\pi\sb{c}$; hence $\eta\sb{c}(E)\subseteq\pi\sb{c}(E)
=E[0,c]$. Also, as $c\in\eta\sb{c}(E)$ and $\eta\sb{c}(E)$ is an ideal
in $E$, we have $E[0,c]\subseteq\eta\sb{c}(E)$; hence $\eta\sb{c}(E)
=E[0,c]$, and it follows that $c\in\Gamma\sb{\eta}(E)$. Thus (i)
$\Rightarrow$ (ii), the converse is obvious, and we have (i)
$\Leftrightarrow$ (ii). By Lemma \ref{lm:hereditary2} and the fact
that principal elements are sharp, it follows that (iii) $\Leftrightarrow$
(iv) $\Leftrightarrow$ (v) $\Leftrightarrow$ (vi). Finally, (i)
$\Leftrightarrow$ (iv) by Lemma \ref{lm:hereditary4}.
\end{proof}

\section{A Dimension Generalized Effect Algebra} \label{sc:DGEA}

\noindent\emph{The assumption that $\sim$ is an SK-congruence on $E$
remains in force.}

\begin{definition} \label{df:DER}
The SK-congruence $\sim$ is called a \emph{dimension equivalence
relation} (DER) on $E$ iff, in addition to SK1--SK4, it satisfies the
following condition:
\begin{enumerate}
\item [(SK4a$'$)] \emph{For all $e,f\in E$, if $e$ is unrelated to
 $f$ then there exists $\pi\in\Sigma\sb{\sim}(E)$ such that $e\in
\pi(E)$ and $f\in\pi\,'(E)$.}
\end{enumerate}
\end{definition}

If $\pi\in\GEX(E)$, $e\in\pi(E)$, and $f\in\pi\,'(E)$, then $e\perp f$;
hence, since condition SK4a is equivalent to the requirement that unrelated
elements in $E$ are orthogonal, it follows that SK4a$'$ is formally
stronger than SK4a. We note that SK4a$'$ is the converse of part (iv)
of Lemma \ref{lm:piinSigmasim}.

\begin{example} \label{ex:etasimDER}
In view of Theorem \ref{th:etasimSK}, \emph{if $E$ is orthogonally
ordered and $(\eta\sb{e})\sb{e\in E}$ is a divisible hull system on
$E$, then $\etasim$ is a DER on $E$.}
\end{example}

As a consequence of Theorem \ref{th:SigmasimProps} (vi), we have the
following.

\begin{lemma} \label{lm:unrelated&DER}
Let $E$ be a COGEA and let $(\eta\sb{e})\sb{e\in E}$ be the hull system
determined by $\Sigma\sb{\sim}(E)$. Then $\sim$ is a DER iff, for all $e,f
\in E$, if $e$ is unrelated to $f$, then $\eta\sb{e}\wedge\eta\sb{f}=0$.
\end{lemma}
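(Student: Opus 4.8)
The plan is to exploit the characterization in Theorem~\ref{th:SigmasimProps}~(vi), which says that for $e,f\in E$ there exists $\pi\in\Sigma\sb{\sim}(E)$ with $e\in\pi(E)$ and $f\in\pi\,'(E)$ if and only if $\eta\sb{e}\wedge\eta\sb{f}=0$. This reduces the entire lemma to a purely logical manipulation: condition SK4a$'$ asserts precisely that ``$e$ unrelated to $f$'' implies ``there exists $\pi\in\Sigma\sb{\sim}(E)$ with $e\in\pi(E)$, $f\in\pi\,'(E)$,'' and by Theorem~\ref{th:SigmasimProps}~(vi) the latter is equivalent to $\eta\sb{e}\wedge\eta\sb{f}=0$. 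So SK4a$'$ holds if and only if, for all $e,f\in E$, $e$ unrelated to $f$ implies $\eta\sb{e}\wedge\eta\sb{f}=0$.

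First I would note that $\sim$ is already assumed to be an SK-congruence (Standing Assumption~\ref{as:SK}), so it is a DER exactly when it additionally satisfies SK4a$'$. Thus the left-hand side ``$\sim$ is a DER'' unpacks directly to ``SK4a$'$ holds.'' Second, I would invoke Theorem~\ref{th:SigmasimProps}~(vi), which is available since $E$ is a COGEA and $(\eta\sb{e})\sb{e\in E}$ is by hypothesis the hull system determined by $\Sigma\sb{\sim}(E)$ (matching exactly the setup of Theorem~\ref{th:SigmasimProps}~(iii)). Substituting the ``iff $\eta\sb{e}\wedge\eta\sb{f}=0$'' clause of (vi) for the existential statement in SK4a$'$ yields the desired equivalence immediately.

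There is essentially no obstacle here; the work has all been done in Theorem~\ref{th:SigmasimProps}. The only point requiring a word of care is direction bookkeeping: SK4a$'$ is an implication (unrelated $\Rightarrow$ existence of a splitting $\pi$), and one must check that replacing its conclusion by the equivalent condition $\eta\sb{e}\wedge\eta\sb{f}=0$ preserves the implication in both directions, which it trivially does since we are substituting logically equivalent statements. I would also remark, as the excerpt already observes, that Theorem~\ref{th:SigmasimProps}~(vii) gives the reverse implication ``$\eta\sb{e}\wedge\eta\sb{f}=0$ $\Rightarrow$ $e$ unrelated to $f$'' unconditionally, so the content of SK4a$'$ is exactly the converse of (vii)---the lemma is thus a clean restatement.

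\begin{proof}
Since $\sim$ is an SK-congruence by Standing Assumption~\ref{as:SK}, the relation $\sim$ is a DER if and only if it satisfies condition SK4a$'$, i.e., if and only if, for all $e,f\in E$ with $e$ unrelated to $f$, there exists $\pi\in\Sigma\sb{\sim}(E)$ with $e\in\pi(E)$ and $f\in\pi\,'(E)$. By Theorem~\ref{th:SigmasimProps}~(vi), the existence of such a $\pi$ is equivalent to $\eta\sb{e}\wedge\eta\sb{f}=0$. Hence $\sim$ is a DER if and only if, for all $e,f\in E$, $e$ unrelated to $f$ implies $\eta\sb{e}\wedge\eta\sb{f}=0$.
\end{proof}
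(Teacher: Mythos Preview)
Your proof is correct and matches the paper's approach exactly: the paper simply states that the lemma is a consequence of Theorem~\ref{th:SigmasimProps}~(vi), which is precisely the reduction you carry out.
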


Using Lemma \ref{lm:unrelated&DER} and \cite[Remarks 7.6]{ExoCen}, it
is not difficult to show that, if $E$ is an orthocomplete (hence, centrally
orthocomplete) EA, then condition SK4a$'$ in Definition \ref{df:DER} is
equivalent to condition SK4$'$ in \cite[Definition 7.14]{HandD}.

\begin{definition} \label{df:DGEA}
A \emph{dimension generalized effect algebra} (DGEA) is a centrally
orthocomplete (i.e., a COGEA) and Dedekind orthocomplete GEA equipped
with a specified dimension equivalence relation (DER).
\end{definition}

\begin{example}
Let $H$ be an infinite dimensional separable complex Hilbert space, let
${\mathcal C}\sp{+}(H)$ be the set of all positive semi-definite compact
operators on $H$, and let $\tau$ be a faithful normal trace on the set of
all bounded operators on $H$. Then with $A\oplus B:=A+B$ (the usual
operator sum), ${\mathcal C}\sp{+}(H)$ is a GEA \cite[Theorem 3.1]{Pola}.
Define $\sim\sb{\tau}$ on ${\mathcal C}\sp{+}(H)$ as follows: $A\sim
\sb{\tau}B$ iff either $\tau(A)$ and $\tau(B)$ are both infinite, or
$\tau(A)$ and $\tau(B)$ are both finite with $\tau(A)=\tau(B)$. Then
${\mathcal C}\sp{+}(H)$ is a DGEA with $\sim\sb{\tau}$ as the DER.
\end{example}

\begin{assumptions} \label{as:DGEA}
In what follows, we assume that the Dedekind orthocomplete COGEA
$E$ is a DGEA with DER $\sim$ and that $(\eta\sb{e})\sb{e\in E}$
is the hull system determined by the complete boolean algebra and
HD set $\Sigma\sb{\sim}(E)$.
\end{assumptions}

As per the next theorem, condition SK4a$'$ enables us to use the
hull system $(\eta\sb{e})\sb{e\in E}$ to determine whether or not
two elements of $E$ are related.

\begin{theorem}\label{th:UnrelatedConditions}
If $e,f\in E$, then the following conditions are mutually equivalent{\rm:}
{\rm (i)} $e$ is unrelated to $f$. {\rm (ii)} There exists $\pi\in\Sigma
\sb{\sim}(E)$ such that $e\in\pi(E)$ and $f\in\pi\,'(E)$. {\rm (iii)} $\eta
\sb{e}\wedge\eta\sb{f}=0$, i.e., $\eta\sb{e}\leq(\eta\sb{f})'$. {\rm (iv)}
$\eta\sb{e}f=0$. {\rm(v)} $\eta\sb{f}e$ is unrelated to $\eta\sb{e}f$.
\end{theorem}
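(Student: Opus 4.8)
The plan is to prove the equivalence of (i)--(v) by establishing a cycle together with a couple of side implications, drawing almost entirely on results already in hand. The backbone is (i) $\Leftrightarrow$ (ii) $\Leftrightarrow$ (iii), each piece of which is nearly immediate: (ii) $\Rightarrow$ (i) is exactly Lemma \ref{lm:piinSigmasim} (iv); (i) $\Rightarrow$ (ii) is precisely condition SK4a$'$, which holds because $\sim$ is assumed to be a DER (Standing Assumptions \ref{as:DGEA}); and (ii) $\Leftrightarrow$ (iii) is Theorem \ref{th:SigmasimProps} (vi). So the first paragraph of the proof would simply cite these three facts to close the loop among (i), (ii), (iii).

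Next I would bring in (iv), $\eta\sb{e}f=0$. The cleanest route is (iii) $\Leftrightarrow$ (iv) via Theorem \ref{th:SigmasimProps} (v) applied with $\pi:=\eta\sb{f}$: since $\eta\sb{f}\in\Sigma\sb{\sim}(E)$ by Theorem \ref{th:SigmasimProps} (iv), that lemma gives $\eta\sb{f}e=0\Leftrightarrow\eta\sb{f}\wedge\eta\sb{e}=0$. Strictly this yields ``$\eta\sb{f}e=0$'' rather than the symmetric-looking ``$\eta\sb{e}f=0$'' of (iv); but by the same token (with the roles of $e$ and $f$ swapped) $\eta\sb{e}f=0\Leftrightarrow\eta\sb{e}\wedge\eta\sb{f}=0$, so (iv) as stated is equivalent to (iii) directly. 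I would just note that the two forms $\eta\sb{e}f=0$ and $\eta\sb{f}e=0$ are both equivalent to (iii) and hence to each other.

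For (v), I would use Lemma \ref{lm:etaSK4a}: with $e\sb{1}:=\eta\sb{f}e$ and $f\sb{1}:=\eta\sb{e}f$, part (ii) of that lemma says $e\sb{1},f\sb{1}\neq 0\Leftrightarrow\eta\sb{e}\wedge\eta\sb{f}\neq 0$. Thus (iii) is equivalent to ``$\eta\sb{f}e=0$ or $\eta\sb{e}f=0$,'' and by the previous paragraph each disjunct is already equivalent to (iii), so in fact $\eta\sb{f}e=0$ and $\eta\sb{e}f=0$ both hold under (iii). To get (v) I then argue both directions: if (iii) holds, then $\eta\sb{f}e=0=\eta\sb{e}f$, and two zero elements are trivially unrelated, giving (v); conversely, if $\eta\sb{f}e$ is unrelated to $\eta\sb{e}f$, apply Lemma \ref{lm:etaSK4a} (i), which says $\eta\sb{f}e=e\sb{1}\etasim f\sb{1}=\eta\sb{e}f$, i.e.\ $e\sb{1}$ and $f\sb{1}$ are $\etasim$-equivalent subelements of themselves — so if they were nonzero they would be related (as the paper notes, $0\neq x\subsim y$ forces $x,y$ related), contradicting unrelatedness; hence $e\sb{1}=f\sb{1}=0$, which by Lemma \ref{lm:etaSK4a} (ii) gives $\eta\sb{e}\wedge\eta\sb{f}=0$, i.e.\ (iii).

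The only mild subtlety — and the thing to state carefully rather than a genuine obstacle — is keeping the asymmetry between $\eta\sb{e}f$ and $\eta\sb{f}e$ straight: condition (iv) is written with one of these, condition (v) with both, and one must observe that under (iii) both vanish and hence the apparent asymmetry evaporates. Everything else is a direct appeal to Lemma \ref{lm:piinSigmasim}, Lemma \ref{lm:etaSK4a}, Theorem \ref{th:SigmasimProps}, and the DER hypothesis SK4a$'$; no new construction is needed.
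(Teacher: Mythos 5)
Your treatment of (i)--(iv) is correct and essentially the paper's: (i) $\Leftrightarrow$ (ii) by SK4a$'$ together with Lemma \ref{lm:piinSigmasim} (iv), (ii) $\Leftrightarrow$ (iii) by Theorem \ref{th:SigmasimProps} (vi), and your route to (iv) via Theorem \ref{th:SigmasimProps} (v) with $\pi:=\eta\sb{e}$ is a legitimate (and equally short) substitute for the paper's appeal to $\eta\sb{\eta\sb{e}f}=\eta\sb{e}\wedge\eta\sb{f}$ and $\eta\sb{p}=0\Leftrightarrow p=0$. The observation that $\eta\sb{e}f=0$ and $\eta\sb{f}e=0$ are each equivalent to (iii) is also correct. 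The direction (iii) $\Rightarrow$ (v) is fine: under (iii) both $\eta\sb{f}e$ and $\eta\sb{e}f$ vanish, and $0$ is unrelated to everything.

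The weak point is your justification of (v) $\Rightarrow$ (iii). Lemma \ref{lm:etaSK4a} (i) gives you $\eta\sb{f}e\etasim\eta\sb{e}f$, i.e.\ $\eta\sb{\eta\sb{f}e}=\eta\sb{\eta\sb{e}f}$ --- an equivalence for the \emph{hull-system} relation $\etasim$. But ``related'' in condition (v) refers to the DER $\sim$, and the fact you quote (``$0\neq x\subsim y$ forces $x,y$ related'') is a statement about $\subsim$, hence about $\sim$, not about $\etasim$. In general $e\sim f$ implies $e\etasim f$ (Theorem \ref{th:pi&sim} (iii)) but not conversely, so $\etasim$-equivalence of two nonzero elements does not by itself exhibit nonzero $\sim$-equivalent subelements. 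The conclusion you want is nonetheless true, and the repair is exactly the paper's one-line computation: $\eta\sb{\eta\sb{f}e}\wedge\eta\sb{\eta\sb{e}f}=(\eta\sb{f}\wedge\eta\sb{e})\wedge(\eta\sb{e}\wedge\eta\sb{f})=\eta\sb{e}\wedge\eta\sb{f}$, so applying the already-established equivalence (i) $\Leftrightarrow$ (iii) to the \emph{pair} $(\eta\sb{f}e,\eta\sb{e}f)$ shows that these two elements are unrelated iff $\eta\sb{e}\wedge\eta\sb{f}=0$, which is (iii) for $(e,f)$; this gives both directions of (i) $\Leftrightarrow$ (v) at once. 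With that substitution your proof is complete.
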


\begin{proof}
By SK4a$'$ and part (iv) of Lemma \ref{lm:piinSigmasim}, we have (i)
$\Leftrightarrow$ (ii), and (ii) $\Leftrightarrow$ (iii) by Theorem
\ref{th:SigmasimProps} (vi).  That (iii) $\Leftrightarrow$ (iv) follows
from the facts that $\eta\sb{\eta\sb{e}f}=\eta\sb{e}\wedge\eta\sb{f}$
and $\eta\sb{p}=0\Leftrightarrow p=0$ (\cite[Definition 7.1 and Theorem
7.2 (i)]{ExoCen}. We have $\eta\sb{\eta\sb{f}e}\wedge\eta\sb{\eta\sb{e}f}=
\eta\sb{f}\wedge\eta\sb{e}\wedge\eta\sb{e}\wedge\eta\sb{f}=\eta\sb{e}
\wedge\eta\sb{f}$, so (v) $\Leftrightarrow$ (i) follows from the
equivalence of (i) and (iii).
\end{proof}

\begin{corollary} \label{co:desc/unrel}
If $e\in E$, then $E$ decomposes as $E=\eta\sb{e}(E)\oplus(\eta\sb{e})'(E)$,
the elements in the hereditary ideal $\eta\sb{e}(E)$ are precisely the
descendents of $e$, and the elements in the hereditary ideal $(\eta\sb{e})'
(E)$ are precisely the elements of $E$ that are unrelated to $e$.
\end{corollary}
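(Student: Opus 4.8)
The plan is to unpack Corollary~\ref{co:desc/unrel} into its three assertions and dispatch each using the machinery already in place, chiefly Theorem~\ref{th:UnrelatedConditions} and the fact (noted after Lemma~\ref{lm:piinSigmasim}) that for $\pi\in\Sigma\sb{\sim}(E)$ the direct summand $\pi(E)$ is a hereditary ideal. Since $\eta\sb{e}\in\Theta\sb{\eta}(E)\subseteq\Sigma\sb{\sim}(E)$ by Theorem~\ref{th:SigmasimProps}~(iv), we immediately get the direct decomposition $E=\eta\sb{e}(E)\oplus(\eta\sb{e})'(E)$ and that both summands are hereditary ideals; that handles the structural skeleton with no work.

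For the description of $(\eta\sb{e})'(E)$: I would show that $f\in(\eta\sb{e})'(E)$ iff $f$ is unrelated to $e$. If $f\in(\eta\sb{e})'(E)$, then $f=(\eta\sb{e})'f$, so $\eta\sb{f}\leq(\eta\sb{e})'$, i.e.\ $\eta\sb{e}\wedge\eta\sb{f}=0$, and by Theorem~\ref{th:UnrelatedConditions}~((iii)$\Rightarrow$(i)) $f$ is unrelated to $e$. Conversely, if $f$ is unrelated to $e$, then by the same theorem $\eta\sb{e}\wedge\eta\sb{f}=0$, so $\eta\sb{f}\leq(\eta\sb{e})'$ and hence $f=\eta\sb{f}f\leq(\eta\sb{e})'f\leq f$, giving $f=(\eta\sb{e})'f\in(\eta\sb{e})'(E)$. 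This is essentially a transcription of part of the proof of Theorem~\ref{th:SigmasimProps}~(vi), so it is routine.

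The one part that requires an actual argument is the claim that the elements of $\eta\sb{e}(E)$ are precisely the descendents of $e$ (Definition~\ref{df:SKDefs}~(5): $d$ is a descendent of $e$ iff every nonzero subelement of $d$ is related to $e$). For the forward inclusion, suppose $d\in\eta\sb{e}(E)$ and let $0\neq d\sb{0}\leq d$; since $\eta\sb{e}(E)$ is an order ideal, $d\sb{0}\in\eta\sb{e}(E)$, so $\eta\sb{d\sb{0}}\leq\eta\sb{e}$, hence $\eta\sb{d\sb{0}}=\eta\sb{d\sb{0}}\wedge\eta\sb{e}$, which is nonzero because $d\sb{0}\neq0$ forces $\eta\sb{d\sb{0}}\neq0$ (\cite[Theorem 7.2 (i)]{ExoCen}); so $\eta\sb{d\sb{0}}\wedge\eta\sb{e}\neq0$ and $d\sb{0}$ is related to $e$ by the negation of the equivalence (i)$\Leftrightarrow$(iii) in Theorem~\ref{th:UnrelatedConditions}. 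Thus every nonzero subelement of $d$ is related to $e$, i.e.\ $d$ is a descendent of $e$. For the reverse inclusion, suppose $d$ is a descendent of $e$ but $d\notin\eta\sb{e}(E)$; write $d=\eta\sb{e}d\oplus(\eta\sb{e})'d$ and set $d\sb{1}:=(\eta\sb{e})'d$, which is then nonzero and satisfies $d\sb{1}\leq d$. But $d\sb{1}\in(\eta\sb{e})'(E)$, which we have just identified as the set of elements unrelated to $e$; so $d\sb{1}$ is a nonzero subelement of $d$ unrelated to $e$, contradicting that $d$ is a descendent of $e$. Hence $d\in\eta\sb{e}(E)$.

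The main obstacle, such as it is, is keeping the two characterizations straight and making sure the "nonzero subelement" quantifier in the definition of descendent is matched correctly with $\eta\sb{d\sb{0}}\neq0$; beyond that the proof is a bookkeeping exercise assembled from Theorems~\ref{th:SigmasimProps} and~\ref{th:UnrelatedConditions}, Lemma~\ref{lm:piinSigmasim}, and the already-cited property $\eta\sb{p}=0\Leftrightarrow p=0$. No new ideas are needed.
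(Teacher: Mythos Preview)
Your proof is correct and follows essentially the same approach as the paper's own proof; the only cosmetic differences are that you invoke condition~(iii) of Theorem~\ref{th:UnrelatedConditions} ($\eta\sb{e}\wedge\eta\sb{f}=0$) where the paper uses the equivalent condition~(iv) ($\eta\sb{e}f=0$), and you phrase the reverse inclusion for descendents as a contradiction while the paper argues it directly.
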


\begin{proof}
As $\eta\sb{e},(\eta\sb{e})'\in\Sigma\sb{\sim}(E)\subseteq\GEX(E)$, both
$\eta\sb{e}(E)$ and $(\eta\sb{e})'(E)$ are hereditary ideals in $E$. If
$d\in\eta\sb{e}(E)$ and $0\not=f\leq d$, then $f\in\eta\sb{e}(E)$, so $0
\not=\eta\sb{e}f$, whence $f$ is related to $e$ by Theorem
\ref{th:UnrelatedConditions}, so $d$ is a descendent of $e$. Conversely,
suppose that $d$ is a descendent of $e$. Then $f:=(\eta\sb{e})'d\leq d$
with $\eta\sb{e}f=0$, whence $f$ is unrelated to $e$ by Theorem
\ref{th:UnrelatedConditions} again, so $f=0$, and therefore $d=\eta\sb{e}d
\in\eta\sb{e}(E)$.

Since $f\in(\eta\sb{e})'(E)$ iff $\eta\sb{e}f=0$, it follows from
Theorem \ref{th:UnrelatedConditions} that $(\eta\sb{e})'(E)$ is
precisely the set of all elements of $E$ that are unrelated to $e$.
\end{proof}

\begin{corollary} \label{co:relatedtoorthosum}
Let $(e\sb{i})\sb{i\in I}$ be an orthosummable family in $E$ with
$e:=\oplus\sb{i\in I}e\sb{i}$. If $f\in E$ and $f$ is unrelated
to $e\sb{i}$ for all $i\in I$, then $f$ is unrelated to $e$.
\end{corollary}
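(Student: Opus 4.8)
The plan is to translate the hypothesis ``$f$ is unrelated to $e_i$ for all $i$'' into the language of the hull system via Theorem~\ref{th:UnrelatedConditions}, combine the resulting disjointness facts in the complete boolean algebra $\GEX(E)$, and then translate back. First I would apply Theorem~\ref{th:UnrelatedConditions} (equivalence of (i) and (iii)) to each index $i\in I$ to get $\eta\sb{f}\wedge\eta\sb{e\sb{i}}=0$, i.e. $\eta\sb{e\sb{i}}\leq(\eta\sb{f})'$ for every $i\in I$. Since $\Sigma\sb{\sim}(E)$ is a complete boolean algebra (Theorem~\ref{th:SigmasimProps}(ii)) and $(\eta\sb{f})'\in\Sigma\sb{\sim}(E)$, the supremum $\bigvee\sb{i\in I}\eta\sb{e\sb{i}}$ exists in $\Sigma\sb{\sim}(E)$ and is still $\leq(\eta\sb{f})'$.

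The key step is to identify $\eta\sb{e}$ with (or at least bound it above by) $\bigvee\sb{i\in I}\eta\sb{e\sb{i}}$. Here I would use that $(e\sb{i})\sb{i\in I}$ is orthosummable with orthosum $e$, together with the additivity of the hull system over orthosums of orthosummable families --- this is exactly the fact quoted as \cite[Theorem 7.4 (ix)]{HDTD} and already used in the proof of Theorem~\ref{th:etasimSK} (there for orthosummable families, giving $\eta\sb{\oplus\sb{i}e\sb{i}}=\bigvee\sb{i}\eta\sb{e\sb{i}}$). Granting this, $\eta\sb{e}=\bigvee\sb{i\in I}\eta\sb{e\sb{i}}\leq(\eta\sb{f})'$, hence $\eta\sb{e}\wedge\eta\sb{f}=0$, and Theorem~\ref{th:UnrelatedConditions} ((iii)$\Rightarrow$(i)) gives that $f$ is unrelated to $e$.

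I expect the main (and essentially only) obstacle to be verifying that the additive formula $\eta\sb{\oplus\sb{i\in I}e\sb{i}}=\bigvee\sb{i\in I}\eta\sb{e\sb{i}}$ is available for an arbitrary orthosummable family, not merely a finite one: the cited result \cite[Theorem 7.4 (ix)]{HDTD} must be checked to cover infinite orthosummable families (which, given its use in Theorem~\ref{th:etasimSK}, it does). If one wished to avoid invoking that formula directly, an alternative would be to argue coordinatewise: set $\pi:=(\eta\sb{f})'\in\Sigma\sb{\sim}(E)$, note each $e\sb{i}\in\pi(E)$ by Theorem~\ref{th:UnrelatedConditions}(iv) (since $\eta\sb{f}e\sb{i}=0$), observe that $\pi(E)$ is an ideal closed under existing suprema of families in it (as $\pi\in\GEX(E)$ and $E$ is a COGEA, $\Sigma\sb{\sim}(E)$ being sup/inf-closed), conclude $e=\bigvee\sb{i}(\text{finite partial orthosums})\in\pi(E)$, hence $\eta\sb{f}e=0$, hence $f$ is unrelated to $e$ by Theorem~\ref{th:UnrelatedConditions} once more. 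Either route is short; the boolean-algebra route is the cleaner one to present.
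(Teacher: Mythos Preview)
Your primary argument is correct and essentially identical to the paper's own proof: the paper also applies Theorem~\ref{th:UnrelatedConditions}(iii) to obtain $\eta\sb{e\sb{i}}\leq(\eta\sb{f})'$ for all $i$, invokes \cite[Theorem 7.4 (ix)]{HDTD} to get $\eta\sb{e}=\bigvee\sb{i\in I}\eta\sb{e\sb{i}}\leq(\eta\sb{f})'$, and then applies Theorem~\ref{th:UnrelatedConditions}(iii) again. Your worry about the infinite case of \cite[Theorem 7.4 (ix)]{HDTD} is unnecessary---the paper uses it without further comment, and indeed it is stated for arbitrary orthosummable families.
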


\begin{proof}
Assume the hypotheses. Then by Theorem \ref{th:UnrelatedConditions} (iii),
$\eta\sb{e\sb{i}}\leq(\eta\sb{f})'$ for all $i\in I$. Thus, by
\cite[Theorem 7.4 (ix)]{HDTD} $\eta\sb{e}=\bigvee\eta\sb{e\sb{i}}
\leq(\eta\sb{f})'$, and from Theorem \ref{th:UnrelatedConditions} (iii)
again, we infer that $e$ is unrelated to $f$.
\end{proof}

\begin{theorem} [{Cf. \cite[Theorem 7.27]{HandD}}]
\label{th:SupHeridtaryIdeal}
Let $H$ be a hereditary ideal that is bounded above in
$E$. Then {\rm(i)} $c:=\bigvee H$ exists in $E$ and $c$
is the orthosum of an orthogonal family in $H$. {\rm(ii)}
$\eta\sb{c}=\bigvee\sb{h\in H}\eta\sb{h}$. {\rm(iii)} $c$
is sharp and $E[0,c]$ is hereditary. {\rm(iv)} If $E$ is
directed or orthogonally ordered, then $c\in\Gamma\sb{\sim}(E)
=\Gamma\sb{\eta}(E)\subseteq\Gamma(E)$.
\end{theorem}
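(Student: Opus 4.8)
The plan is to build the supremum $c$ of $H$ as an orthosum of a maximal orthogonal family drawn from $H$, then identify $c$ via the already-established machinery ($\eta$-exocenter, Lemma \ref{lm:hereditary4}, Theorem \ref{th:maintheorem}). First I would fix an upper bound $p$ for $H$ and work inside the EA $E[0,p]$, which is archimedean because $E$ is Dedekind orthocomplete. By Zorn's lemma choose a maximal $p$-orthogonal family $(h_i)_{i\in I}$ with $h_i\in H$ for all $i$ (the empty family handles $H=\{0\}$). Since all finite partial orthosums are $\le p$, Dedekind orthocompleteness gives that $(h_i)_{i\in I}$ is orthosummable in $E$ with orthosum $c:=\oplus_{i\in I}h_i\le p$, and by \cite[Theorem 6.4 (iv)]{CenGEA} this orthosum coincides with the $p$-orthosum, so $c\le p$ and $c$ is the supremum of the finite partial orthosums. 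The key point for (i) is then: $c=\bigvee H$. One inequality is clear once we know $h\le c$ for every $h\in H$. To see this, suppose $h\in H$ with $h\not\le c$; since $H$ is hereditary it is an order ideal, and by SK4b (with $E[0,p]$-complements, or directly in $E$) there is a nonzero subelement of $h$ that is $p$-orthogonal to all the $h_i$ — more carefully, use that $h\ominus(h\wedge c)$ or an appropriate piece of $h$ is nonzero, lies in $H$ (heredity, since it is a subelement of $h$), and is $p$-orthogonal to every $h_i$, contradicting maximality. This is the step I expect to be the main obstacle: extracting from $h\not\le c$ a genuinely nonzero element of $H$ that can be appended to the family requires care, and is the place where heredity of $H$ and the interplay between $\le$ and $\oplus_p$ in $E[0,p]$ must be used precisely (compare the argument in \cite[Theorem 7.27]{HandD} and the proof of Theorem \ref{th:decompospq}).

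For (ii), each $h_i\in H$, so $\eta_{h_i}\le\bigvee_{h\in H}\eta_h$, and by \cite[Theorem 7.4 (ix)]{HDTD} we get $\eta_c=\bigvee_{i\in I}\eta_{h_i}\le\bigvee_{h\in H}\eta_h$. For the reverse inequality, take any $h\in H$; since $h\le c$ by (i), monotonicity of the hull system (Theorem \ref{th:pi&sim} (i), or directly $h\le c\Rightarrow\eta_h\le\eta_c$) gives $\eta_h\le\eta_c$, hence $\bigvee_{h\in H}\eta_h\le\eta_c$, proving equality.

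For (iii), I would show $c$ is sharp and $E[0,c]$ is hereditary. Heredity of $E[0,c]$: if $e\subsim d\le c$ with $d\in E[0,c]$, then $d\in H$ would follow from heredity of $H$ once $d\le c$ and $c\in H$ — but $c$ need not lie in $H$. Instead argue via the $\eta$-exocenter: by (ii), $\eta_c=\bigvee_{h\in H}\eta_h$, and each $\eta_h\in\Sigma_{\sim}(E)$ by Theorem \ref{th:SigmasimProps} (iv); since $\Sigma_{\sim}(E)$ is sup/inf-closed in $\GEX(E)$ (Theorem \ref{th:SigmasimProps} (i)), $\eta_c\in\Sigma_{\sim}(E)$, so $\eta_c(E)$ is a hereditary ideal (Lemma \ref{lm:piinSigmasim} (iii)). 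Now $c\in\eta_c(E)$ and $\eta_c(E)$ is an order ideal, so $E[0,c]\subseteq\eta_c(E)$; conversely if $e\subsim d$ with $d\le c$, then $d\in\eta_c(E)$, so $e\in\eta_c(E)$, and it remains to see $e\le c$. For that I would first establish that $c$ is sharp: if $d\le c$ and $d\perp c$, then $d$ is $p$-orthogonal to all $h_i$ and lies in $H$ (heredity, as a subelement of some $h_i$ — wait, $d\le c$ need not be a subelement of a single $h_i$), so instead note $d\le c=\bigvee$(finite partial sums) and $d\perp c$ forces, by cancellation inside $E[0,p]$, $d=0$; more directly, $d\le c$ and $d\perp c$ give $d\oplus c\le c=c\oplus 0$ only if $c$ is principal, so the cleanest route is: sharpness of $c$ follows because $\eta_c(E)$ hereditary plus $c\in\eta_c(E)$ puts us in the situation of Lemma \ref{lm:hereditary2} once we know $c$ is the top of the ideal $\eta_c(E)$, i.e. $\eta_c(E)=E[0,c]$. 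Having $E[0,c]=\eta_c(E)$, Lemma \ref{lm:piinSigmasim} gives heredity, and Lemma \ref{lm:hereditary2} (the equivalence (ii)$\Leftrightarrow$(i)) gives that $c$ is sharp. So the substantive content of (iii) is the identity $E[0,c]=\eta_c(E)$, i.e. that $c$ is the largest element of the direct summand $\eta_c(E)$; this is where I would spend effort, using that $c$ dominates every $h\in H$ together with $\eta_c=\bigvee\eta_h$ and that $\eta_c(E)$ is the orthosum-closure of $\bigcup_h\eta_h(E)$ in the COGEA $E$.

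Finally, for (iv): if $E$ is directed or orthogonally ordered, then $c$ — being sharp with $E[0,c]$ hereditary, equivalently satisfying the conditions of Lemma \ref{lm:hereditary2} — is principal by Lemma \ref{lm:dir&invar} (ii) (orthogonally ordered case) or, in the directed case, Lemma \ref{lm:dir&invar} (i) directly gives $c\in\Gamma(E)$. In either case Lemma \ref{lm:dir&invar} (iii) (using Dedekind orthocompleteness) yields $c\in\Gamma(E)$ with $\pi_c\in\Sigma_{\sim}(E)$, and then Theorem \ref{th:maintheorem} (the equivalence of conditions (i), (ii), (iii)) identifies $c$ as lying in $\Gamma_{\sim}(E)=\Gamma_{\eta}(E)\subseteq\Gamma(E)$, completing the proof.
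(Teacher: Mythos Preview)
Your outline for (i), (ii), and (iv) is essentially the paper's argument. For (i) the mechanism you gesture at is slightly off: the element produced by SK4b is \emph{not} a subelement of $h$ (so ``heredity, since it is a subelement of $h$'' is the wrong justification). What SK4b actually gives from $h\not\le c$ is nonzero $h_1,d_1$ with $h\ge h_1\sim d_1\perp c$; then $d_1\subsim h\in H$, and it is the \emph{hereditary} (not merely order-ideal) property of $H$ that forces $d_1\in H$, contradicting maximality. That is exactly what the paper does. Part (ii) and part (iv) match the paper.

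The real problem is your strategy for (iii). You propose to prove the identity $E[0,c]=\eta_c(E)$ and read off sharpness and heredity from that. But $E[0,c]=\eta_c(E)$ is precisely the statement that $c$ is $\eta$-invariant, i.e.\ $c\in\Gamma_\eta(E)$; by Theorem~\ref{th:maintheorem} this is equivalent to $c$ being \emph{principal} with $E[0,c]$ hereditary. That is strictly stronger than (iii) (which only asserts $c$ is \emph{sharp}), and it is exactly what (iv) concludes under the additional hypothesis that $E$ is directed or orthogonally ordered. Without that hypothesis $c$ need not be principal, so $E[0,c]=\eta_c(E)$ can fail and your route to (iii) collapses. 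The paper avoids this by proving sharpness and heredity of $E[0,c]$ \emph{directly}, never touching principality: both arguments re-use the maximal family $(h_i)_{i\in I}$ and hinge on Corollary~\ref{co:relatedtoorthosum} (a consequence of the DER axiom SK4a$'$), which lets one pass from ``unrelated to each $h_i$'' to ``unrelated to $c=\oplus_i h_i$''. For sharpness, a putative $0\ne p\le c$ with $p\perp c$ must be related to some $h_j$, and the resulting nonzero element of $H$ orthogonal to $c$ contradicts maximality. For heredity, if $f\subsim c$ but $f\not\le c$, SK4b produces $f\ge f_1\sim d\perp c$; one shows $d$ is unrelated to every $h_i$ (else again maximality is violated), hence unrelated to $c$ by Corollary~\ref{co:relatedtoorthosum}, contradicting $0\ne f_1\subsim c$. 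Replace your $\eta_c(E)=E[0,c]$ plan with this direct argument.
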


\begin{proof}
Assume the hypotheses and choose a maximal orthogonal family $(h\sb{i})
\sb{i\in I}$ in $H$. As $H$ is an ideal, the finite partial orthosums
of $(h\sb{i})\sb{i\in I}$ belong to $H$, hence they are bounded
above in $E$, and it follows from Dedekind orthocompleteness that $c:=
\oplus\sb{i\in I}h\sb{i}$ exists in $E$. By the maximality of $(h\sb{i})
\sb{i\in I}$, no nonzero element of $H$ can be orthogonal to $c$. We
claim that $c$ is an upper bound for $H$, for suppose there exists $h
\in H$ with $h\not\leq c$. Then by SK4b, there exist nonzero $h\sb{1},
h\sb{2}\in E$ such that $h\geq h\sb{1}\sim h\sb{2}\perp c$. Then $h
\sb{2}\subsim h\in H$, so $0\not=h\sb{2}\in H$ with $h\sb{2}\perp c$,
contradicting the maximality of $(h\sb{i})\sb{i\in I}$. Thus, $c$ is
both an upper bound for $H$ and the supremum of a subset of $H$ (namely,
the set of all finite partial orthosums of $(h\sb{i})\sb{i\in I}$),
whence it is the supremum of $H$, proving (i). Therefore, according to \cite
[Theorem 7.4 (ix)]{HDTD}, we have $\eta\sb{c}=\bigvee\sb{h\in H}\eta\sb{h}$,
proving (ii).

To prove that $c$ is sharp, suppose that there exists $p\in E$ with
$0\not=p\leq c$ and $p\perp c$; hence $p$ is related to $c$ and $p
\perp c$. As $p$ is related to $c$, Corollary \ref{co:relatedtoorthosum}
implies that $p$ is related to $h\sb{j}$ for some $j\in I$. Thus, there
exist nonzero $p\sb{1},q\in E$ with $p\geq p\sb{1}\sim q\leq h\sb{j}$.
Therefore, $p\sb{1}\subsim h\sb{j}\in H$, and it follows that
$0\not=p\sb{1}\in H$. But $p\sb{1}\leq p\perp c$, so $p\sb{1}\perp
c$, contradicting the maximality of $(h\sb{i})\sb{i\in I}$.
Consequently, $c$ is sharp.

To prove that $E[0,c]$ is hereditary, suppose $f\in E$ with
$f\subsim c$, but $f\not\leq c$. Then by SK4b, there are nonzero
$f\sb{1}, d\in E$ with $f\geq f\sb{1}\sim d\perp c$. Suppose that
$d$ is related to $h\sb{j}$ for some $j\in I$, i.e., there exist
nonzero $a,b\in E$ with $d\geq a\sim b\leq h\sb{j}$. Then $a
\subsim h\sb{j}\in H$, so $0\not=a\in H$ with $a\leq d\perp c$,
so $a\perp c$, contradicting the maximality of $(h\sb{i})\sb{i
\in I}$. Therefore, $d$ is unrelated to $h\sb{i}$ for all $i\in I$,
and it follows from Corollary \ref{co:relatedtoorthosum} that $d$
is unrelated to $c=\oplus\sb{i\in I}h\sb{i}$. But $0\not=d\sim f
\sb{1}\leq f\subsim c$, so $f\sb{1}$ is unrelated to $c$ and
$0\not=f\sb{1}\subsim c$, which is a contradiction. This completes
the proof of (iii). Part (iv) follows immediately from (iii) and
Lemma \ref{lm:dir&invar} (iii).
\end{proof}

\begin{theorem} [{General Comparability Theorem for $\subsim$}]
\label{th:Compar}
If $e,f\in E$, there exists $d\in E$ such that $\eta\sb{d} e
\subsim \eta\sb{d}f$ and $(\eta\sb{d})'f\subsim(\eta\sb{d})'e$.
\end{theorem}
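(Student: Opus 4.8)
The plan is to invoke Theorem \ref{th:decompospq} to split $e$ and $f$ into a "related" part and an "unrelated" part, and then convert that decomposition into a statement about the hull system. First I would apply Theorem \ref{th:decompospq} to the pair $p:=e$, $q:=f$: since $E$ is Dedekind orthocomplete, this yields orthogonal decompositions $e=e\sb{1}\oplus e\sb{2}$ and $f=f\sb{1}\oplus f\sb{2}$ with $e\sb{1}\sim f\sb{1}$ and $e\sb{2}$ unrelated to $f\sb{2}$. Since $e\sb{2}$ is unrelated to $f\sb{2}$, Theorem \ref{th:UnrelatedConditions} gives a $\pi\in\Sigma\sb{\sim}(E)$ with $e\sb{2}\in\pi(E)$ and $f\sb{2}\in\pi\,'(E)$; equivalently (again by Theorem \ref{th:UnrelatedConditions}) $\eta\sb{e\sb{2}}\wedge\eta\sb{f\sb{2}}=0$. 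The natural candidate for the element $d$ in the statement is $d:=e\sb{1}$ (or anything with the same $\eta$-value, e.g. $f\sb{1}$), so that $\eta\sb{d}=\eta\sb{e\sb{1}}=\eta\sb{f\sb{1}}$ by Theorem \ref{th:pi&sim} (iii).

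Next I would check that $\eta\sb{d}e\subsim\eta\sb{d}f$. Applying the GEA-endomorphism $\eta\sb{d}$ to $e=e\sb{1}\oplus e\sb{2}$ gives $\eta\sb{d}e=\eta\sb{d}e\sb{1}\oplus\eta\sb{d}e\sb{2}=e\sb{1}\oplus\eta\sb{d}e\sb{2}$ (using $\eta\sb{d}e\sb{1}=\eta\sb{e\sb{1}}e\sb{1}=e\sb{1}$), and similarly $\eta\sb{d}f=f\sb{1}\oplus\eta\sb{d}f\sb{2}$. Since $e\sb{1}\sim f\sb{1}$, by Theorem \ref{th:simCoordinatewise} applied coordinatewise — or more simply since $e\sb{1}\sim f\sb{1}$ and one applies SK2/Lemma \ref{le:AdditivityofSubEq} — it suffices to show $\eta\sb{d}e\sb{2}\subsim\eta\sb{d}f\sb{2}$. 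But $\eta\sb{d}e\sb{2}\leq e\sb{2}$ and $e\sb{2}$ is unrelated to $f\sb{2}$, so $\eta\sb{d}e\sb{2}$ is unrelated to $f\sb{2}$ hence to $\eta\sb{d}f\sb{2}$; I would argue this forces $\eta\sb{d}e\sb{2}=0$, so that $\eta\sb{d}e=e\sb{1}\sim f\sb{1}\leq\eta\sb{d}f$, giving $\eta\sb{d}e\subsim\eta\sb{d}f$. Symmetrically, $(\eta\sb{d})'e=(\eta\sb{d})'e\sb{2}$ and $(\eta\sb{d})'f=(\eta\sb{d})'f\sb{2}$ (because $(\eta\sb{d})'e\sb{1}=(\eta\sb{d})'\eta\sb{d}e\sb{1}=0$ and likewise for $f\sb{1}$), and one checks $(\eta\sb{d})'f\sb{2}=0$ via the same unrelatedness argument applied to $(\eta\sb{d})'f\sb{2}\leq f\sb{2}$, yielding $(\eta\sb{d})'f\subsim(\eta\sb{d})'e$ trivially since $(\eta\sb{d})'f=0$.

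The main obstacle I anticipate is getting the bookkeeping on which hull-system projection kills which summand exactly right: one needs that $\eta\sb{d}$ annihilates $e\sb{2}$ and $f\sb{2}$ in the right way and leaves $e\sb{1},f\sb{1}$ fixed, and the cleanest route is to compute $\eta\sb{d}=\eta\sb{e\sb{1}}$ and note that $\eta\sb{e\sb{1}}\wedge\eta\sb{e\sb{2}}$ and $\eta\sb{e\sb{1}}\wedge\eta\sb{f\sb{2}}$ need careful handling — in fact $e\sb{1}$ need not be unrelated to $e\sb{2}$, so $\eta\sb{d}e\sb{2}$ need not vanish a priori. The fix is to choose $d$ more carefully: take $d$ with $\eta\sb{d}=\eta\sb{e\sb{1}}\vee\eta\sb{f\sb{1}}=\eta\sb{e\sb{1}}$ but then intersect against the splitting projection $\pi$, i.e. work with $\eta\sb{d}$ where $d$ is chosen so that $\eta\sb{d}=\pi$, using that $\pi\in\Sigma\sb{\sim}(E)$ need not be of the form $\eta\sb{d}$ — so instead I would phrase the conclusion directly in terms of $\pi$: set $\pi$ from the unrelated pair, observe $e\sb{1},f\sb{1}\in\pi\,'(E)$ is false in general, so actually the correct bookkeeping is $e\sb{1},f\sb{1}$ related to each other and $\pi$ is chosen to separate $e\sb{2}$ from $f\sb{2}$; one then verifies $\pi e=\pi e\sb{1}\oplus e\sb{2}$, and since $e\sb{1}\sim f\sb{1}$ gives $\pi e\sb{1}\sim\pi f\sb{1}$ (Theorem \ref{th:simCoordinatewise}), one gets $\pi e=\pi e\sb{1}\oplus e\sb{2}\sim\pi f\sb{1}\oplus e\sb{2}$ and must compare this with $\pi f=\pi f\sb{1}\oplus f\sb{2}=\pi f\sb{1}$; since $e\sb{2}\leq$ something unrelated to $f\sb{2}$, one shows $\pi e\sb{2}=e\sb{2}$ and needs $e\sb{2}\subsim\pi f$, which fails unless $e\sb{2}=0$. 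So the genuinely delicate point is ensuring the "leftover" summands $e\sb{2},f\sb{2}$ land on the correct side of the chosen central projection; I would resolve this by taking $d$ to be an element with $\eta\sb{d}$ equal to the $\Sigma\sb{\sim}(E)$-element $\pi\vee\eta\sb{e\sb{1}}$ provided by Theorem \ref{th:SigmasimProps}, and then tracking the decomposition through Theorem \ref{th:simCoordinatewise} and Lemma \ref{le:AdditivityofSubEq} to land both sub-equivalences simultaneously.
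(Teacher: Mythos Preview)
Your overall plan---apply Theorem~\ref{th:decompospq} to get $e=e\sb{1}\oplus e\sb{2}$, $f=f\sb{1}\oplus f\sb{2}$ with $e\sb{1}\sim f\sb{1}$ and $e\sb{2}$ unrelated to $f\sb{2}$, then choose $d$ so that $\eta\sb{d}$ separates the pieces---is exactly right, and it is what the paper does. The gap is in your choice of $d$. You take $d:=e\sb{1}$ (so $\eta\sb{d}=\eta\sb{e\sb{1}}=\eta\sb{f\sb{1}}$), and then you correctly notice that this does \emph{not} force $\eta\sb{d}e\sb{2}=0$: nothing prevents $e\sb{1}$ from being related to $e\sb{2}$. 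Your subsequent attempts to repair this (replacing $\eta\sb{d}$ by some $\pi\in\Sigma\sb{\sim}(E)$, or by $\pi\vee\eta\sb{e\sb{1}}$) run into two problems: the statement demands an element $d\in E$, and an arbitrary $\pi\in\Sigma\sb{\sim}(E)$ need not lie in $\Theta\sb{\eta}(E)$; and in any case none of the projections you write down simultaneously kill $e\sb{2}$ on one side and $f\sb{2}$ on the other.

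The fix is a one-line change: take $d:=f\sb{2}$, one of the \emph{unrelated} summands rather than one of the equivalent ones. Then $\eta\sb{d}f\sb{2}=f\sb{2}$, so $(\eta\sb{d})'f\sb{2}=0$; and since $e\sb{2}$ is unrelated to $f\sb{2}$, Theorem~\ref{th:UnrelatedConditions}~(iv) gives $\eta\sb{d}e\sb{2}=\eta\sb{f\sb{2}}e\sb{2}=0$. Now $\eta\sb{d}\in\Sigma\sb{\sim}(E)$, so Theorem~\ref{th:pi&sim}~(iii) yields $\eta\sb{d}e\sb{1}\sim\eta\sb{d}f\sb{1}$ and $(\eta\sb{d})'e\sb{1}\sim(\eta\sb{d})'f\sb{1}$. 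Hence
\[
\eta\sb{d}e=\eta\sb{d}e\sb{1}\oplus 0=\eta\sb{d}e\sb{1}\sim\eta\sb{d}f\sb{1}\leq\eta\sb{d}f,
\qquad
(\eta\sb{d})'f=(\eta\sb{d})'f\sb{1}\oplus 0=(\eta\sb{d})'f\sb{1}\sim(\eta\sb{d})'e\sb{1}\leq(\eta\sb{d})'e,
\]
which is exactly $\eta\sb{d}e\subsim\eta\sb{d}f$ and $(\eta\sb{d})'f\subsim(\eta\sb{d})'e$. The point you were circling around---that the projection must annihilate $e\sb{2}$ on the $\eta\sb{d}$ side and $f\sb{2}$ on the $(\eta\sb{d})'$ side---is achieved automatically by choosing $d$ from the unrelated pair, where Theorem~\ref{th:UnrelatedConditions}~(iv) does the work.
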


\begin{proof}
By Theorem \ref{th:decompospq}, there are orthogonal decompositions
$e=e\sb{1}\oplus e\sb{2}$ and $f=f\sb{1}\oplus f\sb{2}$ such that
$e\sb{1}\sim f\sb{1}$ and $e\sb{2}$ is unrelated to $f\sb{2}$. Put
$d:=f\sb{2}$. Then $\eta\sb{d}f\sb{2}=f\sb{2}$, whence $(\eta\sb
{d})'f\sb{2}=0$, and by Theorem \ref{th:UnrelatedConditions} (iv),
$\eta\sb{d}e\sb{2}=0$. Also, by Theorem \ref{th:pi&sim} (iii) with
$\pi:=\eta\sb{d}$, we have $\eta\sb{d}e\sb{1}\sim\eta\sb{d}f\sb{1}$
and $(\eta\sb{d})'f\sb{1}\sim(\eta\sb{d})'e\sb{1}$, whereupon $\eta
\sb{d}e=\eta\sb{d}e\sb{1}\oplus\eta\sb{d}e\sb{2}=\eta\sb{d}e\sb{1}
\sim\eta\sb{d}f\sb{1}\leq\eta\sb{d}f$ and $(\eta\sb{d})'f=(\eta
\sb{d})'f\sb{1}\oplus(\eta\sb{d})'f\sb{2}=(\eta\sb{d})'f\sb{1}
\sim(\eta\sb{d})'e\sb{1}\leq(\eta\sb{d})'e$. Therefore, $\eta\sb{d}e
\subsim\eta\sb{d}f$ and $(\eta\sb{d})'f\subsim(\eta\sb{d})'e$.
\end{proof}

\begin{definition} \label{df:factor}
The DGEA $E$ is a \emph{factor} iff $\Sigma\sb{\sim}(E)=\{0,1\}$.
\end{definition}

\begin{theorem}\label{th:factor}
The following conditions are mutually equivalent: {\rm(i)} $E$ is a factor.
{\rm(ii)} If $0\not=d\in E$, then $\eta\sb{d}=1$. {\rm(iii)} For all $e,f\in E$,
$e\subsim f$, or $f\subsim e$. {\rm(iv)} Any two nonzero elements of $E$ are
related.
\end{theorem}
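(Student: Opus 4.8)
The plan is to establish the four equivalences in a cycle, say (i)$\Rightarrow$(ii)$\Rightarrow$(iv)$\Rightarrow$(iii)$\Rightarrow$(i), invoking the structural results about $\Sigma_{\sim}(E)$, the hull system $(\eta_e)_{e\in E}$, and the General Comparability Theorem \ref{th:Compar}. Most of the work is bookkeeping with the boolean algebra $\Sigma_{\sim}(E)$ and the characterizations of ``related'' in Theorem \ref{th:UnrelatedConditions}.

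For (i)$\Rightarrow$(ii): if $0\neq d\in E$, then by Theorem \ref{th:SigmasimProps} (iv) we have $\eta_d\in\Sigma_{\sim}(E)=\{0,1\}$; since $\eta_d d=d\neq 0$ forces $\eta_d\neq 0$ (using $\eta_{\eta_d d}=\eta_d\wedge\eta_d=\eta_d$ and $\eta_p=0\Leftrightarrow p=0$), we conclude $\eta_d=1$. For (ii)$\Rightarrow$(iv): given nonzero $e,f\in E$, if they were unrelated then Theorem \ref{th:UnrelatedConditions} (iii) would give $\eta_e\wedge\eta_f=0$, but $\eta_e=\eta_f=1$ by hypothesis, a contradiction; hence $e$ and $f$ are related. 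For (iv)$\Rightarrow$(iii): let $e,f\in E$; by the General Comparability Theorem there is $d\in E$ with $\eta_d e\subsim\eta_d f$ and $(\eta_d)'f\subsim(\eta_d)'e$. The key observation is that $d$ and $d':=(\eta_d)'d$ (if nonzero) would be related by (iv), yet Corollary \ref{co:desc/unrel} puts $d'$ in the hereditary ideal $(\eta_d)'(E)$ of elements unrelated to $d$; hence $d'=0$, i.e.\ $\eta_d d=d$. Now either $d=0$, in which case $\eta_d=0$ and (iii) reads $f=(\eta_d)'f\subsim(\eta_d)'e=e$; or $d\neq 0$, in which case by the already-established (ii) (which follows from (iv) via the step above, or more directly: $d\neq 0$ related to everything nonzero forces $\eta_d=1$ by the argument that $\eta_d\in\Sigma_{\sim}(E)$ and $(\eta_d)'d=0$ makes $(\eta_d)'$ annihilate $d$'s descendents) we get $\eta_d=1$, so $e=\eta_d e\subsim\eta_d f=f$. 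Finally (iii)$\Rightarrow$(i): suppose $\pi\in\Sigma_{\sim}(E)$; pick any $e\in\pi(E)$ and $f\in\pi'(E)$. By (iii), WLOG $e\subsim f$, so $e\in\pi'(E)$ by heredity of $\pi'(E)$ (Lemma \ref{lm:piinSigmasim} (iii)), whence $e\in\pi(E)\cap\pi'(E)=\{0\}$; thus $\pi(E)=\{0\}$, i.e.\ $\pi=0$, unless $\pi'(E)=\{0\}$, i.e.\ $\pi=1$. Hence $\Sigma_{\sim}(E)=\{0,1\}$.

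The main obstacle is making the (iv)$\Rightarrow$(iii) step clean: one must be careful that the element $d$ produced by comparability need not a priori satisfy $\eta_d d=d$, so the splitting $E=\eta_d(E)\oplus(\eta_d)'(E)$ does not immediately localize $d$ itself. The fix, as sketched, is to show $(\eta_d)'d=0$ using that any nonzero such element would be simultaneously a descendent of nothing and related to $d$ — i.e.\ it sits in the hereditary ideal of elements unrelated to $d$ (Corollary \ref{co:desc/unrel}) while being $\leq d$, hence related to $d$ whenever nonzero; combined with (iv) this forces it to vanish, after which the dichotomy $d=0$ versus $\eta_d=1$ closes the argument. An alternative route that sidesteps this is to prove (iv)$\Rightarrow$(ii) first (nonzero $d$ related to every nonzero element $\Rightarrow$ for any $e$, $\eta_d e=0$ would make $e$ unrelated to $d$ by Theorem \ref{th:UnrelatedConditions} (iv), forcing $e=0$, so $\eta_d$ fixes every nonzero element, giving $\eta_d=1$), then (ii)$\Rightarrow$(iii) via comparability becomes trivial since $\eta_d\in\{0,1\}$ automatically whenever $d\in\{0\}$ or $d\neq 0$; this reordering is probably the shortest write-up.
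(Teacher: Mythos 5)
Your argument is correct, and it relies on the same ingredients as the paper's own proof: Theorem \ref{th:SigmasimProps}, the characterization of unrelatedness via Theorem \ref{th:UnrelatedConditions} and Lemma \ref{lm:piinSigmasim} (iv), and General Comparability (Theorem \ref{th:Compar}). The only real difference is the order of the cycle: the paper proves (i) $\Leftrightarrow$ (ii) and then (i) $\Rightarrow$ (iii) $\Rightarrow$ (iv) $\Rightarrow$ (i), whereas you run (ii) $\Rightarrow$ (iv) $\Rightarrow$ (iii) $\Rightarrow$ (i). Your ordering makes (iv) $\Rightarrow$ (iii) the hard step, since you must first show that the comparability element $d$ satisfies $\eta_{d}\in\{0,1\}$; in the paper's order this is free, because (i) gives $\eta_{d}\in\Sigma_{\sim}(E)=\{0,1\}$ for every $d$ at once, and then (iii) $\Rightarrow$ (iv) is the trivial observation that $0\neq e\subsim f$ makes $e$ and $f$ related. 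Two small remarks. First, the ``main obstacle'' you identify is illusory: $\eta_{d}d=d$ holds for \emph{every} $d\in E$ by axiom HS2 of a hull system, so $(\eta_{d})'d=0$ is automatic and your detour through Corollary \ref{co:desc/unrel} to establish it is unnecessary (that Corollary is, however, exactly the right tool for the genuine issue, namely that $d\neq 0$ together with (iv) forces $(\eta_{d})'(E)=\{0\}$, i.e.\ $\eta_{d}=1$). Second, in your alternative route for (iv) $\Rightarrow$ (ii), the inference from ``$\eta_{d}e=0\Rightarrow e=0$'' to ``$\eta_{d}$ fixes every element'' needs one more line: apply the first implication to $(\eta_{d})'e$, which satisfies $\eta_{d}\bigl((\eta_{d})'e\bigr)=0$, to conclude $(\eta_{d})'e=0$. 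Neither point is a gap in the main line of your proof.
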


\begin{proof}
If (i) holds and $d\in E$, then since $\eta\sb{d}\in\Sigma\sb{\sim}(E)=
\{0,1\}$ we have $\eta\sb{d}=0$ or $\eta\sb{d}=1$; hence, since $\eta
\sb{d}=0$ iff $d=0$, (ii) follows. Conversely, suppose that (ii) holds
and $\pi\in\Sigma\sb{\sim}(E)$ with $\pi\not=0$. Then there exists $d\in
\pi(E)$ with $0\not=d=\pi d$ and, since $\eta\sb{d}$ is the smallest mapping
$\xi\in\Sigma\sb{\sim}(E)$ such that $\xi d=d$, it follows that $1=\eta\sb{d}
\leq\pi$, so $\pi=1$, proving (i). Therefore (i) $\Leftrightarrow$ (ii).

If $E$ is a factor and $d\in E$, then $\eta\sb{d}\in\Sigma\sb{\sim}
(E)$, so $\eta\sb{d}=0$ or $\eta\sb{d}=1$, whence (i) $\Rightarrow$
(iii) follows immediately from Theorem \ref{th:Compar}.

If $e\not=0$ and $e\subsim f$, then $e\geq e\sim f\sb{1}\leq f$ with
$e,f\sb{1}\not=0$, whence $e$ is related to $f$. Likewise, if $f\not=0$
and $f\subsim e$, then again $e$ is related to $f$. Therefore, (iii)
$\Rightarrow$ (iv).

To prove (iv) $\Rightarrow$ (i), assume (iv), but suppose that there
exists $\pi\in\Sigma\sb{\sim}(E)$ with $\pi\not=0$ and $\pi\not=1$.
Then $\pi\,'\not=0$, so there exist $0\not=e\in\pi(E)$ and $0\not=
f\in\pi\,'(E)$, whence $e$ is unrelated to $f$ by Lemma \ref
{lm:piinSigmasim} (iv), contradicting (iv).
\end{proof}

\begin{remarks} \label{rm:factor}
If $E$ is a factor, then by Theorem \ref{th:factor}, $\eta\sb{e}=1
\Leftrightarrow e\not=0$ for all $e\in E$, i.e., $(\eta\sb{e})\sb{e
\in E}$ is the indiscrete hull system on $E$ \cite[Example 6.10]{HandD}.
Assume that $E$ is a factor. Then, as is easily seen, the nonzero
$\eta$-monads are precisely the atoms in $E$. Moreover, if $0\not=e
\in E$ and $e$ is not an atom, then there are nonzero elements
$e\sb{1}, e\sb{2}\in E$ with $e=e\sb{1}\oplus e\sb{2}$, and we have
$\eta\sb{e\sb{1}}=\eta\sb{e\sb{2}}=1$. Therefore, \emph{every nonzero
element in a factor is either an atom or an $\eta$-dyad, but not
both.} \hspace{\fill} $\square$
\end{remarks}

\section{Hereditary $\eta$STD Sets in a DGEA}

\noindent\emph{Standing Assumptions \ref{as:DGEA} remain in force.}

\begin{definition} \label{df:faithful}
An element $p\in E$ is \emph{faithful} iff $\eta\sb{p}=1$
(cf. \cite[Definition 10.2 (4)]{HDTD}).
\end{definition}

\begin{theorem} \label{th:hstar}
Let $H\subseteq E$ be a hereditary $\eta$STD set. Then{\rm:}
\begin{enumerate}
\item If $h\in H$, $e\in E$, and $\eta\sb{h}e\not=0$, there exists
 $h\sb{1}\in H$ such that $0\not=h\sb{1}\leq\eta\sb{h}e$.
\item There exists $h\sp{\ast}\in H$ such that $\eta\sb{h\sp{\ast}}$ is the
 largest mapping in $\{\eta\sb{h}:h\in H\}$.
\item $\eta\sb{h\sp{\ast}}=\bigvee\sb{h\in H}\eta\sb{h}$ and $H\subseteq
 \eta\sb{h\sp{\ast}}(E)$.
\item There exists a faithful element in $H$ iff $\eta\sb{h\sp{\ast}}=1$.
\item $H$ is orthodense in the hereditary direct summand $\eta\sb{h\sp
 {\ast}}(E)$ of $E$.
\item If $\pi\in\GEX(E)$, then $H\cap\pi(E)=\{\pi h:h\in H\}$.
\end{enumerate}
\end{theorem}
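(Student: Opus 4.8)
The plan is to prove the six parts of Theorem~\ref{th:hstar} more or less in the stated order, since later parts lean on earlier ones. Throughout I recall that $H$ being hereditary $\eta$STD means: $H$ is an order ideal, $H=[H]\sb{\eta}$ (so $H$ is closed under orthosums of $\eta$-orthogonal families), and $H$ is hereditary in the sense of Definition~\ref{df:SKDefs}~(3), hence by Lemma~\ref{lm:piinSigmasim} each mapping $\eta\sb{h}$ ($h\in H$) lies in $\Sigma\sb{\sim}(E)$ and $\eta\sb{h}(E)$ is a hereditary ideal. I would also note at the outset that $H$ is $\eta$TD, so Theorem~\ref{th:tStar} applies directly and gives part~(ii) essentially for free.

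For part~(i): given $h\in H$, $e\in E$ with $g:=\eta\sb{h}e\neq 0$, I want a nonzero subelement of $g$ lying in $H$. Here $\eta\sb{g}=\eta\sb{\eta\sb{h}e}=\eta\sb{h}\wedge\eta\sb{e}\leq\eta\sb{h}$, so $g$ is a descendent-type element relative to $h$; more precisely, by Corollary~\ref{co:desc/unrel}, $g\in\eta\sb{h}(E)$, and since $g\neq 0$ it is related to $h$ by Theorem~\ref{th:UnrelatedConditions}. Relatedness of $g$ and $h$ gives nonzero $g\sb{1},h\sb{1}$ with $g\geq g\sb{1}\sim h\sb{1}\leq h$; then $h\sb{1}\in H$ (order ideal), and $g\sb{1}\sim h\sb{1}$ with $H$ hereditary forces $g\sb{1}\in H$; and $0\neq g\sb{1}\leq g=\eta\sb{h}e$. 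That is the required $h\sb{1}$ (renaming $g\sb{1}$). Part~(ii) is then Theorem~\ref{th:tStar} applied to the $\eta$TD set $H$, producing $h\sp{\ast}\in H$ with $\eta\sb{h\sp{\ast}}$ the largest element of $\{\eta\sb{h}:h\in H\}$.

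For part~(iii): since $\eta\sb{h\sp{\ast}}$ is an upper bound for all $\eta\sb{h}$, we have $\bigvee\sb{h\in H}\eta\sb{h}\leq\eta\sb{h\sp{\ast}}$; but $h\sp{\ast}\in H$ gives the reverse inequality, so $\eta\sb{h\sp{\ast}}=\bigvee\sb{h\in H}\eta\sb{h}$. For each $h\in H$, $h=\eta\sb{h}h\leq\eta\sb{h\sp{\ast}}h$, and since $\eta\sb{h\sp{\ast}}$ is decreasing (EXC3) this gives $h=\eta\sb{h\sp{\ast}}h\in\eta\sb{h\sp{\ast}}(E)$; hence $H\subseteq\eta\sb{h\sp{\ast}}(E)$. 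Part~(iv): by Definition~\ref{df:faithful}, a faithful element is one with $\eta$-value $1$; if some $p\in H$ is faithful then $1=\eta\sb{p}\leq\eta\sb{h\sp{\ast}}$ forces $\eta\sb{h\sp{\ast}}=1$, and conversely $\eta\sb{h\sp{\ast}}=1$ makes $h\sp{\ast}$ itself a faithful element of $H$.

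Part~(v) is the part I expect to be the main work. Write $\pi:=\eta\sb{h\sp{\ast}}\in\Sigma\sb{\sim}(E)$, so $\pi(E)$ is a hereditary direct summand containing $H$ by~(iii). I must show every $g\in\pi(E)$ is the orthosum of an $\eta$-orthogonal family drawn from $H$. The natural approach is a maximality/Zorn argument inside $E[0,g]$: pick a maximal $\eta$-orthogonal family $(h\sb{i})\sb{i\in I}$ in $H$ with $h\sb{i}\leq g$ for all $i$ (using that $E$ is Dedekind orthocomplete and that finite partial orthosums of an $\eta$-orthogonal family are bounded by $g$, hence the family is orthosummable with orthosum $g\sb{0}:=\oplus\sb{i}h\sb{i}\leq g$; and $g\sb{0}\in H$ because $H=[H]\sb{\eta}$). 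Put $g\sb{1}:=g\ominus g\sb{0}$. If $g\sb{1}\neq 0$, I want to produce a nonzero element of $H$ below $g\sb{1}$ that is $\eta$-orthogonal to all $h\sb{i}$, contradicting maximality. The key computation: $\eta\sb{g}\leq\pi=\bigvee\sb{h\in H}\eta\sb{h}$, and I will argue $g\sb{1}$ must be related to some $h\in H$ — otherwise $g\sb{1}$ is unrelated to every $h\in H$, hence (Corollary~\ref{co:relatedtoorthosum}, or directly via Theorem~\ref{th:UnrelatedConditions}(iii) together with $\eta\sb{g\sb{1}}\leq\eta\sb{g}\leq\bigvee\eta\sb{h}$) $\eta\sb{g\sb{1}}\wedge\eta\sb{h}=0$ for all $h\in H$ forces $\eta\sb{g\sb{1}}\leq\bigwedge\sb{h}(\eta\sb{h})'=(\bigvee\eta\sb{h})'=\pi\,'$, contradicting $\eta\sb{g\sb{1}}\leq\eta\sb{g}\leq\pi$ unless $g\sb{1}=0$. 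So $g\sb{1}$ is related to some $h\in H$; applying part~(i) with this $h$ and with a suitable $e$ — concretely, $\eta\sb{h}g\sb{1}\neq 0$ since $g\sb{1}$ is related to $h$ (Theorem~\ref{th:UnrelatedConditions}(iv)) — yields $0\neq h\sb{1}\in H$ with $h\sb{1}\leq\eta\sb{h}g\sb{1}\leq g\sb{1}$. Finally $h\sb{1}\leq\eta\sb{h}g\sb{1}$ gives $\eta\sb{h\sb{1}}\leq\eta\sb{h}\wedge\eta\sb{g\sb{1}}$, and I need $\eta\sb{h\sb{1}}$ disjoint from each $\eta\sb{h\sb{i}}$; this follows because $h\sb{1}\leq g\sb{1}=g\ominus g\sb{0}$, and one checks $\eta\sb{g\sb{1}}\wedge\eta\sb{h\sb{i}}=0$ — indeed $h\sb{i}\leq g\sb{0}$ so $h\sb{i}$ and $g\sb{1}$ are ``$\perp$-complementary'' coordinates within $\eta$-structure; more carefully, since $(h\sb{i})$ is $\eta$-orthogonal with orthosum $g\sb{0}$, we have $\eta\sb{g\sb{0}}=\bigvee\eta\sb{h\sb{i}}$ and it suffices that $\eta\sb{g\sb{1}}\wedge\eta\sb{g\sb{0}}$... which need not vanish, so I instead argue directly that any nonzero common subelement of $h\sb{1}$-related and $h\sb{i}$-related parts would, by part~(i) and heredity, produce an element of $H$ below both $g\sb{1}$ and $h\sb{i}\leq g\sb{0}$, contradicting $g\sb{0}\oplus g\sb{1}=g$ and cancellation. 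This contradiction step — cleanly getting $\eta$-orthogonality of the new element with the old family — is the delicate point, and I would phrase it by passing to $\eta\sb{h\sb{1}}$-components and invoking that $\eta\sb{h\sb{1}}(E)\cap\eta\sb{h\sb{i}}(E)$, if nontrivial, contains by part~(i) a nonzero element of $H$ simultaneously $\subsim g\sb{1}$ and $\subsim h\sb{i}$, then using heredity and $g\sb{0}\perp g\sb{1}$ to force it to be $0$. Once $g\sb{1}=0$, we get $g=g\sb{0}=\oplus\sb{i}h\sb{i}$ with $h\sb{i}\in H$, proving orthodensity.

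Part~(vi): the inclusion $\{\pi h:h\in H\}\subseteq H\cap\pi(E)$ holds because $\pi h\in\pi(E)$ always, and $\pi h\leq h\in H$ with $H$ an order ideal gives $\pi h\in H$. Conversely, if $g\in H\cap\pi(E)$ then $g=\pi g$ with $g\in H$, so $g=\pi h$ with $h:=g\in H$. Hence equality, and the theorem is complete. I would present parts~(i)--(iv) and~(vi) as short paragraphs and devote the bulk of the write-up to the maximality argument in~(v).
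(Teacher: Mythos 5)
Parts (i)--(iv) and (vi) of your proposal are correct and essentially coincide with the paper's argument (the paper is terser, disposing of (iii) and (v) by citing \cite[Theorem 7.5]{HDTD}; your direct verifications of (i) and (iii) are fine). The genuine gap is in (v), at exactly the step you flag as delicate. Because you chose to run the Zorn argument over a maximal \emph{$\eta$-orthogonal} family $(h\sb{i})$, after producing a nonzero $h\sb{1}\in H$ with $h\sb{1}\leq g\sb{1}:=g\ominus g\sb{0}$ you must still show $\eta\sb{h\sb{1}}\wedge\eta\sb{h\sb{i}}=0$ for every $i$ before you can adjoin $h\sb{1}$. Your justification --- that a nonzero element of $H$ lying below both $g\sb{1}$ and $h\sb{i}\leq g\sb{0}$ would ``contradict $g\sb{0}\oplus g\sb{1}=g$ and cancellation'' --- is not valid: in a GEA two orthogonal elements need not be disjoint (in the scale effect algebra $[0,1]$ take $g=1$, $g\sb{0}=g\sb{1}=\tfrac12$; then $\tfrac14$ lies under both summands), so no contradiction arises, and nothing in the construction forces the new element to be $\eta$-orthogonal to the old ones. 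As written, the maximality contradiction does not go through.

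The repair is to drop $\eta$-orthogonality altogether: orthodensity only asks that $g$ be the orthosum of \emph{some} orthosummable family in $H$. Take a maximal \emph{orthogonal} family $(h\sb{i})\sb{i\in I}$ in $H$ all of whose finite partial orthosums are $\leq g$; Dedekind orthocompleteness gives $g\sb{0}:=\oplus\sb{i}h\sb{i}\leq g$. If $g\sb{1}:=g\ominus g\sb{0}\not=0$, your own (correct) argument applies: $\eta\sb{g\sb{1}}\leq\eta\sb{g}\leq\bigvee\sb{h\in H}\eta\sb{h}$ together with the infinite distributive law in the complete boolean algebra $\GEX(E)$ yields some $h\in H$ with $\eta\sb{h}g\sb{1}\not=0$, and part (i) then gives $0\not=h\sb{1}\in H$ with $h\sb{1}\leq g\sb{1}$. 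Now for any finite $F\subseteq I$ one has $(\oplus\sb{i\in F}h\sb{i})\oplus h\sb{1}\leq g\sb{0}\oplus g\sb{1}=g$, so appending $h\sb{1}$ preserves the defining property of the family and contradicts maximality; hence $g\sb{1}=0$ and $g=\oplus\sb{i}h\sb{i}$ with each $h\sb{i}\in H$. This is in substance what the cited result \cite[Theorem 7.5]{HDTD} packages, and it makes your appeal to $H=[H]\sb{\eta}$ in this part unnecessary.
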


\begin{proof}
(i) Assume the hypotheses of (i). Then by Theorem \ref
{th:UnrelatedConditions}, $\eta\sb{h}e$ and $\eta\sb{e}h$ are related,
whence there are nonzero $h\sb{1}, h\sb{2}\in E$ with $\eta\sb{h}e
\geq h\sb{1}\sim h\sb{2}\leq\eta\sb{e}h\leq h$. Thus, $h\sb{1}\subsim h
\in H$, and since $H$ is hereditary, $h\sb{1}\in H$, proving (i).
Part (ii) follows from Theorem \ref{th:tStar}, (iii) follows from (ii)
and \cite[Theorem 7.5]{HDTD}, and (iv) is an obvious consequence of
(ii). Part (v) follows from (i), (iii), and \cite[Theorem 7.5]{HDTD}.
Finally,  to prove (vi), let $\pi\in\GEX(E)$ and $h\in H$. If $h\in
\pi E$, then $h=\pi h\in\{\pi h:h\in H\}$; conversely, since $\pi h
\leq h$ and $H$ is an order ideal, it follows that $\pi h\in H\cap
\pi(E)$.
\end{proof}

\begin{lemma} \label{lm:HcappiE=0}
Let $\pi\in\Sigma\sb{\sim}(E)$, let $H\subseteq E$ be a hereditary
$\eta$STD set, and choose $h\sp{\ast}\in H$ such that $\eta\sb{h
\sp{\ast}}=\bigvee\sb{h\in H}\eta\sb{h}$ {\rm(}Theorem \ref
{th:hstar}{\rm)}.  Then the following conditions are mutually
equivalent{\rm:} {\rm(i)} $H\cap\pi(E)=\{0\}$. {\rm(ii)} $H\subseteq
\pi\,'(E)$. {\rm(iii)} $\pi\leq(\eta\sb{h\sp{\ast}})'$, i.e.,
$\pi\wedge\eta\sb{h\sp{\ast}}=0$. {\rm(iv)} $h\in H\Rightarrow\pi
\wedge\eta\sb{h}=0$.
\end{lemma}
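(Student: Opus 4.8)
The statement is a string of four mutually equivalent conditions, so the natural approach is a cycle of implications, say (i) $\Rightarrow$ (ii) $\Rightarrow$ (iii) $\Rightarrow$ (iv) $\Rightarrow$ (i), with the bulk of the work carried by the structural facts already assembled: that $\Sigma\sb{\sim}(E)$ is a complete boolean subalgebra of the complete boolean algebra $\GEX(E)$ with pointwise meets, that $\eta\sb{h}$ is the smallest element of $\Sigma\sb{\sim}(E)$ fixing $h$, that $\eta\sb{h\sp{\ast}}=\bigvee\sb{h\in H}\eta\sb{h}$ (Theorem \ref{th:hstar} (iii)), and the criterion of Theorem \ref{th:SigmasimProps} (v) that for $\pi\in\Sigma\sb{\sim}(E)$ one has $\pi e=0 \Leftrightarrow \pi\wedge\eta\sb{e}=0$.

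First I would prove (i) $\Rightarrow$ (ii). Suppose $H\cap\pi(E)=\{0\}$ and let $h\in H$. Since $H$ is an order ideal and $\pi h\leq h$, we get $\pi h\in H$; but $\pi h\in\pi(E)$, so $\pi h=0$ by (i), i.e. $h=\pi\,'h\in\pi\,'(E)$. Next, (ii) $\Rightarrow$ (iii): if $H\subseteq\pi\,'(E)$ then in particular $h\sp{\ast}=\pi\,'h\sp{\ast}$, so $\pi\,'$ fixes $h\sp{\ast}$ and hence $\eta\sb{h\sp{\ast}}\leq\pi\,'$ by minimality of $\eta\sb{h\sp{\ast}}$ in $\Sigma\sb{\sim}(E)$ among the maps fixing $h\sp{\ast}$; equivalently $\pi\wedge\eta\sb{h\sp{\ast}}=0$. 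For (iii) $\Rightarrow$ (iv), use Theorem \ref{th:hstar} (iii): from $\eta\sb{h\sp{\ast}}=\bigvee\sb{h\in H}\eta\sb{h}$ we get $\eta\sb{h}\leq\eta\sb{h\sp{\ast}}$ for every $h\in H$, so $\pi\wedge\eta\sb{h}\leq\pi\wedge\eta\sb{h\sp{\ast}}=0$.

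Finally (iv) $\Rightarrow$ (i). Let $x\in H\cap\pi(E)$. Then $x=\pi x$, so $\eta\sb{x}\leq\pi$; combined with $\pi\wedge\eta\sb{x}=0$ (which is (iv) applied to $h:=x\in H$), this gives $\eta\sb{x}=\eta\sb{x}\wedge\pi=0$, whence $x=0$ since $\eta\sb{x}=0\Leftrightarrow x=0$. Alternatively, and perhaps more cleanly, apply Theorem \ref{th:SigmasimProps} (v) directly: $\pi\wedge\eta\sb{x}=0$ gives $\pi x=0$, i.e. $x=\pi x=0$. This closes the cycle.

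I do not anticipate a genuine obstacle here; the lemma is essentially bookkeeping among the boolean-algebra facts already established, and each arrow is a two- or three-line argument. The only point requiring a little care is the role of $h\sp{\ast}$: it is used exactly once, in (ii) $\Rightarrow$ (iii) (to promote "$\pi\,'$ fixes every $h$" to "$\pi\,'$ fixes the single element $h\sp{\ast}$ whose hull dominates all the others") and again implicitly in (iii) $\Rightarrow$ (iv) via Theorem \ref{th:hstar} (iii); one should make sure to cite Theorem \ref{th:hstar} for the existence of such an $h\sp{\ast}$ and for the identity $\eta\sb{h\sp{\ast}}=\bigvee\sb{h\in H}\eta\sb{h}$, both of which are already in hand and require $H$ to be hereditary $\eta$STD.
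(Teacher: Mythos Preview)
Your proof is correct and follows essentially the same cycle (i) $\Rightarrow$ (ii) $\Rightarrow$ (iii) $\Rightarrow$ (iv) $\Rightarrow$ (i) as the paper, with the same key ingredients: $H$ is an order ideal so $\pi h\in H\cap\pi(E)$, minimality of $\eta\sb{h\sp{\ast}}$ among maps in $\Sigma\sb{\sim}(E)$ fixing $h\sp{\ast}$, the inequality $\eta\sb{h}\leq\eta\sb{h\sp{\ast}}$, and the pointwise-meet/Theorem~\ref{th:SigmasimProps}~(v) argument for the last step. The only cosmetic difference is that the paper phrases (iv) $\Rightarrow$ (i) via $h=\eta\sb{h}h\wedge\pi h=(\eta\sb{h}\wedge\pi)h=0$, which is your second alternative in slightly different dress.
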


\begin{proof}
(i) $\Rightarrow$ (ii). Assume (i) and let $h\in H$. Then by
Theorem \ref{th:hstar} (v), $\pi h\in H\cap\pi(E)=\{0\}$,
whence $h\in\pi\,'(E)$.

(ii) $\Rightarrow$ (iii). Assume (ii). Then $\pi\,'h\sp{\ast}=
h\sp{\ast}$. Thus as $\pi\in\Sigma\sb{\sim}(E)$, we have $\pi\,'
\in\Sigma\sb{\sim}(E)$, and it follows from Theorem \ref
{th:SigmasimProps} (iii) that $\eta\sb{h\sp{\ast}}\leq\pi\,'$,
so (iii) holds.

(iii) $\Rightarrow$ (iv). If $h\in H$, then $\eta\sb{h}\leq\eta
\sb{h\sp{\ast}}$, so $\pi\wedge\eta\sb{h\sp{\ast}}=0\Rightarrow
\pi\wedge\eta\sb{h}=0$.

(iv) $\Rightarrow$ (i). Assume (iv) and let $h\in H\cap\pi(E)$.
Then $h=\eta\sb{h}h$ and $h=\pi h$, so $h=\eta\sb{h}h\wedge\pi h
=(\eta\sb{h}\wedge\pi)h=0$, whence (i) holds.
\end{proof}

If $\pi\in\GEX(E)$ and $\phi$ is a function defined on $E$, then
$\phi|\sb{\pi}$ denotes the restriction of $\phi$ to the direct
summand $\pi(E)$ of $E$ and $\sim\sb{\pi}$ denotes the restriction
of $\sim$ to $\pi(E)$. If $\xi\in\GEX(E)$, we write $\xi|\sb{\pi}$,
for short, as $\xi\sb{\pi}$.

\begin{theorem} \label{th:piEasDGEA}
Suppose that $\pi\in\Sigma\sb{\sim}(E)$. Then{\rm:} {\rm(i)}
$\GEX(\pi(E))=\{\xi\sb{\pi}:\xi\in\GEX(E)\}$ and $\xi\mapsto\xi
\sb{\pi}$ is a surjective boolean homomorphism of $\GEX(E)$ onto
$\GEX(\pi(E))$. {\rm(ii)} $(\eta\sb{p}|\sb{\pi})\sb{p\in\pi(E)}$ is
a hull system on $\pi(E)$. {\rm(iii)} The hereditary direct summand
$\pi(E)$ of $E$ is a DGEA with $\sim\sb{\pi}$ as the DER and
$(\eta\sb{p}|\sb{\pi})\sb{p\in\pi(E)}$ is the corresponding hull
system determined by $\Sigma\sb{\sim\sb{\pi}}(\pi(E))=\{\xi\sb{\pi}:
\xi\in\Sigma\sb{\sim}(E)\}$. {\rm(v)} $\Gamma\sb{\sim\sb{\pi}}(\pi(E))
=\Gamma\sb{\sim}(E)\cap\pi(E)=\{\pi c:c\in\Gamma\sb{\sim}(E)\}$.
{\rm(iv)} If $H\subseteq E$ is a hereditary $\eta$STD set, then $H
\cap\pi(E)=\{\pi h:h\in H\}$ is a hereditary {\rm(}with respect to
$\sim\sb{\pi}${\rm)} $\eta|\sb{\pi}$STD set in the DGEA $\pi(E)$.
\end{theorem}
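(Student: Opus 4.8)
The plan is to verify the five claims (i)--(iv) in turn, leaning heavily on the machinery already in place for restricting exocentral and hull-system data to a direct summand. For (i), I would first recall that $\pi(E)$ is itself a GEA (a direct summand, hence an ideal, hence a sub-GEA) and that for $\xi\in\GEX(E)$ the restriction $\xi_\pi$ maps $\pi(E)$ into itself, since $\xi$ commutes with $\pi$ (composition in $\GEX(E)$ is commutative) and $\xi\xi_\pi$ is idempotent, decreasing, and a GEA-endomorphism of $\pi(E)$, so $\xi_\pi\in\GEX(\pi(E))$. Conversely, given $\zeta\in\GEX(\pi(E))$, I would extend it to $E$ by $\zeta\circ\pi$ (or equivalently use the known correspondence between direct summands of $\pi(E)$ and direct summands of $E$ contained in $\pi(E)$, which are exactly the direct summands of $E$ below $\pi$); this produces $\xi\in\GEX(E)$ with $\xi_\pi=\zeta$. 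Surjectivity of $\xi\mapsto\xi_\pi$ follows, and that the map is a boolean homomorphism follows because infima in $\GEX(E)$ are computed pointwise (stated in the excerpt) and because $\pi\mapsto\pi'$ restricts correctly: $(\xi')_\pi=(\xi_\pi)'$ inside $\GEX(\pi(E))$, using that orthocomplementation in the exocenter is $e\mapsto e\ominus\xi e$ and that differences are computed the same way in the ideal $\pi(E)$ as in $E$.

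For (ii), I would check HS1--HS3 for $(\eta_p|_\pi)_{p\in\pi(E)}$ directly: HS1 is $\eta_0|_\pi=0|_\pi=0$; HS2 is $p=\eta_p p=\eta_p|_\pi\, p$ for $p\in\pi(E)$, which holds because $\eta_p p=p$ in $E$; and HS3 is the identity $\eta_{\eta_p q}|_\pi=\eta_p|_\pi\circ\eta_q|_\pi=\eta_p|_\pi\wedge\eta_q|_\pi$ for $p,q\in\pi(E)$, which is just the restriction of HS3 in $E$ together with the fact, from (i), that restriction preserves $\wedge$ and composition; one must note that $\eta_p q\in\pi(E)$ when $p,q\in\pi(E)$ since $\eta_p q\le q$ and $\pi(E)$ is an order ideal, so the left side makes sense. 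For (iii), the first task is to see that $\pi(E)$ is a COGEA and Dedekind orthocomplete: a direct summand of a COGEA is a COGEA (the exocenter of $\pi(E)$ is an interval of $\GEX(E)$ by (i), $\GEX$-orthogonal families in $\pi(E)$ are $\GEX$-orthogonal in $E$, and orthosums land back in $\pi(E)$), and similarly Dedekind orthocompleteness restricts to an ideal since bounded orthogonal families in $\pi(E)$ are bounded in $E$ and their orthosums, being $\le$ the bound, stay in $\pi(E)$. Next I would show $\sim_\pi$ is an SK-congruence on $\pi(E)$: SK1, SK2, SK3d, SK3e, SK4a, SK4b all restrict, with the one delicate point being SK4b, which needs that "$f\perp c$" witnesses produced by SK4b in $E$ can be taken inside $\pi(E)$ — here one uses $\pi\in\Sigma_\sim(E)$: if $e\not\le f$ with $e,f\in\pi(E)$, then applying SK4b in $E$ gives $e\ge e_1\sim d_1\perp f$; replacing $d_1$ by $\pi d_1$ keeps $\pi d_1\le d_1\perp f$ and, since $\pi(E)$ is hereditary, $\pi d_1\sim$-class stays in $\pi(E)$, but one must also recover a nonzero $e_1'\le e$ with $e_1'\sim\pi d_1$, which comes from divisibility applied inside $\pi(E)$. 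Then SK4a$'$ for $\sim_\pi$: if $e,f\in\pi(E)$ are unrelated in $\pi(E)$, they are unrelated in $E$ (relatedness is witnessed by subelements, which stay in the ideal), so by SK4a$'$ in $E$ there is $\xi\in\Sigma_\sim(E)$ with $e\in\xi(E)$, $f\in\xi'(E)$; then $\xi\wedge\pi$ restricts to an element of $\Sigma_{\sim_\pi}(\pi(E))$ separating $e$ and $f$, giving SK4a$'$ in $\pi(E)$. That $\Sigma_{\sim_\pi}(\pi(E))=\{\xi_\pi:\xi\in\Sigma_\sim(E)\}$ is then proved by double inclusion using Lemma~\ref{lm:piinSigmasim}'s characterization via hereditariness: $\xi_\pi$ splits $\sim_\pi$ iff $\xi_\pi(\pi(E))=\xi(E)\cap\pi(E)$ is hereditary in $\pi(E)$, which it is when $\xi(E)$ is hereditary in $E$; conversely any splitting idempotent of $\pi(E)$ extends. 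Finally, that $(\eta_p|_\pi)_{p\in\pi(E)}$ is the hull system \emph{determined by} $\Sigma_{\sim_\pi}(\pi(E))$ follows from Theorem~\ref{th:SigmasimProps}(iii): one checks $\eta_p|_\pi=\bigwedge\{\zeta\in\Sigma_{\sim_\pi}(\pi(E)):\zeta p=p\}$, using that this infimum equals the restriction of $\bigwedge\{\xi\in\Sigma_\sim(E):\xi p=p\}=\eta_p$ via the boolean homomorphism of (i).

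For (v), $\Gamma_{\sim_\pi}(\pi(E))=\Gamma_\sim(E)\cap\pi(E)$: by Theorem~\ref{th:maintheorem}, $c\in\Gamma_\sim(E)$ iff $c$ is principal in $E$ and $E[0,c]$ is hereditary; for $c\in\pi(E)$, being principal in $E$ is the same as being principal in the ideal $\pi(E)$ (since $E[0,c]\subseteq\pi(E)$ and the orthosummation agrees), and $E[0,c]=\pi(E)[0,c]$ with the same $\sim$-structure, so hereditariness transfers; applying Theorem~\ref{th:maintheorem} inside the DGEA $\pi(E)$ gives the equivalence with $c\in\Gamma_{\sim_\pi}(\pi(E))$. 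The equality with $\{\pi c:c\in\Gamma_\sim(E)\}$ then needs $\pi c\in\Gamma_\sim(E)$ for $c\in\Gamma_\sim(E)$, which follows from Theorem~\ref{th:maintheorem} applied to $\pi c$ (principal, as a product of the central-type data, and $E[0,\pi c]$ hereditary as a sub-ideal of the hereditary $E[0,c]$), and conversely any element of $\Gamma_\sim(E)\cap\pi(E)$ is its own $\pi$-image. For (iv), that $H\cap\pi(E)=\{\pi h:h\in H\}$ is exactly Theorem~\ref{th:hstar}(vi); it is hereditary with respect to $\sim_\pi$ because $H$ is hereditary in $E$ and $\pi(E)$ is an order ideal (subelements and $\sim_\pi$-equivalent elements of things in $H\cap\pi(E)$ stay in both $H$ and $\pi(E)$); and it is $\eta|_\pi$STD — i.e., an order ideal in $\pi(E)$ with $[H\cap\pi(E)]_{\eta|_\pi}=H\cap\pi(E)$ — because order-idealness restricts, $\eta|_\pi$-orthogonal families in $H\cap\pi(E)$ are $\eta$-orthogonal families in $H$ by (i) (disjointness of the restricted hulls lifts to disjointness of $\eta_{h_i}\wedge\pi$, which suffices since the $h_i\in\pi(E)$), and their orthosums, computed in $E$, lie in $[H]_\eta=H$ and also in the ideal $\pi(E)$.

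\medskip

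\noindent\textbf{Main obstacle.} The delicate point is part (iii): verifying that $\sim_\pi$ still satisfies SK4b and SK4a$'$ on $\pi(E)$, and pinning down $\Sigma_{\sim_\pi}(\pi(E))=\{\xi_\pi:\xi\in\Sigma_\sim(E)\}$. Both require carefully using $\pi\in\Sigma_\sim(E)$ (so that $\pi(E)$ is hereditary, letting one push witnesses back into $\pi(E)$) together with divisibility applied internally; everything else is routine restriction of structure already developed in the excerpt.
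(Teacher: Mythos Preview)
Your proposal is correct and matches the paper's approach: the paper simply cites \cite[Theorem 4.13]{CenGEA} for (i), \cite[Theorem 9.5]{HDTD} for (ii), and declares the remainder ``straightforward,'' so your detailed restriction-of-structure argument is exactly the omitted verification. One small simplification: in your SK4b step you do not need divisibility at all---once SK4b in $E$ produces nonzero $e_1,d_1$ with $e\geq e_1\sim d_1\perp f$, you already have $e_1\leq e\in\pi(E)$ (order ideal) so $e_1\in\pi(E)$, and then $d_1\sim e_1\in\pi(E)$ forces $d_1\in\pi(E)$ by hereditariness of $\pi(E)$ (Lemma~\ref{lm:piinSigmasim}(iii)); so the witnesses land in $\pi(E)$ automatically and no projection or internal divisibility is needed.
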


\begin{proof}
Part (i) follows from \cite[Theorem 4.13 (i)--(iii)]{CenGEA} and
part (ii) is \cite[Theorem 9.5]{HDTD}. We omit the remainder of the
proof as it is straightforward.
\end{proof}

In what follows, if $\pi\in\Sigma\sb{\sim}(E)$, \emph{we understand
that the hereditary direct summand $\pi(E)$ is organized into a
DGEA as per Theorem \ref{th:piEasDGEA} {\rm(}iii{\rm)}}. Also we
recall that $\Theta\sb{\eta}(E):=\{\eta\sb{e}:e\in E\}\subseteq
\Sigma\sb{\sim}(E)$.

\begin{lemma}  \label{lm:hsharp}
Let $\pi\in\Theta\sb{\eta}(E)$, let $H\subseteq E$ be a hereditary
$\eta$STD set, choose $h\sp{\ast}\in H$ such that $\eta\sb{h\sp{\ast}}
=\bigvee\sb{h\in H}\eta\sb{h}$ {\rm(Theorem \ref{th:hstar})}, and put
$h\sp{\#}:=\pi h\sp{\ast}$. Then{\rm:} {\rm(i)} $h\sp{\#}\in H\cap
\pi(E)$. {\rm(ii)} $\eta\sb{h\sp{\#}}=\eta\sb{h\sp{\ast}}\circ\pi=
\eta\sb{h\sp{\ast}}\wedge\pi$. {\rm(iii)} $\eta\sb{h\sp{\#}}|\sb{\pi}$
is the largest mapping in $\{\eta\sb{h}|\sb{\pi}:h\in H\cap\pi(E)\}$.
{\rm(iv)} $H\cap\pi(E)$ is orthodense in $\eta\sb{h\sp{\ast}}(\pi(E))
=(\eta\sb{h\sp{\ast}}\wedge\pi)(E)=\eta\sb{h\sp{\#}}(E)$.
\end{lemma}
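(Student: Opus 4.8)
The plan is to establish the four parts in order, leveraging Theorem~\ref{th:hstar}, Lemma~\ref{lm:HcappiE=0}, and the basic algebra of hull systems from \cite[Theorem 7.4]{HDTD}. For part (i), since $\pi\in\Theta\sb{\eta}(E)\subseteq\GEX(E)$ and $H$ is an order ideal, $h\sp{\#}=\pi h\sp{\ast}\leq h\sp{\ast}\in H$ forces $h\sp{\#}\in H$; and $h\sp{\#}=\pi h\sp{\ast}\in\pi(E)$ trivially, so $h\sp{\#}\in H\cap\pi(E)$. This is immediate. For part (ii), I would invoke the identity $\eta\sb{\pi e}=\pi\circ\eta\sb{e}=\pi\wedge\eta\sb{e}$ valid for $\pi\in\GEX(E)$ (this is essentially HS3 together with the fact that $\pi\in\Theta\sb{\eta}(E)$ means $\pi=\eta\sb{d}$ for some $d$, so $\eta\sb{\eta\sb{d}h\sp{\ast}}=\eta\sb{d}\circ\eta\sb{h\sp{\ast}}=\eta\sb{d}\wedge\eta\sb{h\sp{\ast}}=\pi\wedge\eta\sb{h\sp{\ast}}$). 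Hence $\eta\sb{h\sp{\#}}=\eta\sb{h\sp{\ast}}\circ\pi=\eta\sb{h\sp{\ast}}\wedge\pi$, using commutativity of composition in $\GEX(E)$.

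For part (iii), I want to show $\eta\sb{h\sp{\#}}|\sb{\pi}$ dominates $\eta\sb{h}|\sb{\pi}$ for every $h\in H\cap\pi(E)$, and that it is itself of this form. The "of this form" direction is part (i): $h\sp{\#}\in H\cap\pi(E)$. For dominance, take $h\in H\cap\pi(E)$. Then $\eta\sb{h}\leq\eta\sb{h\sp{\ast}}$ by Theorem~\ref{th:hstar}(iii), and since $h=\pi h$ we also get $\eta\sb{h}=\eta\sb{h}\wedge\pi\leq\eta\sb{h\sp{\ast}}\wedge\pi=\eta\sb{h\sp{\#}}$ by part (ii). Restricting to $\pi(E)$ via the surjective boolean homomorphism $\xi\mapsto\xi\sb{\pi}$ of Theorem~\ref{th:piEasDGEA}(i) preserves the order, so $\eta\sb{h}|\sb{\pi}\leq\eta\sb{h\sp{\#}}|\sb{\pi}$; combined with Theorem~\ref{th:piEasDGEA}(iv) (which guarantees $H\cap\pi(E)$ is an $\eta|\sb{\pi}$STD set in the DGEA $\pi(E)$, so the relevant mappings live in $\GEX(\pi(E))$), this gives (iii).

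For part (iv), first note the equalities on the right: $\eta\sb{h\sp{\ast}}(\pi(E))=(\eta\sb{h\sp{\ast}}\wedge\pi)(E)$ holds because $\pi(E)$ is a direct summand and restricting $\eta\sb{h\sp{\ast}}$ to it picks out exactly $\eta\sb{h\sp{\ast}}(E)\cap\pi(E)=(\eta\sb{h\sp{\ast}}\wedge\pi)(E)$ (infima in $\GEX(E)$ act pointwise, so $(\eta\sb{h\sp{\ast}}\wedge\pi)(E)=\eta\sb{h\sp{\ast}}(E)\cap\pi(E)$); and $(\eta\sb{h\sp{\ast}}\wedge\pi)(E)=\eta\sb{h\sp{\#}}(E)$ is part (ii). For orthodensity, I would apply Theorem~\ref{th:hstar}(v) \emph{inside the DGEA $\pi(E)$}: by Theorem~\ref{th:piEasDGEA}(iv), $H\cap\pi(E)$ is a hereditary $\eta|\sb{\pi}$STD set in $\pi(E)$, so by Theorem~\ref{th:hstar}(v) it is orthodense in the hereditary direct summand $(\eta|\sb{\pi})\sb{h\sp{\#}}(\pi(E))$ of $\pi(E)$, where $h\sp{\#}$ plays the role of the "$h\sp{\ast}$" element for $H\cap\pi(E)$ by part (iii). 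Since orthosummation in $\pi(E)$ agrees with that in $E$ (direct summands inherit $\oplus$), orthodensity in $\pi(E)$ is orthodensity in $E$, and $(\eta|\sb{\pi})\sb{h\sp{\#}}(\pi(E))=\eta\sb{h\sp{\#}}(E)$ by part (ii). The main obstacle I anticipate is bookkeeping around the restriction maps: one must be careful that "largest in $\{\eta\sb{h}|\sb{\pi}:h\in H\cap\pi(E)\}$" is genuinely the $\eta|\sb{\pi}$-analogue of $h\sp{\ast}$ for the set $H\cap\pi(E)$ in the DGEA $\pi(E)$, so that Theorem~\ref{th:hstar}(v) applies verbatim there — this is exactly what part (iii) is set up to provide, so once (iii) is in hand, (iv) is a clean transfer.
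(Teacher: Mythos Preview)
Your proposal is correct and follows essentially the same route as the paper's own proof: (i) via the order-ideal property of $H$, (ii) via writing $\pi=\eta\sb{d}$ and invoking HS3, (iii) by comparing $\eta\sb{h}\leq\eta\sb{h\sp{\ast}}$ and restricting to $\pi(E)$, and (iv) by transferring Theorem~\ref{th:hstar}(v) to the DGEA $\pi(E)$ via Theorem~\ref{th:piEasDGEA}(iv). The only cosmetic difference is that for (iii) the paper argues pointwise ($\eta\sb{h}(\pi e)\leq\eta\sb{h\sp{\ast}}(\pi e)=\eta\sb{h\sp{\#}}(\pi e)$) rather than first deducing $\eta\sb{h}\leq\pi$ from $h=\pi h$; both are valid.
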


\begin{proof}
We have $h\sp{\#}=\pi h\sp{\ast}\leq h\sp{\ast}\in H$, and as $H$ is
an order ideal, (i) follows. Also, since $\pi\in\Theta\sb{\eta}(E)$,
there exists $f\in E$ with $\pi=\eta\sb{f}$, and therefore
$\eta\sb{h\sp{\#}}=\eta\sb{\pi h\sp{\ast}}=\eta\sb{\eta\sb{f}h
\sp{\ast}}=\eta\sb{h\sp{\ast}}\circ\eta\sb{f}=\eta\sb{h\sp{\ast}}
\circ\pi=\eta\sb{h\sp{\ast}}\wedge\pi$, proving (ii). Let $h\in H
\cap\pi(E)$ and $e\in E$. Then $\eta\sb{h}\leq\eta\sb{h\sp{\ast}}$,
so by (ii), $\eta\sb{h}(\pi e)\leq\eta\sb{h\sp{\ast}}(\pi e)=(\eta
\sb{h\sp{\ast}}\circ\pi)(\pi e)=\eta\sb{h\sp{\#}}(\pi e)$, and it
follows that $\eta\sb{h}|\sb{\pi}\leq\eta\sb{h\sp{\#}}|\sb{\pi}$,
and (iii) is proved. Part (iv) follows from (iii), Theorem
\ref{th:piEasDGEA} (iv), and Theorem \ref{th:hstar} with $E$ replaced
by $\pi(E)$, $H$ replaced by $H\cap\pi(E)$, and $\eta\sb{h\sp{\ast}}$
replaced by $\eta\sb{h\sp{\#}}|\sb{\pi}$.
\end{proof}

\begin{lemma} \label{lm:pfaithful}
If $\pi\in\Sigma\sb{\sim}(E)$ and $p\in\pi(E)$, then the following
conditions are mutually equivalent{\rm:} {\rm(i)} $p$ is faithful
in the DGEA $\pi(E)$. {\rm(ii)} $\pi\leq\eta\sb{p}$. {\rm(iii)}
$\pi=\eta\sb{p}$. Thus, if any {\rm(}hence all{\rm)} of these
conditions hold, then $\pi\in\Theta\sb{\eta}(E)$.
\end{lemma}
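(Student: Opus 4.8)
The plan is to isolate the single fact that drives the whole lemma: since $p\in\pi(E)$ we have $\pi p=p$ with $\pi\in\Sigma_{\sim}(E)$, and by Theorem \ref{th:SigmasimProps} (iii) the mapping $\eta_p$ is the smallest element of $\Sigma_{\sim}(E)$ that fixes $p$. Hence $\eta_p\leq\pi$ holds automatically, with no hypothesis needed. The three conditions are then compared against this free inequality. (In particular $\eta_p\leq\pi$ gives $\eta_p(E)\subseteq\pi(E)$, so $\eta_p$ really does restrict to a self-map of the direct summand $\pi(E)$, and $\eta_p|_{\pi}\in\GEX(\pi(E))$ as needed.)

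For (i) $\Leftrightarrow$ (ii): by Theorem \ref{th:piEasDGEA} (iii) the hull system on the DGEA $\pi(E)$ is $(\eta_q|_{\pi})_{q\in\pi(E)}$, so by Definition \ref{df:faithful} the element $p$ is faithful in $\pi(E)$ exactly when $\eta_p|_{\pi}$ is the identity mapping on $\pi(E)$, i.e.\ when $\eta_p x=x$ for every $x\in\pi(E)$. This last condition says precisely $\pi(E)\subseteq\eta_p(E)$, which, under the order-preserving bijection $\xi\leftrightarrow\xi(E)$ between $\GEX(E)$ and the direct summands of $E$ (recalled in the discussion of the exocenter), is equivalent to $\pi\leq\eta_p$. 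That is (ii).

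For (ii) $\Leftrightarrow$ (iii): (iii) $\Rightarrow$ (ii) is trivial, and conversely combining $\pi\leq\eta_p$ from (ii) with the inequality $\eta_p\leq\pi$ established at the outset yields $\pi=\eta_p$. Finally, if $\pi=\eta_p$ then $\pi\in\{\eta_e:e\in E\}=\Theta_{\eta}(E)$ (equivalently, invoke Theorem \ref{th:SigmasimProps} (iv)), which gives the concluding assertion.

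I do not expect a genuine obstacle here; the only point demanding care is the unwinding of ``$p$ is faithful in the DGEA $\pi(E)$'': one must use Theorem \ref{th:piEasDGEA} (iii) to know that the hull system on $\pi(E)$ is exactly the restricted family $(\eta_q|_{\pi})$, rather than some a priori different hull system, and then recognize that ``$\eta_p|_{\pi}$ equals the identity on $\pi(E)$'' translates verbatim into the direct-summand containment $\pi(E)\subseteq\eta_p(E)$.
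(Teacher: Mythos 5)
Your proposal is correct and follows essentially the same route as the paper: both reduce faithfulness of $p$ in $\pi(E)$ to $\eta\sb{p}|\sb{\pi}$ being the identity on $\pi(E)$ and hence to $\pi\leq\eta\sb{p}$ (you phrase this via the containment $\pi(E)\subseteq\eta\sb{p}(E)$, the paper via the pointwise identity $(\eta\sb{p}\wedge\pi)e=\pi e$, which are equivalent), and both obtain $\eta\sb{p}\leq\pi$ from $\pi p=p$ and Theorem \ref{th:SigmasimProps} (iii) to get (ii) $\Leftrightarrow$ (iii).
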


\begin{proof}
Assume the hypotheses of the lemma. Then $p$ is faithful in
$\pi(E)$ iff $\eta\sb{p}|\sb{\pi}$ is the identity mapping on
$\pi(E)$ iff $(\eta\sb{p}\wedge\pi)e=\eta\sb{p}(\pi e)=\pi e$
for all $e\in E$ iff $\eta\sb{p}\wedge\pi=\pi$ iff $\pi\leq\eta
\sb{p}$, proving that (i) $\Leftrightarrow$ (ii). Since $p\in
\pi(E)$, we have $\pi p=p$, whence $\eta\sb{p}\leq\pi$ by
Theorem \ref{th:SigmasimProps} (iii), and it follows that (ii)
$\Leftrightarrow$ (iii).
\end{proof}

\begin{theorem} \label{th:faithinpi}
Let $\pi\in\Theta\sb{\eta}(E)$ and let $H\subseteq E$ be a hereditary
$\eta$STD set. Then with the notation of Lemma \ref{lm:hsharp}, the
following conditions are mutually equivalent{\rm:} {\rm(i)} There
exists $h\in H$ such that $\eta\sb{h}=\pi$. {\rm(ii)} $\pi\leq\eta
\sb{h\sp{\ast}}$. {\rm(iii)} $h\sp{\#}$ is faithful in $\pi(E)$. {\rm(iv)}
There exists $h\in H\cap\pi(E)$ such that $h$ is faithful in $\pi(E)$.
Moreover, if any {\rm(}hence all{\rm)} of the conditions {\rm(i)--(iv)}
hold, then $H\cap\pi(E)$ is orthodense in $\pi(E)$.
\end{theorem}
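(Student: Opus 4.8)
The plan is to establish the cycle of implications $(i)\Rightarrow(ii)\Rightarrow(iii)\Rightarrow(iv)\Rightarrow(i)$, after which the ``moreover'' clause follows from orthodensity of $H\cap\pi(E)$ in $\eta\sb{h\sp{\#}}(E)$ (Lemma~\ref{lm:hsharp}(iv)) together with the observation that conditions (ii)--(iv) force $\eta\sb{h\sp{\#}}=\pi$, so that $\eta\sb{h\sp{\#}}(E)=\pi(E)$.

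First I would prove $(i)\Rightarrow(ii)$: if $h\in H$ with $\eta\sb{h}=\pi$, then $h\leq h\sp{\ast}$ up to exocentral cover in the sense that $\eta\sb{h}\leq\eta\sb{h\sp{\ast}}=\bigvee\sb{k\in H}\eta\sb{k}$ by Theorem~\ref{th:hstar}(iii), so $\pi=\eta\sb{h}\leq\eta\sb{h\sp{\ast}}$. For $(ii)\Rightarrow(iii)$: assuming $\pi\leq\eta\sb{h\sp{\ast}}$, Lemma~\ref{lm:hsharp}(ii) gives $\eta\sb{h\sp{\#}}=\eta\sb{h\sp{\ast}}\wedge\pi=\pi$; since $h\sp{\#}\in\pi(E)$ by Lemma~\ref{lm:hsharp}(i), Lemma~\ref{lm:pfaithful} (equivalence of its conditions (i) and (iii)) applied to the element $h\sp{\#}\in\pi(E)$ shows $h\sp{\#}$ is faithful in the DGEA $\pi(E)$. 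The implication $(iii)\Rightarrow(iv)$ is immediate since $h\sp{\#}\in H\cap\pi(E)$ by Lemma~\ref{lm:hsharp}(i).

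For $(iv)\Rightarrow(i)$: suppose $h\in H\cap\pi(E)$ is faithful in $\pi(E)$. By Lemma~\ref{lm:pfaithful} (equivalence of (i) and (iii) there), $\eta\sb{h}=\pi$; and of course $h\in H$, so (i) holds. I expect no real obstacle here — every step is a direct citation of one of the four lemmas (\ref{th:hstar}, \ref{lm:HcappiE=0}, \ref{lm:hsharp}, \ref{lm:pfaithful}) or of Theorem~\ref{th:piEasDGEA}, and the bookkeeping with $\eta\sb{h\sp{\#}}=\eta\sb{h\sp{\ast}}\wedge\pi$ is the only computation, already done in Lemma~\ref{lm:hsharp}. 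The one place that merits a sentence of care is verifying that ``faithful in $\pi(E)$'' is being interpreted via the hull system $(\eta\sb{p}|\sb{\pi})\sb{p\in\pi(E)}$ of the DGEA $\pi(E)$ (Theorem~\ref{th:piEasDGEA}(iii)), i.e. that $p\in\pi(E)$ is faithful there iff $\eta\sb{p}|\sb{\pi}$ is the identity on $\pi(E)$ — but that is exactly the content of Lemma~\ref{lm:pfaithful}, so it slots in cleanly.

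Finally, for the ``moreover'' clause: once any of (i)--(iv) holds, the argument above gives $\eta\sb{h\sp{\#}}=\pi$, hence $\eta\sb{h\sp{\#}}(E)=\pi(E)$; Lemma~\ref{lm:hsharp}(iv) says $H\cap\pi(E)$ is orthodense in $\eta\sb{h\sp{\#}}(E)$, so $H\cap\pi(E)$ is orthodense in $\pi(E)$, as claimed.
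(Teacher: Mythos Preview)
Your proof is correct and follows essentially the same route as the paper. The only difference is organizational: the paper obtains (i)$\Leftrightarrow$(ii) and (i)$\Leftrightarrow$(iii) by citing results from the companion paper \cite{HDTD}, whereas you close the cycle (i)$\Rightarrow$(ii)$\Rightarrow$(iii)$\Rightarrow$(iv)$\Rightarrow$(i) directly from Theorem~\ref{th:hstar}, Lemma~\ref{lm:hsharp}, and Lemma~\ref{lm:pfaithful}; both arguments rest on the same computation $\eta\sb{h\sp{\#}}=\eta\sb{h\sp{\ast}}\wedge\pi$ and the same appeal to Lemma~\ref{lm:hsharp}(iv) for the orthodensity clause.
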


\begin{proof}
That (i) $\Leftrightarrow$ (ii) follows from \cite[Definition 12.6 (2)
and Theorem 12.8 (iii)]{HDTD}. Also, by \cite[Definition 12.6 (2) and
Theorem 12.9]{HDTD}, we have (i) $\Leftrightarrow$ (iii). Obviously,
(iii) $\Rightarrow$ (iv). If (iv) holds, then by Lemma \ref{lm:pfaithful},
$\pi\leq\eta\sb{h}$, and since $\eta\sb{h}\leq\eta\sb{h\sp{\ast}}$, we
have (ii); hence (iv) $\Rightarrow$ (ii), so (i)--(iv) are mutually
equivalent. Finally, if (ii) holds, then $\pi\leq\eta\sb{h\sp{\ast}}$,
so by Lemma \ref{lm:hsharp} (iv), $H\cap\pi(E)$ is orthodense in
$(\eta\sb{h\sp{\ast}}\wedge\pi)(E)=\pi(E)$.
\end{proof}

\section{Simple and Finite Elements} \label{sc:SandF} 

\noindent\emph{Standing Assumptions \ref{as:DGEA} remain in force.}

\begin{definition}[{Cf. \cite[p. 6]{Loom}}] \label{df:simple&finite}
Let $k,f\in E$. Then:
\begin{enumerate}
\item[(1)] $k$ is \emph{simple} iff every subelement $e$ of $k$ is
 unrelated to its orthosupplement $e\sp{\perp\sb{k}}=k\ominus e$
 in the EA $E[0,k]$. In what follows, we denote by $K$ the set of
all simple elements in $E$ and we define $\eta\sb{K}:=\bigvee
\sb{k\in K}\eta\sb{k}$.
\item[(2)] $f$ is \emph{finite} iff, for every subelement $e$ of
 $f$, $e\sim f\Rightarrow e=f$. If $p\in E$ is not finite, it is
 \emph{infinite}. In what follows, $F$ is the set of all finite
 elements in $E$, $\eta\sb{F}:=\bigvee\sb{f\in F}\eta\sb{f}$, and
 ${\widetilde F}:=F\cap\Gamma\sb{\eta}(E)=F\cap
\Gamma\sb{\sim}(E)$, i.e., ${\widetilde F}$ is the set of all finite
 invariant elements in $E$.
\end{enumerate}
\end{definition}

\begin{lemma} \label{lm:simp=mon}
Let $k\in E$. Then the following conditions are mutually equivalent{\rm:}
{\rm(i)} $k$ is simple. {\rm(ii)} If $k=k\sb{1}\oplus k\sb{2}$, then
$\eta\sb{k\sb{1}}\wedge\eta\sb{k\sb{2}}=0$. {\rm(iii)} If $e,f\in E[0,k]$
and $e\perp\sb{k}f$, then $\eta\sb{e}\wedge\eta\sb{f}=0$. {\rm(iv)} If
$e\in E[0,k]$, then $e=\eta\sb{e}k$. {\rm(v)} $k$ is an $\eta$-monad.
\end{lemma}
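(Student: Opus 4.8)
The plan is to establish the cycle (i) $\Rightarrow$ (ii) $\Rightarrow$ (iii) $\Rightarrow$ (iv) $\Rightarrow$ (v) $\Rightarrow$ (i), which is the most economical route since each implication is short once we translate ``unrelated'' into a statement about exocentral covers. The key translation tool is Theorem \ref{th:UnrelatedConditions}: for $e,f\in E$, $e$ is unrelated to $f$ iff $\eta\sb{e}\wedge\eta\sb{f}=0$ iff $\eta\sb{e}f=0$. A second recurring fact is \cite[Theorem 7.4]{HDTD}: if $e\oplus f=g$ then $\eta\sb{g}=\eta\sb{e}\vee\eta\sb{f}$, and if $e\leq f$ then $\eta\sb{e}\leq\eta\sb{f}$.

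First I would prove (i) $\Rightarrow$ (ii): if $k=k\sb{1}\oplus k\sb{2}$, then $k\sb{2}=k\ominus k\sb{1}=k\sb{1}\sp{\perp\sb{k}}$ in the EA $E[0,k]$, so by simplicity $k\sb{1}$ is unrelated to $k\sb{2}$, which by Theorem \ref{th:UnrelatedConditions} means $\eta\sb{k\sb{1}}\wedge\eta\sb{k\sb{2}}=0$. Next, (ii) $\Rightarrow$ (iii): given $e,f\in E[0,k]$ with $e\perp\sb{k}f$, put $g:=e\oplus f\leq k$ and $h:=k\ominus g$, so $k=e\oplus f\oplus h=(e)\oplus(f\oplus h)$. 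Applying (ii) to the decomposition $k=e\oplus(f\oplus h)$ gives $\eta\sb{e}\wedge\eta\sb{f\oplus h}=0$; since $\eta\sb{f}\leq\eta\sb{f\oplus h}$, we get $\eta\sb{e}\wedge\eta\sb{f}=0$. For (iii) $\Rightarrow$ (iv): let $e\in E[0,k]$ and set $f:=k\ominus e$, so $e\perp\sb{k}f$ and hence $\eta\sb{e}\wedge\eta\sb{f}=0$, i.e. $\eta\sb{e}\leq(\eta\sb{f})'$. Then $e=\eta\sb{e}e\leq\eta\sb{e}k=\eta\sb{e}(e\oplus f)=\eta\sb{e}e\oplus\eta\sb{e}f=e\oplus\eta\sb{e}f$, forcing $\eta\sb{e}f=0$ by cancellation; thus $\eta\sb{e}k=e\oplus 0=e$.

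For (iv) $\Rightarrow$ (v): suppose $e\in E[0,k]$ with $e\sim\sb{\eta}k$, i.e. $\eta\sb{e}=\eta\sb{k}$. Then by (iv), $e=\eta\sb{e}k=\eta\sb{k}k=k$, so $k$ is an $\eta$-monad (Definition \ref{df:etasim} (2)). Finally, (v) $\Rightarrow$ (i): assume $k$ is an $\eta$-monad, let $e\in E[0,k]$ and $f:=k\ominus e=e\sp{\perp\sb{k}}$; I must show $e$ is unrelated to $f$, equivalently $\eta\sb{e}\wedge\eta\sb{f}=0$. Put $\pi:=\eta\sb{e}\wedge\eta\sb{f}$ and $e\sb{1}:=\pi e$, $f\sb{1}:=\pi f$. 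By Lemma \ref{lm:etaSK4a} (i) (with the roles played by the meet $\eta\sb{e}\wedge\eta\sb{f}$), $e\sb{1}\sim\sb{\eta}f\sb{1}$, and $e\sb{1}\oplus f\sb{1}=\pi(e\oplus f)=\pi k$. Now $\eta\sb{e\sb{1}}=\eta\sb{f\sb{1}}=\pi$, and $e\sb{1}\leq\pi k\leq k$ with $e\sb{1}\oplus(f\sb{1}\oplus((\,\pi)'k))=k$; the plan is to show $\pi k$ is dominated by $k$ and equivalent to some subelement via the monad hypothesis. More precisely, $e\sb{1}\leq \pi k$ and $\eta\sb{e\sb{1}}=\pi=\eta\sb{\pi k}$ (using $\eta\sb{\pi k}=\eta\sb{e\sb{1}}\vee\eta\sb{f\sb{1}}=\pi$), so $e\sb{1}\sim\sb{\eta}\pi k$ with $e\sb{1}\in E[0,\pi k]$; but $\pi k$ is an $\eta$-monad (it is a direct-summand image of the $\eta$-monad $k$ — this follows as in Remarks \ref{rm:factor}/Theorem \ref{th:piEasDGEA} since $\pi\in\Theta\sb{\eta}(E)$ and monads pass to summands), whence $e\sb{1}=\pi k$. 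Symmetrically $f\sb{1}=\pi k$, so $e\sb{1}=f\sb{1}$; but $e\sb{1}\leq e$ and $f\sb{1}\leq f$ with $e\perp f$, so $e\sb{1}=f\sb{1}\leq e\wedge f=0$ (as $e,f$ are orthogonal subelements of $k$, hence disjoint if $k$ is sharp — and an $\eta$-monad is sharp), giving $e\sb{1}=0$, i.e. $\pi e=0$, hence $\pi\wedge\eta\sb{e}=0$ by Theorem \ref{th:SigmasimProps} (v); but $\pi\leq\eta\sb{e}$, so $\pi=0$.

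The main obstacle is the last implication (v) $\Rightarrow$ (i): making rigorous that $\pi k$ inherits the $\eta$-monad property from $k$ when $\pi\in\Theta\sb{\eta}(E)$, and that orthogonal subelements of an $\eta$-monad are disjoint (sharpness of monads). Both should follow from the cited machinery — Theorem \ref{th:piEasDGEA} for passing to the direct summand $\pi(E)$, and \cite[Theorem 10.9]{HDTD} for the equivalence of the monad definitions, which in particular yields sharpness — but care is needed to invoke them correctly. If this proves awkward, an alternative is to route (v) $\Rightarrow$ (i) through a direct argument: given orthogonal $e,f\le k$ with $\eta\sb{e}\wedge\eta\sb{f}=:\pi\ne 0$, the element $\pi k$ decomposes as $\pi k=\pi e\oplus\pi f\oplus\pi(k\ominus(e\oplus f))$ with $\pi e\sim\sb{\eta}\pi f$ both having $\eta$-cover exactly $\pi$, so by divisibility-type reasoning one manufactures a proper subelement of $\pi k$ with full cover $\pi$, contradicting that $\pi k$ is an $\eta$-monad.
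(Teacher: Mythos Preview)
Your implications (i)$\Rightarrow$(ii)$\Rightarrow$(iii)$\Rightarrow$(iv)$\Rightarrow$(v) are all correct and essentially match the paper's reasoning. The trouble is entirely in (v)$\Rightarrow$(i).

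First, the step ``$e\sb{1}=f\sb{1}\leq e\wedge f=0$ (as $e,f$ are orthogonal subelements of $k$, hence disjoint if $k$ is sharp)'' is wrong: in a general GEA (even an EA), orthogonality does \emph{not} imply disjointness, regardless of sharpness of $k$. For instance, in the three-chain EA $\{0,1,2\}$ with $1\oplus 1=2$, the unit $2$ is sharp, $e=f=1$ are $2$-orthogonal, yet $e\wedge f=1\neq 0$. Fortunately your argument is easily repaired without disjointness: you have already shown $e\sb{1}=f\sb{1}=\pi k$ and $e\sb{1}\oplus f\sb{1}=\pi k$, so $\pi k\oplus\pi k=\pi k$, and cancellation gives $\pi k=0$, hence $\pi=\eta\sb{\pi k}=0$.

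Second, the claim that $\pi k$ is an $\eta$-monad is not adequately supported by the references you cite (Remarks~\ref{rm:factor} and Theorem~\ref{th:piEasDGEA} are about factors and about the DGEA structure of $\pi(E)$, not about monads in $E$). What you actually need is that the $\eta$-monads form an order ideal, so that $\pi k\leq k$ inherits the property; this is exactly what one extracts from the equivalence (iv)$\Leftrightarrow$(v) in \cite[Theorem~10.9]{HDTD}. But once you are invoking that theorem, the paper's route becomes available and is shorter.

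The paper avoids your cycle altogether: it proves (i)$\Leftrightarrow$(ii) directly from Theorem~\ref{th:UnrelatedConditions} (both directions, since ``simple'' is \emph{defined} in terms of unrelatedness), (ii)$\Leftrightarrow$(iii) by the same monotonicity-of-$\eta$ argument you use, and then defers (iii)$\Leftrightarrow$(iv)$\Leftrightarrow$(v) entirely to \cite[Theorem~10.9]{HDTD}. This sidesteps the delicate (v)$\Rightarrow$(i) step completely. Your approach would work after the two fixes above, but it is longer and ultimately leans on the same external result.
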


\begin{proof}
That (i) $\Leftrightarrow$ (ii) follows from Theorem \ref{th:UnrelatedConditions}.
If (ii) holds, and $e,f\in E[0,k]$ with $e\perp\sb{k}f$, then $f\leq k\ominus e$,
whence $\eta\sb{e}\wedge\eta\sb{f}\leq\eta\sb{e}\wedge\eta\sb{(k\ominus e)}=0$,
so (ii) $\Rightarrow$ (iii). Obviously, (iii) $\Rightarrow$ (ii), so (ii)
$\Leftrightarrow$ (iii). That (iii) $\Leftrightarrow$ (iv) $\Leftrightarrow$ (v)
follows from \cite[Theorem 10.9]{HDTD}.
\end{proof}

\begin{lemma} \label{lm:simeta=sim}
Let $q,k\in K$. Then{\rm:} {\rm(i)} $\eta\sb{q}\leq\eta\sb{k}
\Leftrightarrow q\subsim k$. {\rm(ii)} $q\sim\sb{\eta}k\Leftrightarrow
q\sim k$.
\end{lemma}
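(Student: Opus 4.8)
The plan is to prove Lemma \ref{lm:simeta=sim} by leveraging the characterization of simple elements as $\eta$-monads (Lemma \ref{lm:simp=mon} (v)) together with the general comparability machinery for $\subsim$ already established. For part (i), the forward implication $q\subsim k\Rightarrow\eta\sb{q}\leq\eta\sb{k}$ is immediate from Theorem \ref{th:pi&sim} (i), so no simplicity hypothesis is needed there. The real content is the converse: assuming $\eta\sb{q}\leq\eta\sb{k}$, I would invoke Theorem \ref{th:pi&sim} (ii) to obtain an element $k\sb{1}\in E$ with $q\etasim k\sb{1}\leq k$; the natural choice is $k\sb{1}:=\eta\sb{q}k$. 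Since $k$ is simple, Lemma \ref{lm:simp=mon} (iv) gives that every subelement $e$ of $k$ satisfies $e=\eta\sb{e}k$, so in particular $k\sb{1}=\eta\sb{k\sb{1}}k=\eta\sb{q}k$ is forced, and $\eta\sb{k\sb{1}}=\eta\sb{q}$. To conclude $q\subsim k$ we need $q\sim k\sb{1}$, not merely $q\etasim k\sb{1}$; this is where part (ii) feeds back into part (i), so I would prove (ii) first (or at least the direction $q\etasim k\Rightarrow q\sim k$) and then apply it to the pair $q,k\sb{1}$, both of which are simple (a subelement of a simple element is simple — this is essentially immediate from the definition, since intervals nest). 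Then $q\sim k\sb{1}\leq k$ yields $q\subsim k$.

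For part (ii), the implication $q\sim k\Rightarrow q\etasim k$ holds with no hypotheses at all, by Theorem \ref{th:pi&sim} (iii). So the substantive direction is $q\etasim k\Rightarrow q\sim k$ for simple $q,k$. Here I would argue as follows: by the General Comparability Theorem (Theorem \ref{th:Compar}) there is $d\in E$ with $\eta\sb{d}q\subsim\eta\sb{d}k$ and $(\eta\sb{d})'k\subsim(\eta\sb{d})'q$. Applying the (already proved) converse direction of part (i) is circular, so instead I would use divisibility/the Cantor--Schröder--Bernstein property (Theorem \ref{th:preorder}) once I have $q\subsim k$ and $k\subsim q$. To get $q\subsim k$: from $q\etasim k$ we have $\eta\sb{q}=\eta\sb{k}$, hence $\eta\sb{q}\wedge\eta\sb{k}=\eta\sb{k}\neq 0$ (assuming $q,k\neq0$; the zero case is trivial by SK1). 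Put $q\sb{1}:=\eta\sb{k}q$ and $k\sb{1}:=\eta\sb{q}k$; by Lemma \ref{lm:etaSK4a} (i), $q\geq q\sb{1}\etasim k\sb{1}\leq k$. Since $q$ and $k$ are simple (hence $\eta$-monads), and $\eta\sb{q\sb{1}}=\eta\sb{q}\wedge\eta\sb{k}=\eta\sb{q}$, Lemma \ref{lm:simp=mon} (iv) applied inside $E[0,q]$ forces $q\sb{1}=\eta\sb{q\sb{1}}q=\eta\sb{q}q=q$; similarly $k\sb{1}=k$. Thus $q=q\sb{1}\etasim k\sb{1}=k$ collapses the problem: we already knew that. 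The point is that $q\sb{1}=q$, so $q\etasim k$ with $q$ a subelement of $k$-via-equivalence is not yet $q\sim k$; I need the equivalence relation $\sim$, not $\etasim$, to descend.

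Reconsidering, the cleanest route is: use Lemma \ref{lm:simp=mon} (iv) to write both $q$ and $k$ in "canonical coordinate form," then exploit that on the hereditary direct summand $\eta\sb{q}(E)=\eta\sb{k}(E)$ (they coincide since $\eta\sb{q}=\eta\sb{k}$), both $q$ and $k$ are faithful simple elements. By Theorem \ref{th:faithinpi} (or directly Lemma \ref{lm:hsharp}/Theorem \ref{th:hstar} applied to the hereditary $\eta$STD set $K$ of simple elements — note $K$ is an order ideal since subelements of simple elements are simple, and $K=[K]\sb{\eta}$ by Lemma \ref{lm:simp=mon} combined with the additivity of $\eta$ over $\eta$-orthogonal families, so $K$ is hereditary $\eta$STD), a faithful simple element of $\pi(E)$ is essentially unique up to $\sim$; concretely, any two faithful elements of a hereditary $\eta$STD set lying in the same $\pi(E)$ are $\sim$-equivalent. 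The main obstacle is precisely this last uniqueness claim: bridging the gap between "$\eta\sb{q}=\eta\sb{k}$" and "$q\sim k$" requires showing that the $\eta$-monad of a given exocentral value is $\sim$-unique, which I expect to extract by applying the comparability theorem to $q$ and $k$ inside $\eta\sb{q}(E)=\eta\sb{k}(E)$: comparability gives $d$ splitting this summand with $\eta\sb{d}q\subsim\eta\sb{d}k$ and $(\eta\sb{d})'k\subsim(\eta\sb{d})'q$; but on each piece one side is faithful-simple and the other is a simple element whose $\eta$-value is the whole piece, so each $\subsim$ is between a simple element and (a copy of) itself, forcing $\eta\sb{d}q\sim\eta\sb{d}k$ on the strength of $\eta$-values being equal and Lemma \ref{lm:simp=mon} (iv) — and then SK2 reassembles $q\sim k$. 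Once $q\subsim k$ and symmetrically $k\subsim q$ are in hand, Theorem \ref{th:preorder} delivers $q\sim k$, and part (i)'s converse then follows by applying part (ii) to $q$ and $k\sb{1}=\eta\sb{q}k$ as above.
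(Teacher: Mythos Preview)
Your argument is ultimately salvageable, but it is overcomplicated, and the paper's proof is considerably more direct. The paper proves (i) first, not (ii), and it does so without any detour through comparability or any need to ``upgrade'' $\subsim$ to $\sim$. The key tool is Theorem~\ref{th:decompospq} itself (not its corollary, the General Comparability Theorem~\ref{th:Compar}): given $\eta\sb{q}\leq\eta\sb{k}$, decompose $q=q\sb{1}\oplus q\sb{2}$ and $k=k\sb{1}\oplus k\sb{2}$ with $q\sb{1}\sim k\sb{1}$ (a genuine $\sim$, already in hand) and $q\sb{2}$ unrelated to $k\sb{2}$. Then $\eta\sb{q\sb{1}}=\eta\sb{k\sb{1}}$ and $\eta\sb{q\sb{2}}\wedge\eta\sb{k\sb{2}}=0$ by Theorems~\ref{th:pi&sim}~(iii) and~\ref{th:UnrelatedConditions}; simplicity (Lemma~\ref{lm:simp=mon}) gives $\eta\sb{q\sb{1}}\wedge\eta\sb{q\sb{2}}=0=\eta\sb{k\sb{1}}\wedge\eta\sb{k\sb{2}}$; and $\eta\sb{q\sb{1}}\vee\eta\sb{q\sb{2}}=\eta\sb{q}\leq\eta\sb{k}=\eta\sb{k\sb{1}}\vee\eta\sb{k\sb{2}}$. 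A one-line boolean computation in $\GEX(E)$ now forces $\eta\sb{q\sb{2}}=0$, hence $q\sb{2}=0$, hence $q=q\sb{1}\sim k\sb{1}\leq k$, i.e.\ $q\subsim k$. Part (ii) then drops out of (i) together with Cantor--Schr\"oder--Bernstein (Theorem~\ref{th:preorder}).

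Your route---prove (ii) first via comparability, then on each piece use the $\eta$-monad property to force $k'=\eta\sb{d}k$ from $k'\leq\eta\sb{d}k$ and $\eta\sb{k'}=\eta\sb{\eta\sb{d}k}$, then reassemble with SK2---does work once the exposition is cleaned up (the crucial step you gesture at with ``Lemma~\ref{lm:simp=mon}~(iv)'' is really Definition~\ref{df:etasim}~(2), the $\eta$-monad condition, applied to the simple subelement $\eta\sb{d}k$). But it inserts an extra layer: comparability hands you only $\subsim$, so you must invoke the monad trick on each piece and then glue. The paper sidesteps all of this by going back to the underlying decomposition theorem, which already yields $\sim$ rather than $\subsim$. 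Note also that your final appeal to Cantor--Schr\"oder--Bernstein for (ii) is redundant, since your own SK2 step already concludes $q\sim k$ outright; and your derivation of (i) from (ii) via $k\sb{1}:=\eta\sb{q}k$ is fine but unnecessary once (i) is proved directly.
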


\begin{proof}
Assume $q,k\in K$. (i) Suppose $\eta\sb{q}\leq\eta\sb{k}$. By Theorem
\ref{th:decompospq}, Theorem \ref{th:pi&sim} (iii), and Theorem \ref
{th:UnrelatedConditions} $q=q\sb{1}\oplus q\sb{2}$ and $k=k\sb{1}\oplus
k\sb{2}$ with $q\sb{1}\sim k\sb{1}$, $\eta\sb{q\sb{1}}=\eta\sb{k\sb{1}}$
and $\eta\sb{q\sb{2}}\wedge\eta\sb{k\sb{2}}=0$. Also, by Lemma
\ref{lm:simp=mon}, $\eta\sb{q\sb{1}}\wedge\eta\sb{q\sb{2}}=0$ and
$\eta\sb{k\sb{1}}\wedge\eta\sb{k\sb{2}}=0$. Moreover, \cite[Theorem 7.4
(ix)]{HDTD} implies that $\eta\sb{q\sb{1}}\vee\eta\sb{q\sb{2}}=\eta
\sb{q}\leq\eta\sb{k}=\eta\sb{k\sb{1}}\vee\eta\sb{k\sb{2}}$. Working in
the boolean algebra $\GEX(E)$, we deduce from these conditions that
$\eta\sb{q\sb{2}}=0$, whence $q\sb{2}=0$, so $q=q\sb{1}\sim k\sb{1}
\leq k$, and we have $q\subsim k$. Conversely, $q\subsim k\Rightarrow
\eta\sb{q}\leq\eta\sb{k}$ by Theorem \ref{th:pi&sim} (i), completing
the proof of (i). Finally, (ii) follows from (i) and Theorem \ref{th:preorder}.
\end{proof}

\begin{lemma} \label{lm:simpisfinite}
Every simple element {\rm(}$\eta$-monad{\rm)} in $E$ is finite, i.e.,
$K\subseteq F$; hence $\eta\sb{K}\leq\eta\sb{F}$.
\end{lemma}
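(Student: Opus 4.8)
The plan is to show that any simple element $k$ is finite by using the monad characterization from Lemma~\ref{lm:simp=mon}. Recall that $k$ is simple iff $k$ is an $\eta$-monad (Lemma~\ref{lm:simp=mon} (v)), i.e., whenever $e\in E[0,k]$ and $e\etasim k$, then $e=k$. So the task is to lift the hypothesis $e\sim k$ (for $e\leq k$) to the conclusion $e\etasim k$, which then forces $e=k$ by the monad property. The key fact is Theorem~\ref{th:pi&sim} (iii): if $e\sim k$, then $e\etasim k$. Thus the argument is almost immediate.

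First I would let $k\in K$ and let $e\in E[0,k]$ with $e\sim k$. By Theorem~\ref{th:pi&sim} (iii), $e\sim k$ implies $e\etasim k$, i.e., $\eta\sb{e}=\eta\sb{k}$. Since $k$ is simple, Lemma~\ref{lm:simp=mon} tells us $k$ is an $\eta$-monad; hence, as $e\in E[0,k]$ and $e\etasim k$, Definition~\ref{df:etasim} (2) gives $e=k$. This shows $k$ is finite, so $K\subseteq F$.

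For the last clause, $\eta\sb{K}\leq\eta\sb{F}$ follows directly from $K\subseteq F$ and the definitions $\eta\sb{K}=\bigvee\sb{k\in K}\eta\sb{k}$ and $\eta\sb{F}=\bigvee\sb{f\in F}\eta\sb{f}$: the supremum over the smaller index set $K$ is dominated by the supremum over the larger index set $F$ in the boolean algebra $\GEX(E)$.

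I do not anticipate any genuine obstacle here; the content is essentially a one-line deduction from Theorem~\ref{th:pi&sim} (iii) together with the equivalence ``simple $\Leftrightarrow$ $\eta$-monad'' established in Lemma~\ref{lm:simp=mon}. The only thing to be careful about is that the hypothesis in the definition of ``finite'' requires $e$ to be a subelement of $f$, i.e., $e\in E[0,f]$, which matches exactly the domain condition $e\in E[0,k]$ appearing in the definition of $\eta$-monad, so no extra work is needed to reconcile the two.
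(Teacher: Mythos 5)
Your proof is correct and follows essentially the same route as the paper's: both deduce $\eta\sb{e}=\eta\sb{k}$ from $e\sim k$ via Theorem \ref{th:pi&sim} (iii) and then conclude $e=k$ from Lemma \ref{lm:simp=mon}. The only (immaterial) difference is that you invoke the $\eta$-monad characterization (v) directly, whereas the paper uses the equivalent form (iv), writing $e=\eta\sb{e}k=\eta\sb{k}k=k$.
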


\begin{proof}
Suppose that $k\in K$ and $e\in E[0,k]$ with $e\sim k$. By Lemma
\ref{lm:simp=mon}, $e=\eta\sb{e}k$, and by Theorem \ref{th:pi&sim} (iii),
$\eta\sb{e}=\eta\sb{k}$, whence $e=\eta\sb{k}k=k$, so $k\in F$.
\end{proof}

\begin{lemma} \label{lm:subtraction}
{\rm(i)} $F$ is an order ideal. {\rm(ii) (Subtraction Property)} Let
$e=e\sb{1}\oplus e\sb{2}$ and $f=f\sb{1}\oplus f\sb{2}$ where $e,f\in F$,
$e\sim f$ and $e\sb{1}\sim f\sb{1}$. Then $e\sb{2}\sim f\sb{2}$.
\end{lemma}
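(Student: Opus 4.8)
The plan is to prove the two parts of Lemma \ref{lm:subtraction} separately, leaning on the Cantor--Schr\"{o}der--Bernstein property for $\subsim$ (Theorem \ref{th:preorder}) and the cancellation-type result Lemma \ref{lm:simcancel}.

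For part (i), I would show that $F$ is an order ideal by a direct argument. Let $f \in F$ and $e \leq f$; I must show $e \in F$. So suppose $e' \leq e$ with $e' \sim e$. Writing $f = e \oplus (f \ominus e)$ and $e = e' \oplus (e \ominus e')$, SK2 (finite additivity) gives $f = e' \oplus (e \ominus e') \oplus (f \ominus e) \sim e \oplus (f \ominus e) = f$ via $e' \sim e$ and reflexivity of $\sim$ on the other summands. Thus $f \sim (e' \oplus (e \ominus e')) \oplus (f \ominus e) = f$, and since $f$ is finite and $e' \oplus (f \ominus e) \leq f$ with $e' \oplus (f \ominus e) \sim \bigl(e' \oplus (e\ominus e') \oplus (f\ominus e)\bigr) \ominus (e \ominus e')$... actually the cleaner route is: from $e' \sim e$ and SK2 we get $f = e \oplus (f\ominus e) \sim e' \oplus (f \ominus e) =: g \leq f$, so $g \sim f$ with $g \leq f$, and finiteness of $f$ forces $g = f$, hence $e' \oplus (f\ominus e) = e \oplus (f \ominus e)$, so $e' = e$ by cancellation. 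Therefore $e$ is finite.

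For part (ii), the Subtraction Property, I would use CSB. We have $f_2 = f \ominus f_1$ and $e_2 = e \ominus e_1$. First, $f_2 \subsim e_2$: indeed, $e \sim f = f_1 \oplus f_2$, so by divisibility (SK3d) there exist $e_1', e_2'$ with $e = e_1' \oplus e_2'$, $e_1' \sim f_1$, $e_2' \sim f_2$; then $e_1' \sim f_1 \sim e_1$, and $e_1, e_1' \leq e$ with... hmm, this still needs care. The symmetric plan: from $e \sim f$ and SK3d applied to $f = f_1 \oplus f_2$, get $e = a \oplus b$ with $a \sim f_1$, $b \sim f_2$; then $a \sim f_1 \sim e_1$, so $a$ and $e_1$ are both ``$\sim$-copies'' of $e_1$ sitting inside $e$, and by divisibility applied to $e_1 \sim a = a_1 \oplus a_2$ (splitting $a$ suitably) one manufactures the inequalities $b \subsim e_2$ and symmetrically $e_2 \subsim b \sim f_2$, giving $e_2 \subsim f_2$; the reverse $f_2 \subsim e_2$ comes by running the same argument with the roles of $e$ and $f$ swapped (using $e \sim f$ symmetric and $e_1 \sim f_1$). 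Then by the CSB property in Theorem \ref{th:preorder}, $e_2 \subsim f_2 \subsim e_2$ yields $e_2 \sim f_2$.

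The main obstacle will be part (ii): showing $e_2 \subsim f_2$ cleanly. The subtlety is that divisibility gives me \emph{some} decomposition $e = a \oplus b$ with $a \sim f_1$, $b \sim f_2$, but $a$ need not equal $e_1$. To bridge $a$ and $e_1$ I expect to need the finiteness of $e$ (or $f$) together with Lemma \ref{lm:simcancel}: from $a \sim e_1$ and the fact that both are subelements of $e$ with complementary parts summing to $e$, one argues as in the proof of Theorem \ref{th:preorder} that the complements are sub-equivalent, and finiteness upgrades a potential strict sub-equivalence to genuine equivalence. I would model this step closely on the extraction from \cite[Propositions 1 and 2]{Je02} already used in Theorem \ref{th:preorder}, so the machinery is available; the work is just in assembling it in the right order and invoking Lemma \ref{lm:simcancel} at the point where a ``$e \sim e \oplus f \oplus d$'' pattern appears.
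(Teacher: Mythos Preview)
Your argument for part (i) is correct and matches the paper's proof exactly (the paper writes your $e'$ as $e\sb{1}$ and your $e\ominus e'$ as $e\sb{2}$, but the computation is the same: $e\sb{1}\oplus(f\ominus e)\sim e\oplus(f\ominus e)=f$, finiteness of $f$ forces equality, and cancellation finishes).

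For part (ii), however, your plan has a genuine gap, and the paper takes a different route. You reduce correctly to the situation $e=e\sb{1}\oplus e\sb{2}=a\oplus b$ with $e\sb{1}\sim a$ and $e$ finite, and you need $e\sb{2}\sim b$ (or at least $e\sb{2}\subsim b$). But the step ``by divisibility applied to $e\sb{1}\sim a=a\sb{1}\oplus a\sb{2}$ \dots\ one manufactures the inequalities $b\subsim e\sb{2}$'' is not justified: there is no natural splitting of $a$ that produces $b\subsim e\sb{2}$, and neither SK3d, SK3e, nor Lemma~\ref{lm:simcancel} gives you a relation between the \emph{complements} $b$ and $e\sb{2}$ from the relation $a\sim e\sb{1}$ alone. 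The CSB machinery of Theorem~\ref{th:preorder} lets you pass from mutual sub-equivalence to equivalence, but it does not help you \emph{establish} either direction of sub-equivalence here. Iterating SK3e just reproduces the same problem at smaller scale.

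The paper's proof instead uses the General Comparability Theorem (Theorem~\ref{th:Compar}), which is available under the Standing Assumptions~\ref{as:DGEA}: pick $\pi\in\Sigma\sb{\sim}(E)$ with $\pi e\sb{2}\subsim\pi f\sb{2}$ and $\pi\,'f\sb{2}\subsim\pi\,'e\sb{2}$. On the $\pi$-side, write $\pi f\sb{2}=s\oplus t$ with $\pi e\sb{2}\sim s$; then $\pi f\sb{1}\oplus s\sim\pi e\sb{1}\oplus\pi e\sb{2}=\pi e\sim\pi f$, and finiteness of $\pi f$ (from part (i)) forces $t=0$, so $\pi e\sb{2}\sim\pi f\sb{2}$. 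The symmetric argument on the $\pi\,'$-side gives $\pi\,'e\sb{2}\sim\pi\,'f\sb{2}$, and SK2 reassembles $e\sb{2}\sim f\sb{2}$. Comparability is exactly the missing ingredient that lets you compare the complementary pieces; you should invoke Theorem~\ref{th:Compar} rather than try to synthesize sub-equivalence from divisibility alone.
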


\begin{proof}
(i) Let $e\in E$, $e\leq f\in F$, $e=e\sb{1}\oplus e\sb{2}$ with
$e\sim e\sb{1}$. Then by finite additivity, $f\ominus e\sb{2}=
e\sb{1}\oplus(f\ominus e)\sim e\oplus(f\ominus e)=f$; hence
$f\ominus e\sb{2}=f$, therefore $e\sb{2}=0$, and we have $e\sb{1}
=e$. Consequently, $e\in F$, and (i) is proved.

(ii) By Theorem \ref{th:Compar}, there exists $\pi\in\Sigma\sb{\sim}(E)$ with
$\pi e\sb{2}\subsim\pi f\sb{2}$ and $\pi\,'f\sb{2}\subsim\pi\,'e\sb{2}$. As
$\pi e\sb{2}\subsim\pi f\sb{2}$, there exist $s, t\in E$ with $\pi e\sb{2}
\sim s$ and $\pi f\sb{2}=s\oplus t$. As $e\sim f$ and $e\sb{1}\sim f\sb{1}$,
we have $\pi e\sim\pi f$ and $\pi e\sb{1}\sim\pi f\sb{1}$ by Theorem \ref
{th:simCoordinatewise}. Now $\pi f =\pi f\sb{1}\oplus\pi f\sb{2}=\pi f\sb{1}
\oplus s\oplus t$ and by finite additivity $\pi f\sb{1}\oplus s\sim\pi e\sb{1}
\oplus\pi e\sb{2}=\pi e\sim\pi f$. Since $f\in F$ and $\pi f\leq f$, part
(i) implies that $\pi f\in F$; hence $\pi f\sb{1}\oplus s=\pi f$, so $t=0$,
and therefore $\pi f\sb{2}=s\sim\pi e\sb{2}$.  Similarly, $\pi\,'f\sb{2}
=\pi\,'e\sb{2}$, so $e\sb{2}\sim f\sb{2}$ by additivity.
\end{proof}

\begin{theorem} \label{th:KFheretaSTD}
{\rm(i)} $K$ is both a hereditary order ideal and an $\eta$STD set in
 $E$. {\rm(ii)} $F$ is both a hereditary ideal and an $\eta$STD set in
 $E$. {\rm(iii)} ${\widetilde F}$ is an $\eta$TD subset of $E$.
\end{theorem}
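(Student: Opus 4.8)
The three parts assert that $K$, $F$, and $\widetilde F$ have the structural closure properties packaged in Definitions \ref{df:etaTD} and the notions of ``hereditary'' and ``order ideal.'' The plan is to verify, for each set, the defining conditions in the order: (a) order-ideal-hood or hereditariness, (b) the absorption condition $T^{\eta}\subseteq T$, i.e. $\eta_e t\in T$ whenever $t\in T$, and (c) the orthosum condition $[T]_\eta\subseteq T$, i.e. the orthosum of any $\eta$-orthogonal family in $T$ lies in $T$. For $K$: Lemma \ref{lm:simp=mon} identifies simple elements with $\eta$-monads, so heredity of $K$ follows because a subelement of an $\eta$-monad is again an $\eta$-monad (this is essentially \cite[Theorem 10.9]{HDTD} or a direct check using Lemma \ref{lm:simp=mon}(iv): if $e\le k$ and $d\le e$ then $d=\eta_d k$, and one checks $d$ is an $\eta$-monad). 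That $K$ is an order ideal is then immediate. For the $\eta$STD condition I must show $K=[K]_\eta$: if $(k_i)_{i\in I}$ is an $\eta$-orthogonal family of simple elements with orthosum $k=\bigvee k_i$, then $k$ is an $\eta$-monad — this should follow from \cite[Theorem 10.9]{HDTD} or the fact, recorded after Definition \ref{df:etaTD}-adjacent material, that the class of $\eta$-monads is closed under $\eta$-orthogonal orthosums; combined with Lemma \ref{lm:simp=mon} this gives $k\in K$. Hence $K$ is $\eta$STD, and since $\eta$STD $\Rightarrow$ $\eta$TD, also $K\subseteq\eta_{h^\ast}(E)$-type statements follow but are not needed here.

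For $F$ (part (ii)): that $F$ is an order ideal is Lemma \ref{lm:subtraction}(i). To see $F$ is an ideal I must check closure under $\oplus$: if $e,f\in F$ and $e\perp f$, is $e\oplus f\in F$? Suppose $d\le e\oplus f$ with $d\sim e\oplus f$. By SK3d (divisibility) write $d=d_1\oplus d_2$ with $d_1\sim e$, $d_2\sim f$; but $d_1\subsim e$ and $d_2\subsim f$, and since $e,f$ are finite, combined with the Cantor–Schröder–Bernstein property of Theorem \ref{th:preorder} applied suitably (or directly via the Subtraction Property Lemma \ref{lm:subtraction}(ii)), one forces $d_1=e$, $d_2=f$ — the cleanest route is: $e\oplus f=d=d_1\oplus d_2\le e\oplus f$ with $d_1\subsim e\le e$, so by Lemma \ref{lm:subtraction}(ii) applied to $e\oplus f\sim d_1\oplus d_2$ with $e\sim d_1$ one gets $f\sim d_2$, then finiteness of $f$ and $d_2\le f$... actually the right statement is $d\le e\oplus f$ and we want $d=e\oplus f$; work inside $F$ using that $F$ is an order ideal so $d_1,d_2,d\in F$, apply the subtraction property to conclude. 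Then the $\eta$STD property: $F$ is an order ideal (already shown) and $F=[F]_\eta$, so I must show an orthosum $f=\bigvee f_i$ of an $\eta$-orthogonal family of finite elements is finite. If $e\le f$ with $e\sim f$, decompose $e=\bigvee \eta_{f_i}e$ (using that the $\eta_{f_i}$ are pairwise disjoint with supremum $\eta_f\ge\eta_e$, so $e=\eta_e e=\bigvee\eta_{f_i}e$), note $\eta_{f_i}e\le\eta_{f_i}f=f_i$, and by Theorem \ref{th:simCoordinatewise}/coordinatewise behaviour of $\sim$ along the $\eta_{f_i}$'s deduce $\eta_{f_i}e\sim f_i$; finiteness of each $f_i$ gives $\eta_{f_i}e=f_i$, hence $e=\bigvee f_i=f$. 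So $F$ is $\eta$STD.

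For $\widetilde F$ (part (iii)): $\widetilde F=F\cap\Gamma_\eta(E)$ is to be shown $\eta$TD, i.e. $\widetilde F=[\widetilde F]_\eta=\widetilde F^{\,\eta}$. The absorption condition $\widetilde F^{\,\eta}\subseteq\widetilde F$: for $c\in\widetilde F$ and $e\in E$, $\eta_e c$ lies in $F$ since $\eta_e c\le c\in F$ and $F$ is an order ideal, and $\eta_e c\in\Gamma_\eta(E)$ because $\Gamma_\eta(E)$ is closed under $e\mapsto\eta_e c$ — indeed $\eta_e c=(\eta_e\wedge\eta_c)$ applied to $c$, and one checks $\eta_{\eta_e c}(E)=E[0,\eta_e c]$ using that $c$ is $\eta$-invariant (this is the content of \cite[Theorem 7.3]{HDTD} on $\Gamma_\eta(E)$, or Theorem \ref{th:COGEAetainv}). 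The orthosum condition $[\widetilde F]_\eta\subseteq\widetilde F$: given an $\eta$-orthogonal family $(c_i)$ in $\widetilde F$, first the family is orthosummable (every $\eta$-orthogonal family is, by the material after the exocentral cover system) with orthosum $c=\bigvee c_i$; then $c\in\Gamma_\eta(E)$ by Theorem \ref{th:COGEAetainv}(iii) (the $c_i$ are bounded above by $c$, so the supremum is $\eta$-invariant), and $c\in F$ by the already-established fact that $F$ is $\eta$STD hence closed under $\eta$-orthogonal orthosums. Therefore $c\in\widetilde F$. Note $\widetilde F$ need not be an order ideal — an arbitrary subelement of a finite invariant element need not be invariant — which is exactly why (iii) only claims $\eta$TD and not $\eta$STD.

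**Main obstacle.** The routine-looking verifications mostly reduce to boolean-algebra manipulations in $\GEX(E)$ together with the coordinatewise behaviour of $\sim$ along exocentral projections (Theorem \ref{th:simCoordinatewise}) and the subtraction property (Lemma \ref{lm:subtraction}(ii)). The step I expect to require the most care is showing $F$ is closed under $\oplus$ (hence an ideal, not merely an order ideal): one must combine divisibility (SK3d) to split a competitor $d\le e\oplus f$ with the subtraction property and finiteness of both summands, and get the bookkeeping exactly right so that $d=e\oplus f$ rather than merely $d\sim e\oplus f$. The closely related point — that an $\eta$-orthogonal orthosum of finite elements is finite — hinges on pushing a subelement $e\le f=\bigvee f_i$ through the coordinate projections $\eta_{f_i}$ and invoking Theorem \ref{th:simCoordinatewise} coordinatewise; making sure $e=\bigvee_i\eta_{f_i}e$ (which uses $\eta_e\le\eta_f=\bigvee\eta_{f_i}$ and disjointness) is the delicate part.
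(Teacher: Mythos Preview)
Your arguments for the $\eta$STD parts of (i) and (ii) and for (iii) are essentially correct and match the paper's approach (the paper cites \cite[Theorem 11.13]{HDTD} for $K$ being $\eta$STD, \cite[Prop.\ 7.18]{JePu02} for closure of $F$ under finite orthosums, and \cite[Theorem 11.7]{HDTD} together with ``intersection of $\eta$TD sets is $\eta$TD'' for (iii); your coordinatewise argument that an $\eta$-orthogonal orthosum of finite elements is finite is exactly the paper's).

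However, there is a genuine gap concerning \emph{hereditariness}. Recall (Definition \ref{df:SKDefs}(3)) that $H$ is hereditary iff $e\subsim h\in H\Rightarrow e\in H$; since you already have order-ideal-hood, the missing ingredient is closure under $\sim$, i.e.\ $e\sim h\in H\Rightarrow e\in H$. For $K$ you write ``heredity of $K$ follows because a subelement of an $\eta$-monad is again an $\eta$-monad'' and then ``That $K$ is an order ideal is then immediate.'' But the subelement statement \emph{is} the order-ideal property, not hereditariness; you have proved the same thing twice and never shown $e\sim k\in K\Rightarrow e\in K$. For $F$ you omit hereditariness entirely: your plan lists ``(a) order-ideal-hood \emph{or} hereditariness,'' but the theorem requires both, and neither ideal-hood nor $\eta$STD implies closure under $\sim$.

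The missing arguments both use divisibility (SK3d), which does not appear in your plan for this purpose. For $K$: given $e\sim k\in K$ and any decomposition $e=e_1\oplus e_2$, divisibility yields $k=k_1\oplus k_2$ with $e_i\sim k_i$; then $\eta_{e_i}=\eta_{k_i}$ by Theorem \ref{th:pi&sim}(iii), and $\eta_{k_1}\wedge\eta_{k_2}=0$ since $k$ is simple, so $\eta_{e_1}\wedge\eta_{e_2}=0$ and $e\in K$ by Lemma \ref{lm:simp=mon}. For $F$: given $e\sim f\in F$ and $e_1\leq e$ with $e_1\sim e$, write $e=e_1\oplus e_2$; divisibility gives $f=f_1\oplus f_2$ with $f_1\sim e_1\sim e\sim f$ and $f_2\sim e_2$; since $f$ is finite and $f_1\leq f$, conclude $f_1=f$, hence $f_2=0$, hence $e_2=0$ and $e_1=e$. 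Without these steps the claims ``$K$ is a hereditary order ideal'' and ``$F$ is a hereditary ideal'' are unproved.
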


\begin{proof}
(i) According to \cite[Theorem 11.13]{HDTD}, the set $K$ of all
simple elements ($\eta$-monads) in $E$ forms an $\eta$STD set.
In particular, $K$ is an order ideal, so to prove that it is
hereditary, it will be sufficient to prove that if $e\sim k\in K$,
then $e\in K$. Thus, suppose that $e\sim k\in K$ and $e=e\sb{1}
\oplus e\sb{2}$. Then by divisibility (SK3d), $k=k\sb{1}\oplus
k\sb{2}$ with $e\sb{1}\sim k\sb{1}$ and $e\sb{2}\sim k\sb{2}$.
By Lemma \ref{lm:simp=mon}, $\eta\sb{k\sb{1}}\wedge\eta\sb{k\sb{2}}
=0$; by Theorem \ref{th:pi&sim} (iii), $\eta\sb{e\sb{1}}=\eta\sb
{k\sb{1}}$ and $\eta\sb{e\sb{2}}=\eta\sb{k\sb{2}}$; so $\eta\sb
{e\sb{1}}\wedge\eta\sb{e\sb{2}}=0$; whence $e\in K$ by
Lemma \ref{lm:simp=mon} again.

(ii) We may apply \cite[Prop. 7.18]{JePu02} (the proof of which
requires Lemma \ref{lm:subtraction} (ii)) to infer that $F$ is
closed under the formation of finite orthosums; hence, in view of
Lemma \ref{lm:subtraction} (i), $F$ is an ideal. Thus, to
prove that $F$ is hereditary, it will be sufficient to show
that if $e\sim f\in F$, then $e\in F$. So assume that $e
=e\sb{1}\oplus e\sb{2}\sim f\in F$ with $e\sb{1}\sim e$. Then
by divisibility, $f=f\sb{1}\oplus f\sb{2}$ with $f\sb{1}\sim
e\sb{1}\sim e\sim f$ and $f\sb{2}\sim e\sb{2}$. Since $f\in F$,
it follows that $f\sb{1}=f$, so $f\sb{2}=0$, whence $e\sb{2}=0$,
and therefore $e\sb{1}=e$. Consequently, $e\in F$, so $F$ is
hereditary.

Let $(f\sb{i})\sb{i\in I}$ be an $\eta$-orthogonal family in $F$,
let $f:=\oplus\sb{i\in I}f\sb{i}=\bigvee\sb{i\in I}f\sb{i}$ (\cite
[Lemma 6.2 (iii)]{ExoCen}), and suppose $e\leq f$ with $e
\sim f$. Let $i\in I$. Then $\eta\sb{f\sb{i}}e\leq\eta\sb{f\sb{i}}
f$, and by \cite[Theorem 7.4 (iii)]{HDTD},
\[
\eta\sb{f\sb{i}}f=\eta\sb{f\sb{i}}(\bigvee\sb{j\in I}f\sb{j})=
 \bigvee\sb{j\in I}\eta\sb{f\sb{i}}f\sb{j}=\bigvee\sb{j\in I}\eta
 \sb{f\sb{i}}(\eta\sb{f\sb{j}}f\sb{j})=\bigvee\sb{j\in I}(\eta
 \sb{f\sb{i}}\wedge\eta\sb{f\sb{j}})f\sb{j}=\eta\sb{f\sb{i}}f
 \sb{i}=f\sb{i},
\]
whence $\eta\sb{f\sb{i}}e\leq f\sb{i}$. Also, by Theorem \ref{th:pi&sim}
(iii), $\eta\sb{f\sb{i}}e\sim\eta\sb{f\sb{i}}f=f\sb{i}$, and since $f
\sb{i}\in F$, it follows that $\eta\sb{f\sb{i}}e=f\sb{i}$. Therefore,
by \cite[Theorem 7.4 (v), (vi), and (ix)]{HDTD}, $e=\eta\sb{f}e=
\bigvee\sb{i\in I}\eta\sb{f\sb{i}}e=\bigvee\sb{i\in I}f\sb{i}=f$;
hence $f\in F$, so $F$ is an $\eta$STD subset of $E$, proving (ii).
According to \cite[Theorem 11.7]{HDTD}, $\Gamma\sb{\eta}(E)$ is an
$\eta$TD subset of $E$, and as the intersection of $\eta$TD subsets
is again $\eta$TD, we have (iii).
\end{proof}

\noindent In view of Definition \ref{df:simple&finite} and Theorem
\ref{th:hstar}, we obtain the following corollary of Theorem
\ref{th:KFheretaSTD}.

\begin{corollary} \label{co:KF}
{\rm(i)} $\eta\sb{K}$ is the largest mapping in $\{\eta\sb{k}:k\in K\}$,
$\eta\sb{K}\in\Theta\sb{\eta}(E)$, $K\subseteq\eta\sb{K}(E)$, $K$ is
orthodense in the hereditary direct summand $\eta\sb{K}(E)$ of $E$,
and $K=\{0\}$ iff $\eta\sb{K}=0$.
{\rm(ii)} $\eta\sb{F}$ is the largest mapping in $\{\eta\sb{f}:f\in F\}$,
$\eta\sb{F}\in\Theta\sb{\eta}(E)$, $F\subseteq\eta\sb{F}(E)$, $F$ is
orthodense in the hereditary direct summand $\eta\sb{F}(E)$ of $E$,
and $F=\{0\}$ iff $\eta\sb{F}=0$.
\end{corollary}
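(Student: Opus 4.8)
The plan is to apply Theorem~\ref{th:hstar} to the two hereditary $\eta$STD sets $K$ and $F$ (whose hereditary-$\eta$STD status is Theorem~\ref{th:KFheretaSTD} (i), (ii)), invoke the definitions $\eta\sb{K}:=\bigvee\sb{k\in K}\eta\sb{k}$ and $\eta\sb{F}:=\bigvee\sb{f\in F}\eta\sb{f}$ from Definition~\ref{df:simple\&finite}, and read off the listed conclusions one at a time. The two statements (i) and (ii) are proved by identical arguments, so I would prove (i) in detail and then remark that (ii) follows \emph{mutatis mutandis} with $K$ replaced by $F$.

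For (i): since $K$ is a hereditary $\eta$STD set, Theorem~\ref{th:hstar}~(ii) produces $h\sp{\ast}\in K$ (call it $k\sp{\ast}$) such that $\eta\sb{k\sp{\ast}}$ is the largest mapping in $\{\eta\sb{k}:k\in K\}$; Theorem~\ref{th:hstar}~(iii) gives $\eta\sb{k\sp{\ast}}=\bigvee\sb{k\in K}\eta\sb{k}=\eta\sb{K}$, so $\eta\sb{K}$ is itself that largest mapping and $\eta\sb{K}=\eta\sb{k\sp{\ast}}\in\Theta\sb{\eta}(E)$ (as $k\sp{\ast}\in E$). Theorem~\ref{th:hstar}~(iii) also gives $K\subseteq\eta\sb{k\sp{\ast}}(E)=\eta\sb{K}(E)$, and Theorem~\ref{th:hstar}~(v) gives that $K$ is orthodense in the hereditary direct summand $\eta\sb{k\sp{\ast}}(E)=\eta\sb{K}(E)$. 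Finally, for the equivalence $K=\{0\}\Leftrightarrow\eta\sb{K}=0$: if $K=\{0\}$ then $\eta\sb{K}=\bigvee\sb{k\in K}\eta\sb{k}=\eta\sb{0}=0$ by HS1; conversely, if $\eta\sb{K}=0$ then for every $k\in K$ we have $\eta\sb{k}\leq\eta\sb{K}=0$, so $\eta\sb{k}=0$, whence $k=0$ by the fact (\cite[Theorem 7.2 (i)]{ExoCen}, already cited in the proof of Theorem~\ref{th:UnrelatedConditions}) that $\eta\sb{p}=0\Leftrightarrow p=0$.

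I do not anticipate a genuine obstacle here — the corollary is exactly the specialization of Theorem~\ref{th:hstar} (parts (ii), (iii), (v)) to the two concrete sets $K$ and $F$, together with the trivial $\eta$-support remark for the last clause. The only point requiring a half-sentence of care is making explicit that the ``$h\sp{\ast}$'' of Theorem~\ref{th:hstar} can be taken so that $\eta\sb{h\sp{\ast}}$ equals the declared supremum $\eta\sb{K}$ (resp.\ $\eta\sb{F}$) rather than merely dominating each $\eta\sb{k}$; this is immediate from Theorem~\ref{th:hstar}~(iii). Accordingly the write-up is short:

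\begin{proof}
(i) By Theorem~\ref{th:KFheretaSTD}~(i), $K$ is a hereditary $\eta$STD set, so Theorem~\ref{th:hstar} applies with $H:=K$. By parts (ii) and (iii) of that theorem there is $k\sp{\ast}\in K$ with $\eta\sb{k\sp{\ast}}$ the largest mapping in $\{\eta\sb{k}:k\in K\}$ and $\eta\sb{k\sp{\ast}}=\bigvee\sb{k\in K}\eta\sb{k}=\eta\sb{K}$; hence $\eta\sb{K}$ is the largest mapping in $\{\eta\sb{k}:k\in K\}$ and $\eta\sb{K}=\eta\sb{k\sp{\ast}}\in\Theta\sb{\eta}(E)$. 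Theorem~\ref{th:hstar}~(iii) also yields $K\subseteq\eta\sb{k\sp{\ast}}(E)=\eta\sb{K}(E)$, and Theorem~\ref{th:hstar}~(v) shows that $K$ is orthodense in the hereditary direct summand $\eta\sb{K}(E)$ of $E$. If $K=\{0\}$, then $\eta\sb{K}=\bigvee\sb{k\in K}\eta\sb{k}=\eta\sb{0}=0$ by HS1. Conversely, if $\eta\sb{K}=0$, then for each $k\in K$ we have $\eta\sb{k}\leq\eta\sb{K}=0$, so $\eta\sb{k}=0$ and therefore $k=0$ by \cite[Theorem 7.2 (i)]{ExoCen}; thus $K=\{0\}$.

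(ii) By Theorem~\ref{th:KFheretaSTD}~(ii), $F$ is a hereditary $\eta$STD set, so the argument of (i) applies verbatim with $K$, $\eta\sb{K}$, $k\sp{\ast}$ replaced by $F$, $\eta\sb{F}$, $f\sp{\ast}$, respectively.
\end{proof}
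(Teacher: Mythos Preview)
Your proof is correct and follows exactly the route the paper intends: the corollary is stated there as an immediate consequence of Definition~\ref{df:simple&finite} and Theorem~\ref{th:hstar} applied to the hereditary $\eta$STD sets $K$ and $F$ furnished by Theorem~\ref{th:KFheretaSTD}. Your extra sentence justifying the clause $K=\{0\}\Leftrightarrow\eta\sb{K}=0$ (and the analogue for $F$) via HS1 and \cite[Theorem 7.2 (i)]{ExoCen} is the obvious verification the paper leaves implicit.
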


We omit the straightforward proof of the next lemma.

\begin{lemma} \label{lm:KcappiE}
Suppose that $\pi\in\Sigma\sb{\sim}(E)$. Then{\rm:} {\rm(i)}
The set of simple elements in the DGEA $\pi(E)$ is $K\cap\pi(E)=
\{\pi k:k\in K\}$. {\rm(ii)} The set of finite elements in the DGEA
$\pi(E)$ is $F\cap\pi(E)=\{\pi f:f\in F\}$.
\end{lemma}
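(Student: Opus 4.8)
The plan is to prove both parts of Lemma \ref{lm:KcappiE} by reduction to already-established facts, invoking Theorem \ref{th:piEasDGEA} (iii) to identify $\pi(E)$ as a DGEA with DER $\sim\sb{\pi}$ and hull system $(\eta\sb{p}|\sb{\pi})\sb{p\in\pi(E)}$, and Theorem \ref{th:hstar} (vi) (equivalently Theorem \ref{th:piEasDGEA} (iv) in the $\eta$STD case) for the set-equality $H\cap\pi(E)=\{\pi h:h\in H\}$ applied to the hereditary order ideals $K$ and $F$. First I would establish the description $K\cap\pi(E)=\{\pi k:k\in K\}$: since $K$ is an order ideal (Theorem \ref{th:KFheretaSTD} (i)) and $\pi k\leq k$, each $\pi k$ lies in $K\cap\pi(E)$, and conversely any $k\in K\cap\pi(E)$ equals $\pi k$; this is the same bookkeeping as in Theorem \ref{th:hstar} (vi). The same remark applies verbatim to $F$ using Theorem \ref{th:KFheretaSTD} (ii).

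Next I would show that the simple elements of the DGEA $\pi(E)$ are exactly the elements of $K\cap\pi(E)$. Here I would use the characterization from Lemma \ref{lm:simp=mon}: $k$ is simple in $E$ iff $k$ is an $\eta$-monad iff, for every $e\in E[0,k]$, $e=\eta\sb{e}k$. For $p\in\pi(E)$, the $p$-interval computed inside $\pi(E)$ coincides with $E[0,p]$ (since $\pi(E)$ is an ideal, hence $p$ is principal and $\pi(E)[0,p]=E[0,p]$), and the hull system on $\pi(E)$ is $q\mapsto\eta\sb{q}|\sb{\pi}$, so for $e\in E[0,p]$ we have $(\eta\sb{e}|\sb{\pi})p=\eta\sb{e}p=\eta\sb{e}e\oplus\cdots$; more precisely $\eta\sb{e}p$ computed in $\pi(E)$ equals $\eta\sb{e}p$ computed in $E$ because $\pi$ and $\eta\sb{e}$ commute and $\eta\sb{e}\leq\pi$ is not needed---one just has $\eta\sb{e}p\leq p\leq p$. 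Thus $p$ is an $\eta|\sb{\pi}$-monad in $\pi(E)$ iff $e=\eta\sb{e}p$ for all $e\in E[0,p]$ iff $p$ is an $\eta$-monad in $E$, i.e. iff $p\in K$. Combining with the first paragraph gives (i). For (ii), finiteness is an intrinsic property of the restricted equivalence relation: $f\in\pi(E)$ is finite in $\pi(E)$ iff for every $e\leq f$ with $e\in\pi(E)$, $e\sim\sb{\pi}f\Rightarrow e=f$; but $e\leq f\in\pi(E)$ already forces $e\in\pi(E)$, and $e\sim\sb{\pi}f$ is just $e\sim f$, so this says precisely that $f$ is finite in $E$. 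Hence the finite elements of $\pi(E)$ are $F\cap\pi(E)$, which by the first paragraph equals $\{\pi f:f\in F\}$.

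The only mild subtlety---and the step I would be most careful about---is confirming that the monad/simplicity condition genuinely transfers, i.e. that the hull operator $\eta\sb{e}|\sb{\pi}$ applied to $p\in\pi(E)$ returns the same element as $\eta\sb{e}p$ in $E$, and that $E[0,p]$ is unchanged when passing to the sub-GEA $\pi(E)$. Both follow because $\pi(E)$ is a principal ideal: $E[0,p]\subseteq\pi(E)$ whenever $p\in\pi(E)$, and the hull system on $\pi(E)$ is the restriction of $(\eta\sb{e})\sb{e\in E}$ by Theorem \ref{th:piEasDGEA} (ii)--(iii). Everything else is the routine order-ideal bookkeeping already packaged in Theorem \ref{th:hstar} (vi), so the proof is genuinely short and I would simply record it as ``straightforward'' in the style the authors use, or spell out the two displays above. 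No serious obstacle is expected; this lemma is a compatibility check needed for the type I/II/III decomposition in Section \ref{sc:I/II/III}.
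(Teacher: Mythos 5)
Your argument is correct, and in fact the paper gives no proof at all here --- the authors simply write ``We omit the straightforward proof of the next lemma,'' so you are filling a gap rather than competing with an existing argument. The two ingredients you isolate are exactly the right ones: the set identity $K\cap\pi(E)=\{\pi k:k\in K\}$ (and likewise for $F$) is Theorem \ref{th:hstar}\,(vi) applied to the hereditary $\eta$STD sets $K$ and $F$ of Theorem \ref{th:KFheretaSTD}, and the transfer of ``simple'' and ``finite'' to the DGEA $\pi(E)$ rests on $\pi(E)$ being an order ideal carrying the restricted DER $\sim\sb{\pi}$ and the restricted hull system (Theorem \ref{th:piEasDGEA}). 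One small simplification: your detour through $\eta$-monads and Lemma \ref{lm:simp=mon} is sound but unnecessary, because simplicity is already intrinsic at the level of Definition \ref{df:simple&finite} --- for $e,f\in\pi(E)$, ``related with respect to $\sim\sb{\pi}$'' coincides with ``related with respect to $\sim$'' (any witnesses $e\sb{1}\leq e$, $f\sb{1}\leq f$ automatically lie in the order ideal $\pi(E)$), and $E[0,p]$ together with $p\ominus e$ is unchanged when computed inside $\pi(E)$; the same intrinsic-ness argument disposes of finiteness, as you note. The only blemish is the garbled display ``$(\eta\sb{e}|\sb{\pi})p=\eta\sb{e}p=\eta\sb{e}e\oplus\cdots$,'' which you immediately supersede with the correct statement that the hull system on $\pi(E)$ is literally the restriction, so $(\eta\sb{e}|\sb{\pi})p=\eta\sb{e}p$ for $p\in\pi(E)$; with that cleaned up, the proof is complete.
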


Combining Theorems \ref{th:faithinpi} and \ref{th:KFheretaSTD}, we
obtain the following result.

\begin{theorem} \label{th:KFfaithinpi}
Let $\pi\in\Theta\sb{\eta}(E)$. Then the following conditions are
mutually equivalent{\rm:} {\rm(i)} There exists $k\in K$ such that
$\pi=\eta\sb{k}$. {\rm(ii)} $\pi\leq\eta\sb{K}$. {\rm(iii)} There
exists $k\in K\cap\pi(E)$ such that $k$ is faithful in $\pi(E)$.
Moreover, if any {\rm(}hence all{\rm)} of the conditions
{\rm(i)--(iii)} hold, then $K\cap\pi(E)$ is orthodense in $\pi(E)$.
Likewise, the following conditions are mutually equivalent{\rm:}
{\rm(iv)} There exists $f\in F$ such that $\pi=\eta\sb{f}$. {\rm(v)}
$\pi\leq\eta\sb{F}$. {\rm(v)} There exists $f\in F\cap\pi(E)$ such
that $f$ is faithful in $\pi(E)$. Moreover, if any {\rm(}hence
all{\rm)} of the conditions {\rm(iv)--(vi)} hold, then $F\cap\pi(E)$
is orthodense in $\pi(E)$.
\end{theorem}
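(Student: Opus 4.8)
The plan is to obtain Theorem~\ref{th:KFfaithinpi} as an immediate two-fold application of Theorem~\ref{th:faithinpi}. First I would invoke Theorem~\ref{th:KFheretaSTD}~(i) to record that $K$ is a hereditary $\eta$STD subset of $E$, and Theorem~\ref{th:KFheretaSTD}~(ii) to record that $F$ is a hereditary $\eta$STD subset of $E$; thus each of $K$ and $F$ satisfies the standing hypothesis on $H$ in Theorem~\ref{th:faithinpi}.

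Next I would fix the distinguished elements needed to match the notation of Lemma~\ref{lm:hsharp}. By Corollary~\ref{co:KF}~(i) there is $k\sp{\ast}\in K$ with $\eta\sb{k\sp{\ast}}=\eta\sb{K}=\bigvee\sb{k\in K}\eta\sb{k}$, and by Corollary~\ref{co:KF}~(ii) there is $f\sp{\ast}\in F$ with $\eta\sb{f\sp{\ast}}=\eta\sb{F}=\bigvee\sb{f\in F}\eta\sb{f}$; these play the role of $h\sp{\ast}$ for the cases $H=K$ and $H=F$, respectively. With these identifications, conditions (i), (ii), (iii) of the present theorem are verbatim conditions (i), (ii), (iv) of Theorem~\ref{th:faithinpi} applied with $H:=K$ (noting that $\pi\le\eta\sb{K}$ is exactly $\pi\le\eta\sb{k\sp{\ast}}$, and that Theorem~\ref{th:faithinpi}~(iii) on faithfulness of $h\sp{\#}$ is subsumed by (iii) here), while conditions (iv), (v), (vi) are verbatim conditions (i), (ii), (iv) of Theorem~\ref{th:faithinpi} applied with $H:=F$. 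Hence the mutual equivalence of (i)--(iii) and that of (iv)--(vi) follow at once, and the two concluding ``moreover'' clauses asserting orthodensity of $K\cap\pi(E)$ and of $F\cap\pi(E)$ in $\pi(E)$ are precisely the concluding ``moreover'' of Theorem~\ref{th:faithinpi} in the two cases.

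There is essentially no obstacle here; the only point requiring care is the bookkeeping needed to see that the join $\eta\sb{K}$ (resp.\ $\eta\sb{F}$) is genuinely realized as $\eta\sb{k\sp{\ast}}$ for some $k\sp{\ast}\in K$ (resp.\ as $\eta\sb{f\sp{\ast}}$ for some $f\sp{\ast}\in F$), so that the inequality appearing in condition (ii) (resp.\ (v)) is literally the inequality $\pi\le\eta\sb{h\sp{\ast}}$ occurring in Theorem~\ref{th:faithinpi}; this is supplied by Corollary~\ref{co:KF}, which in turn rests on Theorem~\ref{th:hstar}~(ii),(iii). I would therefore keep the write-up to a few lines, simply citing Theorems~\ref{th:faithinpi} and \ref{th:KFheretaSTD} together with Corollary~\ref{co:KF}.
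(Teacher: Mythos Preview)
Your proposal is correct and takes essentially the same approach as the paper, which simply says the result follows by combining Theorems~\ref{th:faithinpi} and~\ref{th:KFheretaSTD}. Your write-up is more explicit about the bookkeeping (identifying $\eta\sb{K}$ with $\eta\sb{k\sp{\ast}}$ via Corollary~\ref{co:KF}, and matching conditions (i)--(iii) here to (i), (ii), (iv) of Theorem~\ref{th:faithinpi}), but the underlying argument is identical.
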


\begin{lemma} \label{lm:etasbtildeF1}
{\rm(i)} There exists an element ${\widetilde f}\in{\widetilde F}$
such that $\eta\sb{\widetilde f}$ is the largest mapping in $\{\eta
\sb{f}:f\in{\widetilde F}\}$. {\rm(ii)} ${\widetilde F}\subseteq
\eta\sb{\widetilde f}(E)=E[0,{\widetilde f}]\subseteq F$. {\rm(iii)}
${\widetilde f}$ is the largest element in ${\widetilde F}$.
\end{lemma}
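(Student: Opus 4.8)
The plan is to obtain part (i) as an immediate application of Theorem \ref{th:tStar}. By Theorem \ref{th:KFheretaSTD} (iii), the set $\widetilde F$ of finite invariant elements is an $\eta$TD subset of $E$; since $E$ is a COGEA and $(\eta\sb{e})\sb{e\in E}$ is a hull system on $E$, Theorem \ref{th:tStar} produces an element $\widetilde f\in\widetilde F$ such that $\eta\sb{\widetilde f}$ is the largest mapping in $\{\eta\sb{f}:f\in\widetilde F\}$, which is exactly the assertion of (i). (Note that $\widetilde F$ is automatically nonempty, since an $\eta$TD set always contains the orthosum of the empty $\eta$-orthogonal family, namely $0$.)

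For (ii), I would first observe that $\widetilde f\in\widetilde F\subseteq\Gamma\sb{\eta}(E)$, so $\widetilde f$ is $\eta$-invariant and hence $\eta\sb{\widetilde f}(E)=E[0,\widetilde f]$ by Definition \ref{df:etainvar}. Moreover $\widetilde f\in\widetilde F\subseteq F$, and $F$ is an order ideal by Theorem \ref{th:KFheretaSTD} (ii), whence $E[0,\widetilde f]\subseteq F$. The remaining inclusion $\widetilde F\subseteq\eta\sb{\widetilde f}(E)$ is the one point that requires care, since one must derive it from the inequality $\eta\sb{f}\leq\eta\sb{\widetilde f}$ and not from a not-yet-established domination $f\leq\widetilde f$: for $f\in\widetilde F$, part (i) gives $\eta\sb{f}\leq\eta\sb{\widetilde f}$, hence $\eta\sb{\widetilde f}\circ\eta\sb{f}=\eta\sb{f}$ in $\GEX(E)$, and therefore $\eta\sb{\widetilde f}f=\eta\sb{\widetilde f}(\eta\sb{f}f)=\eta\sb{f}f=f$, i.e., $f\in\eta\sb{\widetilde f}(E)$. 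This establishes (ii).

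Part (iii) then follows at once: by (ii), every $f\in\widetilde F$ lies in $\eta\sb{\widetilde f}(E)=E[0,\widetilde f]$, so $f\leq\widetilde f$; and since $\widetilde f$ itself belongs to $\widetilde F$, it is the largest element of $\widetilde F$. The proof is short and no step is a genuine obstacle; the only thing to be watchful about is keeping the argument for $\widetilde F\subseteq\eta\sb{\widetilde f}(E)$ noncircular, and checking that the hypothesis of Theorem \ref{th:tStar}---that $\widetilde F$ be $\eta$TD---is precisely what Theorem \ref{th:KFheretaSTD} (iii) supplies.
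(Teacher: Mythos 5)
Your proposal is correct and follows essentially the same route as the paper: part (i) via Theorem \ref{th:KFheretaSTD} (iii) and Theorem \ref{th:tStar}, part (ii) via $\eta$-invariance of $\widetilde f$, the order-ideal property of $F$, and the inequality $\eta\sb{f}\leq\eta\sb{\widetilde f}$ (the paper writes $f=\eta\sb{f}f\leq\eta\sb{\widetilde f}f$ where you compute $\eta\sb{\widetilde f}f=\eta\sb{f}f=f$ directly, which is the same point), and part (iii) as an immediate consequence of (ii).
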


\begin{proof}
Part (i) follows from Theorem \ref{th:KFheretaSTD} (iii) and Theorem
\ref{th:tStar}. As ${\widetilde f}$ is $\eta$-invariant, $\eta\sb
{\widetilde f}(E)=E[0,{\widetilde f}]$ by Definition \ref{df:etainvar}.
Thus, if $f\in{\widetilde F}$, then by (i), $f=\eta\sb{f}f\leq\eta\sb
{\widetilde f}f\in\eta\sb{\widetilde f}(E)=E[0,{\widetilde f}]$,
whence $f\in E[0,{\widetilde f}]$, and we have ${\widetilde F}
\subseteq E[0,{\widetilde f}]$. Since ${\widetilde f}\in F$ and
$F$ is an order ideal, $E[0,{\widetilde f}]\subseteq F$, and (ii)
is proved. Part (iii) follows immediately from (ii).
\end{proof}

\begin{definition} \label{df:ftildeandu}
Henceforth, ${\widetilde f}$ denotes the largest finite invariant element
in $E$ as per Lemma \ref{lm:etasbtildeF1}.
\end{definition}

\begin{lemma} \label{lm:tildef}
The following conditions are mutually equivalent{\rm:} {\rm(i)} There exists
a faithful finite invariant element in $E$. {\rm(ii)} ${\widetilde f}$ is
faithful. {\rm(iii)} $E[0,{\widetilde f}]=E$. Moreover, if any {\rm(}hence
all{\rm)} of these conditions hold, then $E$ is an EA with unit
${\widetilde f}$ and $E=F$.
\end{lemma}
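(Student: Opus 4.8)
The plan is to establish the cycle (i) $\Rightarrow$ (ii) $\Rightarrow$ (iii) $\Rightarrow$ (i) and then read off the ``moreover'' clause, using essentially only the structural facts about $\widetilde f$ recorded in Lemma \ref{lm:etasbtildeF1} together with the $\eta$-invariance of $\widetilde f$ (which holds because $\widetilde f\in{\widetilde F}\subseteq\Gamma\sb{\eta}(E)$, so by Definition \ref{df:etainvar} we have $\eta\sb{\widetilde f}(E)=E[0,\widetilde f]$). For (i) $\Rightarrow$ (ii): given a faithful finite invariant element $g$, we have $g\in{\widetilde F}$ and $\eta\sb{g}=1$ (faithfulness, Definition \ref{df:faithful}); since Lemma \ref{lm:etasbtildeF1}(i) says $\eta\sb{\widetilde f}$ is the largest mapping in $\{\eta\sb{f}:f\in{\widetilde F}\}$, we get $1=\eta\sb{g}\leq\eta\sb{\widetilde f}$, hence $\eta\sb{\widetilde f}=1$, i.e., $\widetilde f$ is faithful.

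For (ii) $\Rightarrow$ (iii): substituting $\eta\sb{\widetilde f}=1$ into the $\eta$-invariance equality $\eta\sb{\widetilde f}(E)=E[0,\widetilde f]$ gives $E=1(E)=E[0,\widetilde f]$. For (iii) $\Rightarrow$ (i): from $E[0,\widetilde f]=E$ and $\eta$-invariance we obtain $\eta\sb{\widetilde f}(E)=E[0,\widetilde f]=E$; since under the bijective correspondence $\pi\leftrightarrow\pi(E)$ between exocentral mappings and direct summands the summand $E$ corresponds uniquely to the identity mapping $1\in\GEX(E)$, this forces $\eta\sb{\widetilde f}=1$, so $\widetilde f$ is faithful. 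As $\widetilde f$ is also finite and invariant by Definition \ref{df:ftildeandu}, it witnesses (i). Finally, if (iii) holds, then every element of $E$ lies in $E[0,\widetilde f]$, so $\widetilde f$ is the largest element of $E$ and $E$ is an EA with unit $\widetilde f$; moreover $E=E[0,\widetilde f]\subseteq F$ by Lemma \ref{lm:etasbtildeF1}(ii), and $F\subseteq E$ trivially, so $E=F$.

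The argument is almost pure bookkeeping; I expect no genuine obstacle, and nothing here requires a fresh computation with $\sim$. The one step that deserves a moment's care is passing from $\eta\sb{\widetilde f}(E)=E$ to $\eta\sb{\widetilde f}=1$: this must be justified via the $\pi\leftrightarrow\pi(E)$ correspondence (or, equivalently, via idempotence of $\eta\sb{\widetilde f}$, since $e=\eta\sb{\widetilde f}e'$ yields $\eta\sb{\widetilde f}e=\eta\sb{\widetilde f}e'=e$), rather than treated as obvious.
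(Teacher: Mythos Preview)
Your proof is correct and follows essentially the same approach as the paper's own proof: both use Lemma~\ref{lm:etasbtildeF1} to compare $\eta\sb{\widetilde f}$ with the hull mapping of a given faithful finite invariant element, and both invoke the $\eta$-invariance identity $\eta\sb{\widetilde f}(E)=E[0,\widetilde f]$ to pass between (ii) and (iii). The only cosmetic difference is that you organize the equivalence as a cycle while the paper pairs off (i)$\Leftrightarrow$(ii) and (ii)$\Leftrightarrow$(iii); your explicit remark on why $\eta\sb{\widetilde f}(E)=E$ forces $\eta\sb{\widetilde f}=1$ is a worthwhile addition that the paper leaves implicit.
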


\begin{proof}
If (i) holds, then $f\leq{\widetilde f}$ and $1=\eta\sb{f}\leq\eta
\sb{{\widetilde f}}$, whence $\eta\sb{{\widetilde f}}=1$ and (ii)
holds. That (ii) $\Rightarrow$ (i) is obvious, and we have (i)
$\Leftrightarrow$ (ii). Lemma \ref{lm:etasbtildeF1} (ii) implies
that (ii) $\Leftrightarrow$ (iii). If (iii) holds, then $E[0,
{\widetilde f}]=E$, so ${\widetilde f}$ is the largest element
in $E$; also, since ${\widetilde f}\in F$ and $F$ is an order ideal,
we have $E=E[0,{\widetilde f}]\subseteq F$, whence $E=F$.
\end{proof}

We omit the straightforward proof of the following lemma.

\begin{lemma}  \label{lm:invfiniteinpiE}
If $\pi\in\Sigma\sb{\sim}(E)$, then $\pi{\widetilde f}$ is the largest
finite invariant element in the DGEA $\pi(E)$.
\end{lemma}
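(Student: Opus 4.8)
The plan is to verify the two defining properties of the largest finite invariant element of the DGEA $\pi(E)$ for $\pi{\widetilde f}$: first, that $\pi{\widetilde f}$ is a finite invariant element of $\pi(E)$; and second, that it dominates every finite invariant element of $\pi(E)$. Both parts reduce to transporting statements back and forth along the surjective boolean homomorphism $\xi\mapsto\xi_{\pi}$ of $\GEX(E)$ onto $\GEX(\pi(E))$ (Theorem \ref{th:piEasDGEA} (i)) and using the identifications in Theorem \ref{th:piEasDGEA} (iii), (v) and Lemma \ref{lm:KcappiE}.

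First I would record that $\pi{\widetilde f}\in F\cap\pi(E)$ by Lemma \ref{lm:KcappiE} (ii), so $\pi{\widetilde f}$ is finite in $\pi(E)$. For invariance in $\pi(E)$, by Theorem \ref{th:piEasDGEA} (v) we have $\Gamma_{\sim_{\pi}}(\pi(E))=\{\pi c:c\in\Gamma_{\sim}(E)\}$, and since ${\widetilde f}\in{\widetilde F}=F\cap\Gamma_{\sim}(E)$, in particular ${\widetilde f}\in\Gamma_{\sim}(E)$, so $\pi{\widetilde f}\in\Gamma_{\sim_{\pi}}(\pi(E))$. Combining, $\pi{\widetilde f}$ lies in the set $F\cap\pi(E)\cap\Gamma_{\sim_{\pi}}(\pi(E))$, which (applying Definition \ref{df:simple&finite} (2) inside the DGEA $\pi(E)$) is exactly the set ${\widetilde{F\cap\pi(E)}}$ of finite invariant elements of $\pi(E)$.

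Next I would show $\pi{\widetilde f}$ is the \emph{largest} such element. Let $g$ be any finite invariant element of $\pi(E)$; then $g\in F\cap\pi(E)$ and $g\in\Gamma_{\sim_{\pi}}(\pi(E))=\{\pi c:c\in\Gamma_{\sim}(E)\}$, so there is $c\in\Gamma_{\sim}(E)$ with $g=\pi c$. Since $g\leq c$ need not hold in general, the cleaner route is: $g\in F\cap\pi(E)\subseteq F$ is a finite element of $E$, and $g$ is invariant in $E$ as well — this follows because $\pi\in\Sigma_{\sim}(E)$ and invariance of $g$ in $\pi(E)$ lifts to invariance in $E$ (one checks $g$ is principal in $E$ and that $c_1\sim f\perp g$ forces $c_1=f=0$ using that $g\in\pi(E)$ and $\pi$ splits $\sim$; alternatively invoke Theorem \ref{th:piEasDGEA} (v) directly, noting $\Gamma_{\sim_{\pi}}(\pi(E))=\Gamma_{\sim}(E)\cap\pi(E)$, so $g\in\Gamma_{\sim}(E)$). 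Hence $g\in{\widetilde F}$, so by Lemma \ref{lm:etasbtildeF1} (iii), $g\leq{\widetilde f}$. Applying $\pi$ and using that $\pi$ is order-preserving with $\pi g=g$, we get $g=\pi g\leq\pi{\widetilde f}$. Therefore $\pi{\widetilde f}$ is the largest finite invariant element of $\pi(E)$, as claimed.

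The main obstacle is the bookkeeping identifying "finite invariant in $\pi(E)$" with "finite invariant in $E$ and lying in $\pi(E)$"; the finite part is immediate from Lemma \ref{lm:KcappiE} (ii), and the invariant part is the content of the equality $\Gamma_{\sim_{\pi}}(\pi(E))=\Gamma_{\sim}(E)\cap\pi(E)$ in Theorem \ref{th:piEasDGEA} (v). Once that equivalence is in hand, the argument is just the observation that the largest element of ${\widetilde F}$ lying in the ideal $\pi(E)$ is $\pi{\widetilde f}$, since $\pi(E)=E[0,?]$-style reasoning gives $\pi{\widetilde f}\in{\widetilde F}\cap\pi(E)$ and any $g\in{\widetilde F}\cap\pi(E)$ satisfies $g\leq{\widetilde f}$ hence $g=\pi g\leq\pi{\widetilde f}$.
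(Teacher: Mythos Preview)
Your proof is correct and is precisely the straightforward argument the paper has in mind (the paper omits the proof, declaring it routine). One minor slip: you write that $g\leq c$ ``need not hold in general,'' but in fact $g=\pi c\leq c$ always holds by (EXC3); this is harmless since you immediately invoke the equality $\Gamma_{\sim_{\pi}}(\pi(E))=\Gamma_{\sim}(E)\cap\pi(E)$ from Theorem~\ref{th:piEasDGEA}(v), which is the cleanest route anyway.
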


\section{Decomposition into Types I, II, and III} \label{sc:I/II/III}

\noindent\emph{Standing Assumptions \ref{as:DGEA} remain in force.}

\begin{definition}
(1) $E$ is of \emph{type} I iff there is a faithful simple element
($\eta$-monad) in $E$. (2) $E$ is of \emph{type} II iff there is a
faithful finite element in $E$, but there are no nonzero simple
elements in $E$. (3) $E$ is of \emph{type} III iff there are no nonzero
finite elements in $E$. (4) $E$ is of \emph{finite type} iff there is a
faithful finite invariant element in $E$. (5) $E$ is of \emph{properly
non-finite type} iff there are no nonzero invariant finite elements in
$E$. (6) $E$ is of \emph{type {\rm I}$\sb{F}$} (respectively, of
\emph{type {\rm II}$\sb{F}$}) iff it is both of type I (respectively,
of type II) and of finite type. (7) $E$ is of \emph{type} I$\sb{\neg F}$
(respectively, \emph{type} II$\sb{\neg F}$) iff it is both of type I
(respectively, type II) and of properly non-finite type.
\end{definition}

\begin{lemma} \label{lm:typesforE}
{\rm(i)} The DGEA $E$ is of type {\rm I} iff $\eta\sb{K}=1$, in which
case every element in $E$ is the orthosum of a family of simple elements
{\rm(}$\eta$-monads{\rm)}in $E$. {\rm(ii)} $E$ is of type {\rm II} iff
$\eta\sb{F}=1$ and $\eta\sb{K}=0$, in which case every element in
$E$ is the orthosum of a family of finite elements in $E$. {\rm(iii)}
$E$ is of type {\rm III} iff $\eta\sb{F}=0$.  {\rm(iv)} $E$ is of finite
type iff $\eta\sb{\widetilde f}=1$, in which case $E$ is an EA with
unit ${\widetilde f}$ and every element in $E$ is finite. {\rm(v)}
$E$ is properly non-finite iff $\eta\sb{\widetilde f}=0$, i.e., iff
${\widetilde f}=0$.\footnote{Although there are no nonzero finite invariant
elements in a properly non-finite GEA, it is possible for such a GEA to
contain nonzero finite elements.}
\end{lemma}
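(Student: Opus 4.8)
The plan is to prove each of the five equivalences in Lemma \ref{lm:typesforE} by translating the relevant ``faithful element'' conditions from the definition of types into statements about the hull-system mappings $\eta\sb{K}$, $\eta\sb{F}$, and $\eta\sb{\widetilde f}$, using the machinery already assembled in Sections \ref{sc:SandF} and the earlier sections. The unifying observation is that, by Definition \ref{df:faithful}, an element $p\in E$ is faithful iff $\eta\sb{p}=1$, so ``there is a faithful $X$-element'' should be equivalent to ``$\eta\sb{X}=1$'' whenever $X$ is one of the $\eta$STD/$\eta$TD sets $K$, $F$, or ${\widetilde F}$ that possesses a largest $\eta$-value.

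For part (i): I would argue $E$ is of type I iff there is a faithful simple element, i.e., a $k\in K$ with $\eta\sb{k}=1$. Since $\eta\sb{K}=\bigvee\sb{k\in K}\eta\sb{k}$ is, by Corollary \ref{co:KF}(i), the largest mapping in $\{\eta\sb{k}:k\in K\}$, the existence of such a $k$ is precisely the condition $\eta\sb{K}=1$. For the ``in which case'' clause, apply Theorem \ref{th:KFfaithinpi} with $\pi:=1\in\Theta\sb{\eta}(E)$ (note $\eta\sb{K}=1$ gives $\pi\le\eta\sb{K}$): this yields that $K\cap 1(E)=K$ is orthodense in $1(E)=E$, which by the definition of orthodense says every element of $E$ is the orthosum of an orthosummable family in $K$, i.e., of simple elements. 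For part (iii): $E$ is of type III iff there are no nonzero finite elements, i.e., $F=\{0\}$, which by Corollary \ref{co:KF}(ii) is exactly $\eta\sb{F}=0$.

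Part (ii) requires a little more care. $E$ is of type II iff there is a faithful finite element and no nonzero simple element. By the same argument as in (i) applied to $F$ (using Corollary \ref{co:KF}(ii) and the fact that $\eta\sb{F}$ is the largest mapping among $\{\eta\sb{f}:f\in F\}$), the existence of a faithful finite element is equivalent to $\eta\sb{F}=1$; and by Corollary \ref{co:KF}(i), ``no nonzero simple element'' is equivalent to $K=\{0\}$, i.e., $\eta\sb{K}=0$. The ``in which case'' clause again follows from Theorem \ref{th:KFfaithinpi} with $\pi:=1$, giving $F$ orthodense in $E$. For (iv) and (v), the relevant set is ${\widetilde F}$, and the element playing the role of $\eta\sb{X}$-witness is ${\widetilde f}$ itself: by Lemma \ref{lm:etasbtildeF1}, $\eta\sb{\widetilde f}$ is the largest mapping in $\{\eta\sb{f}:f\in{\widetilde F}\}$. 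So $E$ is of finite type, i.e., has a faithful finite invariant element, iff $\eta\sb{\widetilde f}=1$; and by Lemma \ref{lm:tildef} this further forces $E$ to be an EA with unit ${\widetilde f}$ and $E=F$, giving the ``in which case'' clause of (iv). For (v), $E$ is properly non-finite iff ${\widetilde F}$ has no nonzero element, i.e., ${\widetilde f}=0$ (using Lemma \ref{lm:etasbtildeF1}(iii) that ${\widetilde f}$ is the largest element of ${\widetilde F}$), which is $\eta\sb{\widetilde f}=0$ since $\eta\sb{p}=0\Leftrightarrow p=0$.

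The one genuine point of friction — the main obstacle — is making sure the quantifier ``there is a faithful $X$-element'' really collapses to ``$\eta\sb{X}=1$'': this needs that the supremum $\eta\sb{X}$ of $\{\eta\sb{x}:x\in X\}$ is actually attained within that set, so that $\eta\sb{X}=1$ produces an honest element $x\in X$ with $\eta\sb{x}=1$ rather than merely a supremum. This is exactly what Corollary \ref{co:KF} and Lemma \ref{lm:etasbtildeF1}(i) supply (which in turn rest on $K$, $F$ being $\eta$STD and ${\widetilde F}$ being $\eta$TD, via Theorems \ref{th:KFheretaSTD} and \ref{th:tStar}), so once those are invoked the rest is bookkeeping. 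I would therefore state the proof as: ``(i) By Corollary \ref{co:KF}(i)\ldots; the orthodensity assertions follow from Theorem \ref{th:KFfaithinpi} (for (i), (ii)) and Lemma \ref{lm:tildef} (for (iv)). (iii), (v) are immediate from Corollary \ref{co:KF}(ii) and Lemma \ref{lm:etasbtildeF1}(iii) together with $\eta\sb{p}=0\Leftrightarrow p=0$.''
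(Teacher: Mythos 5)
Your proposal is correct and follows essentially the same route as the paper: the paper's own proof simply cites parts (iii)--(v) of Theorem \ref{th:hstar} together with Corollary \ref{co:KF} for (i)--(iii), Lemma \ref{lm:tildef} for (iv), and the definition of ${\widetilde f}$ for (v), which is exactly the machinery you identify (your detour through Theorem \ref{th:KFfaithinpi} with $\pi=1$ for the orthodensity clauses is an immaterial variant, since Corollary \ref{co:KF} already asserts orthodensity of $K$, resp.\ $F$, in $\eta\sb{K}(E)$, resp.\ $\eta\sb{F}(E)$). You also correctly isolate the one real point, namely that the suprema $\eta\sb{K}$, $\eta\sb{F}$, $\eta\sb{\widetilde f}$ are attained within the respective sets, which is what makes ``there exists a faithful element'' collapse to ``$\eta\sb{X}=1$.''
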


\begin{proof}
Parts (i), (ii), and (iii) follow from parts (iii), (iv), and (v) of Theorem
\ref{th:hstar} and Corollary \ref{co:KF}; (iv)
is a consequence of Lemma \ref{lm:tildef}; and (v) follows from Definition
\ref{df:ftildeandu}.
\end{proof}

\begin{theorem} [Cf. {\cite[Definition 13.3]{HDTD}}] \label{th:I/II/III}
Let $\pi\in\Sigma\sb{\sim}(E)$. Then{\rm:}
\begin{enumerate}
\item If the DGEA $\pi(E)$ is of type {\rm I}, of type {\rm II}, or
 of finite type, then $\pi\in\Theta\sb{\eta}(E)$.
\item $\pi(E)$ is of type {\rm I} iff $\pi\in\Theta\sb{\eta}
 (E)$ and $\pi\leq\eta\sb{K}$.
\item $\pi(E)$ is of type {\rm II} iff $\pi\in\Theta\sb{\eta}(E)$
 and $\pi\leq\eta\sb{F}\wedge(\eta\sb{K})'$.
\item $\pi(E)$ is of type {\rm III} iff $\pi\leq(\eta\sb{F})'$.
\item $\pi(E)$ is of finite type iff $\pi\in\Theta\sb{\eta}(E)$ and
 $\pi\leq\eta\sb{\widetilde f}$.
\item $\pi(E)$ is of properly non-finite type iff $\pi\wedge\eta\sb
 {\widetilde f}=0$.
\item $\pi(E)$ is of type {\rm I}$\sb{F}$ {\rm(}respectively, of type
 {\rm II}$\sb{F}${\rm)} iff $\pi\in\Theta\sb{\eta}(E)$ and $\pi\leq\eta
 \sb{K}\wedge\eta\sb{\widetilde f}$ {\rm(}respectively, iff $\pi\in
 \Theta\sb{\eta}(E)$ and $\pi\leq\eta\sb{F}\wedge(\eta\sb{K})'\wedge\eta
 \sb{{\widetilde f}}${\rm)}.
\item $\pi(E)$ is of type {\rm I}$\sb{\neg F}$ {\rm(}respectively, of type
 {\rm II}$\sb{\neg F}${\rm)} iff $\pi\in\Theta\sb{\eta}(E)$ and $\pi\leq
 \eta\sb{K}\wedge(\eta\sb{\widetilde f})'$ {\rm(}respectively, $\pi
 \in\Theta\sb{\eta}(E)$ and $\pi\leq\eta\sb{F}\wedge(\eta\sb{K})'\wedge
 (\eta\sb{\widetilde f})'${\rm)}.
\end{enumerate}
\end{theorem}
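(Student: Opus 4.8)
The plan is to reduce the entire theorem to results already in place, most notably Theorem~\ref{th:KFfaithinpi}, Lemma~\ref{lm:tildef}, Lemma~\ref{lm:KcappiE}, and Lemma~\ref{lm:invfiniteinpiE}, together with the characterization of type I/II/III via $\eta_K$ and $\eta_F$ in Lemma~\ref{lm:typesforE} applied inside the DGEA $\pi(E)$. The organizing observation is that for $\pi\in\Sigma_{\sim}(E)$, the DGEA $\pi(E)$ has its own set of simple elements $K\cap\pi(E)=\{\pi k:k\in K\}$ (Lemma~\ref{lm:KcappiE}(i)), its own set of finite elements $F\cap\pi(E)=\{\pi f:f\in F\}$ (Lemma~\ref{lm:KcappiE}(ii)), and its own largest finite invariant element, namely $\pi\widetilde f$ (Lemma~\ref{lm:invfiniteinpiE}). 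Moreover, faithfulness in $\pi(E)$ is governed by Lemma~\ref{lm:pfaithful}: for $p\in\pi(E)$, $p$ is faithful in $\pi(E)$ iff $\eta_p=\pi$, and in that case $\pi\in\Theta_\eta(E)$. So each ``type'' hypothesis on $\pi(E)$ immediately forces $\pi\in\Theta_\eta(E)$, giving part (i) at once and supplying the recurring hypothesis ``$\pi\in\Theta_\eta(E)$'' needed to apply Theorem~\ref{th:KFfaithinpi}.

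Concretely, for (ii): $\pi(E)$ is of type I iff there is a faithful simple element in $\pi(E)$, i.e. (by Lemma~\ref{lm:KcappiE}(i)) a $k'\in K\cap\pi(E)$ faithful in $\pi(E)$; by Lemma~\ref{lm:pfaithful} this forces $\pi\in\Theta_\eta(E)$, and then Theorem~\ref{th:KFfaithinpi} (equivalence of (iii) with (ii)) gives exactly $\pi\le\eta_K$. For (iii): $\pi(E)$ is of type II iff it has a faithful finite element but no nonzero simple element; the faithful finite element forces $\pi\in\Theta_\eta(E)$ and, via Theorem~\ref{th:KFfaithinpi}(v), gives $\pi\le\eta_F$; ``no nonzero simple element in $\pi(E)$'' translates, using Lemma~\ref{lm:KcappiE}(i) and Theorem~\ref{th:SigmasimProps}(v) (or the fact that $\eta_K$ is the sup of $\eta_k$), to $\pi\wedge\eta_K=0$, i.e. $\pi\le(\eta_K)'$; combining gives $\pi\le\eta_F\wedge(\eta_K)'$, and conversely such a $\pi$ yields type II by reversing these steps. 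For (iv): $\pi(E)$ is of type III iff it has no nonzero finite element, i.e. $F\cap\pi(E)=\{0\}$, i.e. (Lemma~\ref{lm:KcappiE}(ii) plus $\eta_F=\bigvee_{f\in F}\eta_f$ and Theorem~\ref{th:SigmasimProps}(v)) $\pi\wedge\eta_F=0$, i.e. $\pi\le(\eta_F)'$; note no $\Theta_\eta$ hypothesis is needed here, matching the statement. For (v): $\pi(E)$ is of finite type iff it has a faithful finite invariant element; by Lemma~\ref{lm:invfiniteinpiE} the largest finite invariant element of $\pi(E)$ is $\pi\widetilde f$, and by Lemma~\ref{lm:tildef} applied in $\pi(E)$, $\pi(E)$ is of finite type iff $\pi\widetilde f$ is faithful in $\pi(E)$; by Lemma~\ref{lm:pfaithful} this is $\eta_{\pi\widetilde f}=\pi$, which forces $\pi\in\Theta_\eta(E)$ and, since $\eta_{\pi\widetilde f}=\eta_{\widetilde f}\wedge\pi$ (as $\pi\in\Theta_\eta(E)$ so $\pi=\eta_d$ for some $d$ and HS3 applies), amounts to $\pi\le\eta_{\widetilde f}$. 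For (vi): $\pi(E)$ is properly non-finite iff its largest finite invariant element $\pi\widetilde f$ is $0$ (Lemma~\ref{lm:typesforE}(v) in $\pi(E)$), i.e. $\pi\widetilde f=0$, i.e. (Theorem~\ref{th:SigmasimProps}(v)) $\pi\wedge\eta_{\widetilde f}=0$.

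Parts (vii) and (viii) are then purely formal: type I$_F$ means type I and finite type, so by (ii) and (v) it is $\pi\in\Theta_\eta(E)$ together with $\pi\le\eta_K$ and $\pi\le\eta_{\widetilde f}$, i.e. $\pi\le\eta_K\wedge\eta_{\widetilde f}$; type II$_F$ likewise from (iii) and (v); type I$_{\neg F}$ means type I and properly non-finite, so by (ii) and (vi) it is $\pi\in\Theta_\eta(E)$ with $\pi\le\eta_K$ and $\pi\wedge\eta_{\widetilde f}=0$, i.e. $\pi\le\eta_K\wedge(\eta_{\widetilde f})'$; type II$_{\neg F}$ from (iii) and (vi). One small point to check in (vii)/(viii) is that the $\Theta_\eta(E)$ requirement coming from the type I or type II half subsumes the one that would come from the finite-type half, so the conjoined conditions are stated cleanly with a single ``$\pi\in\Theta_\eta(E)$''.

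The main obstacle is not any single deep step but rather the bookkeeping of translating ``internal'' statements about the DGEA $\pi(E)$ (no nonzero simple elements, existence of a faithful finite invariant element, etc.) into ``external'' lattice inequalities in $\GEX(E)$ about $\pi$, $\eta_K$, $\eta_F$, $\eta_{\widetilde f}$. The two facts that make this routine are: (a) $\eta_{\pi p}=\eta_p\wedge\pi$ whenever $\pi\in\Theta_\eta(E)$, via HS3 and $\pi=\eta_d$; and (b) for $\pi\in\Sigma_{\sim}(E)$ and $e\in E$, $\pi e=0\Leftrightarrow\pi\wedge\eta_e=0$ (Theorem~\ref{th:SigmasimProps}(v)). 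With these in hand, every clause reduces, as sketched above, to a one-line computation, and the proof itself can be written as ``(i)--(vi) follow from Lemmas~\ref{lm:tildef}, \ref{lm:KcappiE}, \ref{lm:invfiniteinpiE}, Theorems~\ref{th:KFfaithinpi}, \ref{th:SigmasimProps}, and Lemma~\ref{lm:typesforE} applied to $\pi(E)$; (vii) and (viii) follow by conjoining (ii), (iii), (v), (vi).''
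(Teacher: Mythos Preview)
Your proposal is correct and follows essentially the same route as the paper's proof: both reduce (i)--(iv) to Lemma~\ref{lm:pfaithful}, Lemma~\ref{lm:KcappiE}, and Theorem~\ref{th:KFfaithinpi}; handle (v) via Lemma~\ref{lm:invfiniteinpiE} together with the identity $\eta_{\pi\widetilde f}=\pi\wedge\eta_{\widetilde f}$ (available once $\pi\in\Theta_\eta(E)$); treat (vi) via $\pi\widetilde f=0\Leftrightarrow\pi\wedge\eta_{\widetilde f}=0$; and read (vii)--(viii) off from the conjunctions. The only cosmetic difference is that for the ``no nonzero simple/finite element in $\pi(E)$'' clauses in (iii) and (iv) the paper invokes Lemma~\ref{lm:HcappiE=0} directly, whereas you re-derive its content from Theorem~\ref{th:SigmasimProps}(v) and $\eta_K=\bigvee_{k\in K}\eta_k$; citing Lemma~\ref{lm:HcappiE=0} would tighten your write-up.
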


\begin{proof}
According to Lemmas \ref{lm:KcappiE} and \ref{lm:invfiniteinpiE}, $K\cap
\pi(E)$ is the set of simple elements in $\pi(E)$, $F\cap\pi(E)$ is
the set of finite elements in $\pi(E)$, and $\pi{\widetilde f}$ is
the largest invariant finite element in $\pi(E)$. Thus (i) is a
consequence of Lemma \ref{lm:pfaithful}; (ii) follows from (i) and
Theorem \ref{th:KFfaithinpi}; (iii) follows from (i), Theorem
\ref{th:KFfaithinpi} and Lemma \ref{lm:HcappiE=0}; and (iv) is
a consequence of Lemma \ref{lm:HcappiE=0}.

To prove (v), suppose that $\pi(E)$ is of finite type, i.e., there
exists $f\in\pi(E)$ such that $f$ is finite, invariant, and faithful
in the DGEA $\pi(E)$. Then $f\in F$ by Lemma \ref{lm:KcappiE} (ii)
and $f\in\Gamma\sb{\sim}(E)$ by Lemma \ref{th:piEasDGEA} (v), whence
$f\leq{\widetilde f}$ by Lemma \ref{lm:etasbtildeF1} (iii). Thus, as
$f$ is faithful in $\pi(E)$, we infer from Lemma \ref{lm:pfaithful}
that $\pi\leq\eta\sb{f}\leq\eta\sb{\widetilde f}$. Also, $\pi
\in\Theta\sb{\eta}(E)$ by (i). Conversely, suppose that $\pi\in\Theta
\sb{\eta}(E)$ and $\pi\leq\eta\sb{\widetilde f}$. As $\pi\in\Theta\sb
{\eta}(E)$, there exists $e\in E$ such that $\pi=\eta\sb{e}$, and we
have $\pi=\pi\wedge\eta\sb{\widetilde f}=\eta\sb{e}\wedge\eta
\sb{\widetilde f}=\eta\sb{\eta\sb{e}{\widetilde f}}=\eta\sb
{\pi{\widetilde f}}$. According to Lemma \ref{lm:invfiniteinpiE},
$\pi{\widetilde f}$ is a finite invariant element in the DGEA
$\pi(E)$, and $\pi=\eta\sb{\pi{\widetilde f}}$ implies that
$\pi{\widetilde f}$ is faithful in $\pi(E)$ by Lemma \ref
{lm:pfaithful}: hence $\pi(E)$ is of finite type, and (v) is
proved.

As per Lemma \ref{lm:invfiniteinpiE}, $\pi(E)$ is of properly
non-finite type iff $\pi{\widetilde f}=0$, i.e., iff $\pi\wedge
\eta\sb{\widetilde f}=0$ by Theorem \ref{th:SigmasimProps} (v).
This proves part (vi), and parts (vii) and (viii) follow from .
parts (i)--(vi).
\end{proof}

We are now in a position to state and prove the \emph{fundamental
type {\rm I/II/III}-decomposition theorem for DGEAs.}

\begin{theorem} \label{th:DecompositionI/II/III}
There are pairwise disjoint mappings $\pi\sb{\rm I},\,\pi\sb{\rm II},
\,\pi\sb{\rm III}\in\Sigma{\sb{\sim}(E)}$ such that $\pi\sb{\rm I}
\vee\pi\sb{\rm II}\vee\pi\sb{\rm III}=1$ and $E$ decomposes as a
direct sum
\[
E=\pi\sb{\rm I}(E)\oplus \pi\sb{\rm II}(E)\oplus\pi\sb{\rm III}(E)
\]
where $\pi\sb{\rm I}(E),\, \pi\sb{\rm II}(E)$, and $\pi\sb{\rm III}(E)$
are hereditary direct summands of types {\rm I}, {\rm II}, and {\rm III},
respectively. These mappings are unique, in fact
\[
\pi\sb{\rm I}=\eta\sb{K}\in\Theta\sb{\eta}(E),\ \pi\sb{\rm II}=
 \eta\sb{F}\wedge(\eta\sb{K})'\in\Theta\sb{\eta}(E),\text{\ and\ }
 \pi\sb{\rm III}=(\eta\sb{F})'.
\]
Moreover, there are pairwise disjoint mappings $\pi\sb{{\rm I}\sb{F}},
\,\pi\sb{{\rm I}\sb{\neg F}},\,\pi\sb{{\rm II}\sb{F}},\,\pi\sb{{\rm II}
\sb{\neg F}}\in\Sigma\sb{\sim}(E)$ such that $\pi\sb{{\rm I}\sb{F}}
\vee\pi\sb{{\rm I}\sb{\neg F}}=\pi\sb{\rm I}$ and $\pi\sb{{\rm II}\sb{F}}
\vee\pi\sb{{\rm II}\sb{\neg F}}=\pi\sb{\rm II}$; these mappings induce
further decompositions
\[
\pi\sb{\rm I}(E)=\pi\sb{{\rm I}\sb{F}}(E)\oplus\pi\sb{{\rm I}
 \sb{\neg F}}(E)\text{\ and\ }\pi\sb{\rm II}(E)=\pi\sb{{\rm II}
 \sb{F}}(E)\oplus\pi\sb{{\rm II}\sb{\neg F}}(E)
\]
where $\pi\sb{{\rm I}\sb{F}}(E),\,\pi\sb{{\rm I}\sb{\neg F}}(E),\,
\pi\sb{{\rm II}\sb{F}}(E)$ and $\pi\sb{{\rm II}\sb{\neg F}}(E)$
are hereditary direct summands of types {\rm I}$\sb{F}$,
{\rm I}$\sb{\neg F}$, {\rm II}$\sb{F}$, and {\rm II}$\sb{\neg F}$,
respectively. These mappings are also unique, in fact
\[
\pi\sb{{\rm I}\sb{F}}=\eta\sb{K}\wedge\eta\sb{\widetilde f}\in
 \Theta\sb{\eta}(E),\ \pi\sb{{\rm I}\sb{\neg F}}=\eta\sb{K}
 \wedge(\eta\sb{\widetilde f})'\in\Theta\sb{\eta}(E),
\]
\[
\pi\sb{{\rm II}\sb{F}}=\eta\sb{F}\wedge(\eta\sb{K})'\wedge\eta
 \sb{\widetilde f}\in\Theta\sb{\eta}(E),\text{\ and\ }\pi\sb
 {{\rm II}\sb{\neg F}}=\eta\sb{F}\wedge(\eta\sb{K})'\wedge
 (\eta\sb{\widetilde f})'\in\Theta\sb{\eta}(E).
\]
Furthermore, the hereditary direct summands $\pi\sb{{\rm I}
\sb{F}}(E)$ and $\pi\sb{{\rm II}\sb{F}}(E)$ are EAs with
units $\eta\sb{K}{\widetilde f}$ and $(\eta\sb{K})'{\widetilde f}$,
respectively.
\end{theorem}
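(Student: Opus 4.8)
The plan is to assemble the theorem from the pieces already developed in Section \ref{sc:I/II/III}, using Theorem \ref{th:I/II/III} as the workhorse. First I would define the candidate mappings explicitly: set $\pi\sb{\rm I}:=\eta\sb{K}$, $\pi\sb{\rm II}:=\eta\sb{F}\wedge(\eta\sb{K})'$, and $\pi\sb{\rm III}:=(\eta\sb{F})'$. By Corollary \ref{co:KF}, $\eta\sb{K},\eta\sb{F}\in\Theta\sb{\eta}(E)\subseteq\Sigma\sb{\sim}(E)$, and since $\Sigma\sb{\sim}(E)$ is a boolean subalgebra of $\GEX(E)$ (Theorem \ref{th:Sigmasimboolean}), all three mappings lie in $\Sigma\sb{\sim}(E)$; moreover $\pi\sb{\rm I},\pi\sb{\rm II}\in\Theta\sb{\eta}(E)$ because $\eta\sb{K}$ and $\eta\sb{F}\wedge(\eta\sb{K})'=\eta\sb{F}\wedge(\eta\sb{K})'$ are again $\eta$-hulls (using that $\Theta\sb{\eta}(E)$ is HD, hence closed under $\theta\wedge\xi\,'$). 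Since $\eta\sb{K}\leq\eta\sb{F}$ by Lemma \ref{lm:simpisfinite}, a routine boolean computation gives that $\pi\sb{\rm I},\pi\sb{\rm II},\pi\sb{\rm III}$ are pairwise disjoint with join $\eta\sb{K}\vee(\eta\sb{F}\wedge(\eta\sb{K})')\vee(\eta\sb{F})'=\eta\sb{F}\vee(\eta\sb{F})'=1$. Then the direct sum decomposition $E=\pi\sb{\rm I}(E)\oplus\pi\sb{\rm II}(E)\oplus\pi\sb{\rm III}(E)$ follows from the correspondence between pairwise disjoint families in $\GEX(E)$ with join $1$ and finite direct sum decompositions, as recalled in Section 3.

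Next I would verify the type assertions. Each summand $\pi(E)$ with $\pi\in\Sigma\sb{\sim}(E)$ is itself a DGEA by Theorem \ref{th:piEasDGEA}(iii), so Theorem \ref{th:I/II/III} applies. That $\pi\sb{\rm I}(E)$ is of type I is exactly Theorem \ref{th:I/II/III}(ii) with $\pi=\eta\sb{K}$, since $\eta\sb{K}\in\Theta\sb{\eta}(E)$ and $\eta\sb{K}\leq\eta\sb{K}$. That $\pi\sb{\rm II}(E)$ is of type II is Theorem \ref{th:I/II/III}(iii) with $\pi=\eta\sb{F}\wedge(\eta\sb{K})'$, which satisfies $\pi\in\Theta\sb{\eta}(E)$ and $\pi\leq\eta\sb{F}\wedge(\eta\sb{K})'$. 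That $\pi\sb{\rm III}(E)$ is of type III is Theorem \ref{th:I/II/III}(iv) with $\pi=(\eta\sb{F})'\leq(\eta\sb{F})'$. For uniqueness, suppose $\rho\sb{\rm I},\rho\sb{\rm II},\rho\sb{\rm III}\in\Sigma\sb{\sim}(E)$ is another such pairwise disjoint triple with join $1$ giving summands of the respective types. By the equivalences in Theorem \ref{th:I/II/III}(ii)--(iv), $\rho\sb{\rm I}\in\Theta\sb{\eta}(E)$ with $\rho\sb{\rm I}\leq\eta\sb{K}$, and similarly $\rho\sb{\rm II}\leq\eta\sb{F}\wedge(\eta\sb{K})'$ and $\rho\sb{\rm III}\leq(\eta\sb{F})'=\pi\sb{\rm III}$; but then $\rho\sb{\rm I}\leq\pi\sb{\rm I}$, $\rho\sb{\rm II}\leq\pi\sb{\rm II}$, $\rho\sb{\rm III}\leq\pi\sb{\rm III}$ with both triples having join $1$ and being pairwise disjoint, which forces equality coordinatewise in the boolean algebra $\GEX(E)$.

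For the finer $F/\neg F$ refinement I would run the same machinery inside the type I and type II summands, invoking Theorem \ref{th:I/II/III}(vii),(viii). Define $\pi\sb{{\rm I}\sb{F}}:=\eta\sb{K}\wedge\eta\sb{\widetilde f}$, $\pi\sb{{\rm I}\sb{\neg F}}:=\eta\sb{K}\wedge(\eta\sb{\widetilde f})'$, $\pi\sb{{\rm II}\sb{F}}:=\eta\sb{F}\wedge(\eta\sb{K})'\wedge\eta\sb{\widetilde f}$, and $\pi\sb{{\rm II}\sb{\neg F}}:=\eta\sb{F}\wedge(\eta\sb{K})'\wedge(\eta\sb{\widetilde f})'$; each lies in $\Theta\sb{\eta}(E)$ because $\eta\sb{\widetilde f}\in\Theta\sb{\eta}(E)$ (Lemma \ref{lm:etasbtildeF1}(i) together with $\widetilde f$ being $\eta$-invariant) and $\Theta\sb{\eta}(E)$ is HD. Then $\pi\sb{{\rm I}\sb{F}}\vee\pi\sb{{\rm I}\sb{\neg F}}=\eta\sb{K}\wedge(\eta\sb{\widetilde f}\vee(\eta\sb{\widetilde f})')=\eta\sb{K}=\pi\sb{\rm I}$ and similarly for $\pi\sb{\rm II}$, with disjointness immediate. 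The type I$\sb{F}$, I$\sb{\neg F}$, II$\sb{F}$, II$\sb{\neg F}$ identifications are then precisely Theorem \ref{th:I/II/III}(vii),(viii), and uniqueness is argued as before from those equivalences. Finally, for the last sentence, $\pi\sb{{\rm I}\sb{F}}(E)=(\eta\sb{K}\wedge\eta\sb{\widetilde f})(E)=(\eta\sb{K})(E)\cap(\eta\sb{\widetilde f})(E)$; since $\eta\sb{\widetilde f}(E)=E[0,{\widetilde f}]$ (Definition \ref{df:etainvar}), and $\eta\sb{K}\in\Sigma\sb{\sim}(E)$, Lemma \ref{lm:invfiniteinpiE} gives that $\eta\sb{K}{\widetilde f}$ is the largest finite invariant element of $\pi\sb{\rm I}(E)$; being of finite type, $\pi\sb{{\rm I}\sb{F}}(E)$ has unit $\eta\sb{K}{\widetilde f}$ by Lemma \ref{lm:typesforE}(iv) applied in the DGEA $\pi\sb{\rm I}(E)$, and likewise $\pi\sb{{\rm II}\sb{F}}(E)$ has unit $(\eta\sb{K})'{\widetilde f}$ using that $(\eta\sb{K})'$ acts as the identity complement inside $\pi\sb{\rm II}(E)\subseteq(\eta\sb{K})'(E)$.

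The main obstacle I anticipate is purely bookkeeping: making sure that each candidate mapping genuinely lies in $\Theta\sb{\eta}(E)$ (not merely in $\Sigma\sb{\sim}(E)$), since Theorem \ref{th:I/II/III}(ii),(iii),(v),(vii),(viii) all require membership in $\Theta\sb{\eta}(E)$, and that the boolean join/disjointness identities are stated with the correct use of $\eta\sb{K}\leq\eta\sb{F}$ (Lemma \ref{lm:simpisfinite}). Everything else is a direct application of Theorem \ref{th:I/II/III} together with the $\GEX(E)$--direct-sum dictionary and Lemma \ref{lm:invfiniteinpiE}; the uniqueness in each case reduces to the observation that a pairwise disjoint family with join $1$ is determined by any family of upper bounds that is itself pairwise disjoint with join $1$.
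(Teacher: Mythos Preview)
Your proposal is correct and follows essentially the same route as the paper: define the mappings explicitly, use $\eta\sb{K}\leq\eta\sb{F}$ for the boolean disjointness/join identities, read off the type assertions from Theorem~\ref{th:I/II/III}, appeal to the HD property of $\Theta\sb{\eta}(E)$ for membership, and deduce uniqueness from the inequalities in Theorem~\ref{th:I/II/III} together with a boolean pigeonhole. One small slip: in the last paragraph you invoke Lemma~\ref{lm:typesforE}(iv) ``in the DGEA $\pi\sb{\rm I}(E)$'' to identify the unit of $\pi\sb{{\rm I}\sb{F}}(E)$, but $\pi\sb{\rm I}(E)$ need not itself be of finite type, so that lemma does not apply there; you should instead apply Lemmas~\ref{lm:tildef} and~\ref{lm:invfiniteinpiE} directly in $\pi\sb{{\rm I}\sb{F}}(E)$ to get the unit $\pi\sb{{\rm I}\sb{F}}{\widetilde f}=(\eta\sb{K}\wedge\eta\sb{\widetilde f}){\widetilde f}=\eta\sb{K}{\widetilde f}$, and similarly compute $\pi\sb{{\rm II}\sb{F}}{\widetilde f}=(\eta\sb{K})'{\widetilde f}$ using $\eta\sb{F}{\widetilde f}={\widetilde f}$ (since $\eta\sb{\widetilde f}\leq\eta\sb{F}$), exactly as the paper does.
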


\begin{proof}
Put $\pi\sb{\rm I}:=\eta\sb{K}$, $\pi\sb{\rm II}:=\eta\sb{F}\wedge
(\eta\sb{K})'$, $\pi\sb{\rm III}:=(\eta\sb{F})'$, $\pi\sb{{\rm I}
\sb{F}}:=\eta\sb{K}\wedge\eta\sb{\widetilde f}$, $\pi\sb{{\rm I}
\sb{\neg F}}=\eta\sb{K}\wedge(\eta\sb{\widetilde f})'$, $\pi\sb
{{\rm II}\sb{F}}=\eta\sb{F}\wedge(\eta\sb{K})'\wedge\eta\sb
{\widetilde f}$, and $\pi\sb{{\rm II}\sb{\neg F}}=\eta\sb{F}
\wedge(\eta\sb{K})'\wedge(\eta\sb{\widetilde f})'$. As $\eta\sb{K}
\leq\eta\sb{F}$, these mappings are pairwise disjoint, and
it is clear that $\pi\sb{\rm I}\vee\pi\sb{\rm II}\vee\pi\sb{\rm III}
=1$, $\pi\sb{{\rm I}\sb{F}}\vee\pi\sb{{\rm I}\sb{\neg F}}=\pi
\sb{\rm I}$, and $\pi\sb{{\rm II}\sb{F}}\vee\pi\sb{{\rm II}
\sb{\neg F}}=\pi\sb{\rm II}$. By Theorem \ref{th:I/II/III}, the
various hereditary direct summands corresponding to these mappings
are of the types indicated in the statement of the theorem. By
Corollary \ref{co:KF} $\pi\sb{\rm I}=\eta\sb{K}\in\Theta\sb{\eta}(E)$
and $\eta\sb{F}\in\Theta\sb{\eta}(E)$. Since $\Theta\sb{\eta}(E)$ is
a hull-determining subset of $\GEX(E)$, it follows that $\pi\sb{\rm II}
=\eta\sb{F}\wedge(\eta\sb{K})'\in\Theta\sb{\eta}(E)$, and similar
arguments show that $\pi\sb{{\rm I}\sb{F}}, \,\pi\sb{{\rm I}\sb
{\neg F}},\,\pi\sb{{\rm II}\sb{F}},\,\pi\sb{{\rm II}\sb{\neg F}}\in
\Theta\sb{\eta}(E)$. By Lemmas \ref{lm:tildef} and \ref
{lm:invfiniteinpiE}, $\pi\sb{{\rm I}\sb{F}}(E)$ and $\pi\sb{{\rm II}
\sb{F}}(E)$ are EAs with units $\pi\sb{{\rm I}\sb{F}}{\widetilde f}
=\eta\sb{K}{\widetilde f}\wedge\eta\sb{\widetilde f}{\widetilde f}
=\eta\sb{K}{\widetilde f}\wedge{\widetilde f}=\eta\sb{K}{\widetilde f}$
and $\pi\sb{{\rm II}\sb{F}}{\widetilde f}=\eta\sb{F}{\widetilde f}
\wedge(\eta\sb{K})'{\widetilde f}\wedge\eta\sb{\widetilde f}
{\widetilde f}=\eta\sb{F}{\widetilde f}\wedge(\eta\sb{K})'
{\widetilde f}$. But ${\widetilde f}=\eta\sb{\widetilde f}
{\widetilde f}\leq\bigvee\sb{f\in F}\eta\sb{f}{\widetilde f}=
\eta\sb{F}{\widetilde f}\leq{\widetilde f}$, so $\eta\sb{F}
{\widetilde f}={\widetilde f}$, and therefore $\pi\sb{{\rm II}
\sb{F}}{\widetilde f}=(\eta\sb{K})'{\widetilde f}$.

To prove the uniqueness, suppose $\pi\sb{\rm I},\,\pi\sb{\rm II},
\,\pi\sb{\rm III}$ are mappings in $\Sigma{\sb{\sim}(E)}$ with
$\pi\sb{\rm I}\vee\pi\sb{\rm II}\vee\pi\sb{\rm III}=1$ and that
$\pi\sb{\rm I}(E),\,\pi\sb{\rm II}(E)$ and $\pi\sb{\rm III}(E)$
are of types I, II, and III, respectively. Then by Theorem \ref
{th:I/II/III}, $\pi\sb{\rm I}\leq\eta\sb{K}$, $\pi\sb{\rm II}
\leq\eta\sb{F}\wedge(\eta\sb{K})'$, and $\pi\sb{\rm III}\leq
(\eta\sb{F})'$. As $\eta\sb{K}\leq\eta\sb{F}$, it follows that
$\eta\sb{K}$, $\eta\sb{F}\wedge(\eta\sb{K})'$, and $(\eta\sb{F})'$
are pairwise disjoint with $\eta\sb{K}\vee\eta\sb{F}\wedge
(\eta\sb{K})'\vee(\eta\sb{F})'=1$, and by a simple boolean argument,
it follows that $\pi\sb{\rm I}=\eta\sb{K}$, $\pi\sb{\rm II}=
\eta\sb{F}\wedge(\eta\sb{K})'$, and $\pi\sb{\rm III}=(\eta\sb{F})'$.
Similarly, one proves the uniqueness of $\pi\sb{{\rm I}\sb{F}},
\,\pi\sb{{\rm I}\sb{\neg F}},\,\pi\sb{{\rm II}\sb{F}}$ and $\pi
\sb{{\rm II}\sb{\neg F}}$.
\end{proof}


\begin{thebibliography}{99}
%
\bibitem{DvPuTrends} Dvure\v{c}enskij, A. and Pulmannov\'{a}, S., \emph{New
Trends in Quantum Structures}, Kluwer, Dordrecht, 2000.

\bibitem{FandB} Foulis, D.J. and Bennett, M.K., Effect algebras and
unsharp quantum logics, \emph{Found. Phys.} {\bf 24} (1994) 1331--1352.

\bibitem{COEA} Foulis, D.J. and Pulmannov\'{a}, S., Centrally orthocomplete
effect algebras, \emph{Algebra Univ.} {\bf 64} (2010) 283--307,
DOI 10.1007/s00012-010-0100-5.

\bibitem{HandD} Foulis, D.J. and Pulmannov\'{a}, S., Hull mappings and
dimension effect algebras, \emph{Math. Slovaca} {\bf 61}, no. 3 (2011)
485--522, DOI 10.2478/s12175-011-0025-2.

\bibitem{ExoCen} Foulis, D.J. and Pulmannov\'{a}, S., The exocenter of a
generalized effect algebra, \emph{Rep. Math. Phys.} {\bf 68} no. 3 (2011)
347--371.

\bibitem{CenGEA} Foulis, D.J. and Pulmannov\'{a}, S., The center of a
generalized effect algebra, submitted.

\bibitem{HDTD} Foulis, D.J. and Pulmannov\'{a}, S., Hull determination and
type decomposition for a generalized effect algebra, submitted.

\bibitem{GFP} Greechie, R.J., Foulis, D.J, and Pulmannov\'{a}, S., The
center of an effect algebra, \emph{Order} {\bf 12} (1995) 91--106.

\bibitem{Good} Goodearl, K.R., \emph{Partially Ordered Abelian Groups with
Interpolation}, Math. Surveys and Monographs No. 20, Amer. Math. Soc.,
Providence, RI, 1986; ISBN 0-8218-1520-2.

\bibitem{GWDim} Goodearl, K.R. and Wehrung, F., \emph{The Complete
 Dimension Theory of Partially Ordered Systems with Equivalence and
 Orthogonality}, Memoirs of AMS 831, AMS, Providence, Rhode Island, 2005.

\bibitem{HedPu} Hedl\'{i}kov\'{a}, J. and Pulmannov\'{a}, S.J., Generalized
difference posets and orthoalgebras, \emph{Acta Math. Univ. Comenianae}
{\bf 45} (1996) 247--279.

\bibitem{Je00} Jen\v{c}a, G., Subcentral ideals in generalized effect algebras,
\emph{Int. J. Theor. Phys.} {\bf 39} (2000) 745--755.

\bibitem{Je02} Jen\v{c}a, G., A Cantor-Bernstein type theorem for effect
algebras, \emph{Algebra Univers.} {\bf 48} (2002) 399--411.

\bibitem{JePu02} Jen\v{c}a, G. and Pulmannov\'{a}, S., Quotients of partial
abelian monoids and the Riesz decomposition property, \emph{Algebra Univers.}
{\bf 47} (2002), 443-477.

\bibitem{Kalm86} Kalmbach, G., \emph{Measures and Hilbert lattices}, World
Scientific Publishing Co., Singapore, 1986. ISBN: 9971-50-009-4.

\bibitem{K} Kalinin, V.V., Orthomodular partially ordered sets with dimension
(Russian), \emph{Algebra i logika} {\bf 15}, no. 5 (1987) 535-557.

\bibitem{KalmZdenka} Kalmbach, G. and Rie\v{c}anov\'{a}, Z., An axiomatization
for abelian relative inverses, \emph{Dem. Math.} {\bf 27} (1994) 535--537.

\bibitem{Loom} Loomis, L.H., \emph{\it The lattice theoretic background of
the dimension theory of operator algebras}, Memoirs  of  the AMS {\bf 18} (1955).

\bibitem{Mae} Maeda, S., {\it Dimension function on certain general lattices},
J. Sci. Hiroshima Univ. {\bf A19} (1955), 211-237.

\bibitem{Pola} Polakovi\v{c}, M., Generalized effect algebras of positive
operators defined on Hilbert spaces, \emph{Rep. Math. Phys.} {\bf 68} (2011)
241--250.

\bibitem{PZOps} Polakovi\v{c}, M. and Rie\v{c}anov\'{a}, Z., Generalized
effect algebras of positive operators densely defined on Hilbert spaces,
\emph{Int. J. of Theor. Phys.} {\bf 50} (2010) 1167--1174, DOI
10.1007/s10773-010-0458-3 (2010).

\bibitem{PuCong} Pulmannov\'{a}, S.,  Congruences in partial Abelian semigroups,
\emph{Algebra Univers.} {\bf 37} (1997) 119--140.

\bibitem{PVQObs} Pulmannov\'a, S., Vincekov\'a, E., Remarks on the order for
quantum observables, \emph{Math. Slovaca} {\bf 57}, no. 6 (2007) 589--600,
DOI 10.2478/s12175-007-0048-x.

\bibitem{PVRiesz} Pulmannov\'a, S., Vincekov\'a, E., Riesz ideals in generalized
effect algebras and in their unitizations, \emph{Algebra. Univers.} {\rm 57} (2007)
393--417, DOI 10.1007/x00012-007-2043-z.

\bibitem{RZP} Rie\v{c}annov\'{a}, Z., Zajac, M., and Pulmannov\'{a}, S.,
Effect algebras of positive linear operators densely defined on
Hilbert spaces, \emph{Rep. Math. Phys.} {\bf 68} (2011)
261--270.

\bibitem{Ram} Ramsay, A.,  Dimension theory in complete weakly modular
orthocomplemented lattices, \emph{Trans. Amer. Math. Soc.} {\bf 116}
(1965), 9-31.

\bibitem{Sh} Sherstnev, A.N., On boolean logics (Russian), \emph{U\v{c}ebnyje
zapisky Kazanskogo universiteta}, No. 2 (1968) 48-62.

\bibitem{Tkad} Tkadlec, J., Atomistic and orthoatomistic effect algebras,
\emph{J. Math. Phys.} {\bf 49} no. 5 (2008) 5 pp.

\bibitem{ZR98} Rie\v{c}anov\'{a}, Z., Compatibility and central elements in
effect algebras. Fuzzy sets, Part I (Liptovsk\'{y} J\'{a}n, 1998)
\emph{Tatra Mt. Math. Publ.} {\bf 16} (1999) 151--158.

\bibitem{ZR01} Rie\v{c}anov\'{a}, Z., Orthogonal sets in effect algebras,
\emph{Dem. Math.} {\bf 34}, no. 3 (2001) 525--532.

\bibitem{ZR09} Rie\v{c}anov\'{a}, Z., Pseudocomplemented lattice effect
algebras and existence of states, \emph{Inform. Sci.} {\bf 179}, no. 5
(2009) 529--534.

\bibitem{Zdenka99} Rie\v{c}anov\'{a}, Z., Subalgebras, intervals, and central
elements of generalized effect algebras, \emph{Int. J. Theor. Phys.} {\bf 38},
no. 12 (1999) 3209--3220.

\bibitem{AlexPAS} Wilce, A., Perspectivity and congruence in partial abelian semigroups,
\emph{Math. Slovaca} {\bf 48} (1998) 117--135.


\end{thebibliography}
\end{document}